\documentclass[11pt]{article}

\usepackage[margin=1.3in]{geometry}

\usepackage{natbib}

\usepackage{graphicx}%
\usepackage{multirow}%
\usepackage{amsmath,amssymb,amsfonts}%
\usepackage{amsthm}%
\usepackage{mathrsfs}%
\usepackage[title]{appendix}%
\usepackage{xcolor}%
\usepackage{textcomp}%
\usepackage{manyfoot}%
\usepackage{booktabs}%
\usepackage{algorithm}%
\usepackage{algorithmicx}%
\usepackage{algpseudocode}%
\usepackage{listings}%

 \usepackage{natbib}

\usepackage{authblk}

\usepackage{hyperref}

\theoremstyle{plain}
\newtheorem{theorem}{Theorem}[section]
\newtheorem{lemma}{Lemma}[section]
\newtheorem{corollary}{Corollary}[section]
\newtheorem{proposition}{Proposition}[section]

\theoremstyle{definition}
\newtheorem{definition}{Definition}[section]

\theoremstyle{remark}
\newtheorem{remark}{Remark}[section]

 \usepackage{bm}
\usepackage{dsfont}
\usepackage{mathabx}

\DeclareMathOperator{\gap}{gap}

\DeclareMathOperator*{\argmax}{argmax}

\DeclareMathOperator*{\supp}{supp}
\DeclareMathOperator*{\Ker}{Ker}
\DeclareMathOperator*{\rank}{rank}

\DeclareMathOperator{\relint}{relint}

 \DeclareMathOperator{\id}{Id}
 \DeclareMathOperator{\Span}{span}
 \DeclareMathOperator{\diag}{diag}
 \DeclareMathOperator{\Tr}{Trace}

\usepackage[mathscr]{eucal}
\DeclareMathOperator{\cA}{\mathscr{A}}

\DeclareMathOperator{\cC}{\mathscr{C}}
\DeclareMathOperator{\cD}{\mathscr{D}}

\DeclareMathOperator{\cF}{\mathscr{F}}

\DeclareMathOperator{\cO}{\mathscr{O}}

\DeclareMathOperator{\cS}{\mathscr{S}}

\DeclareMathOperator{\cZ}{\mathscr{Z}}

\DeclareMathOperator{\fS}{\mathfrak{S}}

\newcommand{\Exp}{\mathrm{Exp}}    
\newcommand{\Log}{\mathrm{Log}}

\DeclareMathOperator{\KL}{KL}

\renewcommand{\leq}{\leqslant}
\renewcommand{\le}{\leqslant}
\renewcommand{\geq}{\geqslant}
\renewcommand{\ge}{\geqslant}
\renewcommand{\bar}{\overline}

\newcommand{\norm}[1]{\left\| #1 \right \|} 

\newcommand{\absv}[1]{\left| #1  \right|} 

\newcommand{\dprod}[1]{\left\langle #1 \right\rangle}

\newcommand{\etax}{\eta_{x}}
\newcommand{\etay}{\eta_{y}}

\newcommand{\SR}[1]{\rho\left(#1\right)}
\newcommand{\tSR}[1]{\rho(#1)}
\newcommand{\eig}[1]{\mathrm{eig}\left( #1\right)}
\newcommand{\teig}[1]{\mathrm{eig}\ ( #1 )}
\newcommand{\eigTilde}[1]{\widetilde{\mathrm{eig}}\left( #1\right)}

 \DeclareMathOperator{\RR}{\mathbb{R}}
\DeclareMathOperator{\NN}{\mathbb{N}}

\DeclareMathOperator{\CC}{\mathbb{C}}
\DeclareMathOperator{\DD}{\bar{\mathbb{D}}}

\newcommand{\ones}{\mathbf 1}

 \newcommand{\vx}{v^x}
 \newcommand{\vy}{v^y}
 
\newcommand{\V}{\mathbf{w}}

 \newcommand{\oneinf}{1 \rightarrow \infty} 

 \newcommand{\prodSimplex}{\smash{\widehat{\Delta}}} 
\newcommand\restr[2]{{
  \left.\kern-\nulldelimiterspace 
  #1 
  \vphantom{|} 
  \right|_{#2} 
  }}

 \newcommand{\Phim}{\Phi_m}

\newcommand{\px}{x}
\newcommand{\py}{y}
\newcommand{\ppx}{x'}
\newcommand{\ppy}{y'}
\newcommand{\pxp}{x^{+}}
\newcommand{\pyp}{y^{+}}

\newcommand{\gz}{\bar z}

\newcommand{\FP}{\widetilde \cZ(\restr{\Phi_m}{\prodSimplex})}
\newcommand{\FPone}{\widetilde \cZ(\restr{\Phi_1}{\prodSimplex})}
\newcommand{\tZ}{\widetilde Z}
\newcommand{\tx}{\widetilde x}
\newcommand{\ty}{\widetilde y}
\newcommand{\Zfp}{[\widetilde x,\widetilde y,\widetilde x,\widetilde y]}
 \newcommand{\NE}{\cZ^{\star}}
\newcommand{\Zs}{Z^\star}
\newcommand{\Zne}{[x^\star,y^\star,x^\star,y^\star]}
\newcommand{\Lpar}{{L^{\parallel}_{\widetilde Z}}}

\title{Stability and Convergence of Optimistic Exponential Weights with Asymmetric Step Sizes in Bimatrix Games}

\author[1]{Hédi Hadiji}
\author[2]{Sarah Sachs}

\affil[1]{Laboratoire des Signaux et Systèmes, CentraleSupélec, Paris, France}
\affil[2]{School of Mathematics, University of Bristol, Bristol, United Kingdom}

\date{}

\begin{document}

 \maketitle

  \abstract{We study two-player bimatrix games under the optimistic exponential weights method, allowing the players to use different step sizes, $\etax$ and $\etay$.  Our main finding is that the product $\etax\etay$, rather than the individual step sizes, determines both equilibrium stability and last-iterate convergence to an equilibrium.
For zero-sum games, assuming finitely many fixed points, we give a sufficient condition on $\etax\etay$ for global last-iterate convergence. The condition is practically relevant and helps explain empirical behavior beyond previously known guarantees. For general bimatrix games, we derive an almost-tight threshold, again in terms of $\etax\etay$, separating asymptotic stability from instability. We recover several known results and useful special-case step-size bounds, and illustrate our findings numerically.
\newline \textbf{Keywords: } {Optimistic Exponential Weights, Bimatrix Games, Stability, Convergence.}
  }

    \maketitle

\section{Introduction}
 We consider bimatrix games $\Gamma(A,B)$ with payoff matrices $A,B^\top \in \mathbb{R}^{d_x \times d_y}$. 
 Denote by $\Delta_{d_x}$ and $\Delta_{d_y}$  the  probability simplices in $\RR^{d_x}$ (respectively $\RR^{d_y}$). 
The $x$-player chooses $x \in \Delta_{d_x}$ to maximize $x^\top A y$, while the $y$-player chooses $y \in \Delta_{d_y}$ to maximize $  y^\top B x$.  
A Nash equilibrium $[x^\star,y^\star]$ satisfies
\[
x^\star \in \arg\max_{x \in \Delta_{d_x}} x^\top A y^\star,
\qquad
y^\star \in \arg\max_{y \in \Delta_{d_y}} y^\top B x^\star .
\]
 
 A central question in game theory and learning in games is whether simple, efficient iterative algorithms converge to such equilibrium points. Classical dynamics such as Gradient Descent-Ascent and Multiplicative Weights Update are known to exhibit cycling and fail to converge even in bilinear games (see, e.g., \cite{bailey2018multiplicative,cheung2019vortices,mertikopoulos2018cycles}). 
This has motivated significant interest in modifications of these algorithms that mitigate this issue.  A prominent approach is the \emph{optimism} framework \citep{chiang2012online,Rakhlin:2013aa,syrgkanis2015fast}, which underlies algorithms such as Optimistic Gradient Descent–Ascent and optimistic exponential weights (optEW). 
In this paper, we study the stability of Nash equilibria under optimistic exponential weights dynamics (and variants thereof) in two-player bimatrix games. That is, all coordinates $i \in \{1, \dots, d_x\}$ and $j \in \{ 1, \dots, d_y\}$ are updated as
\begin{equation}\label{eq:def_ew}
x_{t+1, i} \propto \, x_{t,i} \exp( \etax [A(2 y_{t} - y_{t-1} )]_i)\quad \text{ and } \quad y_{t+1,j} \propto\,  y_{t,j} \exp(\etay [B(2 x_{t} -  x_{t-1})]_j)\, 
,
\end{equation}
 where $\etax, \etay >0$ are the step sizes. There are two fundamentally different settings for the step size choices: (1) constant step sizes and (2) time-dependent step sizes. 
We study the convergence behavior of $(z_t)_{t \in \NN}=((x_t,y_t))_{t \in \NN}$ under constant, potentially unequal  step sizes. We refer to this setting as using \emph{asymmetric step sizes}, noting that related literature sometimes describes it as  ‘two-time-scale’ step sizes \citep{lin2025two}. We adopt the former terminology to clearly distinguish our setting from time-varying step size schemes for saddle-point problems with stochastic feedback. In the classical two-time-scale framework, the step size sequences $(\eta_{x,t})$ and $(\eta_{y,t})$ are assumed to be non-summable but square summable, with $\eta_{x,t} / \eta_{y,t} \rightarrow 0$ as $t \rightarrow \infty$ \citep{Borkar:1997aa,Borkar:2024aa}.  
  
   Existing analyses often rely on simultaneously controlling both step sizes, which results in a step size requirement on $\etax = \etay$, or controlling the individual step sizes, which results in a step size requirement on $\max\{\etax,\etay\}$.  Such conditions do not capture the product dependence as suggested by empirical observations (see Figure~\ref{fig:ExamplenRPS}).  Our results align with these experiments. 
    
\paragraph{Contributions: }  \begin{enumerate}
 \item \textbf{Global convergence results under sufficient step size conditions: }
   We provide a sufficient product-type condition on the step sizes under which optimistic exponential weights dynamics in zero-sum games, initialized in the relative interior, converge globally to the set of fixed points. Moreover, if the fixed-point set is finite, then the last iterate converges to a Nash equilibrium.

  \item \textbf{Local convergence for general bimatrix games:}
        We study the stability of equilibria of the optEW dynamics. We give a characterization of the stability in terms of the product of the step sizes $\etax\etay$ and the Jacobian at an equilibrium. The step size criterion is almost tight, meaning that if $\etax\etay \in (0,c)$ stability follows, and if $\etax\etay \in (c, \infty)$ instability follows. However, the case $\etax \etay = c$ is indeterminate.   
        \item \textbf{Results for special classes of games:} Building on our asymptotic stability result, we derive several known and new results for special classes of games. In all these cases, the resulting threshold on the step size product $\etax\etay$ can be expressed in closed form.
  \item \textbf{Empirical Studies and Open Research Directions: } We provide experiments that (a) show that our theoretical results closely match observed behavior and (b) highlight open research questions that cannot be explained by our results.
\end{enumerate}
Our global convergence result gives a practical step-size condition and covers a broader range of step sizes than previously established. Our result on asymptotic stability is mainly of theoretical interest. However, it provides a foundation for explaining the frequently observed phenomenon that algorithms can perform well even beyond the step-size regimes covered by existing guarantees.
 \paragraph{Motivation and Broader Context: }
While our results are primarily of fundamental theoretical interest, our findings are also motivated by several practical considerations.
For example, for non-convex-concave optimization, asymmetric step sizes are often used to guarantee convergence of the concave maximization subproblem while maintaining overall stability (see, e.g., \cite{lin2025two} and references therein). Similar asymmetries arise in bilevel optimization, where the inner problem is typically updated more aggressively than the outer problem (e.g., \cite{Hong:2023aa}), and in stochastic games, where players may experience different feedback variances (e.g., \cite{Sayin:2022aa}).  
  \paragraph{Related Literature: }
 Our work is most closely related to the work by \citet{de2025optimistic}, who analyze optimistic gradient descent in unconstrained general-sum bilinear games and characterize sharp stability
thresholds via the eigenvalues of the induced linear system. Although their focus
is not on asymmetric step sizes, their analysis implies a similar
product-dependence of the step sizes for optimistic gradient descent in unconstrained bilinear games. For more details and an in-depth discussion, see Section~\ref{sec:CompareResultsJacobian}. 

  \citet{fiez2021local}  also study the role of asymmetric step sizes in game dynamics, showing that gradient descent-ascent with a sufficiently large but finite timescale-separation ratio converges locally to strict local minmax/Stackelberg equilibria in smooth nonconvex-nonconcave zero-sum games, while non-equilibrium critical points become unstable. 

Conceptually, our work is also closely related to the research on replicator dynamics and discretized variants thereof. The exponential weights method can be viewed as an Euler discretization of the replicator dynamics. This is a well-known connection; for details, see, e.g., \citet{Falniowski2025}. The discrete-time nature of multiplicative weights is
known to produce recurrence, cycling, and chaotic behavior in games
\citep{mertikopoulos2018cycles,Falniowski2025}.  Examples of stability analysis for replicator dynamics include \citet{WeibullJorgen1995EGT}, \citet{Hofbauer_Sigmund_1998}, \citet{Sandholm:2010aa}, and references therein.

For the equal step size regime, a large body of work establishes convergence guarantees of optimistic or
extra-gradient-type dynamics under suitable step size conditions. These methods
have a long history: extra-gradient and related prediction-correction schemes go
back at least to \citet{korpelevich1976extragradient} and  
 \citet{Popov:1980aa}. In online learning, optimistic mirror descent,
optimistic follow-the-regularized-leader, and optimistic exponential weights arise from the
predictable-sequences framework of
\citet{chiang2012online,Rakhlin:2013aa}. They have become central in the study
of last-iterate convergence in games, since methods such
as gradient descent-ascent and multiplicative weights may cycle or exhibit
unstable behavior even for zero-sum games 
\citep{bailey2018multiplicative,mertikopoulos2018cycles,cheung2019vortices}.

In convex-concave and monotone settings, this optimistic structure yields
positive stability results. Asymptotic convergence guarantees were proved for
optimistic mirror descent and stochastic extra-gradient variants by
\citet{mertikopoulos2018optimistic} and \citet{hsieh2019convergence}.
\citet{Daskalakis2018TheLP} studied the limit points and local stability of optimistic gradient dynamics, while  \citet{daskalakis2019lastiterate}  proved last-iterate convergence for optimistic multiplicative-weights dynamics. \citet{lei2021last} extended
this to local last-iterate convergence with constant step size using a spectral analysis approach.  See Section~\ref{sec:CompareResultsJacobian} for a detailed discussion. We note that the focus of this line of literature is on improving the convergence rates; thus, comparisons should be made with caution due to the conceptual differences.

%
\subsection{Notation}

We use calligraphic letters for sets, e.g., $\cZ$. Sets of equilibria are
denoted by a superscript star, e.g., $\cZ^\star$, and, when a distinction
is needed, sets of fixed points of an operator $T$ by tildes, e.g.,
$\widetilde{\cZ}(T)$. For a set $\cA$, we write $\relint \cA$ for its
relative interior. We denote by $\RR_+$ the set of non-negative real
numbers and use the convention that $0\notin\NN$. For $d\in\NN$,
$
\Delta_d := \{x\in\RR_+^d : \ones_d^\top x = 1\}
$
denotes the probability simplex in $\RR^d$.

We denote by $\ones_d$ the all-ones vector in $\RR^d$, omitting the
subscript whenever the dimension is clear from context, and by
$e_1,\dots,e_d$ the canonical basis vectors of $\RR^d$. For matrix 
$M\in\RR^{n\times d}$, we define
$
\norm{M}_{\oneinf} := \max_{i,j}|M_{i,j}|,
$
and denote by $\norm{M}_{\ell_2}$ its $\ell_2$ operator norm. Unless
otherwise specified, $\norm{\, \cdot\, }$ denotes the Euclidean norm. For
$v\in\RR^d$ and $\delta>0$,
$
B_\delta(v):=\{z\in\RR^d:\norm{z-v}<\delta\}
$
denotes the open ball of radius $\delta$ centered at $v$.

For $v\in\RR^d$, $\diag(v)\in\RR^{d\times d}$ denotes the diagonal matrix
with diagonal $v$. We use $\odot$ for the Hadamard (entry-wise) product
and $\Exp$ for the component-wise exponential map.
  
\section{Algorithm and General Results}\label{sec:algo}
Our analysis builds on techniques from discrete dynamical systems. Thus, we define an
operator ${\Phi_m}$ corresponding to the optimistic exponential weights method in a
two-player game.
The optimistic updates depend on the last two gradients seen, so the corresponding
mapping operates on the product of simplices 
$\prodSimplex := (\Delta_{d_x} \times \Delta_{d_y}) \times (\Delta_{d_x} \times \Delta_{d_y})
\subset \RR^{2(d_x + d_y)}  $. 

We also introduce a parameter $m \in (0,1]$, which controls the impact of the optimistic 
updates. We recover the
usual optEW (as stated in \eqref{eq:def_ew}) when $m = 1$.  In the limit $m\downarrow0$, the update approaches standard exponential weights. For any vector $v \in \RR^d$ with $v^\top \ones \neq 0$, define  
\[
  P_d(v) = \bigg(\sum_{i=1}^d v_i \bigg)^{-1} v \, ,
\]
and   
\begin{multline*}
  \cD_{{\Phi_m}}  = 
  \bigg\{
    [\px, \py, \ppx, \ppy] \in  \RR^{d_x} \times \RR^{d_y}  \times \RR^{d_x} \times \RR^{d_y}
    \quad \Big| \\
    \sum_{i = 1}^{d_x} \px_{i} \exp(\etax [A((1 + m)\py - m \ppy)]_i)  \neq 0
    \quad \text{and} \quad
    \sum_{j = 1}^{d_y} \py_{j} \exp(\etay [B((1 + m)\px - m \ppx)]_j)  \neq 0
  \bigg\}
\end{multline*}
The map we consider is
\begin{align}\label{def:Phi}
  {\Phi_m} :
  \left\{
  \begin{aligned}
  &  \cD_{\Phi_m}
  &
  \to 
  &  \RR^{2(d_x + d_y)}\\
  &
  \begin{pmatrix} 
    \px\\ \py\\ \ppx\\ \ppy
  \end{pmatrix} 
  &
  \mapsto
  &
  \begin{pmatrix}
  P_{d_x}\big( \px \odot \Exp(\etax A((1+m)\py - m\ppy))\big)\\[4pt]
  P_{d_y}\big( \py \odot \Exp(\etay B((1+m)\px - m\ppx))\big)\\[4pt]
  \px\\[4pt]
  \py
  \end{pmatrix} 
  \end{aligned} 
  \right.\,.
\end{align}
 We note that on the domain $\cD_{{\Phi_m}}$,  $P_d$ is always well defined and   ${\Phi_m}( \prodSimplex) \subseteq \prodSimplex $. Moreover, the sequences $(x_t)_{t \in \NN}$ and $(y_t)_{t \in \NN}$ are iterates of the optimistic exponential weights method, cf. \eqref{eq:def_ew}, if and only if 
$Z_t = (x_t, y_t, x_{t-1}, y_{t-1})$ satisfies
$ Z_{t+1} = {\Phi_1}(Z_t) $ for all $t \in \NN$ and $Z_0 \in \prodSimplex$.
Ultimately, we are interested only in the iterates of ${\Phi_m}|_{\prodSimplex}$. Crucially, we note that the repeated iterations in the ambient space $\cD_{\Phi_m}$ are not well defined since $\Phi_m(\cD_{\Phi_m})$ is not necessarily a subset of $\cD_{\Phi_m}$. We emphasize that this is not used, and the ambient extension is used solely to compute the differential.
\paragraph{Fixed points of ${\Phi_m}$}
We start with a characterization of the fixed points for
$\restr{{\Phi_m}}{\prodSimplex}$.

\begin{theorem}\label{thm:FPPhi}
For any $m \in (0,1]$ and $\etax,\etay > 0$, the set of fixed points of $\restr{{\Phi_m}}{\prodSimplex}$ is:
\[
  \FP
  = \left\{ [\px, \py, \ppx, \ppy] \in \prodSimplex \;\middle|\; 
  \begin{aligned}
  &\px = \ppx \in \Delta_{d_x}, \; \py=\ppy \in \Delta_{d_y}, \\
  &\forall i,j \in \supp(\px), \; (A\py)_i = (A\py)_j, \\
  &\forall k,l \in \supp(\py), \; (B\px)_k = (B\px)_l
  \end{aligned}
  \right\}
\]
\end{theorem}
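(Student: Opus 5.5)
We prove the two inclusions separately by unpacking the fixed-point equation $\Phi_m(Z)=Z$ coordinate block by coordinate block, for $Z=[\px,\py,\ppx,\ppy]\in\prodSimplex$.

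\emph{Necessity.} Suppose $\Phi_m(Z)=Z$. The third and fourth blocks of $\Phi_m(Z)$ are $\px$ and $\py$, so $\ppx=\px$ and $\ppy=\py$. Substituting this into the first block, the exponent becomes $\etax A((1+m)\py-m\py)=\etax A\py$, and the fixed-point condition reads
\[
\px = P_{d_x}\big(\px\odot\Exp(\etax A\py)\big).
\]
Set $S=\sum_k \px_k e^{\etax (A\py)_k}>0$. Componentwise this is $\px_i = \px_i e^{\etax (A\py)_i}/S$. For every $i\in\supp(\px)$ we may divide by $\px_i>0$ and get $e^{\etax(A\py)_i}=S$. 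Since $\etax>0$ and the exponential is injective, $(A\py)_i=\etax^{-1}\log S$ is the same for all $i\in\supp(\px)$. The $y$-block gives $(B\px)_k=(B\px)_l$ for all $k,l\in\supp(\py)$ in exactly the same way.

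\emph{Sufficiency.} Let $Z$ lie in the displayed set. Then $\ppx=\px$ and $\ppy=\py$, so again the exponent reduces to $\etax A\py$. Let $c$ be the common value of $(A\py)_i$ on $\supp(\px)$. Coordinates outside the support contribute $0$ both to the vector $\px\odot\Exp(\etax A\py)$ and to its sum, whatever their exponent is. Hence the sum equals $e^{\etax c}\sum_i\px_i=e^{\etax c}>0$. In particular $Z\in\cD_{\Phi_m}$, and the normalization gives back $\px$. The same holds for $\py$, and the last two blocks return $\px=\ppx$ and $\py=\ppy$. So $\Phi_m(Z)=Z$.

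The only delicate point is the bookkeeping around the support: the condition is imposed only on $\supp(\px)$, and coordinates with $\px_i=0$ stay zero under the multiplicative update, so they impose no constraint. One should also note that $Z\in\prodSimplex$ automatically lies in $\cD_{\Phi_m}$, because the normalizing sums are positive. Neither $m$ nor the values of the step sizes play any role beyond $\etax,\etay>0$.
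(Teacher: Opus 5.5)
Your proof is correct and takes essentially the same route as the paper: set $x'=x$ and $y'=y$ so the exponent collapses to $\eta_x A y$, then divide the normalized update by $x_i>0$ on the support to see that $e^{\eta_x (Ay)_i}$ must equal the normalizer. The difference is only presentational. The paper writes one chain of equivalences and cites the zero-pattern invariance for the off-support coordinates. You separate the two inclusions and explicitly check that the normalizer equals $e^{\eta_x c}$ in the sufficiency direction, a step the paper's chain leaves implicit.
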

  The proof is deferred to Appendix~\ref{appendix:FPPhi}. 
  We define the set of points corresponding to the set of Nash equilibria lifted in dimension as  \[\NE = \{[x^\star,y^\star,x^\star,y^\star] :
[x^\star,y^\star]  \text{ is a Nash equilibrium} \} \subseteq \prodSimplex\,.\] 
With a slight abuse of terminology, we also refer to the higher-dimensional set $\NE$ as  ‘the set of Nash equilibria’. 
\begin{remark}\label{rem:strictInclusion}
  All $\Zs \in \NE$ are fixed points of ${\Phi_m}$ 
 but a fixed point $\tZ \in \FP$ is not necessarily  in $\NE$. As an example, consider a game with a unique fully mixed Nash equilibrium. Then $(e_i,e_j,e_i, e_j)$, $i \in \{1,\dots,d_x\}, j \in \{1,\dots,d_y\}$ are fixed points of $\restr{{\Phi_m}}{\prodSimplex}$ (and of ${\Phi_m}$), but are not contained in $\NE$.
 \end{remark}
 To define stability and instability of a set $\cS \subset \cZ$, we denote the distance $d(Z,\cS) := \inf_{Z' \in \cS}\norm{Z-Z'}$. 
  \begin{definition}[Invariant, Stable, Asymptotically Stable and Globally Attracting Sets]
Let $F:\cZ \rightarrow \cZ$ and let $\cC \subseteq \cZ$. 
We say $ \cC$ is a (forward) \textbf{invariant set} if $F(\cC) \subseteq \cC$. A set $\cS \subset \cZ$ is
\begin{enumerate}
  \item (Lyapunov)  \textbf{stable}, if for all $\epsilon > 0$ there exists a $\delta>0$, such that for any $n \in \NN$ and any $Z \in \cZ$ with
  $ d(Z,\cS) < \delta$, we have $d(F^n(Z), \cS) < \epsilon$;
  \item an \textbf{attracting set} if 
  there exists $\delta > 0$ such that for any $Z \in \cZ$ with $  d(Z,\cS)<\delta $, 
  we have $d(F^n(Z),\cS) \xrightarrow[n \to \infty]{} 0$;
  \item \textbf{asymptotically stable} if $ \cS$ is a stable and attracting set. 
  \end{enumerate}
  We call $\cS$ \textbf{unstable} if it is not stable. We call $\cS$ \textbf{globally attracting on $ \cZ$} if $d(F^n(Z),\cS) \xrightarrow[n \to \infty]{} 0$ for any $Z \in \cZ$.
 \end{definition}
In particular, if $\cS$ is a singleton, we call it a \emph{stable/unstable, asymptotically stable, or globally attracting fixed point}. 
Ultimately, we are interested in the attractiveness and stability of the set
corresponding to the Nash equilibria $\NE$.

We note that the definition of Lyapunov stability requires the sequence to stay in $\cZ$. Again, note that this does not necessarily hold for the ambient space $\cD_{\Phi_m}$. Thus, we restrict to $\prodSimplex$ for our global convergence result. 
 \section{Global Convergence for Zero-Sum Games}\label{sec:global}
In this section, we restrict $\restr{\Phi_m}{\prodSimplex}$ to $m=1$, that is $ \restr{\Phi_1}{\prodSimplex}$. Furthermore, we add the standing assumption 
 that the initial $Z_0$ is from the relative interior of $\prodSimplex$.
  We show that for the special case of zero-sum games, the dynamics converge to $\NE$ if  $\etax \etay \,\|A\|_{\oneinf}^2\leq \frac16$.
Our argument consists of two steps:
\begin{enumerate}
\item In Theorem~\ref{thm:convFP}, we use a Lyapunov-style argument to show that $\FPone$ is globally attracting;
 \item We show that, under the assumption that $\FPone$ is finite, every accumulation point of the dynamics is an equilibrium in Theorem~\ref{thm:NEconvergenceZSG}. 
 \end{enumerate}
   \begin{theorem}\label{thm:convFP}
    Consider a zero-sum game $\Gamma(A,-A^\top)$. Assume  $Z_0 \in \relint \prodSimplex $  and     \[
      0<\etax \etay \,\|A\|_{\oneinf}^2 \leq  \frac16\,.
    \]
    Then the set of fixed points $ \FPone$  is globally attracting with respect to $\relint \prodSimplex$.
    
  \end{theorem}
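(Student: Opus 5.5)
We plan a Lyapunov-type argument built on the standard optimistic mirror descent energy with KL divergences, weighted asymmetrically by the step sizes. The point of the weighting is that the resulting condition involves only the product $\etax\etay$.

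\textbf{Choice of energy.} Fix a Nash equilibrium $[x^\star,y^\star]$; one exists by the minimax theorem. Define
\[
V_t = \frac{1}{\etax}\KL(x^\star\|x_{t}) + \frac{1}{\etay}\KL(y^\star\|y_{t}) + c\,\Big(\frac{1}{\etax}\KL(x_t\|x_{t-1})\ \text{or a gradient-difference term}\Big),
\]
with the correction term and its weight $c$ to be tuned below.

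\textbf{Step 1: one-step inequality.} We use the three-point identity for exponential weights. For $x_{t+1}\propto x_t\exp(\etax g_t)$ with $g_t = A(2y_t-y_{t-1})$,
\[
\KL(x^\star\|x_{t+1}) - \KL(x^\star\|x_t) = -\etax\dprod{g_t,x^\star-x_{t+1}} - \KL(x_{t+1}\|x_t).
\]
Write $\etax g_t = \etax Ay_t + \etax A(y_t-y_{t-1})$ and do the same for $y$ with $B=-A^\top$. After dividing by the step sizes and summing, the bilinear terms $\dprod{Ay_t, x^\star - x_{t}}+\dprod{-A^\top x_t,y^\star-y_t}$ are nonpositive by the saddle-point property; this is the zero-sum step. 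The remaining cross terms have the form $\dprod{A(y_t-y_{t-1}),x_{t+1}-x_t}$ and $\dprod{A^\top(x_t-x_{t-1}),y_{t+1}-y_t}$, plus a telescoping part absorbed into the correction term of $V_t$.

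\textbf{Step 2: absorbing the cross terms (main obstacle).} Each cross term is bounded by $\|A\|_{\oneinf}\|x_{t+1}-x_t\|_1\|y_t-y_{t-1}\|_1$. We then apply Young's inequality with weights chosen asymmetrically:
\[
\|A\|_{\oneinf} ab \le \frac{a^2}{4\etax\cdot\alpha}+\alpha\etax\|A\|^2_{\oneinf}b^2 .
\]
Pinsker gives $\frac{1}{\etax}\KL(x_{t+1}\|x_t)\ge \frac{1}{2\etax}\|x_{t+1}-x_t\|_1^2$, and similarly for $y$ with $\etay$. Matching the coefficients of $\|y_t-y_{t-1}\|_1^2$ against $\frac{1}{\etay}$ produces the requirement $\etax\etay\|A\|_{\oneinf}^2\lesssim$ const. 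The hard part is the bookkeeping across consecutive times, namely that the $b$-term at time $t$ must be paid by the KL decrement of time $t-1$. This is what the correction term in $V_t$ is for, and we expect the constant $1/6$ to come out of optimizing $c$ and $\alpha$. The target is
\[
V_{t+1}\le V_t-\kappa\Big(\frac{1}{\etax}\|x_{t+1}-x_t\|_1^2+\frac{1}{\etay}\|y_{t+1}-y_t\|_1^2\Big)
\]
for some $\kappa>0$, with $V_t\ge0$.

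\textbf{Step 3: conclusion.} The finiteness of $V_0$ uses $Z_0\in\relint\prodSimplex$. Since $V_t$ is bounded below, summing the one-step inequality gives $\sum_t\|z_{t+1}-z_t\|^2<\infty$, so $\|z_{t+1}-z_t\|\to0$. Let $\bar Z$ be any accumulation point of $Z_t=(z_t,z_{t-1})$ in the compact set $\prodSimplex$. It has equal halves, and by continuity of $\Phi_1$ on $\prodSimplex$ it satisfies $\Phi_1(\bar Z)=\bar Z$. Hence $\bar Z\in\FPone$ by Theorem~\ref{thm:FPPhi}. Compactness then yields $d(Z_t,\FPone)\to0$: otherwise some subsequence would stay at distance at least $\epsilon$ from $\FPone$, and it would have an accumulation point outside $\FPone$.
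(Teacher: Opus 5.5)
Your route is correct, and its mechanics match the paper's: the three-point identity, Pinsker, Young's inequality with a product-type weight, a half-KL term between consecutive iterates that pays for the lagged increments, and then the same accumulation-point and compactness argument. The real difference is which potential carries the energy.
\begin{itemize}
\item The paper uses $V(Z_{1:t})=\psi(Z_{1:t-1})+\varphi(Z_t)-\gap(Z_t)$ with the log-sum-exp potential $\psi$. Its descent inequality is pure algebra in which no equilibrium appears. The Nash equilibrium enters only afterwards, through exactly your KL-to-$x^\star$ telescoping, to show that $\psi$ is bounded below.
\item In your version the saddle-point inequality is used inside the descent step, and lower-boundedness becomes immediate.
\end{itemize}

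To complete your sketch, the correction term you leave open must be
\[
\widetilde V_t=D_t+N_{t-1}-\gap(Z_t)+\tfrac{1}{2\etax}\KL(x_t,x_{t-1})+\tfrac{1}{2\etay}\KL(y_t,y_{t-1}),
\]
where
\begin{itemize}
\item $D_t=\tfrac{1}{\etax}\KL(x^\star,x_t)+\tfrac{1}{\etay}\KL(y^\star,y_t)$,
\item $N_s=(x^\star)^\top Ay_s-x_s^\top Ay^\star\ge 0$,
\item $\gap(Z_t)=x_t^\top Ay_{t-1}-x_{t-1}^\top Ay_t$.
\end{itemize}
This choice follows from the exact identity
\begin{align*}
D_{t+1}-D_t={}&-2N_t+N_{t-1}+\gap(Z_{t+1})-\gap(Z_t)+(x_{t+1}-x_t)^\top A(y_t-y_{t-1})\\
&-(x_t-x_{t-1})^\top A(y_{t+1}-y_t)-\tfrac{1}{\etax}\KL(x_{t+1},x_t)-\tfrac{1}{\etay}\KL(y_{t+1},y_t).
\end{align*}
One also checks that $V(Z_{1:t})=\widetilde V_t+\sum_{s=1}^{t-1}N_s+\mathrm{const}$, so the two energies differ only by a nondecreasing accumulated duality gap.

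Three small corrections:
\begin{itemize}
\item The saddle-point terms you list sum to $N_t\ge 0$. It is their contribution, which carries a minus sign, that is nonpositive.
\item You do not get $V_t\ge 0$ for free, because $-\gap(Z_t)$ is sign-indefinite. What you do get is $\widetilde V_t\ge -2\|A\|_{\oneinf}$, which is all Step~3 uses.
\item Optimizing $c$ and the Young weight gives the condition $\etax\etay\|A\|_{\oneinf}^2<c(1-c)\le\frac14$, not $\frac16$. The paper's constant comes from the non-optimized split $\frac{3\etay}{2}$ versus $\frac{1}{6\etay}$, which keeps a margin only on the lagged increments. The range $<\frac14$ contains the stated one, so the theorem still follows.
\end{itemize}
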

  For a formal proof, see Appendix~\ref{apx:proof_convFP}.
   We note that Theorem~\ref{thm:convFP} does not show convergence to the set of Nash equilibria. Recall the example from Remark~\ref{rem:strictInclusion}.  
 In this example, the pure actions are repelling, but $\FPone$ is globally attracting. 
 Intuitively, this implies that the dynamics will not converge to the repelling fixed points, but only to the Nash equilibria, which we show formally in the following theorem. 
   \begin{theorem}\label{thm:NEconvergenceZSG}
    Consider $\Gamma(A,-A^\top)$ and assume  $Z_0   \in \relint \prodSimplex$ and $\etax, \etay$ satisfy the step size condition of Theorem~\ref{thm:convFP}. Further, assume that $\FPone $ is finite. Let $(Z_t)_{t\in\NN}$ be the sequence of iterates defined by $Z_t = {\Phi_1}(Z_{t-1})$. Then,  $(Z_t)_{t\in\NN}$ converges and the limit $Z_{\infty}$ is a Nash equilibrium; that is, $Z_\infty \in \NE$.
\end{theorem}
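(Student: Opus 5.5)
Since $\FPone$ is globally attracting on $\relint\prodSimplex$ by Theorem~\ref{thm:convFP} and finite by assumption, it suffices to first show that $(Z_t)$ converges to a single fixed point, and then that this fixed point cannot lie outside $\NE$.

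\textbf{Convergence to a single point.} Write $\FPone=\{\tZ_1,\dots,\tZ_K\}$ and set $r=\tfrac13\min_{k\neq l}\norm{\tZ_k-\tZ_l}>0$. Theorem~\ref{thm:convFP} gives $d(Z_t,\FPone)\to0$, so for large $t$ each $Z_t$ lies in exactly one ball $B_r(\tZ_k)$. The increments $\norm{Z_{t+1}-Z_t}$ tend to zero, since $\Phi_1$ is continuous on the compact set $\prodSimplex$ and fixes each $\tZ_k$. Hence once $\norm{Z_{t+1}-Z_t}<r$, the iterates cannot jump from one ball to another, because two balls are at least $r$ apart. So the index $k$ is eventually constant, and $Z_t\to\tZ_k=:Z_\infty$.

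\textbf{The limit is a Nash equilibrium.} Suppose $Z_\infty=[\tx,\ty,\tx,\ty]$ is a fixed point but not in $\NE$. By Theorem~\ref{thm:FPPhi}, all strategies in $\supp(\tx)$ earn equal payoff against $\ty$ (and symmetrically for $y$). Failing to be Nash therefore means, without loss of generality, that some $i\notin\supp(\tx)$ satisfies $(A\ty)_i>(A\ty)_j$ for $j\in\supp(\tx)$, with gap $\gamma>0$. The orbit starts in the relative interior, so $x_{t,i}>0$ for all $t$. The plan is to compare $x_{t,i}$ with $x_{t,j}$ for a fixed $j\in\supp(\tx)$. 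From \eqref{eq:def_ew},
\[
\log\frac{x_{t+1,i}}{x_{t+1,j}}=\log\frac{x_{t,i}}{x_{t,j}}+\etax\big[A(2y_t-y_{t-1})\big]_i-\etax\big[A(2y_t-y_{t-1})\big]_j .
\]
Since $2y_t-y_{t-1}\to\ty$, the increment is eventually at least $\etax\gamma/2>0$. Thus $\log(x_{t,i}/x_{t,j})\to+\infty$. But $x_{t,j}\to\tx_j>0$, so $x_{t,i}$ would be unbounded, which contradicts $x_{t,i}\le1$. The $y$-player case is symmetric, with $B=-A^\top$. Hence $Z_\infty\in\NE$.

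\textbf{Main obstacle.} The only subtle point is the first step, showing that convergence to a finite set forces convergence to one element; the vanishing-increment argument handles it. The second step is a standard replicator-type argument. It needs only positivity of the iterates, which follows from the interior initialization, so no spectral analysis is required.
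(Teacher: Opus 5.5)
Your proof is correct and has the same two-step structure as the paper's. The second step is exactly the paper's argument: once the whole sequence converges, the log-ratio of a profitable off-support coordinate to an on-support coordinate grows at a linear rate, which contradicts boundedness. The paper packages this as Lemma~\ref{lem:delta_epsilon_tau}; your inline version, using $x_{t,j}\to\tx_j>0$ and $x_{t,i}\le 1$, is equivalent.

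The difference lies in the first step. The paper takes vanishing increments from Lemma~\ref{lem:keyTech}, Part~3. It then shows that the $\omega$-limit set is connected and contained in $\FPone$, and concludes that this set is a single point because countable metric spaces are totally disconnected. You instead derive vanishing increments from $d(Z_t,\FPone)\to 0$ together with uniform continuity of $\Phi_1$ on the compact set $\prodSimplex$. You then use a direct separation argument with balls of radius $r=\tfrac13\min_{k\neq l}\|\tZ_k-\tZ_l\|$.

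Your route is more elementary and self-contained given Theorem~\ref{thm:convFP}. However, it uses finiteness essentially, since you need $r>0$. The paper's argument works verbatim for countable, or more generally totally disconnected, fixed-point sets. One minor gap: if $\FPone$ is a singleton, $r$ is undefined, but then $Z_t\to\tZ_1$ follows immediately from Theorem~\ref{thm:convFP}.
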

For a proof, see Appendix~\ref{apx:proofglobal}. While our results cover a wide range of step sizes, note that even for the toy examples in Figure~\ref{fig:ExamplenRPS}, we observe that the step size bound is not tight. Furthermore, as the simulations in Figure~\ref{fig:ExamplenRPS} (b) show, we observe similar behaviour for $1$-optEW and non-zero-sum games. Theorem~\ref{thm:NEconvergenceZSG} does not apply here since it requires zero-sum games. This motivates the results of Section~\ref{sec:local}.
 \begin{remark} Consider a zero-sum game $\Gamma(A,-A^\top)$ where $A$ has at least one non-zero entry. 
For  step sizes $\etax = \etay $ with $0<\etax\etay \leq1/({8^2 \norm{A}^2_{\ell_2}})$, the results in \citet{Wei:2021aa} imply last iterate convergence of $1$-optEW. Observe that $ 1/({8^2 \norm{A}^2_{\ell_2}}) <   {1}/({6 \norm{A}_{\oneinf}^2})$. However, we emphasize that their focus is on convergence rates.  \end{remark}
 \begin{figure}[t]
\centering

 \begin{minipage}[t]{.4829\linewidth}
\centering
\includegraphics[width=\linewidth]{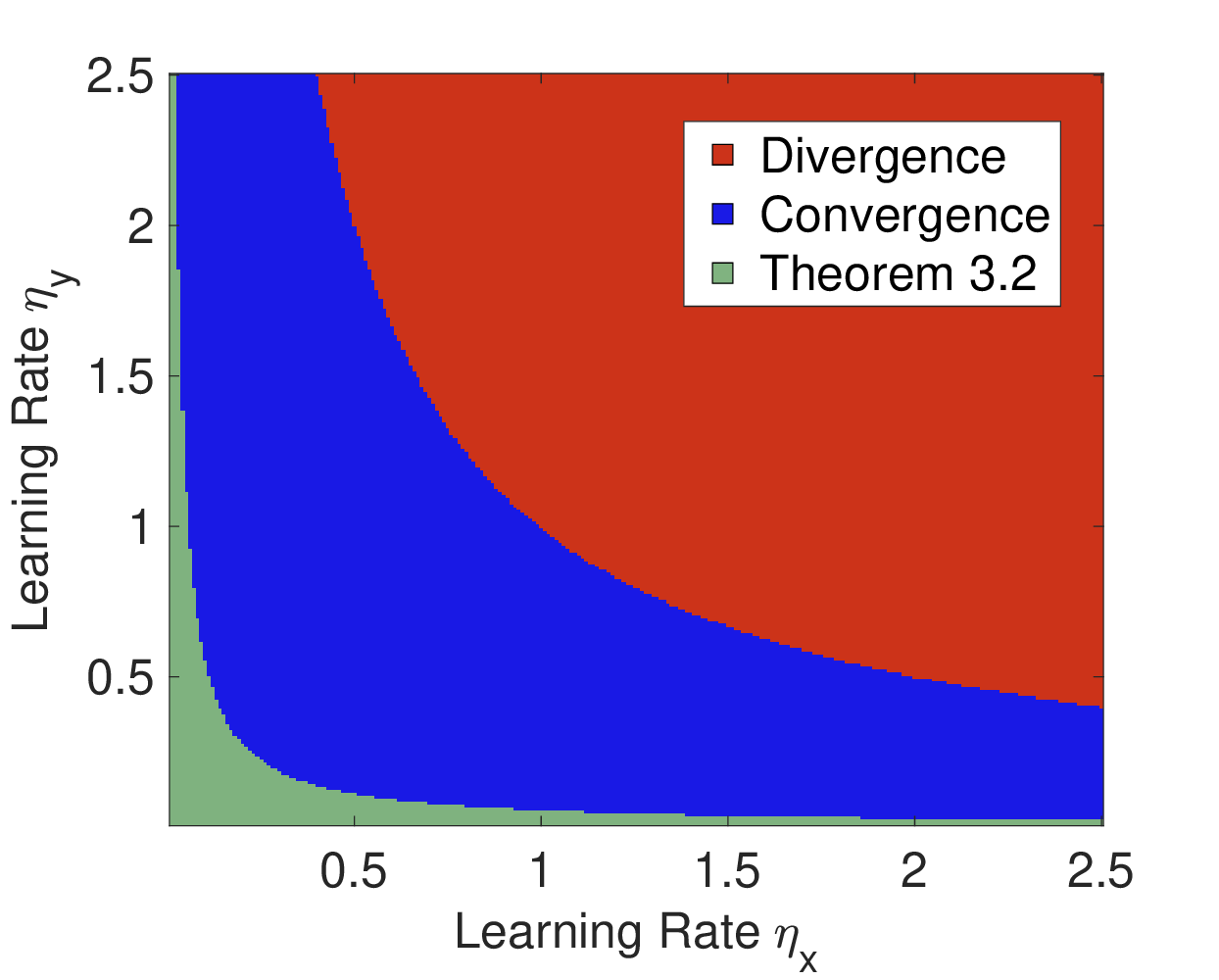}

\smallskip
\footnotesize{\textbf{(a)} Rock-Paper-Scissors}
\end{minipage}
\hfill
\begin{minipage}[t]{.4829\linewidth}
\centering
\includegraphics[width=\linewidth]{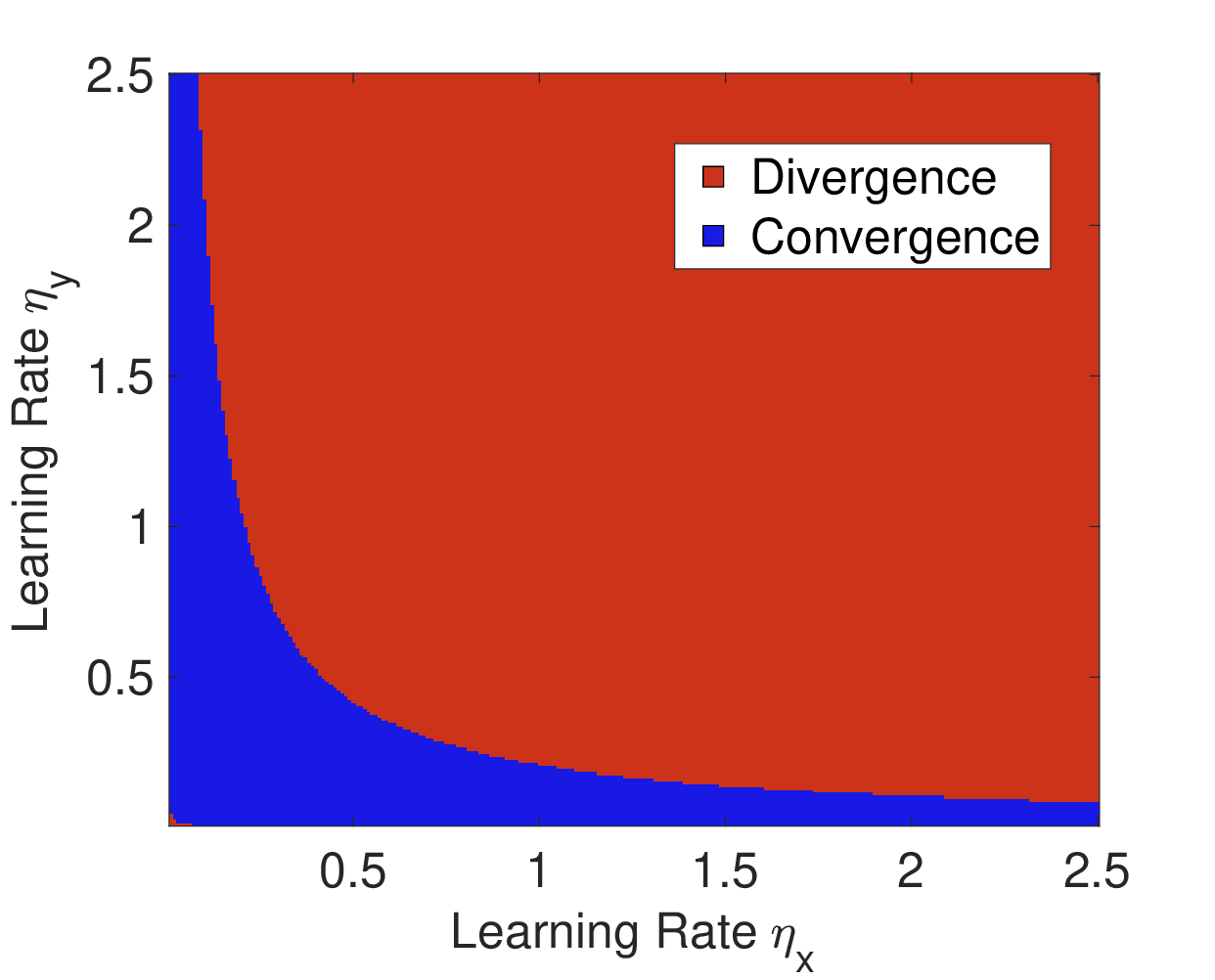}

\smallskip
\footnotesize{\textbf{(b)} Non-Zero-Sum Matching Pennies}
\end{minipage}

\caption{Estimated convergence and divergence of $1$-optEW as a function of $\etax\etay$. The dynamics are labeled as convergent if they reach an $\epsilon$-neighbourhood of the Nash equilibrium within $10\,000$ iterations. The non-convergence for $\etax\etay< 0.01$ is due to the finite-time cut-off. For zero-sum games, the step size range shown in Theorem~\ref{thm:NEconvergenceZSG} is highlighted. For both games, see Appendix~\ref{apx:games} for details.}
\label{fig:ExamplenRPS}
\end{figure}

 \subsection{Why $m = 1$? }
 
 In our result, we focused on vanilla optimistic exponential weights, namely $1$-optEW. 
 The local stability results in Section~\ref{sec:ZSG_local_Stab} show that, for $m\le 1/2$, fully mixed equilibria in the stated zero-sum setting are locally unstable. Hence, no result asserting global asymptotic stability can hold in that regime.
 The case $m \in (\frac{1}{2}, 1)$, however, remains unclear. In Section~\ref{sec:open_questions}, we provide numerical evidence suggesting that $m$-optEW  global convergence may fail for $m \in (\frac{1}{2}, 1)$ with  step sizes that still guarantee asymptotic stability (cf. Section~\ref{sec:local}). Establishing this rigorously remains an interesting open problem.

\section{Local Convergence}\label{sec:local}

We now characterize the asymptotic stability and instability of fixed points and equilibria under the $m$-optEW dynamics as a function of the product of the step sizes.
 Our main local result is that stability can be read off from the Jacobian of the
dynamics at a fixed point, once it is restricted to the hyperplane containing the simplex. We then show that the corresponding spectrum admits an explicit
description in terms of a smaller local matrix $M_x(\tZ)$ (cf.  \eqref{eq:def_Mx}), from which one obtains that,
for equilibria, stability depends only on the product $\etax\etay$.
\paragraph{Notation: }
 For any finite-dimensional real linear map $T:E\to E$ and any real $T$-invariant subspace $V\subseteq E$, the notation
$\eig{T|_V}, \eigTilde{T|_V},$ and $\SR{T|_V}$
denotes the spectrum, the spectrum without multiplicities, and the spectral radius of the complexified restriction
$(T|_V)_\mathbb{C}:V_\mathbb{C}\to V_\mathbb{C}.$
When a real matrix or real linear map is applied to a vector in a complexified space, we use its complex-linear extension and suppress the subscript $\mathbb{C}$ whenever it is clear from the context.
The modulus of a complex number $c \in \CC$ is denoted by $\absv{c}$. 

\subsection{Stability and the Jacobian}
 Following a standard approach, we analyze the eigenvalues of the differential of the dynamics map $\Phi_m$ at the fixed point. There are, however, some
technical hurdles we need to handle
\begin{itemize}
  \item We are interested in ${\Phi_m}|_{\prodSimplex}$ and $\prodSimplex$ has an empty interior, so
        the computation of its differential is non-trivial. Recall, we defined ${\Phi_m}$
        as a function $\cD_{\Phi_m} \to \RR^d $, where $\cD_{\Phi_m}$ is open, so the
        computation of its differential is straightforward via standard rules. 
        However, the mapping of interest is the restriction of ${\Phi_m}$ to $\prodSimplex$.
   
  \item Equilibria may occur on the boundary of $\prodSimplex$. There, the differential is not
        \emph{a priori} enough to characterize stability.
\end{itemize}
 Theorem~\ref{thm:stab_from_jac} is a key technical result to handle these technical challenges.
Define the linear part of the hyperplane containing $\prodSimplex$ as, 
\[
  L =  \{ [h_1, h_2, h_3, h_4] \in \RR^{d_x} \times \RR^{d_y}\times \RR^{d_x} \times \RR^{d_y}
    \mid h_k^\top \mathbf 1 = 0  \text{ for all $k \in\{1,2,3,4\}$} 
  \}
  \, .
\]
Recall that to discuss the eigenvalues of the restricted Jacobian
$J_{\Phi_m}(\tZ)|_L$, one first has to show that $L$ is invariant under $J_{\Phi_m}(\tZ)$.

\begin{theorem}\label{thm:stab_from_jac}
  Let $\tZ \in \prodSimplex $ denote a fixed point of ${\Phi_m}$. That is, $\tZ \in \FP  $.
  Then $J_{\Phi_m}(\tZ) L \subset L$. 
  Moreover, $ \tZ$ is an asymptotically stable fixed point for $\restr{\Phi_m}{\prodSimplex}$ if the spectral radius satisfies 
  $\SR{\restr{J_{\Phi_m}(\tZ)}{L}} < 1$, and $ \tZ$ is unstable if  $\SR{\restr{J_{\Phi_m}(\tZ)}{L}} > 1$.
\end{theorem}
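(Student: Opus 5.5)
We first establish the invariance $J_{\Phi_m}(\tZ)L\subset L$ by differentiating $\Phi_m$ on the open set $\cD_{\Phi_m}$. The last two blocks of $\Phi_m$ are coordinate projections, so their differential maps $[h_1,h_2,h_3,h_4]$ to $[h_1,h_2]$, which lies in $L$. For the first block, write $u(Z)=\px\odot\Exp(\etax A((1+m)\py-m\ppy))$. Since $P_d$ is the normalization map, $\ones^\top P_d(v)=1$ for all admissible $v$. Differentiating this identity gives $\ones^\top DP_d(v)=0$. By the chain rule, the differential of the first block therefore has range contained in $\ones^\perp$, and the same holds for the second block. Hence $J_{\Phi_m}(\tZ)$ maps all of $\RR^{2(d_x+d_y)}$ into $L$, and in particular $L$ is invariant. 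Explicitly, $DP_d(v)=(\ones^\top v)^{-1}(I-P_d(v)\ones^\top)$, which at the fixed point gives the usual replicator-type Jacobian $(\diag(\px)-\px\px^\top)$ structure. We will record this for later use.

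Next we reduce the dynamics on $\prodSimplex$ to a map on an open subset of the affine space $\tZ+L$. The difficulty is that $\Phi_m$ is only defined on $\cD_{\Phi_m}$, and $\tZ$ may lie on the boundary of $\prodSimplex$. The key observation is that $\cD_{\Phi_m}$ is an open neighbourhood of $\prodSimplex$ (at simplex points the normalizing sums are positive). Consider $G:=\Phi_m|_{(\tZ+L)\cap U}$ for a small open ball $U$ around $\tZ$. Since every output of $\Phi_m$ has blocks summing to one, $G$ maps into $\tZ+L$, so $G$ is a smooth self-map near the fixed point $\tZ$ of an open subset of the affine space $\tZ+L$. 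Its differential at $\tZ$ is $J_{\Phi_m}(\tZ)|_L$. The simplex constraints are preserved, since positive coordinates stay positive and zero coordinates stay zero under multiplicative updates. This makes $\prodSimplex\cap U$ forward invariant in the sense needed.

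For the stable case, $\SR{J|_L}<1$, we pick a norm on $L$ adapted to $J|_L$ (for instance via Jordan form) so that $\norm{J|_L}\le \rho+\epsilon<1$. By Taylor's theorem, $G$ is then a contraction on a small ball around $\tZ$ in $\tZ+L$, which gives local exponential convergence and Lyapunov stability for $G$. Restricting to initial points in $\prodSimplex$, whose iterates under $G$ coincide with those of $\Phi_m|_{\prodSimplex}$ and stay in $\prodSimplex$, yields asymptotic stability for $\restr{\Phi_m}{\prodSimplex}$. This direction is routine.

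We expect the unstable case, $\SR{J|_L}>1$, to be the main obstacle, because instability must be witnessed by orbits starting inside $\prodSimplex$, not merely in $\tZ+L$. If $\tZ$ is in the relative interior, the standard instability theorem for $C^1$ maps applies directly. At the boundary, we plan to split $L$ using the structure of the Jacobian. For indices $i\notin\supp(\px)$, the row of the $x$-block of $J$ reduces to a diagonal entry, namely $\exp(\etax[A\px]_i-\etax\bar a)$ for the current block (combined with the shift block). These coordinates form an invariant "transverse" subsystem, whose eigenvalues we will compute as roots of the quadratic associated with the optimistic recursion. The remaining eigenvalues belong to the restriction to the face $\{Z: \supp\subseteq\supp(\tZ)\}$, which is itself $\Phi_m$-invariant and has $\tZ$ in its relative interior.

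If the eigenvalue of modulus greater than one comes from the face, we apply the interior instability theorem within that face, which lies inside $\prodSimplex$, and we are done. If instead it comes from the transverse part, we argue directly. Consider points with small positive mass $\varepsilon$ on an index $i$ with $e^{\etax([A\py]_i-\bar a)}>1$ (for the relevant growth factor). Along iterates staying near $\tZ$, this mass is multiplied by a factor at least $\lambda>1$ per step, so it must eventually leave any small neighbourhood, which contradicts stability. To carry this out, we will verify that the transverse eigenvalues are exactly these positive growth factors, and we will use a Lyapunov-type log-coordinate argument to handle the coupling with the shift variables.
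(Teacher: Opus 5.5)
Your proposal is correct and follows essentially the same route as the paper. Local invariance of $\tZ+L$ plus linearization gives the stable case. For instability you split into two arguments:
\begin{itemize}
\item an off-support growth argument, where you track the raw mass $x_{t,i}$ while the paper uses the ratio $x_{t,i}/x_{t,j}$ of Lemma~\ref{lem:delta_epsilon_tau}; continuity of the per-step factor in the full state handles the shift variables, so your planned log-coordinate argument is unnecessary;
\item a linearization inside the face-tangent space $\Lpar$, where $\tZ$ is relatively interior.
\end{itemize}
The one real difference is how you obtain the spectral split. You read it off the block-triangular structure of the off-support rows, which gives eigenvalues $\V_{1,i},\V_{2,j},0$ on the quotient $L/\Lpar$. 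This is more self-contained than the paper's appeal to Theorem~\ref{thm:jac_spectrum}. One misstatement to correct: $J_{\Phi_m}(\tZ)$ does not map all of $\RR^{2(d_x+d_y)}$ into $L$, since its last two blocks return $h_1,h_2$. Only $J_{\Phi_m}(\tZ)L\subset L$ holds, and that is all you use.
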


The proof is deferred to Appendix~\ref{app:jac_stab}. It combines four main
ingredients. First, one computes the Jacobian of ${\Phi_m}$ at a fixed point. Second, one
shows that the relevant local dynamics is obtained by restricting ${\Phi_m}$ to the affine
space $\tZ+L$, and that the corresponding linearization is given by the
restricted Jacobian $J_{\Phi_m}(\tZ)|_L$. Third, one applies standard local dynamical systems
arguments to this restricted dynamics. Finally, because fixed points may lie on the
boundary of $\prodSimplex$, the instability statement requires an additional argument adapted
to the constrained setting.

The borderline case $\tSR{ J_{\Phi_m}(\tZ)|_L}=1$ is not covered by this criterion:
The linearization alone does not suffice in general to determine stability.

\subsection{The Spectrum of the Jacobian}

\paragraph{Notation and definitions}
Let $\tZ$ be a fixed point of ${\Phi_m}$. By Theorem~\ref{thm:FPPhi}, $\tZ$ must have the form $\tZ = \Zfp$.
We provide a characterization of the stability of  $\tZ$ in terms of a local
matrix $M_x(\tZ)$. We denote by
  $S_x := \{i \in \{1,\dots,d_x\} \mid \tx_i>0\}$ and
  $S_y := \{j \in \{1,\dots,d_y\} \mid \ty_j>0\}$ the supports of the two strategies, with
  cardinalities $n_x:=|S_x|$ and $n_y:=|S_y|$.
Since $\tZ$ is a fixed point, the payoffs are constant on the supports, so we may define
  $\vx := (A\ty)_i$ for any $i\in S_x$ and $\vy := (B\tx)_j$ for any $j\in S_y$.
We also write
  $L_x := \{h \in \RR^{n_x} \mid h^\top \mathbf 1 = 0\}$  
and denote by $\Pi_{S_x}\in\RR^{n_x\times d_x}$ the
coordinate-selection matrix associated with the supports.
For $i\in\{1,\dots,d_x\}$ and $j\in\{1,\dots,d_y\}$, let  $\V_{1, i} := \exp\!\left( \etax \big( (A \ty)_i - \vx  \big) \right)$ and
$\V_{2, j} := \exp\!\left( \etay \big( (B \tx)_j - \vy  \big) \right)$,
and collect the off-support terms in
\begin{equation}\label{eq:def_W}
  W = \{ \V_{1, i} \mid i \in \{1, \dots, d_x\} \setminus S_x \} 
  \cup 
  \{ \V_{2, j } \mid j \in \{1, \dots, d_y\} \setminus S_y \} \,.
\end{equation}
 For vector $v \in \RR^d$, write $H(v) := \diag(v) - v v^\top$.
The reduced matrix governing the nontrivial part of the spectrum is then
\begin{equation}
  \label{eq:def_Mx}
  M_x(\tZ) = \Pi_{S_x} H(\tx) A H(\ty) B \Pi_{S_x}^\top \in \mathbb{R}^{n_x\times n_x}
  \,.
\end{equation}
Finally, for any $\lambda \in \mathbb C \setminus \{ m / (m+1)\}$, define the rational
map
\[
  Q_m(\lambda) = \frac{\lambda^2(\lambda - 1)^2}
  {((m+1)\lambda - m)^2}
  \,.
\]

\paragraph{Main result}
We are now ready to state the main theorem of this section, which characterizes the
spectrum of $J_{{\Phi_m}}(\tZ)|_L$ via $M_x(\tZ)$. We note that the following theorem does not account for multiplicities of
eigenvalues; however, for the stability analysis, multiplicities are irrelevant. 
\begin{theorem}
  \label{thm:jac_spectrum} Let $\tZ \in \prodSimplex$ be a fixed point of ${\Phi_m}$ and assume $n_x \geq n_y$.
   The spectrum of the Jacobian without multiplicities is
  \[
    \eigTilde{J_{\Phi_m}(\tZ)|_{L}} \setminus \{0\}
    = \left(W \cup Q_m^{-1} \Big( \eigTilde{\etax \etay M_x(\tZ)|_{L_x}} \Big)\right) \setminus \{0\}\, .
  \]
   \end{theorem}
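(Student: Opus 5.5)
The plan is to compute $J_{\Phi_m}(\tZ)$ explicitly at the fixed point $\tZ=\Zfp$, put it in a block-triangular form by separating on-support from off-support coordinates, and reduce the eigenvalue equation to the one for $M_x(\tZ)$.

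\textbf{Step 1: the Jacobian.} Differentiating $u\mapsto P_d(u)$ gives $\frac{1}{\ones^\top u}(I-P_d(u)\ones^\top)$. At $\tZ$ the normalizing sum equals $e^{\etax\vx}$, because $\tx$ vanishes off $S_x$. This yields the block form
\[
J_{\Phi_m}(\tZ)=\begin{pmatrix} D_1 & (1+m)\etax H(\tx)A & 0 & -m\etax H(\tx)A\\ (1+m)\etay H(\ty)B & D_2 & -m\etay H(\ty)B & 0\\ I&0&0&0\\ 0&I&0&0\end{pmatrix},
\]
where $D_1=(I-\tx\ones^\top)\diag(\V_1)$ and $D_2=(I-\ty\ones^\top)\diag(\V_2)$. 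Since $\ones^\top H(v)=0$ and $\ones^\top(I-v\ones^\top)=0$ for $v$ in a simplex, invariance of $L$ is immediate.

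\textbf{Step 2: reduction for $\lambda\neq 0$.} A $\lambda$-eigenvector in $L_{\CC}$ has the form $(h_1,h_2,h_1/\lambda,h_2/\lambda)$. Setting $c(\lambda)=((m+1)\lambda-m)/\lambda$, the eigenvalue equation becomes
\[
\lambda h_1=D_1h_1+c(\lambda)\etax H(\tx)Ah_2,\qquad \lambda h_2=D_2h_2+c(\lambda)\etay H(\ty)Bh_1 .
\]
Rows of $H(\tx)$ and of $\tx\ones^\top$ vanish off $S_x$, so an off-support coordinate $i$ satisfies $\lambda h_{1,i}=\V_{1,i}h_{1,i}$, and similarly for $h_2$.

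Ordering coordinates as (off-support, on-support), the matrix is block triangular. The off-support block is diagonal with entries in $W$. The on-support block is the same system with off-support coordinates set to zero; there $\V_1=1$ on $S_x$ and $(I-\tx\ones^\top)h=h$ for $h\in L_x$. Hence $(\lambda-1)h_1=c\,\etax H(\tx)Ah_2$ and $(\lambda-1)h_2=c\,\etay H(\ty)Bh_1$. Eliminating $h_2$ gives
\[
\lambda^2(\lambda-1)^2h_1=((m+1)\lambda-m)^2\,\etax\etay\, M_x(\tZ)h_1,
\]
i.e. $Q_m(\lambda)\in\eig{\etax\etay M_x(\tZ)|_{L_x}}$ whenever $h_1\neq0$. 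Note that $M_x(\tZ)L_x\subseteq L_x$ because $\ones^\top H(\tx)=0$. For the inclusion ``$\supseteq$'' I argue in reverse. Take an eigenpair $(\mu,h_1)$ of $\etax\etay M_x|_{L_x}$ and a nonzero root $\lambda$ of $Q_m(\lambda)=\mu$. Define $h_2:=c(\lambda)\etay H(\ty)Bh_1/(\lambda-1)$ and lift $(h_1,h_2)$ back to a full eigenvector with zeros off the supports. For off-support eigenvalues $w\in W$, block triangularity shows that $w$ is an eigenvalue of the full matrix; I will check this directly by solving for the on-support part, or simply invoke the spectrum of a block-triangular matrix.

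\textbf{Step 3: degenerate cases (the main obstacle).} Three situations need care.

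First, $\lambda=m/(m+1)$, where $c=0$ and $Q_m$ is undefined. Then the system forces $(\lambda-1)h_1=0=(\lambda-1)h_2$, so such $\lambda$ cannot be an eigenvalue of the on-support block.

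Second, $h_1=0$ with $h_2\neq0$. This forces $\lambda=1$, with $H(\tx)Ah_2=0$. In this case the left-hand side contains $1$, and I must show $0=Q_m(1)$ lies in $\eigTilde{M_x|_{L_x}}$. Here I use $n_x\ge n_y$: on $L_x$, $M_x$ factors through $H(\ty)B\Pi_{S_x}^\top$, whose range in the $y$-support directions has dimension at most $n_y-1$. Together with the existence of $h_2\neq 0$ in $\ker H(\tx)A$ on $L_y$, a dimension or rank argument should produce a nonzero $h\in L_x$ with $M_xh=0$. I expect this rank bookkeeping to be the delicate point, especially when $n_x=n_y$.

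Third, the converse when $\lambda=1$ is a root of $Q_m(\lambda)=0$. Here I must directly exhibit an eigenvector, taking $h_2$ from $\ker H(\ty)B$ or $h_1$ from $\ker M_x$ appropriately.

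Finally, removing $\{0\}$ from both sides disposes of the $\lambda=0$ eigenvalues, which come from the shift blocks and are irrelevant.
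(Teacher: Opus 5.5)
Your proposal is correct and follows the paper's proof essentially step by step. The shared steps are: the same Jacobian; the same reduction to $(\lambda-1)h_1=c(\lambda)\etax H(\tx)Ah_2$ and $(\lambda-1)h_2=c(\lambda)\etay H(\ty)Bh_1$ on the supports; the same exclusion of $\lambda=m/(m+1)$; and the same use of $n_x\ge n_y$ at $\lambda=1$. The paper obtains $W$ from the left eigenvectors $(e_k,0,0,0)$ rather than from block triangularity. In either version, passing from the full Jacobian to $J_{\Phi_m}(\tZ)|_L$ needs one remark: $\ones^\top$ annihilates the first two block rows, so every eigenvector with nonzero eigenvalue already lies in $L_{\CC}$.

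The $\lambda=1$ points you flag close directly, exactly as in the paper. Let $T=\Pi_{S_x}H(\tx)A\Pi_{S_y}^\top\colon L_y\to L_x$ and $R=\Pi_{S_y}H(\ty)B\Pi_{S_x}^\top\colon L_x\to L_y$, so that $M_x(\tZ)|_{L_x}=T\circ R$. Since $0\neq\Pi_{S_y}h_2$ lies in the kernel of $T$, you get $\rank(T\circ R)\le\rank T\le n_y-2<n_x-1=\dim L_x$, so the case $n_x=n_y$ causes no difficulty.

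For the converse at $\lambda=1$, take $v\in L_x\setminus\{0\}$ with $M_x(\tZ)v=0$ and set $r=H(\ty)B\Pi_{S_x}^\top v$. If $r=0$, use $(h_1,h_2)=(\Pi_{S_x}^\top v,0)$; otherwise use $(h_1,h_2)=(0,r)$. The second choice works because $H(\tx)Ar$ is supported on $S_x$ and $\Pi_{S_x}H(\tx)Ar=M_x(\tZ)v=0$.
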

   The proof consists of carefully following the eigenvalues and eigenvectors, and 
is detailed in Appendix~\ref{app:jac_spectrum}.
We exclude the eigenvalue $\lambda=0$ for technical reasons (the reduction to the
equation involving $Q_m$ requires dividing by $\lambda$). The value $0$ plays no role
in the stability analysis, since it is strictly less than one.

 An important feature of Theorem~\ref{thm:jac_spectrum} is that it separates the
spectrum into two parts: the off-support eigenvalues are collected in $W$, and the
support-restricted eigenvalues are governed by
$Q_m^{-1}\!\left(\eig{\etax\etay \restr{M_x(\tZ)}{L_x}}\right).$
The off-support eigenvalues may depend on $(\etax,\etay)$ separately, but they
do not affect the stability analysis of Nash equilibria. For example, if the fixed
point $\tZ$ corresponds to a quasi-strict 
Nash equilibrium\footnote{A  Nash equilibrium is quasi-strict if all best response pure actions have positive support.}, 
then every eigenvalue in $W$ has
modulus strictly smaller than $1$; whereas if $\tZ$ does not correspond to a Nash
equilibrium, then at least one eigenvalue in $W$ has modulus strictly larger
than $1$. Thus, the off-support spectrum distinguishes Nash equilibria from
non-equilibrium fixed points. Conditional on $\tZ$ corresponding to a quasi-strict Nash
equilibrium, all off-support eigenvalues are already stable, and the remaining
stability information is determined solely by the product $\etax\etay$ through
the spectrum of $\etax\etay M_x(\tZ)|_{L_x}$.  Building on this property, we derive results for special cases in Section~\ref{sec:consequences}.  

\begin{remark}\label{rem:yIdentity}
The assumption that the support of the $x$-player is greater than or equal to the $y$-player's support can be eliminated by noting that a similar result holds when $n_y \geq n_x$. Define $\Pi_{S_y}$ , $L_y$ and $M_y(\tZ)$ analogously. Then \[\eigTilde{\restr{J_{\Phi_m}(\tZ)}{L}} \setminus \{ 0\}= \left(W \cup Q_m^{-1} \Big( \eigTilde{\etax \etay \restr{M_y(\tZ)}{L_y}} \Big)\right) \setminus \{0\}\,.\]   
 Since the matrices $\restr{M_y(\tZ)}{L_y}$ and  $\restr{M_x(\tZ)}{L_x}$ share their non-zero complex spectrum,  
\[\eigTilde{\restr{J_{\Phi_m}(\tZ)}{L}}\setminus\{ 0\} = \left(W \cup Q_m^{-1} \Big( \eigTilde{\etax \etay \restr{M_y(\tZ)}{L_y}} \Big) \cup Q_m^{-1} \Big( \eigTilde{\etax \etay \restr{M_x(\tZ)}{L_x}} \Big)\right) \setminus \{0\}\, .\]   
This identity holds without any assumptions on the support size.
Note, however, that for many special cases, for example, unique pure or unique fully mixed Nash equilibria, we have $n_x = n_y$. Thus, the characterization with respect to either $\restr{M_x(\tZ)}{L_x}$ or $\restr{M_y(\tZ)}{L_y}$ is sufficient. 
\end{remark}
 
\subsection{Interpretation and Illustrations}

 Let
\[
  \Omega_m = 
  \Big\{
    \mu \in \CC \; | \;
    \forall \, \lambda  \in \CC \quad\text{ with }\quad
    Q_m(\lambda) = \mu 
    \Rightarrow 
    |\lambda| < 1
  \Big\}
\]
A consequence of Theorem~\ref{thm:jac_spectrum} is that a quasi-strict Nash equilibrium is an asymptotically stable fixed point of $m$-optEW if all eigenvalues 
of $\etax \etay M_x(\Zs)$ restricted to $L_x$ lie in $\Omega_m$, that is $\eig{\restr{\etax \etay M_x(\Zs)}{L_x}}\subseteq \Omega_m$. Note that $\Omega_m = \varnothing$ for $m \in (0,\frac{1}{2}]$, which provides the intuition behind Corollary~\ref{cor:instability}.
See Figure~\ref{fig:qm_regions} for an illustration of $\Omega_m$.
 \begin{figure}[h]
  \center
 \includegraphics[width=0.9\linewidth]
 {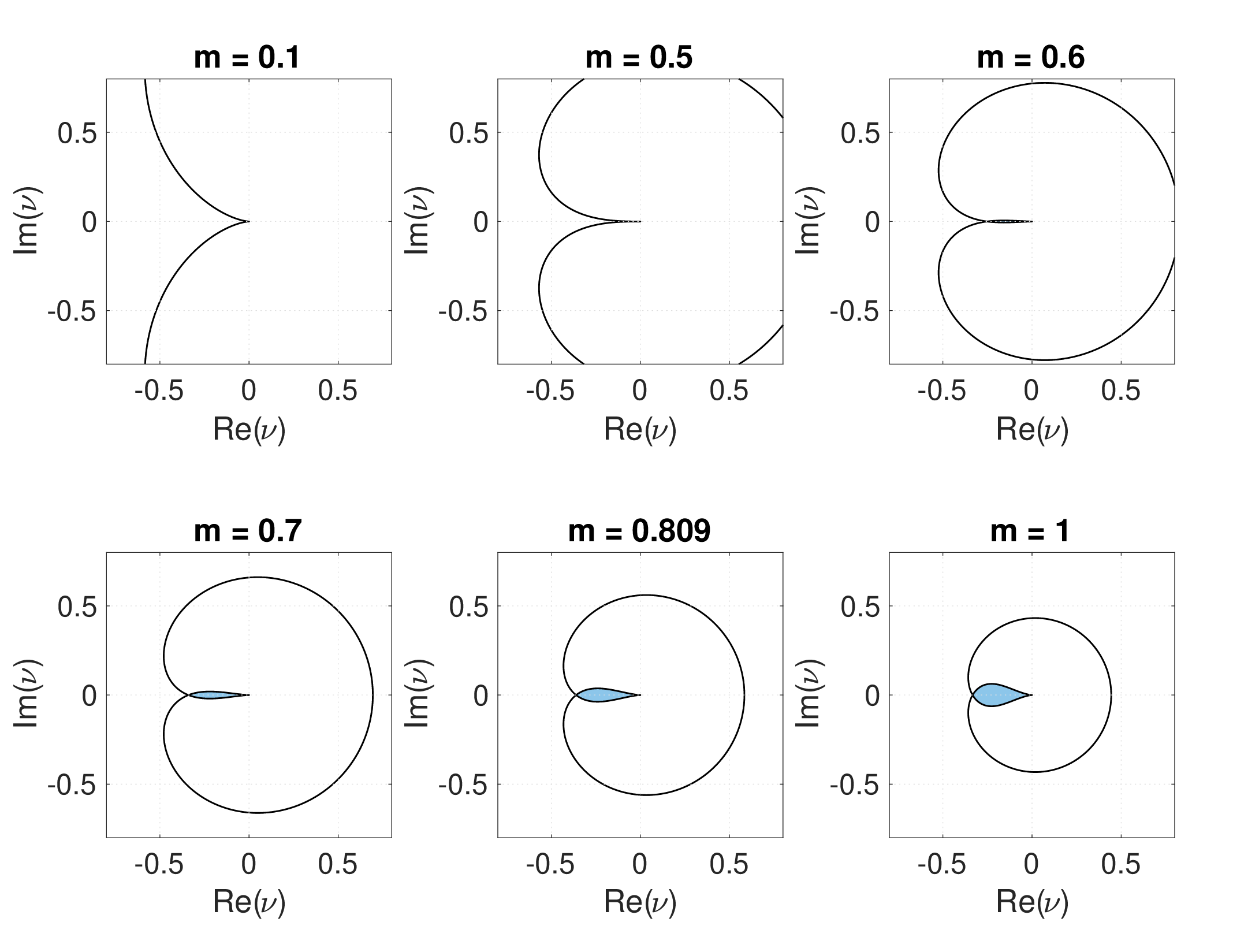} 
  \caption{
  The blue region indicates all solutions of $ Q_m(\lambda) = z$ that have modulus less than one. }\label{fig:qm_regions}
\end{figure}

\subsection{Consequences}\label{sec:consequences}
The following results are direct consequences of Theorem~\ref{thm:jac_spectrum} and Theorem~\ref{thm:stab_from_jac}. The first corollary relies on the observation that $\Omega_m = \varnothing$ for $m \in (0,1/2]$. For a formal proof, see Appendix~\ref{apx:corMsmall}.
\begin{corollary}\label{cor:instability}
Consider the game $\Gamma(A,B)$ with $d_x\geq d_y \geq 2$ and assume that there exists a fully mixed Nash equilibrium $Z^\star \in \NE$. 
Suppose $m \in (0,1/2]$.  If $\restr{M_x(Z^\star)}{L_x}$ is non-singular, then $Z^\star$ is an unstable fixed point for the $\Phi_m$ dynamics for any step size choices $\etax, \etay  >0$.
\end{corollary}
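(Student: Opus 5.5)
The plan is to combine Theorem~\ref{thm:jac_spectrum} with the instability part of Theorem~\ref{thm:stab_from_jac}. That is, we exhibit an eigenvalue of $J_{\Phi_m}(Z^\star)|_L$ of modulus strictly greater than one. Since $Z^\star$ is fully mixed, $S_x=\{1,\dots,d_x\}$ and $S_y=\{1,\dots,d_y\}$. Hence the off-support set $W$ in \eqref{eq:def_W} is empty. Because $n_x=d_x\geq d_y=n_y$, Theorem~\ref{thm:jac_spectrum} applies and gives
\[
\eigTilde{J_{\Phi_m}(Z^\star)|_L}\setminus\{0\} = Q_m^{-1}\Big(\eigTilde{\etax\etay M_x(Z^\star)|_{L_x}}\Big)\setminus\{0\}.
\]
Since $d_x\geq 2$, $L_x$ is nontrivial. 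By assumption $M_x(Z^\star)|_{L_x}$ is non-singular, so it has at least one eigenvalue $\mu_0\neq 0$. Thus $\mu:=\etax\etay\mu_0\neq 0$ for every choice $\etax,\etay>0$. Every root $\lambda$ of $Q_m(\lambda)=\mu$ is nonzero, because $Q_m(0)=0\neq\mu$, so all of these roots lie in the nonzero spectrum of the Jacobian.

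The core step is to show that for $m\in(0,1/2]$ and any $\mu\in\CC\setminus\{0\}$, the equation $Q_m(\lambda)=\mu$ has a root with $|\lambda|>1$. Clearing denominators, the roots are those of the two quartics
\[
\lambda^2(\lambda-1)^2 \mp s\big((m+1)\lambda-m\big)^2=0,\qquad s^2=\mu .
\]
Neither polynomial vanishes at $\lambda=m/(m+1)$, since there the left factor is nonzero, so these are genuine preimages. I would analyse a single quartic $p(\lambda)=\lambda^2(\lambda-1)^2-s((m+1)\lambda-m)^2$, which is monic of degree four, via Vieta. The product of its roots is $-sm^2$, which is not decisive on its own. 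A cleaner route uses the sum of the roots, which equals $2$ (the $\lambda^3$ coefficient is $-2$, independent of $s$), together with the constraint from the product.

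My first attempt is a direct argument: suppose all four roots satisfy $|\lambda_i|\leq 1$ and derive a contradiction with $m\leq 1/2$. One way is a Schur--Cohn/Jury test for $p$. Another is to write the equation in the form $\lambda(\lambda-1)=\pm\sqrt{s}\,((m+1)\lambda-m)$. Each sign then gives a quadratic $\lambda^2-(1+c(m+1))\lambda+cm=0$ with $c=\pm\sqrt{s}$. The roots of this quadratic have product $cm$ and sum $1+c(m+1)$. If both roots lie in the closed unit disk, then $|1+c(m+1)|\leq 1+|cm|$ and $|cm|\leq 1$. Combining the two signs $c$ and $-c$, I would show that $|1+c(m+1)|$ and $|1-c(m+1)|$ cannot both be bounded by $1+|c|m$ when $m\leq1/2$. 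Squaring and adding gives $2+2|c|^2(m+1)^2\leq 2(1+|c|m)^2$. This reduces to $|c|(m+1)^2\leq 2m+|c|m^2$, i.e.\ $|c|(2m+1)\leq 2m$. That inequality can hold for small $|c|$, so the sum condition alone is insufficient, and the full Schur--Cohn conditions on each quadratic are needed. For $\lambda^2+b\lambda+a$, both roots lie in the open unit disk if and only if $|a|<1$ and $|b-\bar b a|<1-|a|^2$. I would apply this to both signs and show that the two resulting conditions are incompatible exactly when $m\leq 1/2$. The boundary case yields modulus $\geq 1$, so strictness must be checked.

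This root-location claim, i.e.\ $\Omega_m=\varnothing$, is the main obstacle. A fallback is a continuity/winding argument: for small $|\mu|$, perturb from the double roots at $\lambda=1$. Near $\lambda=1$, $Q_m(\lambda)\approx(\lambda-1)^2$, so $\lambda\approx 1\pm\sqrt{\mu}$. A second-order expansion shows that for $m<1/2$ one of the two branches exits the unit disk. Extending this to all $\mu$ would then need a separate argument. Once a root $\lambda$ with $|\lambda|>1$ is found, $\SR{J_{\Phi_m}(Z^\star)|_L}>1$, and Theorem~\ref{thm:stab_from_jac} yields instability for all $\etax,\etay>0$.
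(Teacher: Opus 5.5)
\textbf{There is a genuine gap: the core step is never proved.} Your reduction is correct. Full support gives $W=\varnothing$, and Theorem~\ref{thm:jac_spectrum} applies because $n_x=d_x\ge d_y=n_y$. Non-singularity gives $\mu=\etax\etay\mu_0\neq 0$, and every preimage of $\mu$ under $Q_m$ is nonzero. What remains is the actual content of the corollary: for $m\in(0,1/2]$ and every $\mu\neq 0$, some root of $Q_m(\lambda)=\mu$ has $\absv{\lambda}>1$. You leave this as ``I would show''.

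The pieces you do supply do not close it:
\begin{itemize}
\item Your sum bound $\absv{1\pm c(m+1)}\le 1+\absv{c}m$ discards phase information. As you note yourself, it only yields $\absv{c}(2m+1)\le 2m$, which small $\absv{c}$ satisfies.
\item Your perturbation fallback near the double root $\lambda=1$ covers only small $\absv{\mu}$ and $m<1/2$. Here $\absv{\mu}=\etax\etay\absv{\mu_0}$ is arbitrary, and $m=1/2$ is included.
\item The Schur--Cohn criterion you quote is the open-disk characterization. It can only produce a root with $\absv{\lambda}\ge 1$, which is exactly the borderline case where Theorem~\ref{thm:stab_from_jac} says nothing. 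Likewise, $\Omega_m=\varnothing$ is weaker than what you need, which is a strict inequality $\absv{\lambda}>1$ for every $\mu\neq0$.
\item With $s^2=\mu$ the equation is a single quartic $\lambda^2(\lambda-1)^2-s^2((m+1)\lambda-m)^2$ that factors into two quadratics. It is not two quartics with $\mp s$.
\end{itemize}

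\textbf{How to close it.} Use the \emph{necessary} direction of Schur--Cohn for the closed disk. If both roots of $\lambda^2-a\lambda+b$ lie in $\DD$, then $\absv{a-\bar a b}\le 1-\absv{b}^2$; this follows from $a-\bar a b=\lambda_1(1-\absv{\lambda_2}^2)+\lambda_2(1-\absv{\lambda_1}^2)$. For your two quadratics, with $a=1\pm c(m+1)$ and $b=\pm cm$,
\[
a-\bar a b = 1\pm c-m(m+1)\absv{c}^2,\qquad \absv{b}^2=m^2\absv{c}^2 .
\]
Suppose all four roots lie in $\DD$. Square both inequalities and add them. Since $r:=1-m(m+1)\absv{c}^2$ is real, $\absv{r+c}^2+\absv{r-c}^2=2r^2+2\absv{c}^2$, and you obtain
\[
\absv{c}^2\big((1-2m)+m^2(2m+1)\absv{c}^2\big)\le 0 .
\]
This is impossible for $c\neq 0$ and $m\le 1/2$. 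So some root has $\absv{\lambda}>1$ directly, with no separate strictness check, and Theorem~\ref{thm:stab_from_jac} finishes the proof. This is precisely the paper's argument; your outline points at it but stops before this decisive identity.
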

 The next corollary follows from the definition of the set $W$. We noted before that for an equilibrium, all values in $W$ are strictly less than one. However, this does not hold for fixed points that are not equilibria. 
 \begin{corollary}
 Suppose $\tZ \in \FP \setminus \NE$, that is, $\tZ$ is a fixed point but not a Nash equilibrium. Then $\tZ$ is unstable under the $\Phi_m$ dynamics for any $\etax, \etay >0$.
\end{corollary}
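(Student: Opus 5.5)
The plan is to combine Theorem~\ref{thm:jac_spectrum} with the instability half of Theorem~\ref{thm:stab_from_jac}. By Theorem~\ref{thm:stab_from_jac}, it is enough to exhibit a single eigenvalue of $\restr{J_{\Phi_m}(\tZ)}{L}$ of modulus strictly greater than $1$. By Theorem~\ref{thm:jac_spectrum} (or its symmetric version in Remark~\ref{rem:yIdentity}, if $n_y > n_x$), every nonzero element of $W$ belongs to $\eigTilde{\restr{J_{\Phi_m}(\tZ)}{L}}$. The elements of $W$ are exponentials, hence positive. So it suffices to find some $\V_{1,i}>1$ or some $\V_{2,j}>1$.

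To produce such an element, I use that $\tZ=\Zfp\in\FP\setminus\NE$. By Theorem~\ref{thm:FPPhi}, $(A\ty)_i=\vx$ for all $i\in S_x$ and $(B\tx)_j=\vy$ for all $j\in S_y$. Since $[\tx,\ty]$ is not a Nash equilibrium, one of the players has a profitable deviation. Say it is the $x$-player, so $\max_{x\in\Delta_{d_x}} x^\top A\ty > \tx^\top A\ty = \vx$. This maximum over the simplex is attained at a pure strategy $e_i$, so $(A\ty)_i>\vx$ for some $i$. Such an $i$ cannot lie in $S_x$, because there the payoff equals $\vx$. Hence $i\notin S_x$, and
\[
\V_{1,i}=\exp\!\big(\etax((A\ty)_i-\vx)\big)>1 ,
\]
which belongs to $W$. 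The case of the $y$-player is symmetric and yields some $\V_{2,j}>1$ with $j\notin S_y$.

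Combining these, $\SR{\restr{J_{\Phi_m}(\tZ)}{L}}\ge \V_{1,i}>1$ (respectively $\ge \V_{2,j}>1$), and Theorem~\ref{thm:stab_from_jac} then gives that $\tZ$ is unstable for $\restr{\Phi_m}{\prodSimplex}$. This holds for every $\etax,\etay>0$, since positivity of the exponent does not depend on the step sizes.

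The only delicate point is the hypothesis $n_x\ge n_y$ in Theorem~\ref{thm:jac_spectrum}. I handle it by invoking Remark~\ref{rem:yIdentity}, whose combined identity holds without any support assumption and still contains $W$. I expect no real obstacle beyond this bookkeeping. The genuine difficulty, namely instability of fixed points on the boundary, is already handled inside Theorem~\ref{thm:stab_from_jac}.
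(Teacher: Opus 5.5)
Your proof is correct and follows the same route as the paper, which also uses Theorem~\ref{thm:jac_spectrum} to place $W$ inside $\eigTilde{\restr{J_{\Phi_m}(\tZ)}{L}}$, notes that some element of $W$ exceeds $1$ when $\tZ\notin\NE$, and concludes with Theorem~\ref{thm:stab_from_jac}. You make explicit two points the paper leaves implicit: the profitable-deviation argument giving an off-support index with $\V_{1,i}>1$ (or $\V_{2,j}>1$), and the $n_x\ge n_y$ bookkeeping via Remark~\ref{rem:yIdentity}.
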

\begin{proof}
  By Theorem~\ref{thm:jac_spectrum}, the nonzero eigenvalues of $J_{\Phi_m}(\tZ)|_L$
  contain the set $W$. If $[x,y]$ is not a Nash equilibrium, then by the
  characterization of $W$ above, at least one element of $W$ has modulus strictly
  larger than $1$. Hence $\SR{J_{\Phi_m}(\tZ)|_L}>1$, and Theorem~\ref{thm:stab_from_jac} gives the claim.
\end{proof}

Recall that an equilibrium is strict if it is pure (that is, $S_x$ and $S_y$ 
are singletons) and $(A y^\star)_i < (x^\star)^\top A y^\star$, and 
$ (B x^\star)_j < (y^\star)^\top B x^\star$ for all $i \notin S_x$ and $j \notin S_y$. The following corollary is a well-known result (see, e.g., \citet{Mertikopoulos:2016aa}, \citet{Giannou:2021aa}).
\begin{corollary}
  If $Z^\star \in \NE$ corresponds to a strict Nash equilibrium, then it is an asymptotically stable fixed point for $\Phi_m$ for any $m \in (0,1]$ and $\etax, \etay >0$.
\end{corollary}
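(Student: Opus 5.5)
The plan is to combine Theorem~\ref{thm:stab_from_jac} with Theorem~\ref{thm:jac_spectrum}: it suffices to show that every nonzero eigenvalue of $J_{\Phi_m}(\Zs)|_L$ has modulus strictly less than one.

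Write $\Zs = [x^\star,y^\star,x^\star,y^\star]$. Because the equilibrium is strict, it is pure, so $x^\star = e_{i^\star}$ and $y^\star = e_{j^\star}$. Hence $n_x = n_y = 1$, and the hypothesis $n_x\ge n_y$ of Theorem~\ref{thm:jac_spectrum} holds.

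\textbf{Step 1: the reduced matrix contributes nothing.} With $n_x=1$ we have $L_x=\{h\in\RR^1 : h = 0\}=\{0\}$, so $\eigTilde{\etax\etay M_x(\Zs)|_{L_x}}$ is empty. (Equivalently, $H(x^\star)=\diag(e_{i^\star})-e_{i^\star}e_{i^\star}^\top=0$, so $M_x(\Zs)=0$ anyway.) Its preimage under $Q_m$ is therefore empty. Theorem~\ref{thm:jac_spectrum} then gives
\[
\eigTilde{J_{\Phi_m}(\Zs)|_L}\setminus\{0\} = W\setminus\{0\}.
\]

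\textbf{Step 2: bound the off-support eigenvalues.} For $i\neq i^\star$ we have $\vx=(Ay^\star)_{i^\star}=(x^\star)^\top A y^\star$. Strictness gives $(Ay^\star)_i-\vx<0$, so
\[
\V_{1,i}=\exp\bigl(\etax((Ay^\star)_i-\vx)\bigr)\in(0,1)
\]
for every $\etax>0$. The same argument gives $\V_{2,j}\in(0,1)$ for $j\neq j^\star$ and every $\etay>0$. If $d_x=d_y=1$, then $L=\{0\}$ and the claim is trivial.

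\textbf{Step 3: conclude.} Every nonzero eigenvalue lies in $W\subset(0,1)$, and the eigenvalue $0$ (if present) has modulus zero. Hence $\SR{J_{\Phi_m}(\Zs)|_L}<1$. Since $\Zs\in\NE\subseteq\FP$ by Remark~\ref{rem:strictInclusion}, Theorem~\ref{thm:stab_from_jac} applies and yields asymptotic stability of $\Zs$ for $\restr{\Phi_m}{\prodSimplex}$. This holds for every $m\in(0,1]$ and all $\etax,\etay>0$.

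The only point needing care is Step 1: one must confirm that the statement of Theorem~\ref{thm:jac_spectrum} is meaningful when $L_x$ is trivial, namely that it then contributes an empty set of eigenvalues rather than requiring a separate argument. If that degenerate case were excluded, I would instead compute $J_{\Phi_m}(\Zs)$ directly. Since $H(x^\star)=0$ and $H(y^\star)=0$, the coupling blocks vanish, and the Jacobian restricted to $L$ is block triangular with diagonal entries $\V_{1,i}$, $\V_{2,j}$ and zeros. This gives the same conclusion.
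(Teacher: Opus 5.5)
Your proposal is correct and follows the paper's proof essentially step for step. Purity gives $n_x=n_y=1$ and $L_x=\{0\}$, so the $Q_m$-part of Theorem~\ref{thm:jac_spectrum} is empty; strictness puts every element of $W$ in $(0,1)$; and Theorem~\ref{thm:stab_from_jac} concludes. Your fallback check via $H(x^\star)=H(y^\star)=0$, where the full Jacobian is block triangular and the spectrum of its restriction to $L$ is a subset of its spectrum, is also valid and confirms that the degenerate case $L_x=\{0\}$ causes no trouble.
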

\begin{proof}
  Since $ Z^\star$ is a strict Nash equilibrium, it is pure, so
  $S_x$ and $S_y$ are singletons, hence $n_x = n_y = 1$. Hence $L_x=\{0\}$ and therefore
  $\eig{\restr{M_x(Z^\star)}{L_x}}=\varnothing$.
  Moreover, for every $i \notin S_x$ and $j\notin S_y$, strictness implies elements of $W$ are strictly smaller than $1$.
  By Theorem~\ref{thm:jac_spectrum}, every nonzero eigenvalue of
  $J_{\Phi_m}(Z^\star)|_L$ therefore has modulus strictly smaller than $1$, and thus
  $\SR{J_{\Phi_m}(Z^\star)|_L}<1$. Theorem~\ref{thm:stab_from_jac} implies that
  $ Z^\star$ is asymptotically stable.
\end{proof}

\begin{corollary}\label{cor:unstable_positive_mu}
 Let $\tZ \in \FP$. 
  If there exists $\mu \in \teig{\restr{M_x(\tZ)}{L_x}} \cup \teig{\restr{M_y(\tZ)}{L_y}} $ such that $\mu \in \RR$ and $\mu > 0$, then 
  $ \tZ $ is  unstable for any value of $m$ and any choices of $\etax,\etay \in (0,\infty)$.
\end{corollary}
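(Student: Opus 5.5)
The plan is to combine Theorem~\ref{thm:jac_spectrum} (in the symmetric form of Remark~\ref{rem:yIdentity}) with Theorem~\ref{thm:stab_from_jac}. It suffices to exhibit one eigenvalue of $\restr{J_{\Phi_m}(\tZ)}{L}$ with modulus strictly larger than $1$.

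First, let $\mu>0$ be the given real eigenvalue of $\restr{M_x(\tZ)}{L_x}$ or of $\restr{M_y(\tZ)}{L_y}$. Set $c:=\etax\etay\mu>0$, which is an eigenvalue of $\etax\etay\restr{M_x(\tZ)}{L_x}$ (respectively $\etax\etay\restr{M_y(\tZ)}{L_y}$). By the identity in Remark~\ref{rem:yIdentity}, which holds without any assumption on the support sizes, every $\lambda\neq 0$ with $\lambda\neq m/(m+1)$ and $Q_m(\lambda)=c$ is an eigenvalue of $\restr{J_{\Phi_m}(\tZ)}{L}$.

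Second, produce a real root $\lambda>1$ of $Q_m(\lambda)=c$. Taking square roots, it suffices to solve
\[
g(\lambda):=\frac{\lambda(\lambda-1)}{(m+1)\lambda-m}=\sqrt{c}
\]
on $(1,\infty)$. On this interval the denominator satisfies $(m+1)\lambda-m>1>0$, so $g$ is continuous there. Moreover $g(1)=0$ and $g(\lambda)\sim\lambda/(m+1)\to\infty$ as $\lambda\to\infty$. By the intermediate value theorem there is $\lambda^\ast\in(1,\infty)$ with $g(\lambda^\ast)=\sqrt c$, hence $Q_m(\lambda^\ast)=c$. Clearly $\lambda^\ast\neq0$ and $\lambda^\ast\neq m/(m+1)<1$. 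This argument works for every $m\in(0,1]$ and every $\etax,\etay>0$, since only the positivity of $c$ is used.

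Third, we conclude. Since $\lambda^\ast\in Q_m^{-1}(\{c\})$, it lies in $\eigTilde{\restr{J_{\Phi_m}(\tZ)}{L}}\setminus\{0\}$, so $\SR{\restr{J_{\Phi_m}(\tZ)}{L}}\ge\lambda^\ast>1$. Theorem~\ref{thm:stab_from_jac} then gives that $\tZ$ is unstable.

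The only delicate point is the first step, namely checking that Remark~\ref{rem:yIdentity} legitimately allows us to use the spectrum of either reduced matrix regardless of whether $n_x\ge n_y$. That identity relies on the shared nonzero spectrum of $\restr{M_x(\tZ)}{L_x}$ and $\restr{M_y(\tZ)}{L_y}$, which is exactly the case needed here because $\mu\neq 0$. The intermediate value step itself is routine.
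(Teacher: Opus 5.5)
Your proposal is correct and is essentially the paper's proof. You use Theorem~\ref{thm:jac_spectrum} together with Remark~\ref{rem:yIdentity} to put $Q_m^{-1}(\etax\etay\mu)$ into the nonzero spectrum of $J_{\Phi_m}(\tZ)|_L$, find a real root $\lambda>1$ by the intermediate value theorem on $[1,\infty)$, and conclude with Theorem~\ref{thm:stab_from_jac}; the only difference is cosmetic, since you apply the intermediate value theorem to the square root $\lambda(\lambda-1)/((m+1)\lambda-m)$ rather than to $Q_m$ itself.
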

\begin{proof}
  Due to Theorem~\ref{thm:jac_spectrum} and Remark~\ref{rem:yIdentity}, it suffices to show that for any $\etax ,\etay \in (0,+\infty)$ there exists $\lambda \in \RR$ with $\absv{\lambda }> 1$ such that 
  $Q_m(\lambda)= \etax \etay\mu$. 
 This holds since $Q_m(1) = 0$ and $Q_m(\lambda) \to + \infty$ as $\lambda \to +\infty$
  and $Q_m$ is continuous on $[1, +\infty)$.
\end{proof}
 An example of such an unstable equilibrium is the fully mixed Nash equilibrium $x^\star = y ^\star = [\frac12,\frac12]$ in the coordination game $\Gamma(A,B)$ where \[ A = B = \begin{bmatrix} 1&0\\0&1 \end{bmatrix}\, .\]
 In this case $\frac14 \in \teig{\restr{M_y(\Zs)}{L_y}} $ which implies that $\Zs = \Zne$ is unstable for any value of $m$ and choices of $\etax,\etay \in (0,\infty)$.
 
\subsubsection{Zero-Sum Games}\label{sec:ZSG_local_Stab}
For the class of zero-sum games, we derive the following result from Theorem~\ref{thm:stab_from_jac} and Theorem~\ref{thm:jac_spectrum}.
 \begin{theorem}\label{thm:stabilityZSG}
   Consider a zero-sum game with a unique fully mixed Nash equilibrium $ Z^\star \in \NE = \{ Z^\star\}$.
  Then $ Z^\star$ is asymptotically stable for  $\restr{\Phi_m}{\prodSimplex}$ if
  \[
     0 < \etax \etay \SR{\restr{M_x(Z^\star)}{L_x}}<  \frac{2m -1}{m^2(2 m+ 1)}  \,, 
  \]
  and unstable if 
  \[
     \etax \etay \SR{\restr{M_x(Z^\star)}{L_x}} >  \frac{2m -1}{m^2(2 m+ 1)}  \,.
  \]
  In particular, if $m \leq 1/ 2$ and $d_x, d_y \geq 2$, then $Z^\star$ is an unstable fixed point of $\Phi_m$.
\end{theorem}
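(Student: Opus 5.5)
We will combine Theorem~\ref{thm:stab_from_jac} with Theorem~\ref{thm:jac_spectrum}. Since $Z^\star$ is fully mixed, $S_x$ and $S_y$ are full, so $W=\varnothing$. Uniqueness of a fully mixed equilibrium in a zero-sum game forces $d_x=d_y=:d$, so $n_x=n_y$ and the hypothesis $n_x\ge n_y$ holds. The nonzero spectrum of $J_{\Phi_m}(Z^\star)|_L$ is therefore exactly $Q_m^{-1}(\eigTilde{\etax\etay M_x(Z^\star)|_{L_x}})$. Everything then reduces to two questions: where the eigenvalues of $M_x|_{L_x}$ lie, and for which $\mu$ all roots of $Q_m(\lambda)=\mu$ lie in the open unit disk.

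\textbf{Step 1: the spectrum of $M_x$.} With $B=-A^\top$ we have $M_x=-\Pi H(x^\star)AH(y^\star)A^\top\Pi^\top$. On the full support, $H(v)=\diag(v)-vv^\top$ is positive semidefinite, with kernel $\Span(\ones)$ and range $L$. Write $D_x=H(x^\star)$ and $D_y=H(y^\star)$. Then $M_x=-D_xG$ with $G=AD_yA^\top\succeq 0$. The matrix $D_xG$ is similar on $L_x$ to $D_x^{1/2}GD_x^{1/2}$, which is positive semidefinite. Hence every eigenvalue of $M_x|_{L_x}$ is real and nonpositive. Uniqueness of the equilibrium should give nonsingularity on $L_x$: a kernel vector would yield a direction $h\in L_x$ with $D_yA^\top h$ spanned by $\ones$, i.e.\ $A^\top h\propto\ones$, contradicting uniqueness. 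So $\eig{M_x|_{L_x}}\subset(-\infty,0)$, and $\SR{M_x|_{L_x}}$ is the largest $|\mu|$.

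\textbf{Step 2: the real-negative analysis of $Q_m$.} For $\mu=-s$ with $s>0$, the equation $Q_m(\lambda)=-s$ is equivalent to $\lambda(\lambda-1)=\pm i\sqrt{s}\,((m+1)\lambda-m)$. These are two quadratics that are complex conjugates of each other, so it suffices to study
\[
\lambda^2-\bigl(1+i\sqrt s(m+1)\bigr)\lambda+i\sqrt s\, m=0 .
\]
Both roots must have modulus $<1$. I would use the Schur--Cohn test for complex quadratics $\lambda^2+a\lambda+b$: all roots lie inside the unit disk iff $|b|<1$ and $|a-\bar a b|<1-|b|^2$. Here $|b|^2=sm^2$. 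A cleaner alternative is to locate the crossing directly: substitute $\lambda=e^{i\theta}$, which gives $\sqrt s=\frac{\lambda(\lambda-1)}{i((m+1)\lambda-m)}$, and find the smallest $s$ for which this expression is real and positive.

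I expect the computation to give the critical value $s^\ast=\frac{2m-1}{m^2(2m+1)}$, with roots inside the disk for $s<s^\ast$ and a root outside for $s>s^\ast$. The stability direction also needs monotonicity: the crossing must be unique, so that no root re-enters the disk as $s$ grows. This algebra, reducing the Schur--Cohn inequality to $s\,m^2(2m+1)<2m-1$, is the main obstacle. As $s\downarrow0$, the roots tend to $0$ and $1$. Because the root near $1$ moves as $\lambda\approx1-i\sqrt s\,(2m-1)$ to first order, a second-order expansion is needed to see it enter the disk exactly when $2m-1>0$. This explains why $m\le\tfrac12$ gives instability.

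\textbf{Step 3: conclusion.} If $\etax\etay\,\SR{M_x|_{L_x}}<s^\ast$, every eigenvalue $\mu$ satisfies $\etax\etay|\mu|<s^\ast$. All preimages under $Q_m$ then lie in the disk, so $\SR{J|_L}<1$ and Theorem~\ref{thm:stab_from_jac} gives asymptotic stability. If $\etax\etay\,\SR{M_x|_{L_x}}>s^\ast$, the extremal eigenvalue produces a preimage of modulus $>1$, which gives instability. For $m\le\tfrac12$ we have $s^\ast\le0$. Since $d\ge2$, $L_x\neq\{0\}$, and Step 1 gives $\SR{M_x|_{L_x}}>0$. The second condition therefore always holds, and $Z^\star$ is unstable, consistent with Corollary~\ref{cor:instability}.
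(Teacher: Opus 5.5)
This is essentially the paper's proof. It uses the same reduction through Theorems~\ref{thm:stab_from_jac} and~\ref{thm:jac_spectrum} with $W=\varnothing$, the same PSD-similarity and uniqueness arguments (Lemma~\ref{lem:nonSingularM}) giving $\eig{\restr{M_x(Z^\star)}{L_x}}\subset(-\infty,0)$, and the same Schur--Cohn test on the factor $\lambda^2-\bigl(1+i\sqrt s(m+1)\bigr)\lambda+i\sqrt s\,m$ as in Lemma~\ref{lem:roots}; your observation that uniqueness forces $d_x=d_y$ and your uniform handling of $m\le\frac12$ via $s^\ast\le0$ are sound small streamlinings. The algebra you defer takes two lines, $(1-|b|^2)^2-|a-\bar a b|^2=s\bigl((2m-1)-m^2(2m+1)s\bigr)$, and because the test is pointwise in $s$ you need neither the monotonicity/crossing argument nor the small-$s$ expansion (whose first-order term is $1+i\sqrt s$, not $1-i\sqrt s(2m-1)$), provided that for $s>s^\ast$ you invoke the closed-disk necessary direction of Schur--Cohn to get a root of modulus strictly greater than $1$.
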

For a proof, see Appendix~\ref{apx:consequences_ZSG}.  We note that the result is primarily of theoretical interest since the step size bounds depend on $Z^\star = [x^\star,y^\star,x^\star,y^\star]$. However, we observe that $H(x^\star)$ and $H(y^\star)$ are covariance matrices and therefore $\norm{H(x^\star)}_{{\ell_2}}$ and $\norm{H(y^\star)}_{{\ell_2}}$ are upper bounded by $\frac12$. Hence
\[ \SR{\restr{M_x(Z^\star)}{L_x}} \leq \norm{H( x^\star)}_{\ell_2} \norm{H(y^\star)}_{\ell_2} \norm{A}_{\ell_2}^2 \leq \frac{1}{4}\norm{A}_{\ell_2}^2\, , \]
 which gives us the simple and potentially loose criterion for the stability of a Nash equilibrium:
 \[ \etax \etay   \norm{A}_{\ell_2}^2 <  \frac{4(2m -1)}{m^2(2 m+ 1) }\, . \]
 Comparing this result with the global bound in Section~\ref{sec:global}, we observe that this bound allows for larger $\etax \etay$ and for $m \in (\frac{1}{2}, 1]$ while guaranteeing asymptotic stability. However, Theorem~\ref{thm:stabilityZSG} does not guarantee global convergence, as the results in Section~\ref{sec:global} do.
  
\subsubsection{The class of $2\times2$ games}
The class of games with $2$ players and $2$ actions is also amenable to a detailed 
study. Consider
\[
  A=
  \begin{bmatrix}
    a & b\\
    c & d
  \end{bmatrix},
  \qquad
  B=
  \begin{bmatrix}
    e & f\\
    g & h
  \end{bmatrix}\, .
\]
 The Nash equilibria are of the form $ x^\star=(p,1-p)^\top,\quad y^\star=(q,1-q)^\top$ where $p,q \in [0,1]$. 
Further, $x^\star$ is fully mixed if $(e-g)(h-f) > 0$, and $y^\star$ is fully mixed if $(d-b)(a-c) > 0$.
 The complements/substitutes split is classical in game theory; see, e.g.,  \citet{Rapoport:1966aa}. Learning dynamics in $2\times 2$ games, including Experience Weighted Attraction and replicator limits, are studied, for instance, by
 \citet{Pangallo:2022aa}.
 For this class of games, the following result follows from Theorem~\ref{thm:stab_from_jac} and Theorem~\ref{thm:jac_spectrum}.
\begin{theorem}\label{thm:2x2games}
    Suppose $p,q \in (0,1)$. Define
\[
  \Delta_A:=a-b-c+d\,,
  \qquad
  \Delta_B:=e-f-g+h\, ,
\]
and threshold \[ E :=  \frac{1}{p(1-p)q(1-q)|\Delta_A \Delta_B|} 
     \frac{2m-1}{m^2(2m+1)} \, .\]
    Then $[x^\star, y^\star]$ is a locally asymptotically stable fixed point for ${\Phi_m}$ if  
    \[
   \Delta_A \Delta_B < 0 \qquad \text{ and } \qquad 0 < \etax \etay < 
      E,
    \] 
    and an unstable fixed point if
    \[
     \Delta_A \Delta_B < 0 \qquad \text{ and } \qquad  \etax \etay > E\, .
    \] 
     In particular,  $[x^\star, y^\star]$ is unstable
     for every 
    $\etax, \etay$ if $\Delta_A \Delta_B > 0$.
\end{theorem}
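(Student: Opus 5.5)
The plan is to apply Theorem~\ref{thm:jac_spectrum} and Theorem~\ref{thm:stab_from_jac} directly, by computing the one-dimensional reduced spectrum explicitly and then locating the roots of $Q_m(\lambda)=\mu$.

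\textbf{Step 1: the off-support set and the reduced matrix.} Since $p,q\in(0,1)$, both supports are full, so $S_x=S_y=\{1,2\}$ and $n_x=n_y=2$. Hence $W=\varnothing$ and $\Pi_{S_x}=\id$. The space $L_x$ is spanned by $u=(1,-1)^\top$. For a two-point distribution $v=(r,1-r)$ one checks $H(v)=r(1-r)\,uu^\top$. Therefore
\[
H(y^\star)Bu=q(1-q)\,(u^\top B u)\,u=q(1-q)\Delta_B\,u .
\]
Applying the same identity once more gives
\[
M_x u=p(1-p)q(1-q)\Delta_A\Delta_B\,u .
\]
So $\restr{M_x(Z^\star)}{L_x}$ has the single eigenvalue $\mu=p(1-p)q(1-q)\Delta_A\Delta_B$. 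By Theorem~\ref{thm:jac_spectrum}, the nonzero spectrum of $J_{\Phi_m}(Z^\star)|_L$ is $Q_m^{-1}(\etax\etay\mu)\setminus\{0\}$.

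\textbf{Step 2: the case $\Delta_A\Delta_B>0$.} Here $\mu>0$, and Corollary~\ref{cor:unstable_positive_mu} gives instability for all $\etax,\etay$. This settles the last claim of the theorem.

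\textbf{Step 3: the case $\Delta_A\Delta_B<0$.} Write $\etax\etay\mu=-s$ with $s>0$, and note that $s<(2m-1)/(m^2(2m+1))$ is exactly $\etax\etay<E$. The equation $Q_m(\lambda)=-s$ is equivalent to
\[
\lambda(\lambda-1)=i c\,((m+1)\lambda-m),\qquad c=\pm\sqrt s .
\]
These are two complex quadratics,
\[
\lambda^2+a_1\lambda+a_0=0,\qquad a_1=-(1+ic(m+1)),\quad a_0=icm .
\]
Their roots are conjugates of each other, so it suffices to treat one sign of $c$. None of these roots equals $m/(m+1)$, since that would force $\lambda(\lambda-1)=0$ with $\lambda=m/(m+1)\neq 0,1$.

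I would then apply the Schur--Cohn criterion for complex quadratics: both roots lie in the open unit disk if and only if
\[
|a_0|<1 \quad\text{and}\quad |a_1-\bar a_1 a_0|<1-|a_0|^2 .
\]
A short computation gives $a_1-\bar a_1a_0=(c^2m(m+1)-1)-ic$. The second condition then becomes
\[
(sm(m+1)-1)^2+s<(1-sm^2)^2 .
\]
After dividing by $s>0$, this is equivalent to
\[
s\,m^2(2m+1)<2m-1 .
\]
This condition also forces $sm^2<1$, so $|a_0|<1$ holds automatically. Hence $s<(2m-1)/(m^2(2m+1))$ gives spectral radius less than $1$, and Theorem~\ref{thm:stab_from_jac} yields asymptotic stability.

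\textbf{Step 4: instability.} The main point to handle carefully is the converse direction. When $s>(2m-1)/(m^2(2m+1))$, I need to show that some root has modulus strictly greater than $1$, not merely that the roots fail to lie strictly inside the disk.

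I would argue by continuity in $s$. The roots depend continuously on $s$. Exactly at the threshold, a root lies on the unit circle: one checks that $\lambda=e^{i\theta}$ solves the quadratic only when $s$ equals the threshold value. Beyond the threshold, at least one root must lie outside the closed disk. To confirm this, it suffices that the product of the root moduli $|a_0|=\sqrt s\,m$, or the sign of the strict reverse Schur--Cohn inequality, certifies a root of modulus $>1$; the strict reverse inequality in the Schur--Cohn test is known to imply that at least one root lies outside the closed unit disk. If that is awkward to justify, the fallback is to note that this is the same scalar analysis already needed for Theorem~\ref{thm:stabilityZSG}, and invoke it.

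With spectral radius $>1$, Theorem~\ref{thm:stab_from_jac} gives instability.
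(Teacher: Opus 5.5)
Your proposal is correct and is essentially the paper's proof. It uses the same rank-one computation of $\mu=p(1-p)q(1-q)\Delta_A\Delta_B$ and Corollary~\ref{cor:unstable_positive_mu} for $\mu>0$. For $\mu<0$ it factors $Q_m(\lambda)=-s$ into two conjugate quadratics and applies Schur--Cohn, which is exactly Lemma~\ref{lem:roots}.

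In Step~4 you can drop the continuity and $|a_0|$ remarks. The bound $|a_0|=m\sqrt{s}$ is still below $1$ just above the threshold, so it does not certify a root outside the disk. What you need is already in Step~3: the identity
$(1-|a_0|^2)^2-|a_1-\overline{a_1}a_0|^2=s\big((2m-1)-m^2(2m+1)s\big)$
is negative for every $s>\frac{2m-1}{m^2(2m+1)}$. Combined with the necessary closed-disk direction of Lemma~\ref{lem:scNecCond}, this gives a root with $|\lambda|>1$. This argument also covers $m\le 1/2$, where $E\le 0$; the paper handles that case separately via Corollary~\ref{cor:instability}.
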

For a proof, see Appendix~\ref{apx:theorem2x2}.

 \subsubsection{Low Dimensional Games with Unique Fully Mixed Equilibria}
 Beyond two-action games, games with $d_x=d_y\leq 5$ and a unique fully mixed Nash equilibrium admit a closed-form expression for the spectrum, obtained by radicals from rational functions of the entries of the game matrices $A$ and $B$.
 \begin{theorem}\label{thm:lowDim}
Let $\Gamma(A,B)$ be the game, and assume the Nash equilibrium is unique and fully mixed and that the dimensions of the game matrices satisfy $d_x=d_y \leq 5$. 
Then the non-zero spectrum of  $\restr{J_{{\Phi_m}}(Z^\star)}{L}$ can be expressed explicitly in terms of the entries of $A$ and $B$.
 \end{theorem}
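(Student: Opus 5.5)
We reduce the claim to solving polynomial equations by radicals, using Theorem~\ref{thm:jac_spectrum}. Since the equilibrium is unique and fully mixed, $S_x=\{1,\dots,d_x\}$ and $S_y=\{1,\dots,d_y\}$. Hence $W=\varnothing$, $\Pi_{S_x}$ is the identity, and $n_x=n_y=d:=d_x=d_y\le 5$. So Theorem~\ref{thm:jac_spectrum} gives
\[
\eigTilde{\restr{J_{\Phi_m}(Z^\star)}{L}}\setminus\{0\} = Q_m^{-1}\Big(\eigTilde{\etax\etay \restr{M_x(Z^\star)}{L_x}}\Big)\setminus\{0\},
\]
and two things must be shown. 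First, the eigenvalues of $\restr{M_x(Z^\star)}{L_x}$ are given by radicals in the entries of $A,B$. Second, for each such $\mu$, the solutions of $Q_m(\lambda)=\etax\etay\mu$ are given by radicals.

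\textbf{Step 1: $x^\star$ and $y^\star$ are rational in the entries.} Because the equilibrium is fully mixed, $y^\star$ solves $Ay^\star = v^x\ones$, $\ones^\top y^\star=1$, and $x^\star$ solves $Bx^\star=v^y\ones$, $\ones^\top x^\star=1$. We use uniqueness to argue that these square $(d+1)\times(d+1)$ linear systems are nonsingular. If the kernel were nontrivial, one could move along it while staying in the interior of the simplex and still satisfy the equalization conditions; the resulting points would be additional fully mixed equilibria, contradicting uniqueness. Cramer's rule then expresses $x^\star, y^\star, v^x, v^y$ as rational functions of the entries of $A$ and $B$. Consequently $M_x(Z^\star)=H(x^\star)AH(y^\star)B$ has rational entries.

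\textbf{Step 2: the characteristic polynomial of $\restr{M_x}{L_x}$ has degree at most $4$.} Since $H(x^\star)\ones=0$ and $\ones^\top H(x^\star)=0$, the matrix $M_x$ maps $\RR^d$ into $L_x$, and $\ones^\top M_x=0$. So $L_x$ is invariant, and in a basis of $L_x$ such as $e_i-e_d$, $i<d$, the restriction is a $(d-1)\times(d-1)$ matrix with rational entries. Its characteristic polynomial has degree $d-1\le 4$. By the Cardano--Ferrari formulas, its roots $\mu$ are expressible by radicals.

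\textbf{Step 3: inverting $Q_m$.} For each $\mu$, the equation $Q_m(\lambda)=c$ with $c:=\etax\etay\mu$ factors as
\[
\lambda(\lambda-1)=\pm\sqrt{c}\,\big((m+1)\lambda-m\big).
\]
This is two quadratics in $\lambda$, each solvable by the quadratic formula. We must check that the root $\lambda=m/(m+1)$ is never introduced spuriously: it would require $\lambda(\lambda-1)=0$ there, which is impossible for $m\in(0,1]$. Composing the radicals from Steps 1--3 yields explicit expressions for all nonzero eigenvalues.

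\textbf{Main obstacle.} The delicate point is Step 1: justifying that uniqueness of the fully mixed equilibrium forces the linear systems to be nonsingular. Degenerate games may have a unique equilibrium while the equalizing affine system still has a kernel that leaves the simplex. If that argument fails, the fallback is to take $x^\star, y^\star$ as algebraic data, namely the unique solution of a linear system in which a nonsingular subsystem is chosen. This still gives rational functions on each generic stratum, which suffices for the stated explicitness.
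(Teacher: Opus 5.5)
Your proof is correct and follows the paper's route. The paper shows the equalization system is nonsingular via uniqueness (your bordered $(d+1)\times(d+1)$ system is equivalent to its row-difference matrix $M_A$), applies Cramer's rule, uses a characteristic polynomial of degree $d-1\le 4$ for $\etax\etay M_x(Z^\star)|_{L_x}$, and solves the two quadratics $\lambda(\lambda-1)=\pm s\,((m+1)\lambda-m)$ with $s^2=\etax\etay\mu$. The worry in your last paragraph is unfounded and the fallback is unnecessary. Any nonzero kernel vector $(z,w)$ has $z\neq 0$ and $\ones^\top z=0$, so $y^\star+\alpha z$ stays in the simplex for small $\alpha$ because $y^\star$ is fully mixed; this is exactly the paper's argument.
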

A proof can be found in Appendix~\ref{apx:lowDim}. The proof relies on a closed-form expression for the Nash equilibrium, which holds for any dimension but requires uniqueness and full support.  A closed-form expression for the spectrum follows from the roots of the characteristic polynomial of $\restr{M_x(\Zs)}{L_x}$. Since general polynomials of degree five or higher are not solvable by radicals, it suffices that this characteristic polynomial have degree at most four, which is ensured by our assumptions on the Nash equilibrium and the dimensions.

 \subsection{Remarks on related results}\label{sec:CompareResultsJacobian}
 In this section, we compare our results in-depth to two results from the literature. 
 \paragraph{Optimistic Gradient Method:}
 Consider a game $\Gamma(A,B)$ with $A, B^\top \in \RR^{d_x \times d_y}$, played over the unconstrained strategy
spaces $\RR^{d_x}$ and $\RR^{d_y}$; that is, the $x$-player maximizes
$x^\top A y$ over $\RR^{d_x}$ and the $y$-player maximizes $y^\top B x$
over $\RR^{d_y}$. The optimistic gradient method (OGM) is
 \begin{align*}
 \begin{cases}
 x_{t+1} = x_t + \etax A\left(2y_t -  y_{t-1}\right) \\
 y_{t+1} = y_t +  \etay B\left(   2x_t -   x_{t-1}\right)\, .
 \end{cases}
 \end{align*}
Note that this algorithm applies only to \emph{unconstrained} bimatrix
games. This is the setting of
\citet{de2025optimistic}, whose general-sum bilinear games
$(x^\top \tilde A y, x^\top \tilde B y)$ with
$\tilde A, \tilde B \in \RR^{d_x \times d_y}$ correspond to
$\Gamma(A,B)$ via $\tilde A = A$ and $\tilde B = B^\top$. In particular,
the matrices $\tilde B^\top \tilde A$ and $\tilde A \tilde B^\top$
appearing in their analysis are $BA$ and $AB$ in our notation. From
their analysis, the following result follows as a corollary.
\begin{corollary}\label{cor:deMontbrun}
Consider the game $\Gamma(A,B)$ with $d_x = d_y$ and assume
$\teig{BA}$ are real and $\teig{BA} \subset (-\infty , 0) $. Then
$[x^\star,y^\star] = [0,0]$ is the unique Nash equilibrium of the
unconstrained game. Consider OGM with asymmetric step sizes
$\etax, \etay>0$. The corresponding fixed point
  is an asymptotically stable
fixed point of the OGM dynamics if
\[0< \etax\etay \SR{-BA}< \frac{1}{3}\, ,\]
and unstable if
\[ \etax \etay \SR{-BA} > \frac{1}{3 }\, .\]
\end{corollary}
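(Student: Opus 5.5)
Since the OGM map is linear, I will work directly with its companion matrix rather than with a Jacobian. In the state $Z_t=(x_t,y_t,x_{t-1},y_{t-1})$ the update is $Z_{t+1}=TZ_t$ with
\[
T=\begin{bmatrix} I & 2\etax A & 0 & -\etax A\\ 2\etay B & I & -\etay B & 0\\ I & 0 & 0 & 0\\ 0 & I & 0 & 0\end{bmatrix}.
\]

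First I dispose of uniqueness. The hypothesis $\teig{BA}\subset(-\infty,0)$ makes $BA$ invertible, hence $A$ and $B$ are invertible because $d_x=d_y$. Unconstrained best-response conditions read $Ay^\star=0$ and $Bx^\star=0$, so $[0,0]$ is the unique equilibrium. For a linear map, asymptotic stability of $0$ is equivalent to $\SR{T}<1$. Instability follows from $\SR{T}>1$, taking initial data along the corresponding (real or complex) eigendirection.

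Next I compute the spectrum of $T$ in the same way as Theorem~\ref{thm:jac_spectrum}. Take an eigenvector $(u,v,u/\lambda,v/\lambda)$ with $\lambda\neq0$. The eigen-equations become
\[
(\lambda-1)u=\etax\tfrac{2\lambda-1}{\lambda}Av,\qquad (\lambda-1)v=\etay\tfrac{2\lambda-1}{\lambda}Bu.
\]
Eliminating $v$ gives
\[
\frac{\lambda^2(\lambda-1)^2}{(2\lambda-1)^2}u=\etax\etay ABu.
\]
So the nonzero eigenvalues are exactly the solutions of $Q_1(\lambda)=\etax\etay\mu$ with $\mu\in\teig{AB}$. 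Since $AB$ and $BA$ share their spectrum, these $\mu$ are real and negative. The degenerate cases $\lambda=1$ and $\lambda=1/2$ are excluded because they would force $u=0$ and $v=0$ via invertibility. Counting degrees ($4d$ roots of $\lambda^2(\lambda-1)^2-\etax\etay\mu(2\lambda-1)^2$ over the $d$ values of $\mu$) shows that nothing is missed.

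The main work is then a scalar root-location question. Write $\nu=-\etax\etay\mu>0$. The task is to show that all roots of
\[
p(\lambda)=\lambda^2(\lambda-1)^2+\nu(2\lambda-1)^2
\]
lie in the open unit disk iff $\nu<1/3$, and that some root has modulus greater than $1$ iff $\nu>1/3$. This is exactly the $m=1$ value $\frac{2m-1}{m^2(2m+1)}=\tfrac13$ of Theorem~\ref{thm:stabilityZSG}. I would first factor $p$: as a sum of squares, $p(\lambda)=\big(\lambda(\lambda-1)+i\sqrt\nu(2\lambda-1)\big)\big(\lambda(\lambda-1)-i\sqrt\nu(2\lambda-1)\big)$. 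The two quadratic factors are complex conjugates, so it suffices to treat $\lambda^2-(1-2i s)\lambda-is$ with $s=\sqrt\nu$.

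The crossing of the unit circle is where I expect the difficulty. I would locate it by substituting $\lambda=e^{i\theta}$ and solving. Equivalently, apply the Schur–Cohn test to the quadratic, which for $\lambda^2+b\lambda+c$ requires $|c|<1$ and $|b-\bar b c|<1-|c|^2$. Here $c=-is$ and $b=-(1-2is)$, so the test yields a polynomial inequality in $s$ whose boundary should be $s^2=1/3$. For small $\nu$ the roots are near $0,0,1,1$, and the perturbed roots near $1$ move inside, to first order $\lambda\approx1-\nu(2\pm\dots)$. Monotonicity of the modulus in $s$ past the crossing then gives instability for $\nu>1/3$. Alternatively, one may simply invoke the $m=1$ computation underlying Theorem~\ref{thm:stabilityZSG}, where the same rational map $Q_1$ and real negative $\mu$ appear. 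Finally, the spectral radius condition over all $\mu$ reduces to the largest $|\mu|$, that is to $\SR{-BA}$, which gives the stated thresholds.
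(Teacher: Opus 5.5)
Your proposal is correct and is essentially the paper's argument. You write OGM as the linear map $\Lambda_{A,B}$, reduce eigenvectors to $Q_1(\lambda)=\etax\etay\mu$ with $\mu\in\teig{AB}=\teig{BA}\subset(-\infty,0)$, factor $P_{\sigma,1}$ into two conjugate quadratics, and apply Schur--Cohn (Lemma~\ref{lem:roots} with $m=1$), whose boundary is indeed $\sigma(1-3\sigma)=0$. Two small tightenings: the converse inclusion needed for instability (every such root is an eigenvalue) follows by reversing your elimination with $v=\etay\frac{2\lambda-1}{\lambda(\lambda-1)}Bu$, since degree counting alone does not control multiplicities; and for $\nu>1/3$ the necessary half of Schur--Cohn already yields a root with $|\lambda|>1$, so the perturbation/monotonicity heuristic (whose first-order term is actually $1\mp i\sqrt{\nu}$) can be dropped.
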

The result follows from a small modification of Proposition 3.7 in
\citet{de2025optimistic}. For the convenience of the reader, we add a
note on the necessary modification in Appendix~\ref{apx:deMontbrun}.
 Analogously, the same modification
applied to Theorem 3.8 in \citet{de2025optimistic} yields global
exponential convergence of the OGM iterates to a Nash equilibrium under
the assumption that $\etax\etay \SR{-BA}< \frac{1}{4}$, provided
additionally that $\teig{BA} \cup \teig{AB} \subset (-\infty, 0]$ and
that either $A$ and $B$ are square and invertible or
$\Lambda_{A,B}$ (defined in Appendix~\ref{apx:deMontbrun}) is
diagonalizable.

\paragraph{$1$-optEW for Convex-Concave Games:}  
   \citet{lei2021last} establish local last-iterate convergence of  1-optEW for general constrained convex-concave min-max problems by proving that the 1-optEW Jacobian is Schur stable for sufficiently small equal step size. This is a standard, well-established analysis approach where the two analyses share some similarities. A central technical obstacle in the setting of \citet{lei2021last} is that the payoff is convex-concave rather than bilinear. The Jacobian contains additional Hessian blocks and is no longer reducible to a skew-symmetric structure with purely imaginary eigenvalues. One of their key technical contributions relies on Ky Fan’s inequality, which relates the real parts of the eigenvalues of a matrix to the eigenvalues of its symmetrized part.  When specialized to bimatrix zero-sum games, the following result aligns with their computations.
  \begin{corollary}
  Consider a zero-sum game $\Gamma(A,-A^\top)$ with a fully mixed Nash equilibrium $Z^\star$ and a non-singular game matrix $A \in \RR^{d\times d}$.
  Then $ Z^\star$ is asymptotically stable for 1-optEW if
  \[
     0 < \etax \etay \SR{M_x(Z^\star)|_{L_x}}<  \frac{1}{ 3}  \,. 
  \] 
  \end{corollary}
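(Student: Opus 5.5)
This corollary is the special case $m=1$ of Theorem~\ref{thm:stabilityZSG}: for $m=1$ the threshold $\frac{2m-1}{m^2(2m+1)}$ equals $\frac{1}{3}$. So the plan is to check that the hypotheses here imply those of Theorem~\ref{thm:stabilityZSG}, and then read off the stability half of its conclusion. Theorem~\ref{thm:stabilityZSG} assumes a zero-sum game whose Nash equilibrium is unique and fully mixed. Here we are given a fully mixed equilibrium $Z^\star$ and a non-singular square $A\in\RR^{d\times d}$. The only missing ingredient is uniqueness.

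To obtain uniqueness, first use full support. Since $Z^\star$ is a fully mixed equilibrium of $\Gamma(A,-A^\top)$, we have $Ay^\star = v\ones$ and $A^\top x^\star = w\ones$ for scalars $v,w$. Invertibility of $A$ then gives $y^\star = vA^{-1}\ones$, so $A^{-1}\ones\neq 0$. The normalization $\ones^\top y^\star=1$ forces $\ones^\top A^{-1}\ones\neq 0$ and $v = 1/(\ones^\top A^{-1}\ones)$, and symmetrically for $x^\star$ with $A^{-\top}$.

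Next, in a zero-sum game the equilibrium set is the product $X^\star\times Y^\star$ of the players' optimal strategy sets, and every equilibrium attains the same value $v$. Because $x^\star$ is fully mixed and optimal, the exchangeability argument shows that any optimal $y$ must satisfy $(Ay)_i = v$ for every $i$. Indeed, $x^{\star\top}Ay = v$ and $(Ay)_i \le v$ for all $i$ (with the paper's sign conventions), and positive weights on every coordinate then force equality coordinatewise. Thus $Ay = v\ones$, and invertibility gives $y = y^\star$. The symmetric argument, using that $y^\star$ is fully mixed, gives $x = x^\star$. Hence $\NE=\{Z^\star\}$, and the equilibrium is unique and fully mixed.

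Finally, apply Theorem~\ref{thm:stabilityZSG} with $m=1$. It gives asymptotic stability of $Z^\star$ for $\restr{\Phi_1}{\prodSimplex}$, i.e.\ for 1-optEW, whenever $0<\etax\etay\,\SR{\restr{M_x(Z^\star)}{L_x}}<\tfrac13$.

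The main (mild) obstacle is the sign bookkeeping in the uniqueness step. The $x$-player maximizes $x^\top Ay$ and the $y$-player maximizes $-y^\top A^\top x$, so we must check that the inequality $(Ay)_i\le v$ goes in the right direction when combined with full support of the opposing strategy. If any doubt remains, an alternative is to bypass uniqueness altogether. One would rerun the argument of Theorem~\ref{thm:stabilityZSG} locally, via Theorems~\ref{thm:stab_from_jac} and \ref{thm:jac_spectrum}. With full support we have $W=\varnothing$, and the zero-sum structure makes $M_x$ similar to minus a PSD-type product, so its eigenvalues on $L_x$ are real and non-positive. It then suffices to check that $Q_1^{-1}(-s)$ lies inside the unit disk for $s\in[0,\tfrac13)$, and that this only requires full support rather than uniqueness.
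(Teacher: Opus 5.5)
Your proposal is correct and follows the route the paper intends (the paper gives no separate proof): the corollary is Theorem~\ref{thm:stabilityZSG} at $m=1$, where $\frac{2m-1}{m^2(2m+1)}=\frac13$, and your interchangeability argument validly supplies the uniqueness hypothesis, which the paper leaves implicit, from full support together with invertibility of $A$. One caution applies only to your fallback route: you must exclude $s=0$, since $Q_1^{-1}(0)=\{0,1\}$ contains $\lambda=1$, so that route still needs non-singularity of $M_x(Z^\star)|_{L_x}$, which is exactly what invertibility of $A$ (or uniqueness, via Lemma~\ref{lem:nonSingularM}) provides.
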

  We restrict to fully mixed and non-singular game matrices for two technical reasons. First, \citet{lei2021last} requires the Hessian restricted to the equilibrium support to be invertible; in bilinear zero-sum games, this holds if and only if the two supports have the same size and the corresponding game matrix is nonsingular, excluding standard examples such as rock-paper-scissors and matching pennies.
  Second, full support avoids some of the boundary issues. 
    \citet{lei2021last} write the update as a map on a product of simplices and analyze the Jacobian at the equilibrium. This Jacobian is well defined for any point in the interior of the constraint space; however, for the product of simplices, this is empty.   One can interpret their Jacobian computation as coming from the smooth ambient formula for the exponential-weights update, locally extended around the simplex wherever the normalization denominators remain nonzero, or equivalently as a derivative along the affine/tangent directions of the constraint set (cf. Lemma~\ref{lem:jac_stable_subspace}). Our analysis specialized to $m=1$ and zero-sum games makes this rigorous for the linear case.

 \section{Experiments and Open Questions}\label{sec:experiments_open_questions}
In this section, we present empirical studies illustrating our theoretical results. In all examples, we opt for the simplest possible model for illustration. Namely, many examples use low-dimensional toy models, specifically games with two or three actions, because they allow details to be visualized without dimensionality reduction. Even in $2\times2$ games, the experiments support and illustrate our theory while raising interesting questions for future work. 

\subsection{Landscape $2\times 2$ Non-Zero-Sum Games}\label{sec:experiments2d}
The product-dependent stability criterion guarantees the existence of a neighbourhood around the equilibrium such that the dynamics converge whenever initialized within this neighborhood. However, it does not provide quantitative guarantees. For instance, it does not identify an open ball of a prescribed radius in which convergence is assured, nor does it yield an explicit convergence rate.
 In this section, we consider a non-zero-sum variant of matching pennies $\Gamma(A,B)$.  
\[ A = \begin{bmatrix} -1 & \phantom{-}1\\ \phantom{-}3&-1 \end{bmatrix} \quad \text{and}\quad B = \begin{bmatrix} \phantom{-}2 &- 1\\ -1& \phantom{-}1 \end{bmatrix}\, . \]
This game has a unique fully mixed Nash equilibrium; see Appendix~\ref{apx:nzspm} for details.
 We use the results from Theorem~\ref{thm:2x2games} for step size computation. That is, for a small $\epsilon>0$, we set \[\eta^\star(m) =  (1 - \epsilon)  \frac{1}{p(1-p)q(1-q)|\Delta_A \Delta_B|}   \frac{2m-1}{m^2(2m+1)}   \,. \]
 The step sizes are defined as a function of $m$ and the ratio $c$. That is  \begin{align} \label{eq:etas}\etax(m,c) =  \sqrt{\frac{\eta^\star(m)}{c}}\qquad \text{ and }
\qquad\etay(m,c) =  \sqrt{ c \eta^\star(m)}\, .\end{align}
 Using a $2\times 2$ game allows us to project the simplex to the interval $[0,1]$ using the variable transform $p' \triangleq [p,1-p]^\top$. We use this in the plots where the vertical axis corresponds to the $2$-dimensional simplex and the horizontal axis to a varying parameter. 

\begin{figure}[t]
\centering

 \includegraphics[width=\linewidth]{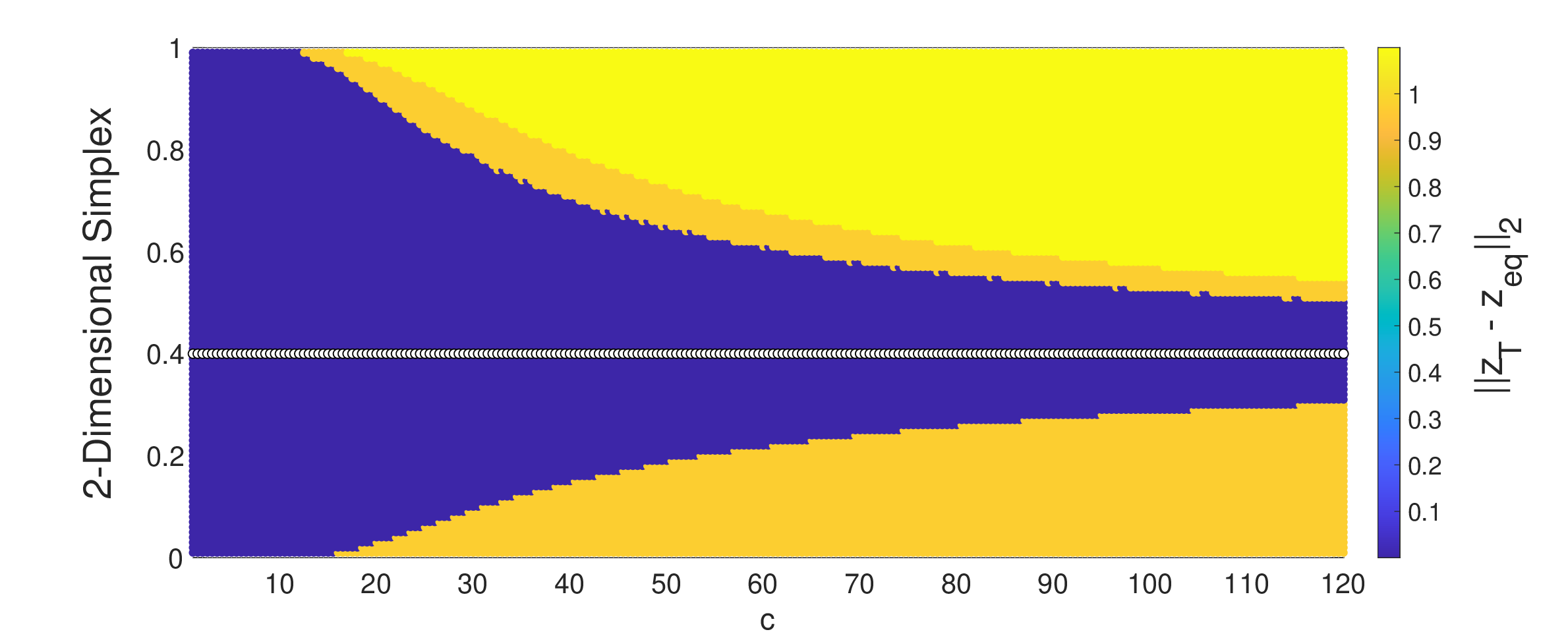}

\caption{Distance to the Nash equilibrium for the non-zero-sum matching pennies variant. Parameter $m$ is fixed to $m^\star$, the ratio between the step sizes varies depending on $c \in [1,120]$.}
 \label{fig:2dSimplex1}
\end{figure}
 \begin{figure}[t]
\centering

 \includegraphics[width=\linewidth]{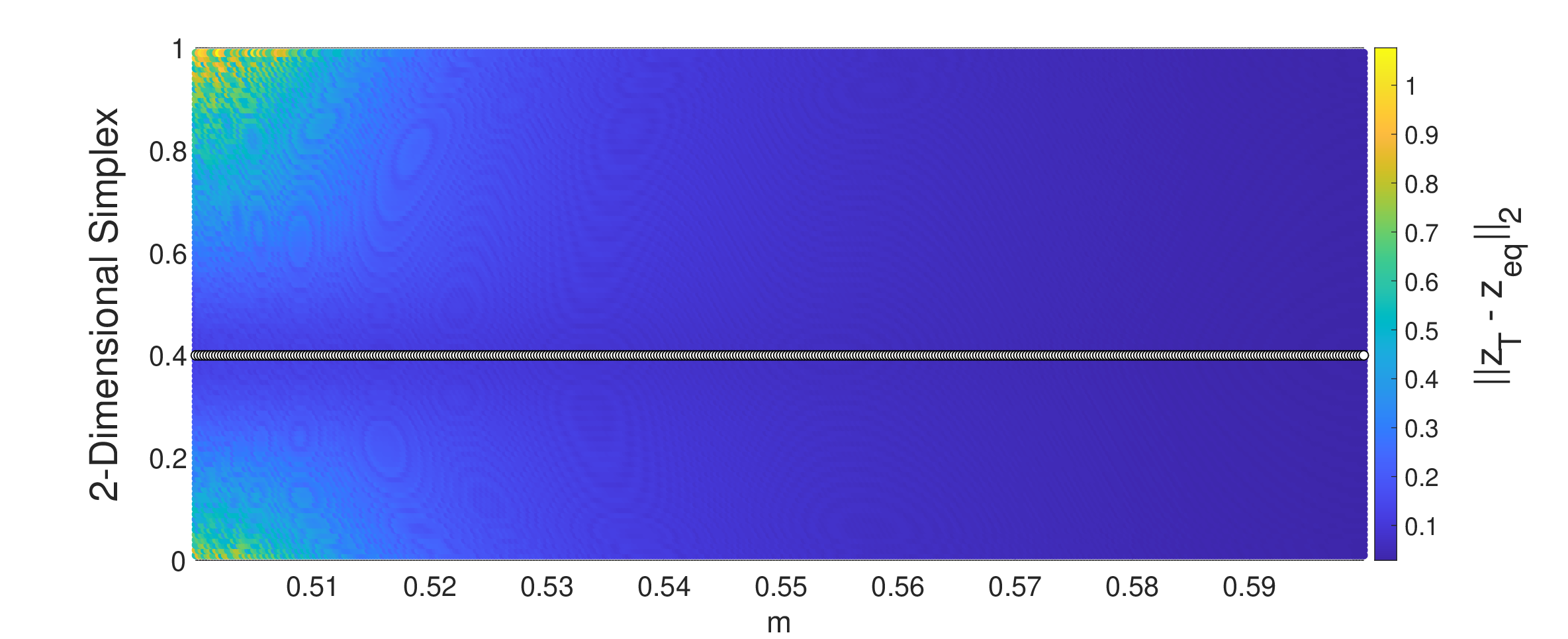}

\caption{Distance to the Nash equilibrium for the non-zero-sum matching pennies variant. The ratio between the step sizes $c$ is fixed to $1$, parameter $m$ varies.}
\label{fig:varyM_c1}
\end{figure}

\subsubsection{Dependence on the Step Size Ratio} 
In this section, we illustrate the dependence on the ratio $c$. 
 We define $m^\star = \frac{1+\sqrt{5}}{4} $. Note that  $m^\star = \argmax_{m \in (0,1]} \frac{2m - 1}{m^2(2m+1)}$ and $2 m^\star $ is the golden ratio. 
  We initialize $m^\star$-optEW with $Z^i_0 = [x^i_0, y_0,x^i_0, y_0]$ where $y_0 = [1/4,3/4]^\top$ is fixed and $x_0^i \in \Delta_2$ from an equidistant grid over the simplex. 
  
Figure~\ref{fig:2dSimplex1} shows the norm distance to the Nash equilibrium after $T = 10\,000$ iterations of $m^\star$-optEW with step sizes $\etax(m^\star,c)$ and $\etay(m^\star,c)$ for varying ratio $c$ on the horizontal axis. Specifically, for each $x_0^i$, the color indicates $  \norm{Z^i_T - Z^\star}$.
The Nash equilibrium $x^\star$ is marked in white. As can be seen in Figure~\ref{fig:2dSimplex1}, the unique Nash equilibrium appears to be asymptotically stable, hence empirically verifying our theoretical results. With a finite cut-off time $T$, we further observe that the apparent region of convergence depends on $c$ and becomes smaller as $c$ increases in this experiment. It remains unclear, however, whether this behavior is merely a consequence of the finite cut-off time $T$, together with slower convergence for larger values of $c$, or whether it reflects an actual shrinking of the attracting neighbourhood.

 \begin{figure}[h]
 \centering
\includegraphics[width=\linewidth]{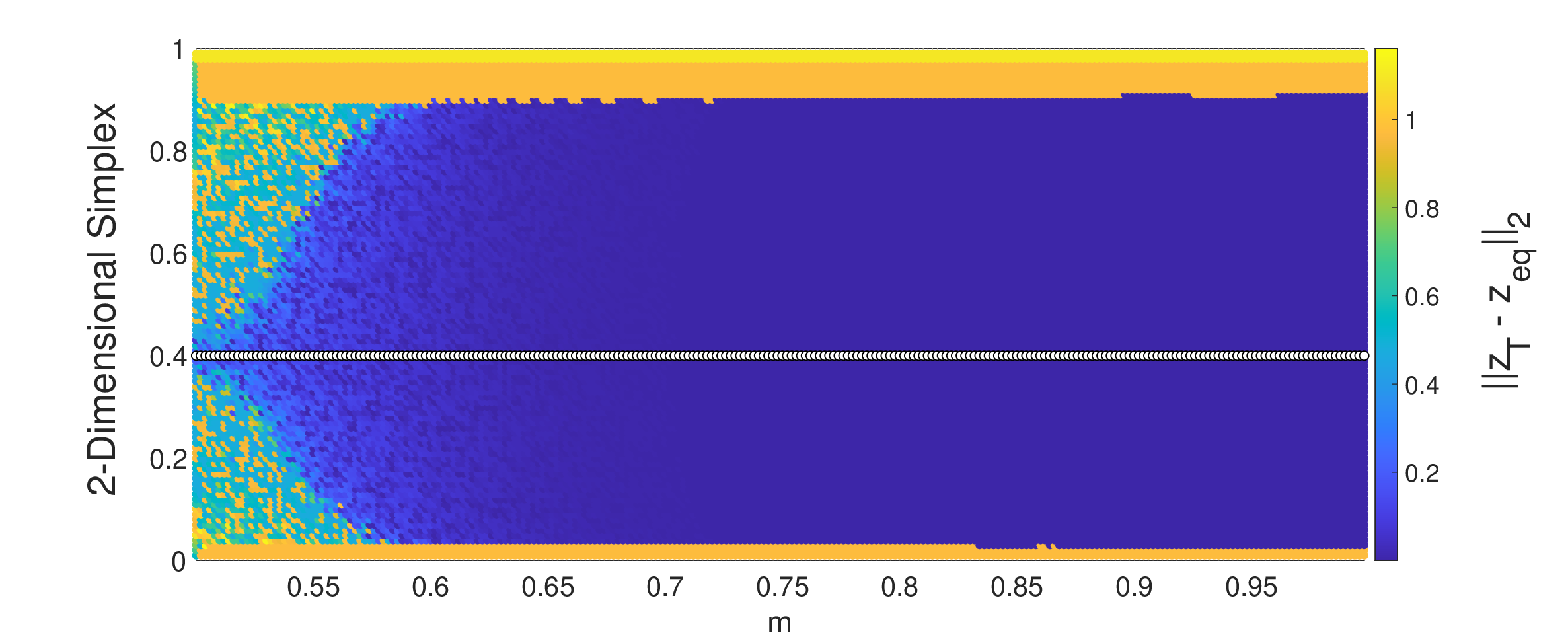}

\caption{Distance to the Nash equilibrium for the non-zero-sum matching pennies variant. The ratio between the step sizes $c$ is fixed to $20$, parameter $m$ varies.}
\label{fig:varyM_c20}
\end{figure}

\subsubsection{Dependence on $m$}
In the previous section, we fixed $m^\star$. Contrasting this, we repeat the experiments with fixed $c$ and vary over $m$. Figure~\ref{fig:varyM_c1} illustrates the dependence on $m$ with $c = 1$. As can be seen, within the finite time $T$, not all dynamics reach the neighborhood of the Nash equilibrium as $m$ approaches $\frac{1}{2}$.

In Figure~\ref{fig:varyM_c20}, we repeat the same experiments with $c = 20$. 
Recall from Figure~\ref{fig:2dSimplex1} that, when $m=m^\star$ and $c=20$, the dynamics do not converge within $T$ steps to a neighbourhood of the Nash equilibrium from every initialization. This naturally raises the question of how the set of convergent initializations changes as $m$ varies. The experiments suggest that this region remains essentially unchanged for $m\in(0.51,0.6]$.
 
As before, these experiments provide further support for and illustrate the asymptotic stability of the Nash equilibrium established in Theorem~\ref{thm:2x2games}. They also suggest that the qualitative stability guaranteed by the theorem is robust across the range of values of $m$ considered here, while pointing toward interesting directions for a more refined quantitative characterization.
 
 \begin{figure}[h]
\centering
\includegraphics[width=\linewidth]{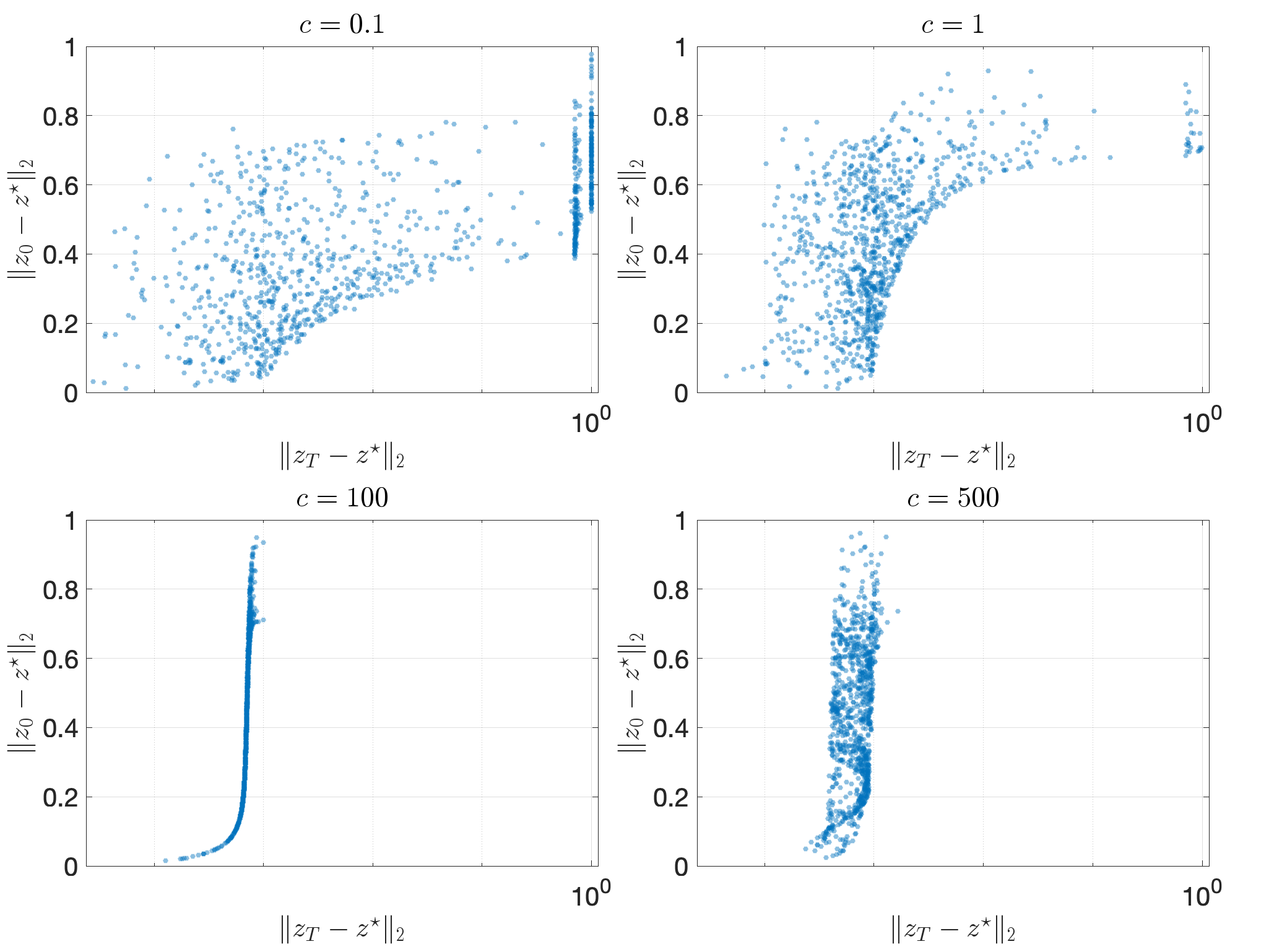}

\caption{Experiment illustrating the benefits of $\etax \ll \etay$. Clearly $c =100$ achieves better convergence than $c \in\{ 10^{-1},1\}$.}
\label{fig:basicLargeC}
\end{figure}

\subsubsection{Benefits of $\etax \ll \etay$}
While our theoretical results are primarily of fundamental interest, the preceding examples naturally raise the question: \emph{Why use asymmetric step sizes?} In addition to the motivation given in the introduction, we illustrate their benefit with a simple toy example based on a general-sum game with differing scales. A more in-depth case study is beyond the scope of this paper.

Consider the matching pennies game $\Gamma(A,-A^\top)$ (see Appendix~\ref{apx:pm}).  Our experiments  are run for the non-zero-sum variant $\Gamma(\frac{1}{\alpha} A, -\alpha A^{\top})$ with $\alpha = 10$. The total number of iterations is $T = 1\,000$, and $m = m^\star$. Figure~\ref{fig:basicLargeC} shows the initial and final norm distances of $N=1\,000$ random samples from $\Delta_2 \times \Delta_2$ with varying choices of $c$ and step sizes as defined in \eqref{eq:etas}. Unsurprisingly, $c=100$ achieves better empirical results than $c = 10^{-1}$ or $c=1$. While we note that this controlled scale mismatch is only a toy example, it illustrates potential practical benefits.

\subsection{ A $4\times 4$ Non-Zero-Sum Game}
In this section, we also study a $4\times 4$ game with a non-fully mixed equilibrium. See Appendix~\ref{apx:4x4game} for details. As a proxy for convergence, Figure~\ref{fig:4x4gameVarC} and Figure~\ref{fig:4x4gameVarM} show the norm distance after $T = 1\,000$ iterations. We sample $N = 2\,000$ points: $N/2$ are from an $\epsilon$-neighbourhood of the equilibrium, and $N/2$ are random points close to the boundaries of $\Delta_4 \times \Delta_4$.  Since the supports are $|S_x| = |S_y| = 2$, $\teig{M_x(Z^\star)|_{L_x}} = \{-\frac{280}{23}\}$ is a singleton. Set $\mu = -\frac{280}{23}$. Based on the theoretical results in Theorem~\ref{thm:jac_spectrum}, we set 
\[ \eta^\star(m) = (1-\epsilon_{\mathrm{margin}}) \frac{2m -1}{m^2(2m+1) \absv{\mu}}\, .\]
 We set the margin as $\epsilon_{\mathrm{margin}} = 0.05$ to ensure that the step size conditions are strictly satisfied. As before, the step sizes are set as in \eqref{eq:etas}.
 
 To interpret the plot, note that points above the diagonal move closer to equilibrium after $T$ iterations, while points on or below it show no progress. The $\epsilon$-neighbourhood is marked by a dashed line. 
  \begin{figure}[h]
\centering

\includegraphics[width=\linewidth]{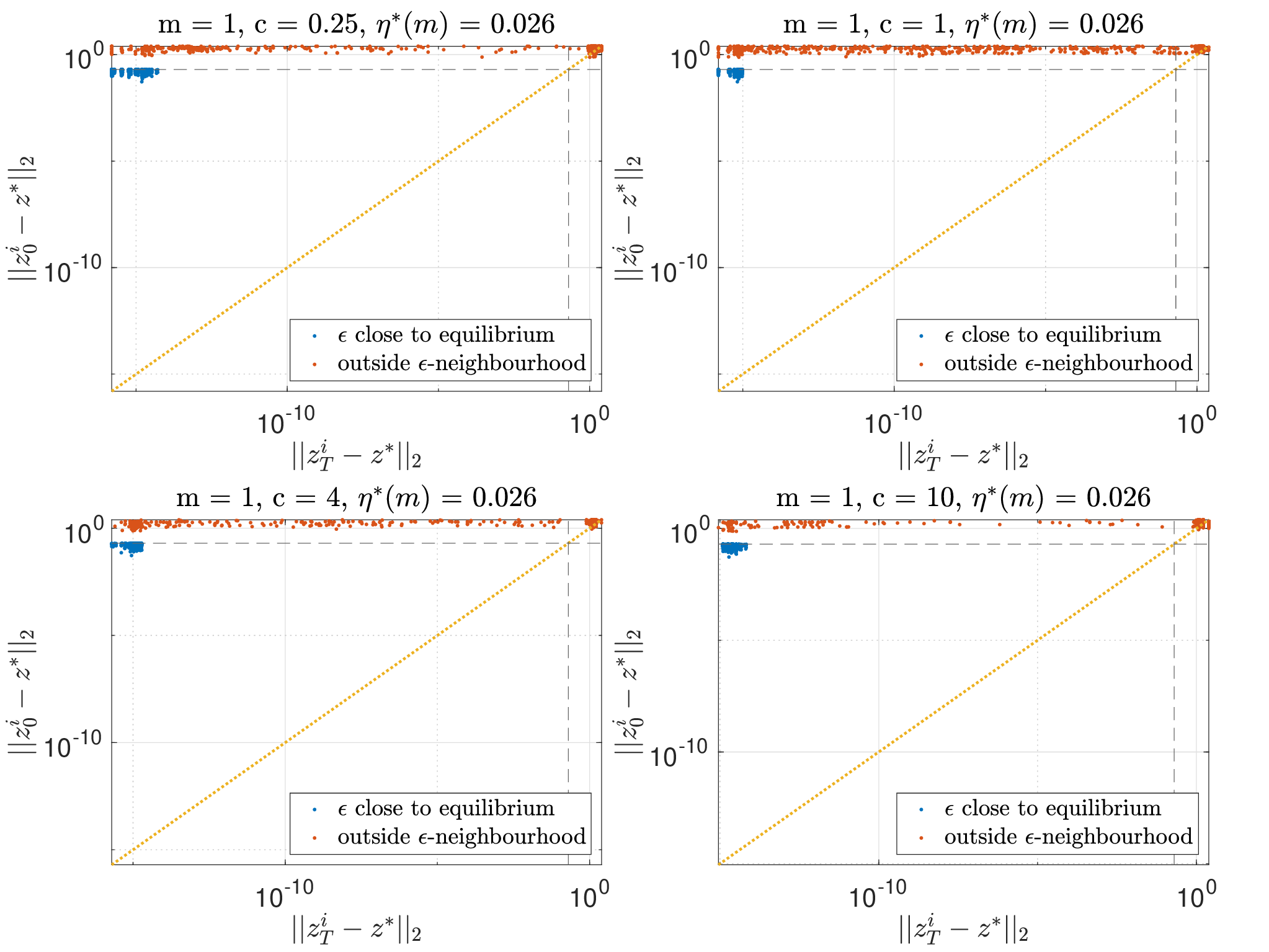}

\caption{Convergence with various choices of $c$. }
\label{fig:4x4gameVarC}
\end{figure}

\begin{figure}[h]
\centering

\includegraphics[width=\linewidth]{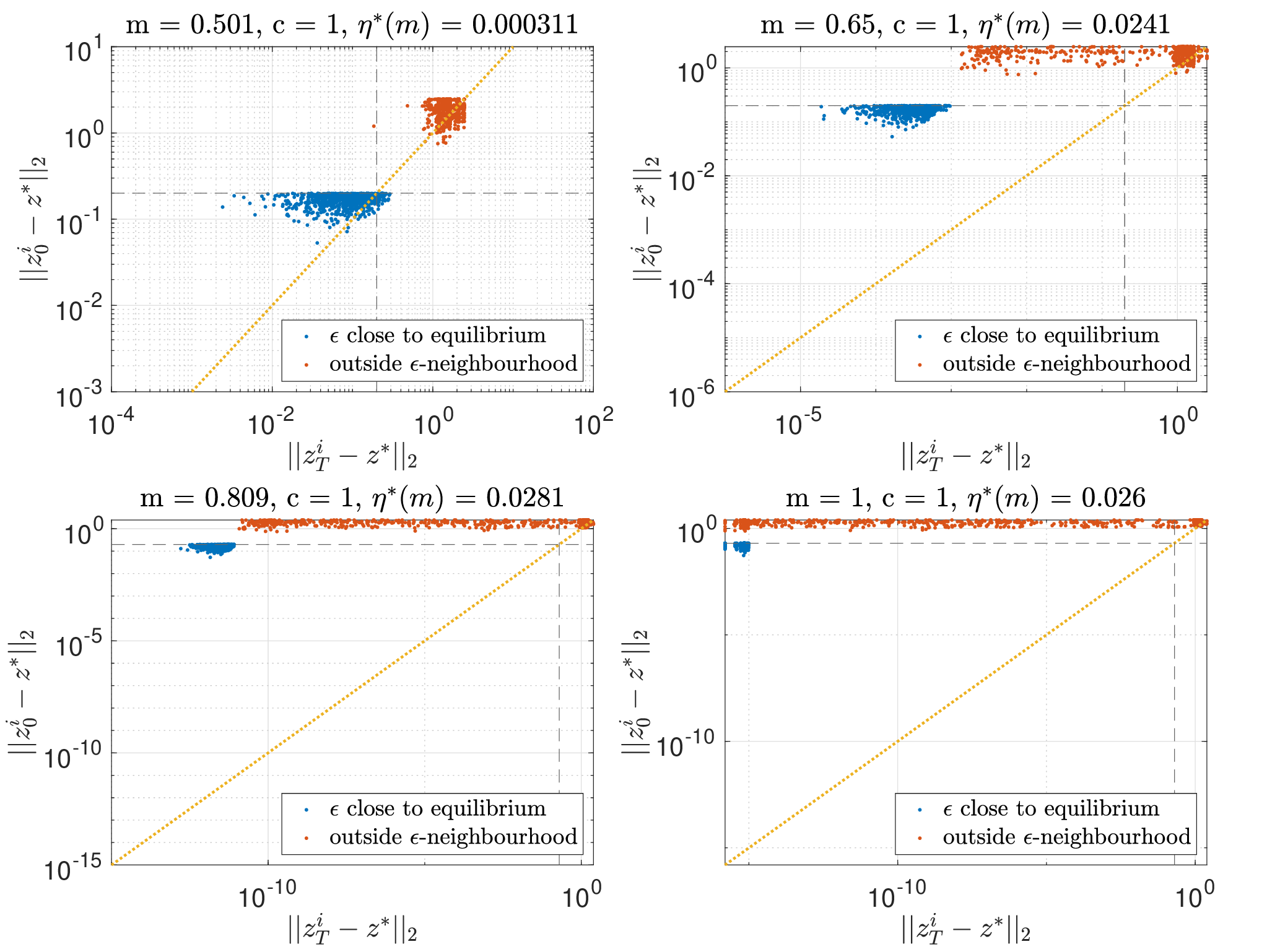}

\caption{Convergence with various choices of $m$.}
\label{fig:4x4gameVarM}
\end{figure}

\subsection{Open Questions} 
\subsubsection{Global Convergence for Zero-Sum Games and $m \in \left(\frac{1}{2}, 1 \right)$}\label{sec:open_questions}

Consider the following variant of rock-paper-scissors $\Gamma(A,-A^\top)$ with
\[
A =
\begin{bmatrix}
\phantom{-}0 & \phantom{-}a & -1 \\
-1           & \phantom{-}0 & \phantom{-} \frac{1}{a} \\
\phantom{-}1 & -1           & \phantom{-}0
\end{bmatrix}\, .
\]
We let $a = 10$. Note that this zero-sum game has a unique, fully mixed Nash equilibrium. As observed in Theorem~\ref{thm:lowDim}, this allows us to calculate the step size threshold analytically. In the following experiments we choose \[  \etax = \etay = (1 - \epsilon)\sqrt{\SR{\restr{M_x(Z^\star)}{L_x}}^{-1} \frac{ 2m^\star - 1}{(m^\star)^2(2m^\star + 1)}} \, ,\]
where $\epsilon = 0.01$ ensures asymptotic stability and $m^\star$ is set to half the golden ratio. Figure~\ref{fig:potentialNonConv} illustrates proximity and potential non-convergence. We note that this empirical observation provides some evidence that there exists $m \in (1/2, 1)$ and $\etax, \etay$ such that the equilibrium is asymptotically stable, but global convergence may not hold. However, we emphasize that numerical errors cannot be ruled out. 

 \begin{figure}[t]
\centering

\begin{minipage}[t]{.55\linewidth}
\centering
\includegraphics[width=\linewidth]{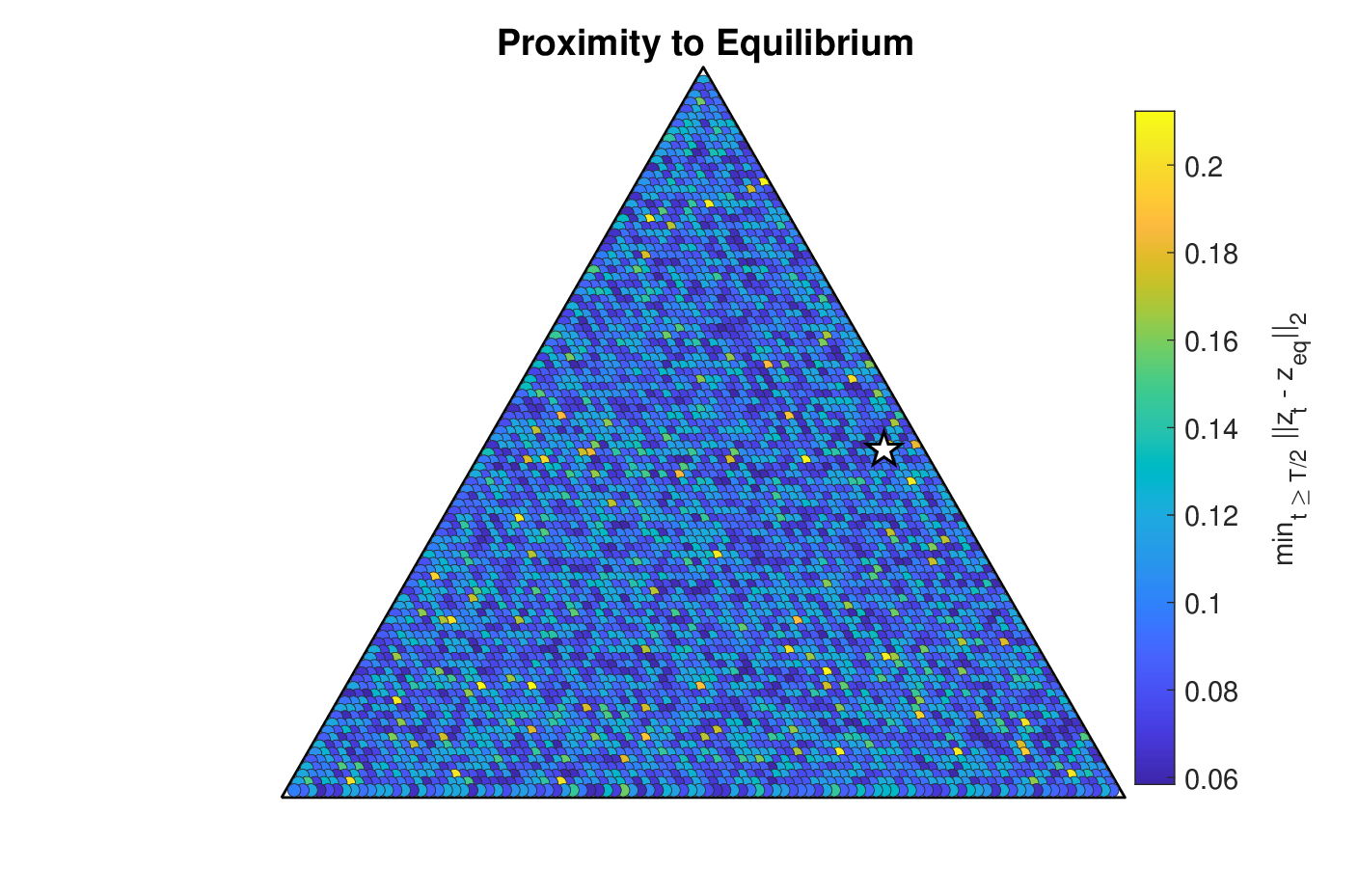}

\end{minipage}
\hfill
\begin{minipage}[t]{.44\linewidth}
\centering
\includegraphics[width=\linewidth]{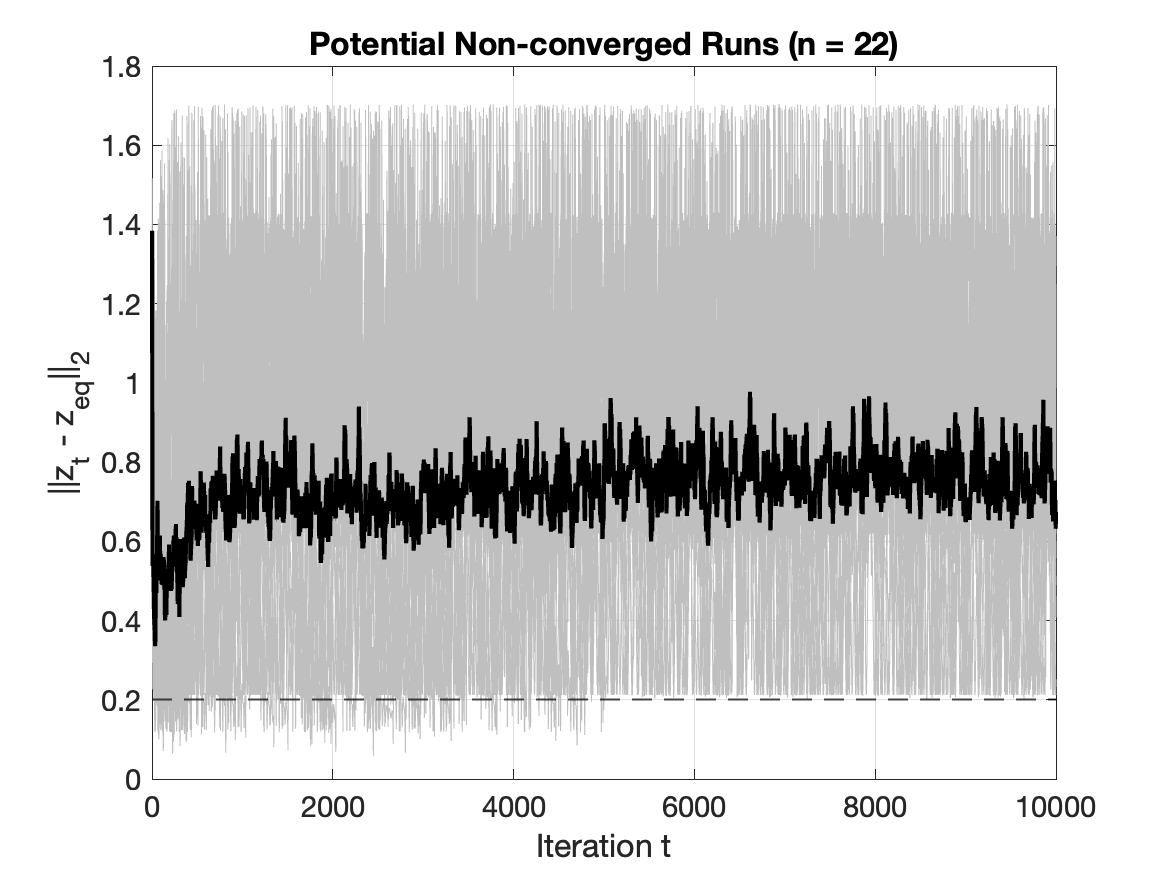}

\end{minipage}
 
\caption{We run $m^\star$-optEW for $T = 10\,000$ iterations. The initial $y_0$ is fixed at $ [0.45,0.45,0.1]^\top$. \textbf{Left plot:} We initialize $m^\star$-optEW with $(x_0^i, y_0)$ where $x_0^i$ is from a grid defined at equidistances $0.01$ over the simplex. The color indicates the minimal norm distance to the equilibrium $(z^i_t)_{t \in \NN} = (x_t^i,y_t)_{t \in \NN}$ reaches over the iterations $t \in [T/2, T]$. \textbf{Right plot:}  This graph shows the norm distances and means thereof for all $(z^i_t)_{t \in \NN}$ with $\min_{t \in [T/2, T]}\norm{z_t^i - z^\star} \geq 0.2$.  Note that for some of the sequences $\min_{t \in [1, T/2]}\norm{z_t^i - z^\star} < 0.2$.}
\label{fig:potentialNonConv}
\end{figure}

 \section{Conclusion}
  We studied optimistic exponential weights in two-player bimatrix games with constant, potentially asymmetric step sizes. Our results show that the relevant stability conditions naturally depend on the product $\eta_x\eta_y$, rather than on the two step sizes separately: in zero-sum games, this yields a global convergence guarantee to Nash equilibria, while in general games, our local spectral analysis characterizes stability through a reduced matrix at equilibrium. Via this technical result, we provide a sharper stability analysis of $m$-optEW with asymmetric step sizes. 
The experiments illustrate and support our results, but they also highlight open questions.

\paragraph{Acknowledgements}
We thank Andrea Celli for his support. 
During parts of the project, Sachs was supported by European Union – Next Generation EU funds, component M4.C2, investment 1.1. - CUP: J53D23007170001 and the London Mathematical Society, Scheme 4, grant 42508.

 \begin{appendices}
  \section{Omitted Proofs of Section \ref{sec:algo}}\label{appendix:FPPhi}
  
 We first show some basic invariance properties of exponential weights.
\begin{proposition}[Invariance] 
\label{prop:invariance}
Let $Z=[\px,\py,\ppx,\ppy]\in\cD_{{{\Phi}_m}}$ and ${{\Phi}_m}(Z)=[\pxp,\pyp,\px,\py]$. Then
for every coordinate,
\[
  \pxp_{i}=0\iff \px_{i}=0
  \qquad\text{and}\qquad
  \pyp_{j}=0\iff \py_{j}=0 .
\]
In particular, if $Z \in \relint \prodSimplex$, then $\Phi_m(Z) \in \relint \prodSimplex$.
\end{proposition}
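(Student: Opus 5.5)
The plan is to read the claim directly off the coordinate formula in \eqref{def:Phi}. Fix $Z=[\px,\py,\ppx,\ppy]\in\cD_{\Phi_m}$ and set
\[
s_x := \sum_{k=1}^{d_x} \px_k \exp\big(\etax[A((1+m)\py-m\ppy)]_k\big),\qquad
s_y := \sum_{l=1}^{d_y} \py_l \exp\big(\etay[B((1+m)\px-m\ppx)]_l\big).
\]
By definition of $\cD_{\Phi_m}$, both $s_x\neq 0$ and $s_y\neq 0$. The first block of $\Phi_m(Z)$ is then
\[
\pxp_i = s_x^{-1}\,\px_i \exp\big(\etax[A((1+m)\py-m\ppy)]_i\big),
\]
and similarly $\pyp_j = s_y^{-1}\py_j\exp(\cdots)$.

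\textbf{Support equivalence.} Since the exponential factor is strictly positive and finite, and $s_x^{-1}$ is a finite nonzero real, $\pxp_i$ is the product of $\px_i$ with a nonzero real number. Hence $\pxp_i=0$ if and only if $\px_i=0$. The same reasoning applied to the $y$-block, using $s_y\neq0$, gives $\pyp_j=0\iff\py_j=0$. No sign assumptions on $\px,\py$ are needed here; only $s_x,s_y\neq 0$ is used.

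\textbf{Relative interior.} Now let $Z\in\relint\prodSimplex$, meaning all four components lie in the relative interiors of their simplices, that is, they have strictly positive entries summing to one.
\begin{enumerate}
\item $Z\in\cD_{\Phi_m}$: since $s_x$ and $s_y$ are sums of strictly positive terms, they are strictly positive, hence nonzero.
\item Positivity of the new blocks: each $\pxp_i$ is a ratio of positive quantities, so it is positive, and by construction $\sum_i\pxp_i=1$. Thus $\pxp\in\relint\Delta_{d_x}$. Equivalently, this follows from the support equivalence, since $\supp(\pxp)=\supp(\px)$ is the full index set. The same holds for $\pyp\in\relint\Delta_{d_y}$.
\item The last two blocks of $\Phi_m(Z)$ are $\px,\py$, which are already in the relative interiors.
\end{enumerate}
Therefore $\Phi_m(Z)\in\relint\prodSimplex$.

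There is no real obstacle in this argument. The only point needing care is that the normalizing constant is nonzero, which holds by the definition of $\cD_{\Phi_m}$ in general and by positivity on $\prodSimplex$.
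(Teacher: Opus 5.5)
Your proof is correct and follows essentially the same route as the paper: the normalizer is nonzero by the definition of $\cD_{\Phi_m}$ and the exponential factor is strictly positive, which gives the support equivalence; positivity of all coordinates on $\relint\prodSimplex$ then gives invariance of the relative interior. You additionally make explicit that $\relint\prodSimplex\subseteq\cD_{\Phi_m}$, which the paper uses only implicitly.
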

\begin{proof}
By definition, $\pxp=P_{d_x}\!\big(\px\odot\Exp(\etax A((1+m)\py-m\ppy))\big)$. By definition of $\cD_{{{\Phi}_m}}$,
the normalizer is nonzero, and the exponential factor is strictly positive in
every coordinate, so $\pxp_{i}=0\iff \px_{i}=0$. The same argument holds for $\pyp$. If $Z \in \relint \prodSimplex$ then all coordinates of $\px, \py, \ppx,\ppy$ are strictly positive. Combined with the strict positivity of the exponential function yields the claim.
\end{proof}

\subsection{Proof of Theorem \ref{thm:FPPhi}} 

\begin{proof}[Proof of Theorem~\ref{thm:FPPhi}]
Let $Z=[\px,\py,\ppx,\ppy]\in\prodSimplex$.   By the definition of $\Phim$, $
  \Phim(Z)=Z$
is equivalent to 
\begin{align*}
\begin{cases}
\px
&=
P_{d_x}\!\left(
\px\odot \Exp\!\left(\etax A((1+m)\py-m\ppy)\right)
\right),\\
\py
&=
P_{d_y}\!\left(
\py\odot \Exp\!\left(\etay B((1+m)\px-m\ppx)\right)
\right),\\
\ppx&=\px, \qquad 
\ppy=\py.
\end{cases}  
\end{align*}
 This is equivalent to
\begin{align*}
\begin{cases}
\ppx&=\px, \qquad \ppy=\py,\\
\px
&=
P_{d_x}\!\left(
\px\odot \Exp(\etax A \py)
\right),\\
\py
&=
P_{d_y}\!\left(
\py\odot \Exp(\etay B \px)
\right).
\end{cases}
\end{align*}
Define
\[
  C_x
  :=
  \sum_{\ell=1}^{d_x}
  \px_{\ell}\exp\!\left(\etax (A\py)_\ell\right).
\]
Since $\px\in\Delta_{d_x}$, we have $\supp(\px) = S_x\neq\varnothing$ and $C_x>0$. By
Proposition~\ref{prop:invariance}, the zero pattern of the first block is preserved by
the update. Hence, the coordinates outside $S_x$ impose no additional condition.
Therefore,  \[
\begin{aligned}
\px
&=
P_{d_x}\!\left(
\px\odot \Exp(\etax A \py)
\right)
\\
&\iff
\forall i\in S_x,\quad
\px_{i}
=
\px_{i}
\frac{\exp(\etax (A\py)_i)}{C_x}
\\
&\iff
\forall i\in S_x,\quad
\exp(\etax (A\py)_i)=C_x
\\
&\iff
\forall i,j\in S_x,\quad
(A\py)_i=(A\py)_j \, .
\end{aligned}
\]
For the last step, we used that $\etax>0$.
 Combined with the observation that the argument for $\py$ is identical, we note that $Z$ is a fixed point of
$\restr{\Phim}{\prodSimplex}$ if and only if
\[
  \ppx=\px\in\Delta_{d_x},\qquad
  \ppy=\py\in\Delta_{d_y},
\]
and the payoff vectors $A\py$ and $B\px$ are constant on the supports of
$\px$ and $\py$, respectively.  \end{proof}

\section{General Technical Results}
Recall from the main part that by $\odot$ we denote the Hadamard (entry-wise) multiplication and $\Exp$ denotes the component-wise exponential function. In this section, we also use entry-wise division, denoted by $\odiv$, and the component-wise logarithm $\Log$.
Throughout the section, we let $(Z_t)_{t \in \NN}$ denote a sequence generated by
$Z_t=\Phi_m(Z_{t-1})$  with $m \in (0,1]$. We always assume that $Z_0 \in \prodSimplex$ and for some results we use the stronger assumption that $Z_0 \in \relint \prodSimplex$. This assumption is indicated for each result. For each
$t\in\NN$, write $Z_t=[x_{t},y_{t},x_{t-1},y_{t-1}]$ and $Z_{1:t} = [Z_1, \dots, Z_t]$, with the convention $Z_{1:0}=\varnothing$.
Further, we let $\bar y_{1:t} =\sum_{s=1}^t y_s$ and $\bar x_{1:t} =\sum_{s=1}^t x_s$ with the convention that $\bar x_{1:0} = \bar y_{1:0} = 0$.
Further, for $x\in \Delta_d, x' \in \relint \Delta_d$, we denote the Kullback-Leibler (KL) divergence by $\KL(x,x') = \sum_{i =1}^d x_i \log \frac{ x_i}{ x_i'}$. We use the convention that $0 \log 0 = 0$.

\subsection{Known Results} \label{sec:knownResults}
For completeness, we include several well-known results for the exponential weight updates and bounds for the KL divergence.  For a proof of these results, see, e.g., \citet{cesa-bianchi2006predicti}.  %
 \begin{proposition}[Pinsker's inequality] \label{prop:Pinsker}
For any $x,x' \in \Delta_d$, 
\begin{align*}
  \KL(x, x')
 \geq \frac{1}{2} \norm{x - x' }_1^2\, . 
\end{align*}
\end{proposition}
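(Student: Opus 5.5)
We need to propose a proof of Pinsker's inequality: $\KL(x,x') \ge \frac12 \|x-x'\|_1^2$ for $x,x'\in\Delta_d$. When $x'$ has zero entries where $x$ is positive, the KL is $+\infty$ and the bound is trivial; we adopt that convention.

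The plan is the classical two-step argument. First, reduce to the binary case via the data-processing (log-sum) inequality. Second, prove the two-point inequality by a one-variable calculus argument.

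\textbf{Reduction to two points.} Let $S=\{i : x_i \ge x'_i\}$, and set $p=\sum_{i\in S}x_i$ and $q=\sum_{i\in S}x'_i$. Then
\[
\norm{x-x'}_1 = \sum_{i\in S}(x_i-x'_i)+\sum_{i\notin S}(x'_i-x_i) = 2(p-q).
\]
The log-sum inequality $\sum_i a_i\log\frac{a_i}{b_i}\ge (\sum_i a_i)\log\frac{\sum_i a_i}{\sum_i b_i}$ follows from convexity of $t\mapsto t\log t$ and Jensen's inequality, with weights $b_i/\sum_j b_j$. Applying it separately on $S$ and on its complement gives
\[
\KL(x,x')\ge p\log\frac pq+(1-p)\log\frac{1-p}{1-q} =: k(p,q).
\]
It therefore suffices to show $k(p,q)\ge 2(p-q)^2$ for $0\le q\le p\le 1$.

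\textbf{Binary case.} Fix $p$ and define $g(q)=k(p,q)-2(p-q)^2$ for $q\in(0,p]$. Then $g(p)=0$, and
\[
g'(q) = -\frac pq+\frac{1-p}{1-q}+4(p-q) = (q-p)\Bigl(\frac{1}{q(1-q)}-4\Bigr).
\]
Since $q(1-q)\le \frac14$, the bracket is nonnegative, so $g'(q)\le 0$ for $q\le p$. Hence $g$ is nonincreasing on $(0,p]$, and $g(q)\ge g(p)=0$. The endpoints $q=0$ and $p=1$ are handled by continuity, or by the value $+\infty$ together with the convention $0\log 0=0$.

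\textbf{Main obstacle.} The only delicate point is the bookkeeping at boundary cases (zero coordinates) in the log-sum step. The derivative computation itself is routine.
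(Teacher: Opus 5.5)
Your proof is correct, including the degenerate cases ($p=0$, $p=1$, $q=0$). The identity $\norm{x-x'}_1=2(p-q)$, the two applications of the log-sum inequality, and the factorization $g'(q)=(q-p)\bigl(\frac{1}{q(1-q)}-4\bigr)$ all check out.

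The paper gives no proof of its own; it quotes Pinsker's inequality as a standard fact with a reference to Cesa-Bianchi and Lugosi. Your argument (reduction to the binary case, then monotonicity in $q$) is the classical textbook proof. Since the paper only applies it to consecutive iterates in the relative interior, where its definition of $\KL$ applies, your $+\infty$ convention for the boundary case is harmless but never needed there.
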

\begin{proposition}[Three-point identity for KL-divergence] \label{threePointKL}
 For any $x \in \Delta_d $ and $y, z \in \relint \Delta_d$, 
\[
\KL(x, z)
= \KL(x, y) + \KL(y, z)
+ \Big\langle x - y , \Log ({y}\odiv{z}) \Big\rangle\, .
\]
\end{proposition}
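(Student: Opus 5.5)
The statement to prove is the three-point identity (Proposition~\ref{threePointKL}). It is a purely algebraic identity, so the plan is a direct computation. I expand the right-hand side using the definition $\KL(x,x')=\sum_i x_i\log(x_i/x_i')$, together with the convention $0\log 0=0$. That convention is needed because $x$ may lie on the boundary of the simplex, while $y,z\in\relint\Delta_d$ have strictly positive entries. Positivity of $y$ and $z$ guarantees that $\Log(y\odiv z)$, $\KL(y,z)$ and $\KL(x,y)$ are all finite.

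The key steps, in order, are as follows.
\begin{enumerate}
\item Write $\KL(x,y)=\sum_i x_i\log x_i-\sum_i x_i\log y_i$. The first sum is taken over $\supp(x)$ by the convention; every $\log y_i$ is finite.
\item Write $\KL(y,z)=\sum_i y_i\log y_i-\sum_i y_i\log z_i$.
\item Expand the inner product as
\[
\Big\langle x-y,\Log(y\odiv z)\Big\rangle=\sum_i x_i\log y_i-\sum_i x_i\log z_i-\sum_i y_i\log y_i+\sum_i y_i\log z_i .
\]
\item Add the three expressions. The terms $\pm\sum_i x_i\log y_i$, $\pm\sum_i y_i\log y_i$ and $\pm\sum_i y_i\log z_i$ cancel in pairs.
\item What remains is $\sum_i x_i\log x_i-\sum_i x_i\log z_i=\KL(x,z)$, which is the left-hand side.
\end{enumerate}

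Note that the identity uses only the logarithm's additivity. The simplex constraint $\ones^\top x=\ones^\top y=1$ is not needed; it only serves to make the quantities genuine KL divergences.

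The only point requiring care is the boundary case, where $x_i=0$ for some $i$. There I will check that each split of a logarithm is legitimate coordinatewise. For $i\notin\supp(x)$, every term carrying the factor $x_i$ equals $0$ on both sides. The terms carrying $y_i$ involve only finite logarithms, because $y_i,z_i>0$. So all manipulations are sums of finite real numbers, and the rearrangement is valid.
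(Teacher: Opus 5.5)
Your proof is correct. Expanding $\KL(x,y)$, $\KL(y,z)$ and the inner product coordinatewise and cancelling in pairs leaves exactly $\sum_i x_i\log x_i-\sum_i x_i\log z_i=\KL(x,z)$. Your handling of coordinates with $x_i=0$ is sound: every term carrying $x_i$ vanishes, and $\log y_i,\log z_i$ are finite because $y,z\in\relint\Delta_d$.

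The paper gives no proof of its own; it cites \citet{cesa-bianchi2006predicti}. There the identity is usually obtained as the Bregman three-point identity
$D_h(x,z)=D_h(x,y)+D_h(y,z)+\langle\nabla h(y)-\nabla h(z),x-y\rangle$,
specialized to the negative entropy $h(x)=\sum_i x_i\log x_i$, for which $\nabla h(y)-\nabla h(z)=\Log y-\Log z$.

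Your direct computation is the elementary version of that argument, and it is slightly more general. As you note, it does not need $\ones^\top x=\ones^\top y=\ones^\top z=1$. In the Bregman route, that normalization is exactly what makes the linear terms $-\sum_i x_i+\sum_i y_i$ in $D_h$ vanish, so that $D_h$ coincides with $\KL$.
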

 Now define the log-sum-exp potentials
\begin{equation}\label{def:LSE}
\begin{array}{l}
\displaystyle
\psi^x(Z_{1:t})
:=
\frac{1}{\eta_x}
\log
\left\langle
 x_1,
\Exp\!\left(
\eta_x A\big( \bar y_{1:t} + m(y_t- y_{0}) \big)
\right)
\right\rangle ,
\\[8pt]
\displaystyle
\psi^y(Z_{1:t})
:=
\frac{1}{\eta_y}
\log
\left\langle
 y_1,
\Exp\!\left(
\eta_y B\big( \bar x_{1:t}+ m(x_{t} - x_{0}) \big)
\right)
\right\rangle .
\end{array}
\end{equation}
We set $\psi(Z_{1:t}):=\psi^x(Z_{1:t})+\psi^y(Z_{1:t})$ and use the convention that $\psi^x(Z_{1:0}) =\psi^y(Z_{1:0})= 0$.

\begin{proposition}[Logit identity]
\label{prop:KL_logit1}
Consider the game $\Gamma(A,B)$ and assume $Z_0 \in \relint \prodSimplex$. Then, for every $t\geq 1$,
  \begin{enumerate}
  \item  $ \Log({x_{t+1}}\odiv{x_{t}})
    =
    \eta_x A((1+m)y_{t}- my_{t-1})
    +
    \eta_x
    \big(
        \psi^x(Z_{1:t-1})-\psi^x(Z_{1:t})
    \big)\mathbf 1 \, ,$
    \item  $ \Log({y_{t+1}}\odiv{y_{t}})
    =
    \eta_y  B((1+m)x_{t}- m x_{t-1})
    +
    \eta_y
    \big(
        \psi^y(Z_{1:t-1})-\psi^y(Z_{1:t})
    \big)\mathbf 1 \, .$
    \end{enumerate}

 \end{proposition}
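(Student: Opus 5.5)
The identity follows by unrolling the multiplicative update and identifying the normalizers as differences of consecutive potentials. I treat only the $x$-block; the $y$-block is identical with $(A,\eta_x,x,y)$ replaced by $(B,\eta_y,y,x)$. Since $Z_0\in\relint\prodSimplex$, Proposition~\ref{prop:invariance} gives that all iterates stay in the relative interior. Hence every coordinate is positive, $\Log$ and $\odiv$ are well defined, and all normalizers are positive.

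\textbf{Step 1 (closed form).} First I establish by induction on $t\geq 1$ that
\[
x_{t+1}=P_{d_x}\Big(x_1\odot \Exp\big(\eta_x A(\bar y_{1:t}+m(y_t-y_0))\big)\Big).
\]
The update reads $x_{s+1}\propto x_s\odot\Exp(\eta_x A(y_s+m(y_s-y_{s-1})))$. Multiplying these updates for $s=1,\dots,t$, the exponents telescope:
\[
\sum_{s=1}^t \big(y_s+m(y_s-y_{s-1})\big)=\bar y_{1:t}+m(y_t-y_0).
\]
The intermediate normalizing constants are scalars and are absorbed by the final normalization $P_{d_x}$. This is legitimate because $P_{d_x}(c\,v)=P_{d_x}(v)$ for any $c>0$. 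For the base case $t=1$, the formula is just the definition of $\Phi_m$.

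\textbf{Step 2 (normalizer as potential).} Set $g_t:=\bar y_{1:t}+m(y_t-y_0)$, with $g_0:=0$. The unnormalized vector in Step 1 has coordinate sum $\langle x_1,\Exp(\eta_x A g_t)\rangle=\exp(\eta_x\psi^x(Z_{1:t}))$, so coordinatewise
\[
\Log x_{t+1}=\Log x_1+\eta_x A g_t-\eta_x\psi^x(Z_{1:t})\ones .
\]

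\textbf{Step 3 (subtract consecutive steps).} For $t\geq 2$, I apply Step 2 at $t$ and at $t-1$ and subtract. The increment of the exponent is
\[
g_t-g_{t-1}=y_t+m(y_t-y_{t-1})=(1+m)y_t-my_{t-1},
\]
which gives exactly the claimed identity.

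The only point needing care is the boundary case $t=1$, where the convention $\psi^x(Z_{1:0})=0$ must match. Here Step 2 cannot be applied at $t-1=0$, because the closed form of Step 1 was only established for $t\geq1$. With $g_0=0$, its right-hand side would read $\Log x_1-\eta_x\psi^x(Z_{1:0})\ones=\Log x_1$, which is consistent, but the case is not covered by the induction. Instead I check $t=1$ directly from the definition of $\Phi_m$:
\[
\Log(x_2\odiv x_1)=\eta_x A((1+m)y_1-my_0)-\log\big\langle x_1,\Exp(\eta_x A((1+m)y_1-my_0))\big\rangle\,\ones .
\]
The normalizer here equals $\eta_x\psi^x(Z_{1:1})$, because $\bar y_{1:1}+m(y_1-y_0)=(1+m)y_1-my_0$. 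This verifies the claim at $t=1$.

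I expect no real obstacle: the argument is routine telescoping plus this bookkeeping of the base case and the potential convention.
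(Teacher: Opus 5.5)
Your proposal is correct and follows the paper's proof essentially verbatim. Like the paper, you unroll the update back to $x_1$, telescope the exponents to $\bar y_{1:t}+m(y_t-y_0)$, identify the normalizer with $\exp(\eta_x\psi^x(Z_{1:t}))$, and take logarithms of consecutive quotients. The only difference is that you check $t=1$ separately, whereas the paper covers it implicitly: the closed form holds trivially for $x_1$ with an empty sum, i.e.\ exponent $0$, and this agrees with the convention $\psi^x(Z_{1:0})=0$.
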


\begin{proof}
We prove the identity for the $x$-player; the proof for the $y$-player is analogous.

By iterating the $m$-optEW update, for every coordinate $i \in \{1,\dots, d_x\}$,
\[
  [x_{t+1}]_i
  =
  \frac{
    [x_1]_i
    \exp\!\left(
      \eta_x
      \left[
        A
        \sum_{s=1}^t\big((1+m)y_s-my_{s-1}\big)
      \right]_i
    \right)
  }{
    \sum_{k=1}^{d_x}
    [x_1]_k
    \exp\!\left(
      \eta_x
      \left[
        A
        \sum_{s=1}^t\big((1+m)y_s-my_{s-1}\big)
      \right]_k
    \right)
  } .
\]
The sum simplifies to
\[
\sum_{s=1}^t((1+m)y_{s}-my_{s-1})
=
m(y_{t} - y_{0})+\bar y_{1:t} =: Y_t^{(m)}\, .
\]
 Since $Z_0\in\relint\prodSimplex$, all iterates remain strictly positive, so the coordinate-wise logarithm of the ratio is well defined.   Taking the logarithm of the quotient gives
\[
  \log\frac{[x_{t+1}]_i}{[x_t]_i}
  =
  \eta_x[A(Y_t^{(m)}-Y_{t-1}^{(m)})]_i
  +
  \eta_x\Big( \psi^x(Z_{1:t-1})-\psi^x(Z_{1:t})\Big).
\]
Finally,
\[
  Y_t^{(m)}-Y_{t-1}^{(m)}
  =
  (1+m)y_t-my_{t-1}.
\]
This proves the first identity. The second follows from an analogous argument. \end{proof}

     \begin{lemma}\label{lem:delta_epsilon_tau}
  Consider a bimatrix game $\Gamma(A,B)$ and assume $Z_0 \in \prodSimplex$.
Let
\[
  \tZ=\Zfp \in \FP
\]
be a fixed point of $\Phi_m$.  
Assume that one of the following two alternatives holds:
\begin{enumerate}
\item[\textnormal{(I)}] there exist
$\hat \imath\notin\supp(\tx)$ and $j\in\supp(\tx)$ such that
\[
  (A\ty)_{\hat \imath}>(A\ty)_j\, .
\]
In this case, assume $[x_0]_j > 0$ and define
\[
  R_t:=\frac{[x_t]_{\hat \imath}}{[x_t]_j};
\]
\item[\textnormal{(II)}] there exist
$\hat k\notin\supp(\ty)$ and $\ell\in\supp(\ty)$ such that
\[
  (B\tx)_{\hat k}>(B\tx)_\ell\, .
\]
 In this case, assume $[y_0]_\ell >0$ and define
\[
  R_t:=\frac{[y_t]_{\hat k}}{[y_t]_\ell}.
\]
\end{enumerate}
 Then there exist constants
$\delta>0$, $\epsilon>0$, and $\tau>0$ such that:
\begin{enumerate}
\item \textbf{Exponential growth:} for every $t\in\NN$,
\[
  \|Z_t-\tZ\|<\delta
  \quad\Longrightarrow\quad
  R_{t+1}\ge e^\epsilon R_t .
\]
\item \textbf{Bounded ratio:} for every $Z=[x,y,x',y']\in\prodSimplex$,
\[
  \|Z-\tZ\|<\delta
  \quad\Longrightarrow\quad
  R(Z)<\tau,
\]
where
\[
  R(Z)=
  \begin{cases}
    x_{\hat \imath}/x_j, & \text{in case \textnormal{(I)},}\\
    y_{\hat k}/y_\ell, & \text{in case \textnormal{(II)}.}
  \end{cases}
\]
\end{enumerate}
\end{lemma}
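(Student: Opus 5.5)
We treat case (I); case (II) is symmetric, with the roles of $(x,A,\etax)$ and $(y,B,\etay)$ exchanged. The plan is to compute the one-step ratio directly from the update rule of $\Phi_m$. Because both coordinates share the same normalizer $P_{d_x}$, it cancels in the quotient. Whenever $[x_t]_j>0$ (which holds for all $t$ by Proposition~\ref{prop:invariance}, since $[x_0]_j>0$), this gives
\[
  R_{t+1} = R_t \exp\!\Big(\etax\big[A((1+m)y_t - m y_{t-1})\big]_{\hat\imath} - \etax\big[A((1+m)y_t - m y_{t-1})\big]_{j}\Big).
\]
If $[x_t]_{\hat\imath}=0$, the inequality $R_{t+1}\ge e^\epsilon R_t$ holds trivially as $0\ge 0$. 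Note that this step uses neither the logit identity of Proposition~\ref{prop:KL_logit1} nor interiority of $Z_0$, only positivity of the $j$-coordinate.

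For the exponential growth claim, set $g := (A\ty)_{\hat\imath}-(A\ty)_j>0$. The map $(y,y')\mapsto [A((1+m)y-my')]_{\hat\imath}-[A((1+m)y-my')]_j$ is linear, and at $(\ty,\ty)$ it equals $g$. It is Lipschitz with constant at most $2(1+2m)\norm{A}_{\oneinf}\sqrt{d_y}$ (in Euclidean norm on $(y,y')$). Now $\norm{Z_t-\tZ}<\delta$ controls both $\norm{y_t-\ty}$ and $\norm{y_{t-1}-\ty}$, since $\tZ=\Zfp$ has the same $\ty$ in both slots. Choosing $\delta$ small enough therefore makes the exponent at least $\etax g/2$, and we set $\epsilon:=\etax g/2$.

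For the bounded ratio claim, we use that $j\in\supp(\tx)$, so $\tx_j>0$. Shrinking $\delta$ further to $\delta\le \tx_j/2$, any $Z$ with $\norm{Z-\tZ}<\delta$ has $x_j\ge \tx_j/2$. Since $x_{\hat\imath}\le 1$, this gives $R(Z)\le 2/\tx_j$, and we take $\tau:=2/\tx_j+1$. The final $\delta$ is the minimum of the two choices.

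The main obstacle is only bookkeeping: the lag structure means closeness of $Z_t$ to $\tZ$ must control both $y_t$ and $y_{t-1}$, which it does here because $Z_t=[x_t,y_t,x_{t-1},y_{t-1}]$. The point of the lemma is that it is used later, in the instability argument, to show that iterates cannot remain near a non-Nash fixed point, because $R_t$ would have to grow beyond $\tau$.
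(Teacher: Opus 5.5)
Your proposal is correct and follows essentially the same route as the paper: you cancel the common normalizer to get $R_{t+1}=R_t\exp(\etax[\cdots])$, use continuity at $\tZ$ (where both the $y_t$ and $y_{t-1}$ slots equal $\widetilde y$) to bound the exponent below by $\etax g/2$, and use $x_j>\widetilde x_j/2$ near $\tZ$ for the ratio bound. The differences are cosmetic and all valid. You bound $x_{\hat\imath}\le 1$ and get $\tau=2/\widetilde x_j+1$, whereas the paper uses $x_{\hat\imath}<\delta$ (since $\widetilde x_{\hat\imath}=0$) and gets $\tau=\delta/\beta$. You also make the Lipschitz constant and the degenerate case $[x_t]_{\hat\imath}=0$ explicit.
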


\begin{proof}
We prove case \textnormal{(I)}, the case \textnormal{(II)} is analogous.
Set
\[
  g:=(A\ty)_{\hat \imath}-(A\ty)_j>0,
  \qquad
  \beta:=\frac{\tx_{j}}{2}>0.
\]
For any $Z=[x,y,x',y']$, define the continuous local payoff-difference map
\[
  \Gamma_x^{(m)}(Z)
  :=
  [A((1+m)y-my')]_{\hat \imath}
  -
  [A((1+m)y-my')]_j .
\]
Since $\tZ=[\tx,\ty,\tx,\ty]$, we have
$
  (1+m)\ty-m\ty=\ty.
$
Therefore
\[
  \Gamma_x^{(m)}(\tZ)
  =
  (A\ty)_{\hat \imath}-(A\ty)_j
  =
  g.
\]
By continuity, there exists $\delta_1>0$ such that
\[
  \|Z-\tZ\|<\delta_1
  \quad\Longrightarrow\quad
  \Gamma_x^{(m)}(Z)>\frac{g}{2}.
\]
Moreover, since $j\in\supp(\tx)$, we have $\tx_j>0$.
Thus there exists $\delta_2>0$ such that
\[
  \|Z-\tZ\|<\delta_2
  \quad\Longrightarrow\quad
  x_j>\beta .
\]
Let
\[
  \delta:=\min\{\delta_1,\delta_2\},
  \qquad
  \epsilon:=\frac{\eta_x g}{2},
  \qquad
  \tau:=\frac{\delta}{\beta}.
\]
We first show the exponential growth statement.
Since the normalization factors in the update operation cancel, we obtain 
\[
  R_{t+1}
  =
  R_t
  \exp\!\left(
    \eta_x\Gamma_x^{(m)}(Z_t)
  \right).
\]
If $\|Z_t-\tZ\|<\delta$, then $\|Z_t-\tZ\|<\delta_1$, and hence
\[
  \Gamma_x^{(m)}(Z_t)>\frac{g}{2}.
\]
Therefore
\[
  R_{t+1}
  =
  R_t
  \exp\!\left(
    \eta_x\Gamma_x^{(m)}(Z_t)
  \right)
  \ge
  R_t\exp\!\left(\frac{\eta_x g}{2}\right)
  =
  e^\epsilon R_t .
\]
This proves the exponential growth statement in case \textnormal{(I)}.

It remains to prove the bounded-ratio statement.
Let $Z=[x,y,x',y']\in\prodSimplex$ satisfy
\[
  \|Z-\tZ\|<\delta .
\]
Since $\hat \imath\notin\supp(\tx)$, we have $\tx_{\hat \imath}=0$.
Thus
\[
  x_{\hat \imath}
  =
  |x_{\hat \imath}-\tx_{\hat \imath}|
  \le
  \|Z-\tZ\|
  <
  \delta.
\]
On the other hand, since $\delta\le\delta_2$, we have
\[
  x_j>\beta.
\]
Consequently,
\[
  R(Z)
  =
  \frac{x_{\hat \imath}}{x_j}
  <
  \frac{\delta}{\beta}
  =
  \tau.
\]
This proves the bounded-ratio statement in case \textnormal{(I)}.

 \end{proof}

\section{Omitted Proofs from Section~\ref{sec:global}}
\subsection{Proof of Theorem~\ref{thm:convFP}}\label{apx:proof_convFP}

\paragraph{Key Technical Results} 
 A key ingredient is the Lyapunov function, consisting of a KL-divergence, a cross-term gap function, and the log-sum-exp $\psi$ defined in equation \eqref{def:LSE}. For
  $
    Z=[x,y,x',y'] \in \relint{ \prodSimplex}
  $, let $\varphi(Z):=\varphi^x(Z)+\varphi^y(Z)$ with
  \begin{align*}
    \varphi^x(Z):=\frac{1}{2\etax}&\KL(x,x'),
    \qquad \text{ and }\qquad
    \varphi^y(Z):=\frac{1}{2\etay}\KL(y,y')\, ;
      \intertext{and define a gap function capturing the cross terms as}
      &\gap(Z):=\langle x,Ay'\rangle-\langle y,A^\top x'\rangle\, .
  \end{align*}
  The Lyapunov function is
  \[
    V(Z_{1:t}):=\psi(Z_{1:t-1})+\varphi(Z_t)-\gap(Z_t).
  \]
 Before we show our key technical lemma, we derive an identity for $\psi$ from Proposition~\ref{prop:KL_logit1} in the special case of zero-sum games and $m = 1$.
  \begin{proposition} [$\psi$ identity] \label{prop:KLidentity_logit}
Consider the game $\Gamma(A,-A^\top)$ and denote by $(Z_t)_{t \in \NN}$ a
sequence with $Z_0\in\relint\prodSimplex$ and
$Z_{t+1}:=\Phi_1(Z_t)$. Then for every $t \geq 1$
\begin{enumerate}
\item $
    \psi^x(Z_{1:t-1})-\psi^x(Z_{1:t})
    =
    -x_{t+1}^{\top}A(2y_{t}-y_{t-1})
    +
    \frac{1}{\eta_x}\KL(x_{t+1},x_{t}) $;
 \item $
    \psi^y(Z_{1:t-1})-\psi^y(Z_{1:t})
    =
    (2x_{t}-x_{t-1})^\top A y_{t+1}
    +
    \frac{1}{\eta_y}\KL(y_{t+1},y_{t}) $ ;

\end{enumerate}
and
\[
\begin{aligned}
\psi(Z_{1:t-1})-\psi(Z_{1:t})
={}&
-x_{t+1}^{\top}A(2y_{t}-y_{t-1})
+(2x_{t}-x_{t-1})^{\top}Ay_{t+1}  \\
&\quad
+\frac{1}{\eta_x}\KL(x_{t+1},x_{t})
+\frac{1}{\eta_y}\KL(y_{t+1},y_{t}) .
\end{aligned}
\]
\end{proposition}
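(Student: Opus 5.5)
The plan is to start from the Logit identity of Proposition~\ref{prop:KL_logit1}, specialized to $m=1$ and $B=-A^\top$, and pair it with the point $x_{t+1}$ (respectively $y_{t+1}$). For the $x$-player, item 1 of Proposition~\ref{prop:KL_logit1} gives, coordinatewise,
\[
  \Log(x_{t+1}\odiv x_t) = \etax A(2y_t-y_{t-1}) + \etax\big(\psi^x(Z_{1:t-1})-\psi^x(Z_{1:t})\big)\ones .
\]
Since $Z_0\in\relint\prodSimplex$, Proposition~\ref{prop:invariance} guarantees all iterates stay in the relative interior, so the logarithms are well defined. Taking the inner product with $x_{t+1}\in\Delta_{d_x}$ and using $\langle x_{t+1},\ones\rangle=1$ yields
\[
  \KL(x_{t+1},x_t)=\langle x_{t+1},\Log(x_{t+1}\odiv x_t)\rangle = \etax\, x_{t+1}^\top A(2y_t-y_{t-1}) + \etax\big(\psi^x(Z_{1:t-1})-\psi^x(Z_{1:t})\big).
\]
Dividing by $\etax>0$ and rearranging gives item 1.

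For the $y$-player I will do the same with item 2 and $B=-A^\top$. The logit identity reads $\Log(y_{t+1}\odiv y_t)=-\etay A^\top(2x_t-x_{t-1})+\etay(\psi^y(Z_{1:t-1})-\psi^y(Z_{1:t}))\ones$. Pairing with $y_{t+1}$ gives $\KL(y_{t+1},y_t) = -\etay(2x_t-x_{t-1})^\top A y_{t+1} + \etay(\psi^y(Z_{1:t-1})-\psi^y(Z_{1:t}))$. Here I use $\langle y_{t+1},A^\top v\rangle=v^\top A y_{t+1}$. Rearranging gives item 2 with the stated sign.

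The combined identity then follows by summing items 1 and 2, since $\psi=\psi^x+\psi^y$. There is no real obstacle. The only points needing care are the signs coming from $B=-A^\top$, and the boundary case $t=1$ with the convention $\psi(Z_{1:0})=0$. That case is already covered by Proposition~\ref{prop:KL_logit1}, which holds for every $t\ge1$.
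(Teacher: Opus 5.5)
Your proposal is correct and matches the paper's proof essentially verbatim. As in the paper, you pair the logit identity of Proposition~\ref{prop:KL_logit1} (with $m=1$, $B=-A^\top$) with $x_{t+1}$ and $y_{t+1}$, use $\langle x_{t+1},\ones\rangle=\langle y_{t+1},\ones\rangle=1$ to isolate the $\psi$-differences, and then sum; your sign handling for the $y$-player via $y_{t+1}^\top A^\top v = v^\top A y_{t+1}$ is also right.
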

\begin{proof}
Again, we only show the first identity; the second identity follows the analogous
argument.
 Using the definition of the KL-divergence, Proposition~\ref{prop:KL_logit1} and that
$x_{t+1}\in\Delta_{d_x}$, we obtain
\[
\begin{aligned}
\frac{1}{\eta_x}\KL(x_{t+1},x_{t})
&=
\frac{1}{\eta_x}
\left\langle
    x_{t+1},
    \Log({x_{t+1}}\odiv{x_{t}})
\right\rangle  \\
&=
x_{t+1}^{\top}A(2y_{t}-y_{t-1})
+
\psi^x(Z_{1:t-1})-\psi^x(Z_{1:t}) .
\end{aligned}
\]
Rearranging yields
\[
    \psi^x(Z_{1:t-1})-\psi^x(Z_{1:t})
    =
    -x_{t+1}^{\top}A(2y_{t}-y_{t-1})
    +
    \frac{1}{\eta_x}\KL(x_{t+1},x_{t}) .
\]
Analogously, \[
    \psi^y(Z_{1:t-1})-\psi^y(Z_{1:t})
    =
    (2x_{t}-x_{t-1})^\top A y_{t+1}
    +
    \frac{1}{\eta_y}\KL(y_{t+1},y_{t}) .
\]
Adding the two identities gives the third claim.
\end{proof}

 \begin{lemma}[Key Technical Lemma]\label{lem:keyTech}
 Consider a zero-sum game $\Gamma(A,-A^\top)$. Assume that $\Phi_1$ is defined with 
 $\etax,\etay>0$ and $6\etax \etay  \norm{A}^2_{\oneinf} \leq 1$.
  Let $(Z_t)_{t\in \NN}$
be the sequence generated by $Z_{t+1}=\Phi_1(Z_t)$ with
$Z_0\in \relint\prodSimplex$. Using the definitions above, the sequence
\begin{enumerate}
\item   $\left(\psi(Z_{1:t})\right)_{t\in \NN}$ is bounded from below by a constant;
\item $\left(V(Z_{1:t})\right)_{t\in \NN}$ is non-increasing; and
\item $\norm{Z_t - Z_{t-1}}  \xrightarrow[t \to \infty]{}  0$. 
\end{enumerate}
 
 \end{lemma}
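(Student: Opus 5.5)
The plan is to prove the three items in the order (1), (2), (3): the lower bound on $\psi$ combined with the monotone decrease of $V$ forces the one-step increments to be summable.

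\emph{Step 1 (lower bound on $\psi$).} For $m=1$ set $Y_t:=\bar y_{1:t}+y_t-y_0$ and $X_t:=\bar x_{1:t}+x_t-x_0$. Since $x_1,y_1\in\relint$ of their simplices, the log-sum-exp with weights $x_1$ satisfies
\[
\psi^x(Z_{1:t})\ \ge\ \max_i (AY_t)_i+\tfrac{1}{\etax}\log\min_i [x_1]_i \, .
\]
Hence $\psi^x(Z_{1:t})\ge (x^\star)^\top A Y_t - c_x$ for any $x^\star\in\Delta_{d_x}$. Likewise $\psi^y(Z_{1:t})\ge -X_t^\top A y^\star - c_y$. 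I will take $[x^\star,y^\star]$ to be a Nash equilibrium of the zero-sum game with value $v$; it exists by the minimax theorem. Then $(x^\star)^\top A\bar y_{1:t}\ge tv$ and $\bar x_{1:t}^\top A y^\star\le tv$, so the terms linear in $t$ cancel. The leftover pieces $(x^\star)^\top A(y_t-y_0)$ and $(x_t-x_0)^\top Ay^\star$ are bounded by $2\norm{A}_{\oneinf}$. This gives a constant lower bound on $\psi$. Since $\varphi\ge0$ and $\absv{\gap}\le 2\norm{A}_{\oneinf}$, it follows that $V$ is bounded below as well.

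\emph{Step 2 (monotonicity).} I will compute $V(Z_{1:t+1})-V(Z_{1:t})$ using Proposition~\ref{prop:KLidentity_logit} for $\psi(Z_{1:t})-\psi(Z_{1:t-1})$ and the definitions of $\varphi$ and $\gap$. The KL terms combine to
\[
-\tfrac{1}{2\etax}\KL(x_{t+1},x_t)-\tfrac{1}{2\etay}\KL(y_{t+1},y_t)-\tfrac{1}{2\etax}\KL(x_t,x_{t-1})-\tfrac{1}{2\etay}\KL(y_t,y_{t-1}) .
\]
The bilinear terms from $\psi$ and from $\gap(Z_t)-\gap(Z_{t+1})$ should collapse, after expansion, into
\[
(x_{t+1}-x_t)^\top A(y_t-y_{t-1})-(x_t-x_{t-1})^\top A(y_{t+1}-y_t).
\]
A direct check shows that the remaining "diagonal" terms cancel exactly; this identity is the reason for including the gap term in $V$. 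Writing $u_s=\norm{x_s-x_{s-1}}_1$ and $v_s=\norm{y_s-y_{s-1}}_1$, the cross terms are at most $\norm{A}_{\oneinf}(u_{t+1}v_t+u_tv_{t+1})$. By Pinsker (Proposition~\ref{prop:Pinsker}) the KL terms are at most $-\frac14\big(u_{t+1}^2/\etax+u_t^2/\etax+v_{t+1}^2/\etay+v_t^2/\etay\big)$.

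Young's inequality gives $\norm{A}_{\oneinf}uv\le \alpha(u^2/\etax+v^2/\etay)$ whenever $\norm{A}_{\oneinf}^2\etax\etay\le4\alpha^2$. Under $\etax\etay\norm{A}_{\oneinf}^2\le1/6$ we may take $\alpha=1/(2\sqrt6)<1/4$. Each of $u_{t+1},u_t,v_{t+1},v_t$ appears in exactly one cross term, so
\[
V(Z_{1:t+1})-V(Z_{1:t})\le -c\,(u_{t+1}^2+u_t^2+v_{t+1}^2+v_t^2)
\]
with $c=(\tfrac14-\alpha)\min(\etax^{-1},\etay^{-1})>0$. This proves (2).

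\emph{Step 3.} Summing the inequality from Step 2 and using the lower bound on $V$ from Step 1 shows $\sum_t (u_t^2+v_t^2)<\infty$. Hence $u_t,v_t\to0$, and $\norm{Z_t-Z_{t-1}}\to0$ since $Z_t-Z_{t-1}$ consists of consecutive increments. The main obstacle I expect is the algebra in Step 2: the bilinear terms must reduce exactly to the two "difference times difference" terms, with no leftover term that is linear in a single increment. If this bookkeeping fails, I would adjust the sign or weighting of $\gap$ in $V$ until the cancellation holds.
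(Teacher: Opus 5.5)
Your proposal is correct. Parts 2 and 3 follow the paper's argument: the same Lyapunov function, the same exact identity for $V(Z_{1:t+1})-V(Z_{1:t})$, then Pinsker, a Young/Fenchel bound on the two cross terms, and summation against the lower bound on $V$. The identity does hold exactly as you wrote it, so no reweighting of $\gap$ is needed.

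Two steps differ from the paper.

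\textbf{Part 1.} The paper telescopes $\frac{1}{\etax}\KL(x^\star,x_t)+\frac{1}{\etay}\KL(y^\star,y_t)$ using the three-point identity together with the logit and $\psi$ identities, and then discards the nonnegative saddle-point terms. You instead bound the log-sum-exp below by its largest term and apply the saddle-point inequalities directly to $\bar y_{1:t}$ and $\bar x_{1:t}$.
\begin{itemize}
\item Your route is shorter and needs none of the KL machinery.
\item The paper's telescoped identity is an exact form of the variational inequality you use in crude form. It gives the slightly sharper constant $\frac{1}{\etax}\KL(x^\star,x_1)$ in place of $\frac{1}{\etax}\log(1/\min_i[x_1]_i)$, which is immaterial here.
\end{itemize}

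\textbf{Part 2.} The paper uses an asymmetric Fenchel split with weights $\frac{3\etay}{2}$ and $\frac{1}{6\etay}$.
\begin{itemize}
\item This split makes the coefficients of the new increments $\norm{x_{t+1}-x_t}_1^2$ and $\norm{y_{t+1}-y_t}_1^2$ merely nonnegative. That requirement is exactly where the threshold $1/6$ comes from.
\item The old increments keep the coefficients $\frac{1}{12\etax}$ and $\frac{1}{12\etay}$, which drive summability.
\end{itemize}
Your balanced split gives a strictly negative coefficient on all four squared increments. Taking $\alpha=\frac12\sqrt{\etax\etay}\,\norm{A}_{\oneinf}$, the same computation goes through for any $\etax\etay\norm{A}_{\oneinf}^2<\frac14$. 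So your argument proves the lemma under a strictly weaker step-size condition than the paper's. The step-size hypothesis of Theorems~\ref{thm:convFP} and~\ref{thm:NEconvergenceZSG} enters only through this lemma, so those results would extend to that range as well.
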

 \begin{proof}
 Let $Z^\star = [x^\star, y^\star, x^\star, y^\star] \in \NE$.  
  Applying the three-point identity (Proposition~\ref{threePointKL}), Proposition~\ref{prop:KL_logit1}  and Proposition~\ref{prop:KLidentity_logit} yields
\begin{align*}
    \frac{1}{\etax}\KL(x^\star, x_{t})-  \frac{1}{\etax}\KL(x^\star, x_{t+1})
    &=
    \frac{1}{\eta_x}\KL(x_{t+1},x_{t})
    +
    \frac{1}{\eta_x}
    \dprod{
        x^\star-x_{t+1},
        \Log({x_{t+1}}\odiv{x_{t}})
     }\\
     &\overset{(1)}{=}
    \frac{1}{\eta_x}\KL(x_{t+1},x_{t})
    +
    \dprod{
        x^\star-x_{t+1},
        A(2y_{t}-y_{t-1})
    }  \\
    &=
    \psi^x(Z_{1:t-1})-\psi^x(Z_{1:t})
    +
    \left\langle
        x^\star,
        A(2y_{t}-y_{t-1})
    \right\rangle .
\end{align*}
 For $(1)$ we used that $\dprod{x^\star - x_{t+1}, c \ones} = 0$ for any $c$ (here $c = \etax(\psi^x(Z_{1:t-1}) - \psi^x(Z_{1:t}))$). 
Analogously,
\[
    \frac{1}{\etay}\KL(y^\star, y_{t})-  \frac{1}{\etay}\KL(y^\star, y_{t+1})
    =
    \psi^y(Z_{1:t-1})-\psi^y(Z_{1:t})
    -
    \left\langle
        y^\star,
        A^\top(2x_{t}-x_{t-1})
    \right\rangle .
\]
Therefore, $ \frac{1}{\etax}\KL(x^\star, x_{t})-  \frac{1}{\etax}\KL(x^\star, x_{t+1}) +  \frac{1}{\etay}\KL(y^\star, y_{t})-  \frac{1}{\etay}\KL(y^\star, y_{t+1}) $ is equal to
\begin{align*}
    &\psi(Z_{1:t-1})-\psi(Z_{1:t}) 
    +
     \dprod{
        x^\star,
        A(2y_{t}-y_{t-1})
    }
    -
    \dprod{
        y^\star,
        A^\top(2x_{t}-x_{t-1})
    }\\
    &\quad=
    \psi(Z_{1:t-1})-\psi(Z_{1:t})  
    +
    \left\langle
        x^\star,
        A(y_{t}-y_{t-1})
    \right\rangle
    -
    \left\langle
        y^\star,
        A^\top(x_{t}-x_{t-1})
    \right\rangle  \\
    &\phantom{xxxxxxxxxxxxxxxx}+
    \left\langle
        x^\star,
        Ay_{t}
    \right\rangle
    -
    \left\langle
        y^\star,
        A^\top x_{t}
    \right\rangle .
\end{align*}
Summing over $t=1,\ldots,T$ yields
\[
\begin{aligned}
     &\frac{1}{\etax}\left(\KL(x^\star, x_{1})-  \KL(x^\star, x_{T+1}) \right)+  \frac{1}{\etay}\left(\KL(y^\star, y_{1})-  \KL(y^\star, y_{T+1}) \right)\\
    &\qquad \qquad=
    \psi(Z_{1:0})-\psi(Z_{1:T})  
    +
    \dprod{
        x^\star,
        A(y_{T}-y_{0})
    }    -
    \dprod{
        y^\star,
        A^\top( x_{T}-x_{0})
    }  \\
    &\qquad \qquad\qquad \qquad\quad
    +
    \sum_{t=1}^T
    \left(
         \dprod{ x^\star,Ay_{t}}
        -
        \dprod{ y^\star,A^\top x_{t}}
    \right)\\
    &\qquad \qquad\geq
    \psi(Z_{1:0})-\psi(Z_{1:T})  
    - \left(
    \dprod{
        x^\star,
        A y_{0}
    }    -
    \dprod{
        y^\star,
        A^\top x_{0}
    } \right) \, ,
\end{aligned}
\]
where the last inequality is due to $Z^\star \in \NE$. 
Rearranging
\[
    \psi(Z_{1:T})
    \geq
       \psi(Z_{1:0})
    -
  \underbrace{ \left( \frac{1}{\etax} \KL(x^\star, x_{1}) + \frac{1}{\etay}\KL(y^\star, y_{1}) \right)}_{=: k_1}
    -
    \left(
    \dprod{
        x^\star,
        A y_{0}
    }    -
    \dprod{
        y^\star,
        A^\top x_{0}
    } \right)\, .
\]
Since, by assumption, $Z_0 \in \relint \prodSimplex$ the term $k_1$ is bounded by a constant. Thus $(\psi(Z_{1:t}))_{t\in \NN}$ is bounded from below by a constant  independent of $T$.

 Next, we prove the descent inequality, that is, we show that $(V(Z_{1:t}))_{t\in \NN}$ is non-increasing.  By definition, $ V(Z_{1:t})-V(Z_{1:t+1})$ is equal to
\begin{align*}
         \psi(Z_{1:t-1})-&\psi(Z_{1:t})   -
    \gap(Z_t)
    +
    \gap(Z_{t+1})\\
    +
    \frac{1}{2\eta_x}\KL(x_{t},x_{t-1})
    +
   &\frac{1}{2\eta_y}\KL(y_{t},y_{t-1})  
    -
    \frac{1}{2\eta_x}\KL(x_{t+1},x_{t})
    -
    \frac{1}{2\eta_y}\KL(y_{t+1},y_{t})\,.
\end{align*}
 Using the entropy identity from  Proposition~\ref{prop:KLidentity_logit} and the definition of the gap function, yields
\begin{align*}
    V(Z_{1:t})-&V(Z_{1:t+1})\\
    &=
    -(x_{t+1})^\top A(2y_{t}-y_{t-1})
    +
    (2x_{t}-x_{t-1})^\top Ay_{t+1}  -
    x_{t}^\top Ay_{t-1}\\
    &\qquad+
    x_{t-1}^\top Ay_{t}
    +
    (x_{t+1})^\top Ay_{t}
    -
    x_{t}^\top Ay_{t+1}  \\
    &\qquad+
    \frac{1}{2\eta_x}\KL(x_{t},x_{t-1})
    +
    \frac{1}{2\eta_y}\KL(y_{t},y_{t-1})  
    +
    \frac{1}{2\eta_x}\KL(x_{t+1},x_{t})
    +
    \frac{1}{2\eta_y}\KL(y_{t+1},y_{t})\, ,
    \intertext{After regrouping the bilinear terms, we obtain}
    &=
    -(x_{t+1}-x_{t})^\top A(y_{t}-y_{t-1})
    +
    (x_{t}-x_{t-1})^\top A(y_{t+1}-y_{t})  \\
    &+
    \frac{1}{2\eta_x}\KL(x_{t},x_{t-1})
    +
    \frac{1}{2\eta_y}\KL(y_{t},y_{t-1})  
    +
    \frac{1}{2\eta_x}\KL(x_{t+1},x_{t})
    +
    \frac{1}{2\eta_y}\KL(y_{t+1},y_{t})\, .
\end{align*} 
By Fenchel's inequality and the definition of the operator norm,
\[
\begin{aligned}
    -(x_{t+1}-x_{t})^\top A(y_{t}-y_{t-1})
    &\geq
    -
    \frac{3\eta_y}{2}
    \|A^\top(x_{t+1}-x_{t})\|_\infty^2
    -
    \frac{1}{6\eta_y}
    \|y_{t}-y_{t-1}\|_1^2  \\
    &\geq
    -
    \frac{3\eta_y\|A^\top\|_{{\oneinf}}^2}{2}
    \|x_{t+1}-x_{t}\|_1^2
    -
    \frac{1}{6\eta_y}
    \|y_{t}-y_{t-1}\|_1^2 \, ,
\end{aligned}
\]
and analogously,
\[
\begin{aligned}
    (x_{t}-x_{t-1})^\top A(y_{t+1}-y_{t})
    &\geq
    -
    \frac{3\eta_x\|A\|_{{\oneinf}}^2}{2}
    \|y_{t+1}-y_{t}\|_1^2
    -
    \frac{1}{6\eta_x}
    \|x_{t}-x_{t-1}\|_1^2 .
\end{aligned}
\]
Therefore, by Pinsker's inequality (cf. Proposition~\ref{prop:Pinsker}),
 \[
\begin{aligned}
    V(Z_{1:t})-V(Z_{1:t+1})
    \geq&
    \left(
        \frac{1}{4\eta_x}
        -
        \frac{3\eta_y\|A^\top\|_{{\oneinf}}^2}{2}
    \right)
    \|x_{t+1}-x_{t}\|_1^2  \\
    &+
    \left(
        \frac{1}{4\eta_y}
        -
        \frac{3\eta_x\|A\|_{{\oneinf}}^2}{2}
    \right)
    \|y_{t+1}-y_{t}\|_1^2  \\
    &+
    \frac{1}{12\eta_x}\|x_{t}-x_{t-1}\|_1^2
    +
    \frac{1}{12\eta_y}\|y_{t}-y_{t-1}\|_1^2 .
\end{aligned}
\]
 By the step size condition, (note that $\|A^\top\|_{{\oneinf}} = \|A\|_{{\oneinf}}$)
\[
    \frac{1}{4\eta_x}
    -
    \frac{3\eta_y\|A^\top\|_{{\oneinf}}^2}{2}
    \geq 0,
    \qquad
    \frac{1}{4\eta_y}
    -
    \frac{3\eta_x\|A\|_{{\oneinf}}^2}{2}
    \geq 0.
\]
Hence $V(Z_{1:t+1})\leq V(Z_{1:t})$ for all $t$.

To show part 3, we note that $V(Z_{1:1})$ is a constant. Further, since $\psi(Z_{1:t})$ is bounded from below and the KL-divergence is non-negative,
while $\gap(Z_t)$ is bounded on the compact set $\prodSimplex$,
the sequence $(V(Z_{1:t}))_{t\in \NN}$ is bounded from below. Summing over $t$ gives
\[
    \sum_{t=1}^\infty
    \left(
       \frac{1}{12\etax} \|x_{t}-x_{t-1}\|_1^2
        +
       \frac{1}{12\etay} \|y_{t}-y_{t-1}\|_1^2
    \right)
    <\infty \, .
\]
Consequently,
\[
    \|x_{t+1}-x_{t}\|_1\to 0,
    \qquad
    \|y_{t+1}-y_{t}\|_1\to 0.
\]
Because
\[
    Z_t=(x_{t},y_{t},x_{t-1},y_{t-1}),
\]
we conclude that
\[
    \|Z_t-Z_{t-1}\|\to 0 \, .
\]
This completes the proof.
\end{proof}

 \begin{proof}[Proof of Theorem~\ref{thm:convFP}]
  By Proposition~\ref{prop:invariance}, the orbit remains in $\prodSimplex$.
Since $\prodSimplex$ is compact, the sequence $(Z_t)_{t\in\NN}$ has at least one
accumulation point.

We first show that every accumulation point of $(Z_t)_{t\in\NN}$ is a fixed
point of $\Phi_1$. Let $\bar Z$ be an accumulation point. Then there exists a
subsequence $(Z_{t_k})_{k\in\NN}$ such that
$
  Z_{t_k}\to \bar Z .
$
By Lemma~\ref{lem:keyTech}, Part~3, we have
$
  \norm{Z_{t+1}-Z_t}\to 0 .
$
Hence
\[
  \norm{Z_{t_k+1}-Z_{t_k}}\to 0,
\]
and therefore
\[
  Z_{t_k+1}\to \bar Z .
\]
 By continuity of $\Phi_1$ on $\prodSimplex$,
\[
\bar Z =  \lim Z_{t_k+1}
  =
 \lim \Phi_1(Z_{t_k})
  =
 \Phi_1(\bar Z)\, .
\]
 Thus, every accumulation point of the orbit belongs to the fixed-point set
$\FPone$.

It remains to show that the distance to $\FPone$ converges to zero. Suppose, for
contradiction, that
$
  d(Z_t,\FPone)
$ does not converge to $0$.
Then there exist $\epsilon>0$ and a subsequence $(Z_{t_k})_{k\in\NN}$ such that
\[
  d(Z_{t_k},\FPone)\ge \epsilon
  \qquad
  \text{for all } k\in\NN .
\]
By compactness of $\prodSimplex$, after passing to a further subsequence if
necessary, we may assume that
\[
  Z_{t_k}\to \bar Z
\]
for some $\bar Z\in\prodSimplex$. By the first part of the proof, $\bar Z$ is a
fixed point, i.e. $\bar Z\in\FPone$. Since the distance function
$Z\mapsto d(Z,\FPone)$ is continuous, we obtain
\[
  d(Z_{t_k},\FPone)\to d(\bar Z,\FPone)=0,
\]
contradicting $d(Z_{t_k},\FPone)\ge \epsilon$ for all $k$.
Therefore
\[
  d(Z_t,\FPone)\to 0.
\]
Since this holds for any $Z_0 \in \relint \prodSimplex$, $\FPone$ is
globally attracting with respect to $\relint\prodSimplex$.

 \end{proof}
 
 \subsection{Proof of Theorem~\ref{thm:NEconvergenceZSG}}\label{apx:proofglobal}

\subsubsection{Convergence to a single fixed point}

The key ingredient is that the \emph{whole} sequence converges to a single
fixed point. We obtain this from the structure of the $\omega$-limit set
\[
  \Omega_{\omega} \;:=\; \bigcap_{N\ge 1}\overline{\{Z_t : t\ge N\}}\,,
\]
combined with two facts that the consecutive increments vanish
(Lemma~\ref{lem:keyTech}, Part~3) and $\prodSimplex$ is compact. 
We add the following well-known results for completeness.

\begin{proposition}[Connectedness of the limit set]\label{lem:omega_connected}
  Let $(U_t)_{t\in\NN}$ be a sequence in a compact metric space with
  $\|U_{t+1}-U_t\|\to 0$, and let
  $\Omega_{\omega}=\bigcap_{N\ge1}\overline{\{U_t:t\ge N\}}$ be its $\omega$-limit set.
  Then $\Omega_{\omega}$ is nonempty, compact, and connected.
\end{proposition}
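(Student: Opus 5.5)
The plan is to establish the three properties one at a time, using only compactness and the vanishing-increment hypothesis. Throughout, write $\mathcal{K}$ for the compact metric space and $\mathrm{d}$ for its metric, so that $\|U_{t+1}-U_t\|$ means $\mathrm{d}(U_{t+1},U_t)$.

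\emph{Nonempty and compact.} Each set $F_N:=\overline{\{U_t:t\ge N\}}$ is a closed subset of the compact space $\mathcal{K}$, hence compact. It is nonempty, and the sets are nested: $F_{N+1}\subseteq F_N$. By the finite intersection property (Cantor's intersection theorem), $\Omega_\omega=\bigcap_N F_N$ is nonempty. As an intersection of closed sets it is closed, and so it is compact. I will also record the standard characterization: $\Omega_\omega$ is exactly the set of limits of convergent subsequences of $(U_t)$, and $\mathrm{d}(U_t,\Omega_\omega)\to 0$. The second fact holds because otherwise some subsequence would stay at distance at least $\epsilon$ from $\Omega_\omega$, and by compactness it would have a further subsequence converging to a point of $\Omega_\omega$, a contradiction.

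\emph{Connected.} I argue by contradiction. Suppose $\Omega_\omega=C_1\cup C_2$, where $C_1,C_2$ are disjoint, nonempty and closed. Being closed subsets of a compact set, they are compact, so $r:=\mathrm{d}(C_1,C_2)>0$. Define the open neighborhoods $V_i:=\{u:\mathrm{d}(u,C_i)<r/3\}$; these are disjoint and at distance at least $r/3$ from each other. Choose $T$ such that for all $t\ge T$ we have both $\mathrm{d}(U_t,\Omega_\omega)<r/3$ and $\mathrm{d}(U_{t+1},U_t)<r/3$. Then for every $t\ge T$ the point $U_t$ lies in $V_1\cup V_2$. Since a single step is shorter than the gap between $V_1$ and $V_2$, the sequence cannot jump from one to the other, so it stays in a single $V_i$ for all $t\ge T$, say $V_1$. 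Because $C_2$ is nonempty, some subsequence converges to a point of $C_2$, and its tail must therefore enter $V_2$. This is a contradiction, so $\Omega_\omega$ is connected.

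\emph{Main obstacle.} The only delicate step is the "no jump" argument, which needs the uniform gap $r>0$ between $C_1$ and $C_2$; this is exactly where compactness of the two pieces is used. The remaining steps are routine.
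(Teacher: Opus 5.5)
Your proof is correct and follows essentially the same strategy as the paper. Nonemptiness and compactness come from the nested compact sets $\overline{\{U_t:t\ge N\}}$, and connectedness is proved by contradiction: split $\Omega_{\omega}$ into two compact pieces at positive distance and use the vanishing increments. The only difference is packaging: you first show $d(U_t,\Omega_{\omega})\to 0$ and then trap the tail in one of two disjoint $r/3$-neighbourhoods by a no-jump argument, whereas the paper applies an intermediate-value argument to $g(t)=d(U_t,K_1)$ to extract directly a limit point at distance $\delta$ from $K_1$.
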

\begin{proof}
  $\Omega_{\omega}$ is a decreasing intersection of nonempty compact sets in a compact
  space, hence nonempty and compact.

  Suppose, for contradiction, that $\Omega_{\omega}$ is disconnected, say
  $\Omega_{\omega}=K_1\sqcup K_2$ (disjoint union) with $K_1,K_2$ nonempty compact and
  $2\delta:=d(K_1,K_2)>0$. Define the $1$-Lipschitz function
  $g(t):=d(U_t,K_1)$, so that
  \[
    |g(t+1)-g(t)|\le \|U_{t+1}-U_t\|\xrightarrow[t\to\infty]{}0 .
  \]
  Because $K_1\subseteq\Omega_{\omega}$, we have $g(t)<\delta/2$ for infinitely many $t$.
  Since $K_2\subseteq\Omega_{\omega}$ and $d(K_2,K_1)\ge 2\delta$, we have $g(t)>3\delta/2$
  for infinitely many $t$. A real sequence with vanishing increments that lies
  below $\delta/2$ and above $3\delta/2$ infinitely often must take values in the
  band $[\delta/2,3\delta/2]$ at arbitrarily large times; in particular, there are
  times $s_k\to\infty$ with $|g(s_k)-\delta|\to 0$. By compactness, pass to a
  convergent subsequence $U_{s_k}\to U^\star$; then $U^\star\in\Omega_{\omega}$ and
  $d(U^\star,K_1)=\lim_k g(s_k)=\delta$. But $U^\star\in\Omega_{\omega}=K_1\sqcup K_2$ forces
  $d(U^\star,K_1)=0$ (if $U^\star\in K_1$) or $d(U^\star,K_1)\ge 2\delta$ (if
  $U^\star\in K_2$), contradicting $d(U^\star,K_1)=\delta\in(0,2\delta)$. Hence
  $\Omega_{\omega}$ is connected.
\end{proof}

\begin{proposition}[Connected subsets of countable sets are singletons]
  \label{lem:connected_countable}
  A connected subset $C$ of a countable metric space has at most one point.
\end{proposition}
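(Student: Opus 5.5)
The plan is a proof by contradiction that uses only the definition of connectedness and the countability of $C$. Let $C$ be a connected subset of a countable metric space $(X,d)$, and suppose $C$ contains two distinct points $a\neq b$. Set $r_0 := d(a,b) > 0$.

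The key observation is that the set of distances $D := \{ d(a,c) : c \in C\} \subseteq [0,\infty)$ is countable, since $C$ is countable. The interval $(0, r_0)$ is uncountable, so we may pick some $r \in (0,r_0)$ with $r \notin D$. In other words, no point of $C$ lies at distance exactly $r$ from $a$.

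Now separate $C$ using this radius. Define
\[
U := \{ c\in C : d(a,c) < r\}, \qquad V := \{ c \in C : d(a,c) > r\}.
\]
Both sets are relatively open in $C$, because $c \mapsto d(a,c)$ is continuous. They are disjoint. Their union is all of $C$, because no point of $C$ has $d(a,c)=r$. Finally, $a \in U$ since $d(a,a)=0<r$, and $b\in V$ since $d(a,b)=r_0>r$, so both sets are nonempty. This is a separation of $C$, which contradicts connectedness. Hence $C$ has at most one point.

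There is no real obstacle in this argument; the only step requiring care is the choice of $r$ avoiding $D$, which relies on the interval $(0,r_0)$ being uncountable. In the application, $\Omega_{\omega}$ is connected and nonempty by Proposition~\ref{lem:omega_connected}. It is also contained in the finite set $\FPone$, since every accumulation point is a fixed point by the proof of Theorem~\ref{thm:convFP}. By the present proposition, $\Omega_{\omega}$ is therefore a singleton, and so the whole sequence converges.
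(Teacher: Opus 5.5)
Your proof is correct, and it is the standard argument behind the fact the paper invokes: the paper gives no proof of its own and simply cites Munkres for ``countable metric spaces are totally disconnected''. Picking $r\in(0,d(a,b))$ outside the countable set $\{d(a,c): c\in C\}$ and separating $C$ into $\{d(a,\cdot)<r\}$ and $\{d(a,\cdot)>r\}$ makes that fact explicit, and it needs only the countability of $C$ itself, which is exactly the situation when it is applied to $\Omega_{\omega}$ inside the countable fixed-point set.
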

\begin{proof}
  This is the standard fact that every countable metric space is totally
  disconnected; see, e.g., \citet[Sections~23--24]{Munkres2000Topology}.
\end{proof}

 Lemma~\ref{lem:keyTech} and Propositions~\ref{lem:omega_connected} and \ref{lem:connected_countable} imply the following corollary. 
\begin{corollary}\label{cor:single_fp}
  Assume $\FPone$ is countable. Then there exists $Z_\infty\in\FPone$ such that
  $Z_t\to Z_\infty$. Writing $Z_\infty=(x_\infty,y_\infty,x_\infty,y_\infty)$, the
  payoff $Ay_\infty$ is constant on $S_x:=\supp(x_\infty)$ and $A^\top x_\infty$
  is constant on $S_y:=\supp(y_\infty)$.
\end{corollary}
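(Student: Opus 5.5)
The plan is to combine the vanishing-increment property of Lemma~\ref{lem:keyTech}, Part~3, with the topological facts of Propositions~\ref{lem:omega_connected} and~\ref{lem:connected_countable}, working inside the standing setting of Theorem~\ref{thm:NEconvergenceZSG}: zero-sum game, $Z_0\in\relint\prodSimplex$, and the step size condition $6\etax\etay\norm{A}_{\oneinf}^2\le 1$.

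\textbf{Step 1: the $\omega$-limit set.} By Proposition~\ref{prop:invariance} the orbit $(Z_t)$ stays in $\prodSimplex$, which is a compact subset of a Euclidean space and hence a compact metric space. Lemma~\ref{lem:keyTech}, Part~3, gives $\norm{Z_{t+1}-Z_t}\to 0$. Proposition~\ref{lem:omega_connected} then says that $\Omega_\omega$ is nonempty, compact and connected.

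\textbf{Step 2: $\Omega_\omega$ sits inside the fixed-point set.} Every point of $\Omega_\omega$ is an accumulation point of the orbit. The first part of the proof of Theorem~\ref{thm:convFP} shows that every such accumulation point lies in $\FPone$, so $\Omega_\omega\subseteq\FPone$.

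\textbf{Step 3: $\Omega_\omega$ is a single point.} Since $\FPone$ is countable, $\Omega_\omega$ is a connected subset of a countable metric space. By Proposition~\ref{lem:connected_countable} it has at most one point, and by Step~1 it is nonempty. Hence $\Omega_\omega=\{Z_\infty\}$.

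\textbf{Step 4: convergence of the whole sequence.} We argue by contradiction. Suppose $Z_t\not\to Z_\infty$. Then some subsequence stays at distance at least $\epsilon>0$ from $Z_\infty$. By compactness it has a further convergent subsequence, whose limit lies in $\Omega_\omega$ but differs from $Z_\infty$. This contradicts $\Omega_\omega=\{Z_\infty\}$, so $Z_t\to Z_\infty$.

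\textbf{Step 5: structure of the limit.} Theorem~\ref{thm:FPPhi} with $m=1$ and $B=-A^\top$ gives $Z_\infty=(x_\infty,y_\infty,x_\infty,y_\infty)$. It also gives that $Ay_\infty$ is constant on $\supp(x_\infty)$ and that $-A^\top x_\infty$, and hence $A^\top x_\infty$, is constant on $\supp(y_\infty)$.

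I expect no real obstacle. The only care needed is to state clearly that the accumulation-point argument from the proof of Theorem~\ref{thm:convFP} uses continuity of $\Phi_1$ on $\prodSimplex$, and that $\prodSimplex$ is a compact metric space.
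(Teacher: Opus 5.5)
Your proposal is correct and follows the paper's proof essentially step by step. Vanishing increments and compactness give a nonempty, compact, connected $\omega$-limit set; continuity of $\Phi_1$ on $\prodSimplex$ places it inside $\FPone$; countability collapses it to a single point; a subsequence argument then gives convergence of the whole sequence, and the support conditions are read off from Theorem~\ref{thm:FPPhi}. The only differences are cosmetic: you cite the accumulation-point argument from the proof of Theorem~\ref{thm:convFP} instead of repeating it, and you make the $B=-A^\top$ sign explicit in the final step.
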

\begin{proof}
  By Lemma~\ref{lem:keyTech}, Part~3, $\|Z_{t+1}-Z_t\|\to 0$, and
  $\prodSimplex$ is compact, so by Proposition~\ref{lem:omega_connected} the
  $\omega$-limit set $\Omega_{\omega}$ is nonempty, compact, and connected.

  Note that every point of $\Omega_{\omega}$ is a fixed point of $\Phi_1$. If $Z_{t_k}\to Z'$ then,
  using continuity of $\Phi_1$ and $\|Z_{t_k+1}-Z_{t_k}\|\to 0$,
  \[
    Z'=\lim_k Z_{t_k+1}=\lim_k \Phi_1(Z_{t_k})=\Phi_1(Z') .
  \]
  Hence $\Omega_{\omega}\subseteq\FPone$. Since $\FPone$ is countable, $\Omega_{\omega}$ is a
  connected subset of a countable set, so by
  Proposition~\ref{lem:connected_countable}, $\Omega_{\omega}$ is a single point,
  $\Omega_{\omega}=\{Z_\infty\}$ with $Z_\infty\in\FPone$.

  Finally, we observe that a sequence in a compact space whose $\omega$-limit set is the single
  point $Z_\infty$ converges to $Z_\infty$. If not, there would exist
  $\varepsilon>0$ and a subsequence with $\|Z_{t_k}-Z_\infty\|\ge\varepsilon$,
  which by compactness has a further subsequence converging to some
  $Z''\in\Omega_{\omega}$ with $Z''\neq Z_\infty$, contradicting $\Omega_{\omega}=\{Z_\infty\}$.
  Thus $Z_t\to Z_\infty$. The stated form and support conditions are
  Theorem~\ref{thm:FPPhi}.
\end{proof}

\begin{remark}
  We only use the total disconnectedness of $\FPone$.
  Proposition~\ref{lem:connected_countable} is applied to the connected set $\Omega_{\omega}$,
  and the conclusion $|\Omega_{\omega}|\leq 1$ holds whenever $\FPone$ contains no connected
  subset of positive diameter. Countability and finiteness imply total disconnectedness. While the converse generally does not hold, it does in our setting since $\FPone$ is a semialgebraic set. For transparency, we therefore use the strongest assumption, that is, the finiteness of the fixed point set.   \end{remark}
  
  The following lemma is an essential technical tool to show convergence to a Nash equilibrium. We 
  do this by showing that there is no off-support action with strictly better payoff. Recall the characterization of
   fixed points from Theorem~\ref{thm:FPPhi} and the example from Remark~\ref{rem:strictInclusion}. The following lemma rules out exactly these cases.
   
  \subsubsection{The limit is a Nash equilibrium}

By Corollary~\ref{cor:single_fp} it remains to show that the limit $Z_\infty$
corresponds to a Nash equilibrium.

\begin{proof}[Proof of Theorem~\ref{thm:NEconvergenceZSG}]
  By Corollary~\ref{cor:single_fp}, $Z_t\to Z_\infty$ with
  $Z_\infty=(x_\infty,y_\infty,x_\infty,y_\infty)\in\FPone$, and $Ay_\infty$ is
  constant on $S_x=\supp(x_\infty)$ while $A^\top x_\infty$ is constant on
  $S_y=\supp(y_\infty)$. It remains to show $(x_\infty,y_\infty)$ is a Nash
  equilibrium.

  \paragraph{Reduction to a profitable deviation.}
  Let ${\vx}:=(Ay_\infty)_i$ for any $i\in S_x$; then $x_\infty^\top Ay_\infty={\vx}$.
  Since $\max_{x\in\Delta_{d_x}}x^\top Ay_\infty=\max_i (Ay_\infty)_i$, the
  strategy $x_\infty$ is a best response to $y_\infty$ iff $(Ay_\infty)_i\le {\vx}$
  for all $i$. Symmetrically, with ${\vy}:=(A^\top x_\infty)_l$ for $l\in S_y$ (note
  $B=-A^\top$, so the $y$-player maximizes $y^\top Bx=-y^\top A^\top x$,
  equivalently minimizes $y^\top A^\top x$), $y_\infty$ is a best response to
  $x_\infty$ iff $(A^\top x_\infty)_k\ge {\vy}$ for all $k$. Consequently, if
  $(x_\infty,y_\infty)$ is \emph{not} a Nash equilibrium, then at least one of the
  following holds:
  \begin{itemize}
    \item[\textnormal{(I)}] there exist $\hat\imath$ and $j\in S_x$ with
      $(Ay_\infty)_{\hat\imath}>(Ay_\infty)_j={\vx}$; since $(Ay_\infty)_i={\vx}$ for
      $i\in S_x$, necessarily $\hat\imath\notin S_x$;
    \item[\textnormal{(II)}] there exist $\hat\jmath$ and $l\in S_y$ with
      $(A^\top x_\infty)_{\hat\jmath}<(A^\top x_\infty)_l={\vy}$; necessarily
      $\hat\jmath\notin S_y$.
  \end{itemize}
 We only show case \textnormal{(I)} since case \textnormal{(II)} follows exactly the same argument.

     Define $R_t:=x_{t,\hat\imath}/x_{t,j}$. This ratio is well defined and strictly positive for
  every $t$ by Proposition~\ref{prop:invariance} and since $Z_0 \in \relint \prodSimplex$ by assumption. Since $\hat\imath\notin S_x$, $j\in
  S_x$, and $(Ay_\infty)_{\hat\imath}>(Ay_\infty)_j$,
  Lemma~\ref{lem:delta_epsilon_tau}, applied with ${\widetilde Z = Z_\infty}$, provides $\delta,\epsilon,\tau>0$ such that,
  for all $t\in\NN$,
  \begin{equation}\label{eq:F4growth}
    \|Z_t-Z_\infty\|<\delta
    \quad\Longrightarrow\quad
    R_{t+1}\ge e^{\epsilon}R_t,
  \end{equation}
  and, for all $Z=[x,y,x',y']$,
  \begin{equation}\label{eq:F4bound}
    \|Z-Z_\infty\|<\delta
    \quad\Longrightarrow\quad
    \frac{x_{\hat\imath}}{x_{j}}<\tau .
  \end{equation}
  By Corollary~\ref{cor:single_fp}, $Z_t\to Z_\infty$. Thus, there exists
  $T_\star\in\NN$ with
  \begin{equation}\label{eq:tail_in_ball}
    \|Z_t-Z_\infty\|<\delta\qquad\text{for all }t\ge T_\star .
  \end{equation}
  Convergence of the whole sequence ensures that \emph{every} iterate with
  $t\ge T_\star$ is in $B_\delta(Z_\infty)$. Hence
  \eqref{eq:F4growth} applies at every $t\ge T_\star$, and by induction
  \begin{equation}\label{eq:Rt_lower}
    R_t\ \ge\ e^{\epsilon(t-T_\star)}\,R_{T_\star}
    \qquad(t\ge T_\star).
  \end{equation}
  Since $R_{T_\star}>0$ and $\epsilon>0$, the right-hand side of
  \eqref{eq:Rt_lower} diverges, so $R_t\to+\infty$. On the other hand,
  \eqref{eq:tail_in_ball} and \eqref{eq:F4bound} give the uniform bound
  \begin{equation}\label{eq:Rt_upper}
    R_t=\frac{x_{t,\hat\imath}}{x_{t,j}}<\tau
    \qquad(t\ge T_\star).
  \end{equation}
  Inequalities \eqref{eq:Rt_lower} and \eqref{eq:Rt_upper} are incompatible. This excludes Case~\textnormal{(I)}.

  We conclude that neither deviation can occur, so $x_\infty$ is a best response to $y_\infty$ and
  $y_\infty$ is a best response to $x_\infty$. With the support-equalization of
  Theorem~\ref{thm:FPPhi}, $(x_\infty,y_\infty)$ is a Nash equilibrium, i.e.\
  $Z_\infty\in\NE$. By Corollary~\ref{cor:single_fp}, $Z_t\to Z_\infty\in\NE$.
\end{proof}

 \section{Omitted Proofs of Section \ref{sec:local}}\label{appendix:local}
 Our results in this section build on known techniques from discrete dynamical systems theory. See, e.g.,  \citet{Kuznetsov:2026aa}.
 
 \subsection{The Jacobian}\label{app:jac_spectrum}
We provide some standard computations for the convenience of the reader.
Let $\tZ = \Zfp \in \FP$ be a fixed point of ${\Phi_m}$.
 First recall that
  \[
  \V_1 = \Exp\left( \eta_x ( A \ty - \vx \ones ) \right), \quad
  \V_2 = \Exp\left( \eta_y ( B \tx - \vy \ones ) \right)\, .
\]
(Recall the definition: $\vx := (A\ty)_i$ for any $i\in S_x$ and $\vy := (B\tx)_j$ for any $j\in S_y$.)
\begin{lemma}
For a game $\Gamma(A,B)$, the Jacobian matrix $J_{\Phi_m}(\tZ) \in \RR^{(2(d_x+d_y)) \times (2(d_x+d_y))}$ with $\tZ \in \FP$ is:
\[
J_{\Phi_m}(\tZ) = 
\begin{bmatrix}
  \diag(\V_1) - \tx \V_1^\top & (1+m) \etax H(\tx) A & 0 & - m\etax H(\tx) A
  \\
  (1+m) \etay H(\ty)B & \diag(\V_2) - \ty \V_2^\top & - m \etay H(\ty) B & 0 \\
  I_{d_x} & 0 & 0 & 0 \\
  0 & I_{d_y} & 0 & 0
\end{bmatrix}\, .
\]
\end{lemma}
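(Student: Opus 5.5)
The plan is to compute the differential of each block of $\Phi_m$ on the open domain $\cD_{\Phi_m}$ and then evaluate it at $\tZ=\Zfp$. The last two blocks are the coordinate projections $[\px,\py,\ppx,\ppy]\mapsto \px$ and $\mapsto \py$, so they contribute the rows $[I_{d_x},0,0,0]$ and $[0,I_{d_y},0,0]$ directly. By symmetry it suffices to treat the first block $F(\px,\py,\ppy)=P_{d_x}(u)$ with $u=\px\odot\Exp(\etax A((1+m)\py-m\ppy))$. The second block follows by exchanging $(\px,A,\etax)$ with $(\py,B,\etay)$.

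\textbf{Step 1: derivative of the normalization.} For $v$ with $\ones^\top v\neq 0$, the quotient rule gives
\[
DP_d(v)=\frac{1}{\ones^\top v}\big(I-P_d(v)\ones^\top\big).
\]

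\textbf{Step 2: chain rule through $u$.} Write $g=\etax A((1+m)\py-m\ppy)$. Then
\[
\partial u/\partial\px=\diag(\Exp(g)),\qquad
\partial u/\partial\py=(1+m)\etax\diag(u)A,\qquad
\partial u/\partial\ppy=-m\etax\diag(u)A,
\]
and $\partial u/\partial \ppx=0$, which gives the zero block.

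\textbf{Step 3: evaluation at the fixed point.} At $\tZ$ we have $g=\etax A\ty$. By Theorem~\ref{thm:FPPhi}, $(A\ty)_i=\vx$ on $S_x$ and $\tx_i=0$ off $S_x$, so $u=e^{\etax\vx}\tx$. Hence $\ones^\top u=e^{\etax\vx}$ and $P_{d_x}(u)=\tx$. Dividing by the normalizer turns $\Exp(g)$ into $\V_1$ and turns $\diag(u)$ into $\diag(\tx)$. This yields
\[
\partial_{\px}F=(I-\tx\ones^\top)\diag(\V_1)=\diag(\V_1)-\tx\V_1^\top,
\]
\[
\partial_{\py}F=(1+m)\etax(I-\tx\ones^\top)\diag(\tx)A=(1+m)\etax H(\tx)A,
\]
where the last equality uses $\ones^\top\diag(\tx)=\tx^\top$. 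The $\ppy$ block follows in the same way with coefficient $-m\etax$.

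The only point needing care is Step 3. One must use the fixed-point characterization to cancel the factor $e^{\etax\vx}$ consistently. The key is that the normalizer equals $e^{\etax\vx}$ exactly because the mass of $\tx$ lies on $S_x$, where the payoff equals $\vx$. One must also check that $\tZ\in\cD_{\Phi_m}$, which holds since the normalizer is positive. Everything else is routine differentiation.
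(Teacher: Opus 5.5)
Your proposal is correct and follows essentially the same route as the paper. Both differentiate the normalization via $DP_d(v)=\frac{1}{\ones^\top v}(I-P_d(v)\ones^\top)$, apply the chain rule through the unnormalized update, and use the fixed-point identities $\ones^\top u=e^{\etax\vx}$ and $P_{d_x}(u)=\tx$ to reduce the blocks to $\diag(\V_1)-\tx\V_1^\top$ and $(1+m)\etax H(\tx)A$ (respectively $-m\etax H(\tx)A$); your explicit observation that $u=e^{\etax\vx}\tx$ makes the cancellation slightly more transparent than the paper's write-up.
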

\begin{proof} 
 Using the block structure of $\Phi_m = [(\Phi_m)_1,(\Phi_m)_2,(\Phi_m)_3,(\Phi_m)_4]$, where $(\Phi_m)_k:\RR^{2(d_x + d_y)} \rightarrow \RR^{d_x}$ for $k = 1,3$ and $(\Phi_m)_k:\RR^{2(d_x + d_y)} \rightarrow \RR^{d_y}$ for $k= 2,4$,
  we compute the Jacobian block-wise $J_{1,1} = \frac{\partial
{(\Phi_m})_1}{\partial \px}$, $J_{1,2} = \frac{\partial {(\Phi_m})_1}{\partial \py}$ and $J_{1,4} =
\frac{\partial {(\Phi_m})_1}{\partial \ppy}$. The blocks $J_{2,1},J_{2,2}$ and $J_{2,3}$ follow analogously.
Since for any $i, j$ and for $u \in \RR^d \setminus \{ {v} \in \RR^d : \ones^\top {v} = 0\}$  , we have
\[
\frac{\partial (P_d)_i}{\partial u_j}
= 
\left\{
\begin{split}
    & - \frac{u_i}{\dprod{\ones,u}^2} & \text{if } i \neq j \\  
    & \frac{1}{ \dprod{\ones,u}} -  \frac{u_i}{ \dprod{\ones,u}^2} & \text{if } i = j 
\end{split}
\right.
\,,
\]
the Jacobian for $P_d$ is
 \begin{align}\label{id:Jacob}
  J_{P_d}(u) = \frac{1}{ \dprod{\ones,u}} \big( I_d - P_d(u) \mathbf 1^\top  \big)
\end{align}
 Now let $ ({\Phi^{\mathrm{u}}_m})_1(\px, \py, \ppx, \ppy) 
 = \px \odot \Exp( \etax  A((1+m)\py - m\ppy))$ denote the first component of the unnormalized exponential weight updates.
Then,  
\[
  J_{1, 1} = \frac{\partial {(\Phi^{\mathrm{u}}_m})_1}{\partial \px} = J_{P_{d_x}}( ({\Phi^{\mathrm{u}}_m})_1(\px, \py, \ppx, \ppy))\diag\left(   \Exp( \etax  A((1+m)\py - m\ppy)) \right)\, .
\]
 Evaluating at the fixed point $\tZ$, where $\py = \ppy = \ty$, we have $ \Exp( \etax  A((1+m)\py - m\ppy)) 
 =  \Exp(\etax A \ty)$. 
 Thus, by \eqref{id:Jacob}
\begin{align*}
  J_{1, 1}
  & = 
 \frac{1}{\dprod{\tx , \Exp(\etax A \ty)}} 
  \big(
    \id_{d_x}
    - P_{d_x}\Big(\tx \odot \Exp(\etax A \ty)\Big) \mathbf{1}^\top
    \big)
  \diag(\Exp(\etax A \ty)) \\
  &\overset{(\star)}{ = }
 \frac{1}{\dprod{\tx, \Exp(\etax A \ty)}} 
  \big(
    \id_{d_x}
    - \tx \mathbf{1}^\top
    \big)
  \diag(\Exp(\etax A \ty)) \\
  & = 
 \frac{1}{\dprod{\tx ,\Exp(\etax A \ty)}} 
  \big(
    \diag(\Exp(\etax A \ty))
    - \tx \Exp(\etax A \ty)^\top 
    \big) \\
  & = 
    \diag(\Exp(\etax (A \ty - \vx \mathbf 1)))
    - \tx \Exp(\etax (A \ty - \vx \mathbf 1))^\top 
    \,.
\end{align*}
We used the fixed-point equation in the final line
\[
 \frac{1}{\dprod{\tx , \Exp(\etax A \ty)}} 
 = e^{-\etax\vx}
 \,,
\]
 and consequently also $P_{d_x}(\tx \odot \Exp(\etax A \ty)) = \tx$ in $(\star)$. 
 The computation of the block $J_{1,2}$ follows similar arguments. We have,
 \begin{align*}
    J_{1,2}
    & =  J_{P_{d_x}}(\tx \odot \Exp(\etax A \ty)) 
    \diag(\tx)  \diag(\Exp ( \etax  A \ty))
    (1+m)\etax A
     \\
  & = 
 \frac{1+m}{\dprod{\tx , \Exp(\etax A \ty)}} 
  \big(
    \id_{d_x}
    - P_{d_x}(\tx \odot \Exp(\etax A \ty)) \mathbf{1}^\top
    \big)
    \diag(\tx)
    \diag(\Exp ( \etax  A \ty ))
    \etax A
     \\
  & = 
 \frac{1+m}{\dprod{\tx , \Exp(\etax A \ty)}} 
  \big(
    \id_{d_x}
    - \tx \mathbf{1}^\top
    \big)
    \diag\big(\tx \odot \Exp(\etax  A \ty )\big)
    \etax A 
     \\
  & = 
  (1+m)\big(
    \id_{d_x}
    - \tx \mathbf{1}^\top
    \big)
    \diag\big(P_{d_x}(\tx \odot \Exp(\etax  A \ty ))\big)
    \etax A  \\
  & = 
  (1+m)
  \big(
    \id_{d_x}
    - \tx \mathbf{1}^\top
    \big)
    \diag\big(\tx)
    \etax A  \\
  & = 
  (1+m)
  \big(
    \diag\big(\tx)
    - \tx \tx^\top
    \big)
    \etax A 
     \,.
\end{align*}
The proof is completed by noting that $J_{1,4}$ follows by the same arguments.  
\end{proof}

 \subsection{Eigenvalues of the Jacobian and Proof of Theorem~\ref{thm:jac_spectrum}}
 
 \paragraph{Notation: Restrictions, Complexification and Eigenvalues}
 Recall that for a real linear map $T:E\to E$ we denote the spectrum, the spectrum without multiplicities, and the spectral radius of the complexified restriction by $\eig{T|_V},\eigTilde{T|_V}$ and $\SR{T|_V}$. To make the complexification transparent in this section, we introduce some notation. Thus, let $E$ be a real vector space and denote its complexification by $E_{\CC} := E \otimes_{\RR} \CC$.
Given a linear map $M : E \to E$ we denote by $M_{\CC} : E_{\CC} \to E_{\CC}$ the complex(-linear)
extension defined by $M_{\CC}(u + iv) = Mu + iMv$ for all $u, v \in E$.
Let $V \subset E$ be a linear subspace such that $M (V) \subseteq V $.
Then $\restr{M}{V}$ induces an endomorphism on $V$
and $ \left(\restr{M}{V}\right)_{\CC} = \restr{(M_{\CC})}{V_{\CC}} $.
 We denote both by ${M}_{V, \CC}$.
Moreover, the eigenvectors of $M_{V, \CC}$ are exactly the eigenvectors $z$ of $M_{\CC}$
such that $z \in V_{\CC}$.
That is, complex eigenvalues of $\restr{J_{\Phi_m}(\tZ)}{L}$ are complex numbers such 
that there exists $\mathbf h \in L_{\CC} (=  L + i L )$ for which $ J_{\Phi_m}(\tZ) \mathbf h = \lambda
\mathbf h$. 
The proof of Theorem~\ref{thm:jac_spectrum} builds on the following two results.
\begin{lemma}[Spectrum of $J_{\Phi_m}(\tZ)|_{L}$]
  \label{lem:detailed_spectrum_1}
   The complex $\lambda \in \CC^\star$ is an eigenvalue of $J_{\Phi_m}(\tZ)|_L$, 
    if and only if one of the following holds
    \[
          \lambda 
          \in
          \{ 
            \V_{1, i} \mid i \in \{ 1,\dots, d_x \}
          \setminus S_x
          \}
          \cup
          \{ 
            \V_{2, j} \mid j \in \{ 1, \dots, d_y \} 
          \setminus S_y
          \}
    \]
    or $\lambda \neq \frac{m}{m+1}$ and
    \begin{align}\label{eq:linSys}
      \exists [h_1, h_2, \lambda^{-1} h_1, \lambda^{-1} h_2] \in L_{\CC}\setminus\{0\}
      \quad \text{s.t.} 
        \begin{cases}
          &\etax H(\tx) A h_2 = \frac{\lambda(\lambda - 1)}{(1+m)\lambda - m} h_1 \\[1em]
          &\etay H(\ty) B h_1 = \frac{\lambda(\lambda - 1)}{(1+m)\lambda - m} h_2  \\[1em]
          &\forall i \notin S_x , \;
      \forall j \notin S_y , \; 
      h_{1, i} = h_{2, j} = 0\, . \tag{LS}
        \end{cases}  
    \end{align}
\end{lemma}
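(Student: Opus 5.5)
We write out the eigen-equation $J_{\Phi_m}(\tZ)\mathbf h=\lambda\mathbf h$ for $\mathbf h=[h_1,h_2,h_3,h_4]\in L_{\CC}$ and $\lambda\neq0$, using the block form of the Jacobian from the preceding lemma. The last two block rows give $h_1=\lambda h_3$ and $h_2=\lambda h_4$, so $h_3=\lambda^{-1}h_1$ and $h_4=\lambda^{-1}h_2$. Substituting into the first two block rows gives
\[
(\diag(\V_1)-\tx\V_1^\top)h_1+\etax\Big((1+m)-\tfrac{m}{\lambda}\Big)H(\tx)Ah_2=\lambda h_1,
\]
and symmetrically for $h_2$. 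The proof then reduces to analysing this reduced pair of equations.

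\textbf{Step 1: the off-support coordinates.} For $i\notin S_x$ we have $\tx_i=0$, so row $i$ of $H(\tx)$ and of $\tx\V_1^\top$ vanishes, and row $i$ reads $\V_{1,i}h_{1,i}=\lambda h_{1,i}$. For $i\in S_x$ we have $\V_{1,i}=1$. Hence $\V_1^\top h_1=\ones^\top h_1+\sum_{i\notin S_x}(\V_{1,i}-1)h_{1,i}$. Since $h_1\in L_{\CC}$ implies $\ones^\top h_1=0$, the support rows become
\[
h_{1,S}-\tx_S\sum_{i\notin S_x}(\V_{1,i}-1)h_{1,i}+\etax\tfrac{(1+m)\lambda-m}{\lambda}\,[H(\tx)Ah_2]_S=\lambda h_{1,S}.
\]

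\textbf{Case A: some off-support coordinate is nonzero.} If $h_{1,i}\neq0$ for some $i\notin S_x$ (or the analogue for $h_2$), then $\lambda=\V_{1,i}\in W$. For the converse we must construct an eigenvector in $L_{\CC}$ for each $\lambda=\V_{1,\hat\imath}\in W$, $\lambda\neq0$. We set $h_{1,\hat\imath}=1$ and every other off-support coordinate to zero, then solve for the on-support parts $(h_{1,S},h_{2,S'})$ from the inhomogeneous linear system above, with the constraint $\ones^\top h_1=0$ used to fix the sum of $h_{1,S}$.

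This is the main obstacle, since solvability can fail when $\lambda$ also lies in the support spectrum. The plan is to argue via the block-triangular structure instead of solving directly. Ordering coordinates as (off-support, support), $J|_L$ is block triangular: the off-support coordinates evolve by the diagonal $\diag(\V_{1,i})$, $\diag(\V_{2,j})$ acting alone, because the rows with $\tx_i=0$ only involve $h_{1,i}$. The off-support values are therefore eigenvalues of $J$ acting on the quotient space $L/L_S$, where $L_S$ is the invariant subspace of vectors vanishing off-support. So they belong to the spectrum, whether or not an explicit eigenvector is exhibited. One can check that restricting to $L$ does not change this: $L$ imposes only the single condition per block, which can be absorbed by the support coordinates whenever $S_x\neq\varnothing$.

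\textbf{Case B: all off-support coordinates vanish.} Then the equation becomes $(1-\lambda)h_{1}+\etax\tfrac{(1+m)\lambda-m}{\lambda}H(\tx)Ah_2=0$, and similarly for $h_2$. If $\lambda\neq m/(m+1)$, dividing by the coefficient yields exactly \eqref{eq:linSys}. Conversely, any solution of \eqref{eq:linSys} reverses through these steps to an eigenvector $[h_1,h_2,\lambda^{-1}h_1,\lambda^{-1}h_2]$, which lies in $L_{\CC}$ since the columns of $H(\cdot)$ sum to zero.

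The remaining subcase is $\lambda=m/(m+1)$. The equations then force $(1-\lambda)h_1=0$ and $(1-\lambda)h_2=0$, and since $\lambda\neq1$ this gives $h_1=h_2=0$, hence $\mathbf h=0$. So this $\lambda$ is not an eigenvalue from Case B, which justifies excluding it in the statement.
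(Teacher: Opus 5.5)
Your proof is correct. It matches the paper's argument everywhere except in the converse of the off-support case, where you take a genuinely different route.

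\textbf{How the paper handles that converse.} The paper works with the full ambient Jacobian. Since $\tx_k=0$ for $k\notin S_x$, the vector $(e_k,0,0,0)$ is a left eigenvector of $J_{\Phi_m}(\tZ)$ with eigenvalue $\V_{1,k}$, so $\V_{1,k}$ is an eigenvalue of $J_{\Phi_m}(\tZ)$. Moreover, $(\ones^\top,0,0,0)$ and $(0,\ones^\top,0,0)$ annihilate $J_{\Phi_m}(\tZ)$ from the left. Hence every right eigenvector with nonzero eigenvalue lies in the image, has first two blocks summing to zero, and therefore lies in $L_{\CC}$.

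\textbf{How your route differs.} You stay inside $L$. The subspace $L_S$ (the paper's $\Lpar$) is invariant. Projecting $L$ onto the off-support coordinates is onto, since $S_x,S_y\neq\varnothing$, and its kernel is $L_S$. The induced map on $L/L_S$ is lower block-triangular. Hence the characteristic polynomial of $J_{\Phi_m}(\tZ)|_L$ contains the factor $\prod_{i\notin S_x}(t-\V_{1,i})\prod_{j\notin S_y}(t-\V_{2,j})$. You do rely on $J_{\Phi_m}(\tZ)L\subseteq L$ without stating it. It holds for the same left-annihilator reason as above, plus the fact that the last two blocks just copy $h_1,h_2$.

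\textbf{A small imprecision.} The quotient map is not literally just $\diag(\V_{1,i})$, $\diag(\V_{2,j})$. The third and fourth blocks copy $h_{1,i}$ and $h_{2,j}$, which contributes extra zero eigenvalues. This is harmless since $\lambda\neq0$.

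\textbf{What each route buys.} Yours never needs an explicit eigenvector, so the coincidence problem you flag (when $\V_{1,\hat\imath}$ also lies in the support spectrum) simply disappears. It also gives algebraic multiplicities for free. The paper's route gives the by-product that the nonzero spectra of $J_{\Phi_m}(\tZ)$ and $J_{\Phi_m}(\tZ)|_L$ coincide as sets.

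\textbf{Case B converse.} Membership in $L_{\CC}$ is already part of the hypothesis \eqref{eq:linSys}. Your column-sum justification is therefore unnecessary, and it would in any case only work for $\lambda\neq 1$.
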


\begin{proof}[Proof of Lemma~\ref{lem:detailed_spectrum_1}]
  The complex $\lambda \in \CC^\star$ is an eigenvalue of $J_{\Phi_m}(\tZ)|_L$, if and only if
  there is some $\mathbf h = [h_1, h_2, h_3, h_4] \in L_{\CC}$ such that $J_{\Phi_m}(\tZ) \mathbf
  h = \lambda \mathbf h$.

  Assuming $\lambda \neq 0$, the last two block rows imply $h_3 = \lambda^{-1} h_1$ and
  $h_4 = \lambda^{-1} h_2$. Substituting these into the first two rows yields the reduced
  system:
  \begin{align}
    (\diag(\V_1) - \tx \V_1^\top) h_1 + \etax H(\tx) A (1 + m - m\lambda^{-1}) h_2 
    &= \lambda h_1 \label{eq:sys1} \\
    (\diag(\V_2) - \ty \V_2^\top) h_2 + \etay H(\ty) B (1 + m - m\lambda^{-1}) h_1 
    &= \lambda h_2 \label{eq:sys2}
  \end{align}
  We then proceed by a case disjunction: 
  \paragraph{Case 1:
  $h_{1, k} \neq 0$ for some $k \notin S_x$
  }
  If there exists $k \notin S_x$ such that $h_{1, k} \neq 0$, then for such $k$, we have
  $\tx_{k} = 0$, so the $k$-th row of $H(\tx)$ is zero, and the $k$-th row of $\tx
  \V_1^\top$ is zero. Equation~\eqref{eq:sys1} for the $k$-th component becomes:
  \[
    \V_{1, k} \, h_{1,k} + 0 = \lambda\, h_{1,k} \,,
  \]
  therefore $\lambda = \V_{1, k}$, since $h_{1, k} \neq 0$.

  Conversely, for any $k \notin S_x$, consider the left-eigenvector
  $
    \mathbf v  = (e_k, 0 , 0, 0)
  $
  then 
  \[
    \mathbf v^\top J_{\Phi_m}(\tZ)
    = (e_k^\top, 0 , 0, 0)J_{\Phi_m}(\tZ)  
    = (\V_{1,k}e_k^\top, 0 , 0 , 0)
    = \V_{1,k} \mathbf v^\top
    \,.
  \]
  So $\V_{1, k}$ is a non-zero eigenvalue of $J_{\Phi_m}(\tZ)$.  It remains to show that the corresponding right eigenvector lies in $L_{\CC}$. If $[h_1, h_2, \lambda^{-1} h_1, \lambda^{-1}h_2]$ is a (right-)eigenvector
  with eigenvalue $\V_{1, k} > 0$, then necessarily, due to Equation~\eqref{eq:sys1},
  \[
    \lambda \mathbf 1^\top h_1 
    = 
    \mathbf 1^\top (\diag(\V_1) - \tx \V_1^\top) h_1 
    + \etax \mathbf 1^\top H(\tx) A (1 + m - m\lambda^{-1}) h_2 
    \, .
  \]
  Now
  \[
    \mathbf 1^\top (\diag(\V_1) - \tx \V_1^\top) 
    = 
    \mathbf 1^\top \diag(\V_1) - \mathbf 1^\top \tx \V_1^\top
    =\V_1^\top  - \V_1^\top = 0  
    \, ,
  \]
  and similarly
  \[
    \mathbf 1^\top H(\tx) = 0 \,.
  \]
  So $\mathbf 1^\top h_1 = 0$. Symmetrically, we also have $\mathbf 1^\top h_2 = 0$, 
  so $[h_1, h_2, \lambda^{-1}h_1, \lambda^{-1}h_2] \in L_{\CC}$.

\paragraph{Case 2: 
  $h_{2, \ell} \neq 0$ for some $\ell \notin S_y$
  }
  Similarly, if there exists $\ell \notin S_y$ such that $h_{2, \ell} \neq 0$, then
  necessarily $\lambda = \V_{2, \ell}$, and conversely these are indeed 
  eigenvalues of $J_{\Phi_m}(\tZ)|_{L}$.

  \paragraph{Case 3}

  If neither Case 1 nor Case 2 holds, then $h_1$ is supported on $S_x$ and $h_2$ on $S_y$.
  On the support $S_x$ (resp. $S_y$), all coordinates of the vector $\V_1$ (resp.
  $\V_2$) are $1$.
  The term $(\diag(\V_1) - \tx \V_1^\top) h_1$ becomes $(I - \tx \ones^\top) h_1$.
  Since $\ones^\top h_1 = 0$, this simplifies to $h_1$.
   Equations~\eqref{eq:sys1} and~\eqref{eq:sys2} become:
  \begin{align*}
    h_1 + (1 + m - m\lambda^{-1}) \etax H(\tx) A h_2 
    &= \lambda h_1 
    \quad \text{i.e.}\quad 
    (\lambda - 1) h_1 
    = (1 + m - m\lambda^{-1}) \etax H(\tx) A h_2 \\
    h_2 + (1 + m  - m\lambda^{-1}) \etay H(\ty) B h_1 
    &= \lambda h_2 
    \quad \text{i.e.} \quad 
    (\lambda - 1) h_2 = (1 + m - m\lambda^{-1}) \etay H(\ty) B h_1
    \,.
  \end{align*}
  To rewrite these in the form \eqref{eq:linSys}, divide by
$1+m-m\lambda^{-1} = ((1+m)\lambda-m)/\lambda$, which is nonzero if $\lambda \neq m/(m+1)$. If instead $\lambda = m/(m+1)$, then
the equations simplify to
$(\lambda-1)h_1 = 0$, $(\lambda-1)h_2 = 0$. Since $\lambda=m/(m+1)\neq 1$,
this forces $h_1=h_2=0$. Hence $m/(m+1)$ never arises as an eigenvalue
through Case~3, and can occur only via Cases~1 or 2, i.e.\ as an element of $W$.

  Conversely, we check that any pair $(\lambda, \mathbf h) \in \CC^\star \times L_{\CC}\setminus\{0\}$
  satisfying these conditions is indeed an eigenvalue-eigenvector pair for
  $J_{\Phi_m}(\tZ)|_L$.
  Assume first that $\lambda \neq \frac{m}{m+1}$ and that  \eqref{eq:linSys} holds. Define
\[
    \mathbf h
    :=
    \bigl[h_1,h_2,\lambda^{-1}h_1,\lambda^{-1}h_2\bigr]
    \in L_{\mathbb C}\, .
\]
Since $[h_1,h_2]\neq [0,0]$, we have $\mathbf h\neq 0$. Moreover, $h_1,h_2$ are
supported on $S_x,S_y$, respectively, and $\V_1=\ones$ on $S_x$,
$\V_2=\ones$ on $S_y$. Hence, using
$\ones^\top h_1=\ones^\top h_2=0$,
\[
    \bigl(\diag(\V_1)-\tx\V_1^\top\bigr)h_1=h_1\, ,
    \qquad
    \bigl(\diag(\V_2)-\ty\V_2^\top\bigr)h_2=h_2 \,.
\]
Also,
\[
    1+m-m\lambda^{-1}
    =
    \frac{(1+m)\lambda-m}{\lambda}\,.
\]
Therefore, by  \eqref{eq:linSys},
\[
    \bigl(1+m-m\lambda^{-1}\bigr)\eta_x H(\tx)Ah_2
    =
    (\lambda-1)h_1\,,
\]
and similarly
\[
    \bigl(1+m-m\lambda^{-1}\bigr)\eta_y H(\ty)Bh_1
    =
    (\lambda-1)h_2\,.
\]
Substituting these identities into the first two block rows of
$J_{\Phi_m}(\tZ)\mathbf h$, and using
$h_3=\lambda^{-1}h_1$, $h_4=\lambda^{-1}h_2$ for the last two block rows, gives
\[
    J_{\Phi_m}(\tZ)\mathbf h=\lambda \mathbf h\, .
\]
Thus $\lambda\in \eig{J_{\Phi_m}(\tZ)|_L}$.

It remains to consider the case $\lambda\in W$. This is exactly the off-support
case treated above: the corresponding off-support coordinate gives a nonzero
eigenvector of $J_{\Phi_m}(\tZ)$, and the summation argument shows that this
eigenvector lies in $L_{\CC}$. Hence
\[
    \lambda\in \eig{J_{\Phi_m}(\tZ)|_L}\,.
\]
\end{proof}

\begin{lemma}\label{lem:equiv_double_equation}
Suppose $n_x \geq n_y$.
  For any $\lambda \in \CC^\star \setminus \{m / (m+1)\}$, we have the following equivalence
   \begin{align} \label{eq:linSys1}
      \exists [h_1, h_2, \lambda^{-1} h_1, \lambda^{-1} h_2] \in L_{\CC}\setminus\{0\}
      \quad \text{s.t.} 
        \begin{cases}
          &\etax H(\tx) A h_2 = \frac{\lambda(\lambda - 1)}{(1+m)\lambda - m} h_1 \\[1em]
          &\etay H(\ty) B h_1 = \frac{\lambda(\lambda - 1)}{(1+m)\lambda - m} h_2  \\[1em]
          &\forall i \notin S_x , \;
      \forall j \notin S_y , \; 
      h_{1, i} = h_{2, j} = 0\, . \tag{LS}
        \end{cases}  
    \end{align}
  if and only if 
  \[
      Q_m(\lambda) \in 
    \eig{
    \etax \etay 
    \Pi_{S_x} (H(\tx) A H(\ty) B) \Pi_{S_x}^{\top} |_{L_x}
    }
    \, .
  \]
\end{lemma}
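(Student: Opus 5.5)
Set $c(\lambda):=\frac{\lambda(\lambda-1)}{(1+m)\lambda-m}$, which is well defined since $\lambda\neq m/(m+1)$, and note that $c(\lambda)^2=Q_m(\lambda)$. Write $\hat h_1:=\Pi_{S_x}h_1$ and $\hat h_2:=\Pi_{S_y}h_2$. Because of the support constraint in \eqref{eq:linSys1}, we have $h_1=\Pi_{S_x}^\top\hat h_1$ and $h_2=\Pi_{S_y}^\top \hat h_2$. Membership in $L_{\CC}$ is then equivalent to $\hat h_1\in (L_x)_{\CC}$ and $\hat h_2\in(L_y)_{\CC}$.

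Two observations let us pass to the reduced matrix. First, rows of $H(\tx)$ outside $S_x$ vanish, since $\tx_i=0$ there, so $H(\tx)=\Pi_{S_x}^\top\Pi_{S_x}H(\tx)$. Second, columns of $H(\ty)$ outside $S_y$ vanish, so $H(\ty)B h_1 = \Pi_{S_y}^\top \Pi_{S_y}H(\ty)\Pi_{S_y}^\top \Pi_{S_y}B h_1$. Moreover, $\ones^\top H(\cdot)=0$, so the images automatically sum to zero. The system \eqref{eq:linSys1} is therefore equivalent to the reduced pair
\[
\etax \Pi_{S_x}H(\tx)A\Pi_{S_y}^\top\hat h_2=c\,\hat h_1,\qquad \etay \Pi_{S_y}H(\ty)B\Pi_{S_x}^\top\hat h_1=c\,\hat h_2,
\]
with $(\hat h_1,\hat h_2)\neq 0$. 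Denote the two reduced maps by $P:=\etax\Pi_{S_x}H(\tx)A\Pi_{S_y}^\top$ and $R:=\etay\Pi_{S_y}H(\ty)B\Pi_{S_x}^\top$. Using $\Pi_{S_y}^\top\Pi_{S_y}H(\ty)=H(\ty)$, we get $PR=\etax\etay M_x(\tZ)$. Both $P$ and $R$ map into the zero-sum subspaces, so $PR$ leaves $L_x$ invariant.

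\emph{Forward direction.} Given a solution, compose the two equations to get $PR\hat h_1 = c\,P\hat h_2=c^2\hat h_1 = Q_m(\lambda)\hat h_1$. It remains to show $\hat h_1\neq0$. If $\hat h_1=0$, the second equation gives $c\hat h_2=0$. Since $\lambda\neq0$ and $\lambda\neq 1$ (the case $\lambda=1$ needs a separate check, below), we have $c\neq0$, so $\hat h_2=0$, a contradiction. Hence $Q_m(\lambda)$ is an eigenvalue of $\etax\etay M_x|_{L_x}$ with eigenvector $\hat h_1\in(L_x)_\CC$.

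\emph{Converse direction.} Suppose $PR\hat h_1=c^2\hat h_1$ with $0\neq\hat h_1\in (L_x)_\CC$. If $c\neq0$, set $\hat h_2:=c^{-1}R\hat h_1$. Then the second equation holds by construction, and the first holds because $P\hat h_2=c^{-1}PR\hat h_1=c\hat h_1$. Lifting back with $\Pi^\top$ gives the required vector. The choice of the square root $c$ of $Q_m(\lambda)$ is forced by $\lambda$, and both equations are consistent with it.

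The main obstacle is the degenerate case $c=0$, i.e.\ $\lambda=1$ and $Q_m(\lambda)=0$. Here the equations become $P\hat h_2=0$ and $R\hat h_1=0$ with $(\hat h_1,\hat h_2)\neq0$, while the eigenvalue condition reads that $PR|_{L_x}$ is singular. For the converse, if $R\hat h_1=0$ for some nonzero $\hat h_1$ we are done with $\hat h_2=0$. Otherwise $R|_{L_x}$ is injective, which forces $n_x-1\le n_y-1$, hence $n_x=n_y$ by the assumption $n_x\geq n_y$, so $R|_{L_x}$ is bijective onto $L_y$. Singularity of $PR|_{L_x}$ then makes $P|_{L_y}$ singular, and we take $\hat h_1=0$ and $\hat h_2\in\ker P|_{L_y}$ nonzero. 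For the forward direction, if $\hat h_1\neq0$ then $PR\hat h_1=0$ directly. If $\hat h_1=0$, then $\hat h_2\neq 0$ lies in $\ker P|_{L_y}$, and a dimension count using $n_x\ge n_y$ shows $PR|_{L_x}$ is singular: either $R|_{L_x}$ is not injective, or $n_x=n_y$ and $R|_{L_x}$ is onto $L_y$, so it hits $\hat h_2$. This is exactly where the hypothesis $n_x\ge n_y$ is used.
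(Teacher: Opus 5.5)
Your proof is correct and follows essentially the paper's route. Your reduced maps $P$ and $R$ correspond to the paper's $T$ and $R$. The case $\lambda\neq 1$ is handled by the same composition in one direction and the same construction $\hat h_2 = c^{-1}R\hat h_1$ in the other. The forward direction at $\lambda=1$ is the same dimension count using $n_x\ge n_y$.

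The one minor difference is the converse at $\lambda=1$. There the paper does not need $n_x\ge n_y$: it takes a kernel vector $v_1$ of $PR|_{L_x}$ and uses $(v_1,0)$ if $Rv_1=0$, and $(0,Rv_1)$ otherwise. Your injectivity dichotomy invokes the hypothesis at that step too, which is unnecessary but still valid.
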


\begin{proof}[Proof of Lemma~\ref{lem:equiv_double_equation}]
We prove the two implications separately and isolate the case $\lambda=1$.

\medskip
\noindent
$(i) \Rightarrow (ii)$, first assume $\lambda\neq 1$.
Let $[h_1, h_2, \lambda^{-1}h_1, \lambda^{-1} h_2] \in L_{\CC}\setminus\{0\}$ 
satisfying the condition of the statement. 
Then since $\mathbf 1^\top h_1 = 0$,
and since $h_{1, i} = 0$ for all $i \notin S_x$, we have
\[
  \sum_{i \in S_x} h_{1, i} 
  = 
  \sum_{i =1}^{d_x} h_{1, i} 
  = 0 \,.
\]
so $\Pi_{S_x} h_1 \in L_{x, \CC}$. Moreover, the eigenvalue equations imply that 
$h_1 \neq 0$, so $\Pi_{S_x} h_1 \neq 0$.
Finally, since $h_1$ is supported on $S_x$, we have
$
  \Pi_{S_x}^{\top} \Pi_{S_x} h_1 = h_1
$
so
\begin{multline*}
  \etax \etay \Pi_{S_x} (H(\tx) A H(\ty) B) \Pi_{S_x}^{\top} \Pi_{S_x} h_1
  = 
  \etax \etay \Pi_{S_x} (H(\tx) A H(\ty) B) h_1 \\
  = 
  \etax \Pi_{S_x} H(\tx) A 
  \bigg(
    \frac{\lambda(\lambda - 1)}{(1+m)\lambda - m} h_2
  \bigg)
 = 
  \bigg(\frac{\lambda(\lambda - 1)}{(1+m)\lambda - m} \bigg)^2
  \Pi_{S_x}h_1 \,.
\end{multline*}
Therefore $\Pi_{S_x} h_1\in L_{x,\CC}$ is an eigenvector of $\etax \etay \restr{M_x(\tZ)}{L_x}$ with eigenvalue
$Q_m(\lambda)$.

\medskip
\noindent
$(i) \Rightarrow (ii)$, now consider $\lambda=1$.
Then $\mu=0$ and $Q_m(1)=0$.
If $h_1\neq0$, then $\Pi_{S_x}h_1\in L_{x,\CC}\setminus\{0\}$ and
\[
H(\ty)Bh_1=0,
\]
so
\[
\Pi_{S_x}H(\tx)AH(\ty)B\Pi_{S_x}^{\top}\Pi_{S_x}h_1=0.
\]
Thus $0=Q_m(1)$ is an eigenvalue of 
$\Pi_{S_x}H(\tx)AH(\ty)B\Pi_{S_x}^{\top}|_{L_x}$.

Suppose instead that $h_1=0$. Since $[h_1,h_2]\neq [0,0]$, we have
$h_2\neq0$. Moreover $h_2$ is supported on $S_y$ and belongs to
$L_{y}$, and the first equation gives
\[
H(\tx)Ah_2=0.
\]
Define
\[
T:=\Pi_{S_x}H(\tx)A\Pi_{S_y}^{\top}:L_{y,\CC}\to L_{x,\CC}.
\]
Then $\ker T\neq\{0\}$, so
\[
\rank T<\dim L_{y}=n_y-1.
\]
Now write
\[
\Pi_{S_x}H(\tx)AH(\ty)B\Pi_{S_x}^{\top}\big|_{L_{x}}
=
T\circ R
\]
where
\[
R:=\Pi_{S_y}H(\ty)B\Pi_{S_x}^{\top}:L_{x,\CC}\to L_{y,\CC}.
\]
Hence
\[
\rank(T\circ R)\le \rank T < n_y-1.
\]
Using the assumption $n_x\ge n_y$, we get
\[
\rank(T\circ R)< n_y-1\le n_x-1=\dim L_{x}.
\]
Therefore
\[
\Pi_{S_x}H(\tx)AH(\ty)B\Pi_{S_x}^{\top}\big|_{L_{x}}
\]
is singular, and so $0=Q_m(1)$ is an eigenvalue.

\medskip
\noindent
$(ii) \Rightarrow (i)$, first assume $\lambda\neq 1$.
Let
\[
 \mu = \frac{\lambda(\lambda -1)}{(1+m) \lambda - m} 
 \neq 0 
 \,,
\]
then the assumption is that there exist $v_1 \in L_{x, \CC} \setminus \{0\}$ such that 
\[
  \etax \etay \Pi_{S_x} (H(\tx) A H(\ty) B) \Pi_{S_x}^{\top} v_1
  = \mu^2 v_1 \, .
\]
Define the vectors
\[
  h_1 = \Pi_{S_x}^\top v_1 
  \quad \text{and} \quad
  h_2 = \frac{\etay}{\mu} H(\ty) B \Pi_{S_x}^\top v_1 
\]
then 
\[
  \etay H(\ty) B h_1
  = \mu h_2
\]
and
\[
  \etax \Pi_{S_x} H(\tx) A h_2
  = 
  \frac{\etax \etay}{\mu} \Pi_{S_x} H(\tx) AH(\ty) B \Pi_{S_x}^\top v_1 
  = 
  \frac{1}{\mu} \mu^2 v_1  = \mu v_1 \, ,
\]
so 
\[
  \etax H(\tx) A h_2 
  = \etax \Pi_{S_x}^\top \Pi_{S_x} H(\tx) A h_2 
  = \mu \Pi_{S_x}^\top v_1 
  = \mu h_1 \,.
\]

\medskip
\noindent
$(ii) \Rightarrow (i)$, now consider $\lambda=1$.
Since $Q_m(1)=0$, assume that there exists
$v_1\in L_{x,\CC}\setminus\{0\}$ such that
\[
  \Pi_{S_x}H(\tx)AH(\ty)B\Pi_{S_x}^{\top}v_1=0.
\]
Set
\[
  \widehat h_1:=\Pi_{S_x}^{\top}v_1,
  \qquad
  r:=H(\ty)B\widehat h_1 .
\]
Then $r$ is supported on $S_y$ and satisfies $\mathbf 1^\top r=0$, hence
$r\in L_{y,\CC}$. Moreover,
\[
  \Pi_{S_x}H(\tx)Ar
  =
  \Pi_{S_x}H(\tx)AH(\ty)B\Pi_{S_x}^{\top}v_1
  =
  0.
\]
Since $H(\tx)Ar$ is supported on $S_x$, this implies
\[
  H(\tx)Ar=0.
\]
If $r=0$, choose
\[
  h_1=\widehat h_1,
  \qquad
  h_2=0.
\]
Then
\[
  H(\ty)Bh_1=0,
  \qquad
  H(\tx)Ah_2=0.
\]
If $r\neq0$, choose
\[
  h_1=0,
  \qquad
  h_2=r.
\]
Then
\[
  H(\ty)Bh_1=0,
  \qquad
  H(\tx)Ah_2=H(\tx)Ar=0.
\]
In both cases, $[h_1,h_2]\neq[0,0]$, and the vectors have the required support
and tangent-space properties. Hence \eqref{eq:linSys1} holds for $\lambda=1$.
\end{proof}

 We are now ready to prove Theorem~\ref{thm:jac_spectrum}.
\begin{proof}[Proof of Theorem~\ref{thm:jac_spectrum}]
From Lemma~\ref{lem:equiv_double_equation}, for every
$\lambda \in \CC\setminus\{0,\tfrac{m}{m+1}\}$,
\[
  \lambda \in Q_m^{-1}\Big(\eigTilde{\etax \etay \restr{M_x(\tZ)}{L_x}}\Big)
  \iff \eqref{eq:linSys}.
\]
By Lemma~\ref{lem:detailed_spectrum_1}, every
$\lambda \in \eig{J_{\Phi_m}(\tZ)|_L}\setminus\{0,\tfrac{m}{m+1}\}$
either satisfies \eqref{eq:linSys} or lies in $W$. It remains to treat
$\lambda = m/(m+1)$. As the pole of $Q_m$, it is never an element of
$Q_m^{-1}(\mu)$ for any finite $\mu$, so it never belongs to the right-hand
set; and by the Case~3 analysis above, it is an eigenvalue of
$J_{\Phi_m}(\tZ)|_L$ if and only if $m/(m+1)\in W$. Thus adjoining or removing
$\lambda=m/(m+1)$ changes neither side, and
\[
  \eigTilde{J_{\Phi_m}(\tZ)|_{L}}\setminus\{0\}
  = \left(W \cup Q_m^{-1}\Big(\eigTilde{\etax \etay \restr{M_x(\tZ)}{L_x}}\Big)\right)\setminus\{0\},
\]
as claimed.
 
\end{proof}

 \subsection{Stability}\label{app:jac_stab}
 
It is well known that when a fixed point lies in the interior of the domain, the
stability and attraction of the fixed point can be characterized by the spectrum of the Jacobian at the fixed point.
In our case, the state space is a product of simplices, and its interior is empty.
Our techniques rely on stable subsets with respect to a non-empty relative interior.

 \begin{lemma}[Fréchet-Differentials, Jacobians and Restrictions]\label{lem:jac_stable_subspace}
  Let $\cO \subset \RR^d$ be an open subset and $\gz \in \cO$. Let $f : \cO \to \RR^d$ be a $\cC^1$ map.
  Let $V \subset \RR^d$ be a linear subspace and $\delta>0$.
  If $f(\gz + (B_\delta(0) \cap V)) \subset \gz + V$, then
  \[
    D f(\gz)|_{V} = D (f|_{\cO \cap (\gz +V)})(\gz) \,.
  \]
  In particular, $Df(\gz) V \subset V$.
\end{lemma}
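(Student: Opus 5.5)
The plan is to identify both sides through directional derivatives along $V$, using the chain rule, and then read off the invariance. Let $g := f|_{\cO \cap (\gz + V)}$. Since $\cO$ is open, $\cO \cap (\gz + V)$ is a relatively open subset of the affine space $\gz + V$. We regard $g$ as a map into $\gz + V$, which is legitimate near $\gz$ by hypothesis, or equivalently into $\RR^d$. Its differential at $\gz$ is then a linear map $V \to \RR^d$. The restriction $g$ equals $f \circ \iota$, where $\iota : v \mapsto \gz + v$ is the affine inclusion of $V$, defined on a small ball. The chain rule gives
\[
D g(\gz) = D f(\gz) \circ D\iota(0) = D f(\gz)|_V .
\]
This already proves the displayed identity, with $D g(\gz)$ viewed as valued in $\RR^d$. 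No hypothesis on $f(\gz + V) \subset \gz + V$ is needed for this step.

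The second step is the invariance $Df(\gz)V \subset V$. Fix $v \in V$. For $|s|$ small, $s v \in B_\delta(0) \cap V$, so by hypothesis $f(\gz + s v) - \gz \in V$. Differentiability of $f$ at $\gz$ gives
\[
D f(\gz) v = \lim_{s \to 0} \frac{f(\gz + s v) - f(\gz)}{s} .
\]
Taking $s = 0$ in the hypothesis gives $f(\gz) \in \gz + V$. Hence each difference quotient equals $\bigl((f(\gz+sv) - \gz) - (f(\gz) - \gz)\bigr)/s$, which is a difference of two elements of $V$ divided by $s$, and so lies in $V$. Since $V$ is a closed linear subspace of $\RR^d$, the limit also lies in $V$. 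This shows $Df(\gz) v \in V$, so $Df(\gz)$ maps $V$ into $V$. Consequently the identity $D f(\gz)|_V = D(f|_{\cO\cap(\gz+V)})(\gz)$ holds as an equality of endomorphisms of $V$, where the restricted map is viewed as a self-map of the affine space $\gz+V$ near $\gz$.

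The only mildly delicate point is bookkeeping rather than analysis. One must be precise about which ambient space the differential of the restricted map is taken in, namely the relatively open subset of the affine space $\gz + V$ with the tangent space $V$. One also uses that the hypothesis includes $\gz$ itself, through $0 \in B_\delta(0) \cap V$, so that $f(\gz) \in \gz + V$. Once that is set up, both claims follow directly from the chain rule and the closedness of finite-dimensional subspaces. I would present the invariance argument first and then state the identity of differentials as a consequence.
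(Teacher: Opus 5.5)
Your proposal is correct and follows essentially the paper's argument. You obtain the invariance $Df(\gz)V \subset V$ exactly as the paper does: the difference quotients $(f(\gz+sv)-f(\gz))/s$ lie in the closed subspace $V$ for small $s$, with $0 \in B_\delta(0)\cap V$ giving $f(\gz)\in \gz+V$. The only differences are cosmetic: you work at $\gz$ directly instead of first translating $\gz$ to the origin, and you get the identity of differentials from the chain rule for $f\circ\iota$ rather than by matching directional derivatives, which makes the differentiability of the restricted map explicit rather than asserted.
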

\begin{proof}
 We first show that we can assume without loss of generality that $\gz$ is the origin. Define
$
g(u)=f(\gz+u)-\gz,
$
for $u\in \cO-\gz$. Then $g$ is $\cC^1$, $Dg(0)=Df(\gz)$, and
$
g\bigl((B_\delta(0)\cap V)\bigr)\subset V
$.
Thus, it suffices to prove the claim for $\gz=0$.

The set $\cO\cap V$ is open in $V$, so $\restr{f}{\cO\cap V}$ is differentiable
as a map defined on $V$.
For any $v\in V$ and any $t$ small enough, we have $tv\in \cO\cap V$, hence
\[
Df(0)v
=
\lim_{t\to 0}\frac{f(tv)-f(0)}{t}
=
\lim_{t\to 0}\frac{f|_{\cO\cap V}(tv)-f|_{\cO\cap V}(0)}{t}
=
D(f|_{\cO\cap V})(0)v.
\]
Moreover, due to the local invariance assumption, for $t$ small enough, both $f(tv)$ and $f(0)$ lie in $V$, so
\[
\frac{f(tv)-f(0)}{t}\in V.
\]
Passing to the limit gives $Df(0)v\in V$.
\end{proof}

\subsection{Stability and Instability}

Recall that we denote by $\SR{A}$ the spectral radius of the linear map $A$, i.e., the largest
modulus of its complex eigenvalues.
The following results will be an essential tool. Many of them are well-known results, which we include for completeness.

\begin{proposition}[Stability and Linearization of Maps] \label{prop:linearization_stability}
Consider a $\cC^1$ map $g:\RR^n \to \RR^n$ with $g(0) = 0$. Then $0$ is
\begin{enumerate}
\item  asymptotically stable if $\SR{D g(0)} <1$; and
\item unstable if $\SR{D g(0)} >1$.
\end{enumerate}
\end{proposition}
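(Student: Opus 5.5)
The plan is to work with adapted norms and the first-order Taylor expansion $g(z) = Dg(0)z + r(z)$. Since $g$ is $\cC^1$ with $g(0)=0$, the mean value inequality gives the following: for every $\varepsilon>0$ there is $\delta>0$ such that $r$ is $\varepsilon$-Lipschitz on $B_\delta(0)$ with $r(0)=0$. Write $A := Dg(0)$. Both parts reduce to comparing $A$ in a well-chosen norm with this small remainder.

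\textbf{Stability.} Assume $\SR{A}<1$ and fix $\rho$ with $\SR{A}<\rho<1$.
\begin{enumerate}
\item Construct a norm $|\cdot|_\ast$ on $\RR^n$ whose induced operator norm satisfies $|A|_\ast\le \rho$. One route is the real Jordan form with off-diagonal entries rescaled to be small. Another is to set $|z|_\ast := \sum_{k\ge 0} \rho^{-k}\norm{A^k z}$, which converges because $\norm{A^k}^{1/k}\to \SR{A}$ by Gelfand's formula; it then satisfies $|Az|_\ast\le \rho|z|_\ast$.
\item Choose $\varepsilon$ with $\rho+\varepsilon<1$, where $\varepsilon$ is now measured in $|\cdot|_\ast$; all norms on $\RR^n$ are equivalent, so this only changes $\delta$.
\item On the $|\cdot|_\ast$-ball of radius $\delta$ this gives $|g(z)|_\ast\le(\rho+\varepsilon)|z|_\ast$. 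Hence the ball is forward invariant and the iterates contract geometrically to $0$.
\item Equivalence of norms then yields both Lyapunov stability and attraction in the Euclidean norm.
\end{enumerate}

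\textbf{Instability.} Assume $\SR{A}>1$.
\begin{enumerate}
\item Split $\RR^n = E_u\oplus E_c$ into the real $A$-invariant generalized eigenspaces. $E_u$ collects the eigenvalues of modulus $>1$ and is nonzero; $E_c$ collects the rest. Write $A=A_u\oplus A_c$.
\item Pick $1<\alpha<\min\{|\lambda| : \lambda\in\eig{A_u}\}$ and $\beta$ with $\max\{|\lambda|:\lambda\in\eig{A_c}\}<\beta<\alpha$ and $\beta\ge 1$ as needed. By the same construction as in the stability part, choose adapted norms with $|A_u u|\ge \alpha|u|$ on $E_u$ and $|A_c c|\le\beta|c|$ on $E_c$.
\item Set $|z|:=\max\{|u|,|c|\}$ for $z=u+c$, and let $K:=\{u+c : |u|\ge|c|\}$ be the cone around the unstable direction.
\item For $z\in K\cap B_\delta$, write $g(z)=u'+c'$. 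The remainder bound gives $|u'|\ge(\alpha-\varepsilon)|u|$ and $|c'|\le \beta|c|+\varepsilon|u|\le(\beta+\varepsilon)|u|$.
\item Choose $\varepsilon$ so that $\beta+\varepsilon\le \alpha-\varepsilon$ and $\alpha-\varepsilon>1$. Then $g(z)\in K$ and $|g(z)|\ge(\alpha-\varepsilon)|z|$.
\item Take any $z_0\in K\setminus\{0\}$ arbitrarily close to $0$, for instance a small multiple of a vector in $E_u$. Its orbit stays in $K$ and grows geometrically as long as it stays in $B_\delta$, so it must leave $B_\delta$ after finitely many steps.
\item This happens for initial points arbitrarily close to $0$. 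Hence the defining $\varepsilon$–$\delta$ condition fails with $\epsilon:=\delta$ (in equivalent Euclidean terms), and $0$ is unstable.
\end{enumerate}

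The main obstacle is the instability half, specifically the cone-invariance estimate in the presence of the nonlinear remainder. The key is that the $\cC^1$ hypothesis yields a small Lipschitz constant for $r$ on a small ball, not merely $r(z)=o(|z|)$. This is exactly what lets us bound the cross-contamination $|c'|\le\beta|c|+\varepsilon|u|$ uniformly. If some $|\lambda|=1$ makes $\beta\ge1$ necessary, the gap $\beta<\alpha$ still suffices, since only the relative growth of $|u|$ over $|c|$ and the absolute growth factor $\alpha-\varepsilon>1$ matter.
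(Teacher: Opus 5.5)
Your proof is correct. It differs from the paper only in that the paper gives no argument here: it defers to Theorem~3.2 in \citet{Kuznetsov:2026aa}. You instead give the standard self-contained proof. For part~1, the adapted norm $|z|_\ast=\sum_k\rho^{-k}\norm{A^kz}$ turns $g$ into a strict contraction on a small ball. For part~2, the cone $K=\{|u|\ge|c|\}$ around the unstable subspace $E_u$ is forward invariant and expanding.

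Beyond transparency, your route buys two things the paper actually relies on.
\begin{enumerate}
\item Your argument is purely local: part~1 uses a forward-invariant ball, and part~2 only needs iterates up to the first exit time. The statement, by contrast, assumes $g$ is defined on all of $\RR^n$. In the proof of Theorem~\ref{thm:stable_linear}, the proposition is applied to $\widetilde F$, which is defined only on the open set $T_c^{-1}(V\cap B_\delta)$. So your local version is the one that is really needed.
\item Your instability half assumes no hyperbolicity, since $E_c$ absorbs eigenvalues of modulus exactly $1$. This matters in Theorem~\ref{thm:stab_from_jac}. There, $J_{\Phi_m}(\tZ)|_L$ may have eigenvalues on the unit circle alongside one of modulus $>1$. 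For example, $\lambda=1$ occurs whenever $M_x(\tZ)|_{L_x}$ is singular, since $Q_m(1)=0$.
\end{enumerate}
The citation, in exchange, buys brevity.

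Two small remarks. First, the norm on $E_u$ with $|A_uu|\ge\alpha|u|$ comes from applying your stability construction to $A_u^{-1}$, for which $\SR{A_u^{-1}}<1/\alpha$, not to $A_u$ itself. Second, your closing comment overstates what the argument uses. The contraction estimate $|g(z)|_\ast\le(\rho+\varepsilon)|z|_\ast$ and the cross-contamination bound $|c'|\le\beta|c|+\varepsilon|u|$ both use only $|r(z)|\le\varepsilon|z|$ on $B_\delta$, that is, $r(z)=o(|z|)$. Neither needs a Lipschitz estimate for $r$ between two nonzero points. Your argument therefore proves the proposition for $g$ merely differentiable at $0$; the $\cC^1$ hypothesis is harmless, not the key.
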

For a proof, see, e.g., Theorem 3.2 in \citet{Kuznetsov:2026aa}.

  \begin{theorem}\label{thm:stable_linear}
  Let $V \subset \RR^d$ be a linear subspace, and $\cO$ an open subset of $\RR^d$.
  Let $F : \cO \to \cO$ be a $\cC^1$ map and
  $\gz \in \cO$ be a fixed point of $F$. If there exists $\delta
  > 0$ such that
  \[
    F\big( \gz + (V \cap B_\delta(0))\big) \subset  \gz + V \,,
  \]
   then $\gz$ is asymptotically stable relative to $\gz + V$
   if $\SR{\restr{DF(\gz)}{V} } < 1$
  and unstable relative to $\gz + V$ if $\SR{\restr{DF(\gz)}{V}} > 1$.
\end{theorem}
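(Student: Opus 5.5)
The plan is to reduce to Proposition~\ref{prop:linearization_stability} by an affine chart of $\gz + V$. Let $k=\dim V$ and fix a linear isomorphism $\iota:\RR^k\to V$, for instance one built from an orthonormal basis, so that $\iota$ is an isometry onto $V$. Since $\cO$ is open, choose $\delta_0\in(0,\delta]$ with $\gz + B_{\delta_0}(0)\subset\cO$. On the open set $U:=\iota^{-1}(V\cap B_{\delta_0}(0))\subset\RR^k$ define
\[
  g(u) := \iota^{-1}\bigl(F(\gz+\iota u)-\gz\bigr).
\]
This is well defined by the local invariance hypothesis, because $F(\gz+\iota u)-\gz\in V$. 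It is $\cC^1$, being the composition of $F$ with affine maps and the linear map $\iota^{-1}$ (extended to $\RR^d$ through the orthogonal projection onto $V$). Moreover $g(0)=0$. By Lemma~\ref{lem:jac_stable_subspace}, $DF(\gz)V\subset V$, and
\[
  Dg(0)=\iota^{-1}\circ \restr{DF(\gz)}{V}\circ\iota .
\]
Hence $Dg(0)$ is similar to $\restr{DF(\gz)}{V}$, and the two have the same complex spectrum, so $\SR{Dg(0)}=\SR{\restr{DF(\gz)}{V}}$.

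Proposition~\ref{prop:linearization_stability} is stated for maps defined on all of $\RR^n$, so $g$ must first be extended. Take a smooth cutoff $\chi$ equal to $1$ on $B_{r}(0)$ and supported in $U$, where $r$ is small, and set $\hat g=\chi g$, extended by $0$ outside $U$. Then $\hat g$ is $\cC^1$ on $\RR^k$, agrees with $g$ near $0$, and satisfies $D\hat g(0)=Dg(0)$. Stability and attraction are local properties. If the spectral radius is $<1$, asymptotic stability of $0$ for $\hat g$ provides, for every $\epsilon<r$, a $\delta'$ such that orbits starting in $B_{\delta'}(0)$ stay in $B_\epsilon(0)$. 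On that ball the iterates of $\hat g$ coincide with those of $g$, and these in turn correspond exactly, via $u\mapsto \gz+\iota u$, to the iterates of $F$ started in $\gz+V$. The invariance hypothesis is applied inductively here, since each iterate stays within $B_{\delta_0}$.

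Because $\iota$ is an isometry, distances transfer directly. This gives asymptotic stability of $\gz$ relative to $\gz+V$. In the unstable case, instability of $0$ for $\hat g$ gives an $\epsilon_0$, which may be shrunk below $r$, such that orbits starting arbitrarily close to $0$ eventually leave $B_{\epsilon_0}(0)$. Up to the first exit time the orbit lies in $B_r(0)$, where $\hat g=g$, so the corresponding $F$-orbit also leaves the $\epsilon_0$-ball around $\gz$ in $\gz+V$. Hence $\gz$ is unstable relative to $\gz+V$.

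The main obstacle is expected to be this bookkeeping. The issue is that $F$ preserves $\gz+V$ only within a $\delta$-neighbourhood, whereas the linearization proposition is global. The first-exit-time argument, carried out with the cutoff extension, is what keeps the reduction rigorous.
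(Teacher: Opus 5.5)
Your proposal is correct and follows the paper's proof. You conjugate the restriction of $F$ to $\gz+V$ by a linear chart $\RR^k\to V$, use Lemma~\ref{lem:jac_stable_subspace} to see that the derivative at $0$ is similar to $\restr{DF(\gz)}{V}$, and invoke Proposition~\ref{prop:linearization_stability}. Your cutoff extension and first-exit-time bookkeeping make rigorous a point the paper passes over: the conjugated map is only defined, and only preserves $\gz+V$, near $0$, whereas the proposition is stated for maps on all of $\RR^n$.
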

\begin{remark}
The restriction to \emph{relative to $\gz+V$} is essential. Note that if a fixed point is stable relative to a set $A$, then it is stable relative to every (forward invariant) subset $B \subset A$ containing the fixed point. Conversely, if it is unstable relative to such a subset  $B$, then it is unstable relative to $A$.
\end{remark}

\begin{proof}
  We assume again without loss of generality that $\gz = 0$ (otherwise consider $G(u)
  = F(\gz + u) - \gz$). We also denote $B_\delta = B_\delta(0)$.

  Since
  $
    F(V\cap B_\delta)\subset V,
  $
  the restriction
  $
    f:=\restr{F}{V\cap B_\delta} : V\cap B_\delta \to V
  $
  is well defined and by Lemma~\ref{lem:jac_stable_subspace},
  \[
    Df(0)=\restr{DF(0)}{V}.
  \]
  Let $k=\dim V$, and let $T_c:\RR^k\to V$ be a linear isomorphism. Define
  \[
    \widetilde F := T_c^{-1}\circ f\circ T_c
  \]
  on the open set $T_c^{-1}(V\cap B_\delta)\subset \RR^k$. Then $\widetilde F$
  is of class $\cC^1$, satisfies $\widetilde F(0)=0$, and
  \[
    D\widetilde F(0)
    =T_c^{-1}\circ Df(0)\circ T_c
    =T_c^{-1}\circ \restr{DF(0)}{V}\circ T_c.
  \]
  Hence $D\widetilde F(0)$ is similar to $\restr{DF(0)}{V}$, so they have the
  same spectrum and therefore the same spectral radius.

  Since $T_c$ is a homeomorphism, Lyapunov stability, attraction, and
instability are preserved under this change of coordinates (see, e.g., Section 3 \cite{Kuznetsov:2026aa}).
  Therefore, by Proposition~\ref{prop:linearization_stability}, the fixed point $0$ is asymptotically stable relative to $V$ when
  $\SR{\restr{DF(0)}{V}}<1$, and unstable relative to $V$ when
  $\SR{\restr{DF(0)}{V}}>1$.
\end{proof}

We note that Theorem~\ref{thm:stable_linear} does not imply instability in the presence
of constraint sets (only instability in $x + V$).
Since our problem is inherently constrained, instability must be transferred to the
constraint set itself. 

 For a fixed point $\tZ \in \FP$ with supports
$S_x,S_y$, define the face
\[
  {\cF_{\tZ}}:=\left\{[\px,\py,\ppx,\ppy]\in\prodSimplex: \begin{cases} \supp(\px)\subseteq S_x\\\supp(\ppx)\subseteq S_x \end{cases}\
 \begin{cases} \supp(\py)\subseteq S_y\\ \supp(\ppy)\subseteq S_y \end{cases}\right\} 
\]
and the face-tangent (lineality) subspace
\[
  \Lpar:=\{\,h=[h_1,h_2,h_3,h_4]\in L:\ h_{k,i}=0\ \text{for }i\notin S_x\,(k{=}1,3),\
  \text{and }j\notin S_y\,(k{=}2,4)\,\}\, .
\]
Note that $\tZ \in \relint {\cF_{\tZ}}$.
\begin{lemma}[Ambient invariance of the face-tangent affine space]\label{cor:face_inv}
There exists a $\delta>0$ with $\tZ+(\Lpar\cap B_\delta(0))\subseteq\cD_{{{\Phi}_m}}$ and
\[
  {{\Phi}_m}\big(\tZ+(\Lpar\cap B_\delta(0))\big)\subseteq \tZ+\Lpar .
\]
Consequently ${J_{\Phi_m}}(\tZ)\,\Lpar\subseteq\Lpar$ and
$D\big(\restr{{{\Phi}_m}}{\tZ+\Lpar}\big)(\tZ)=\restr{{J_{\Phi_m}}(\tZ)}{\Lpar}$.
\end{lemma}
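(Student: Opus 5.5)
The plan is to verify the two defining conditions of the face-tangent affine space directly, and then apply Lemma~\ref{lem:jac_stable_subspace} with $V=\Lpar$, $\gz=\tZ$ and $\cO=\cD_{\Phi_m}$.

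First, I will choose $\delta$ so that $\tZ+(\Lpar\cap B_\delta(0))\subseteq\cD_{\Phi_m}$. Points $Z=[\px,\py,\ppx,\ppy]$ in this set satisfy three properties: their block coordinates sum to one, they vanish off $S_x$ (respectively $S_y$), and they are close to $\tZ$. Since $\tx_i>0$ on $S_x$ and $\ty_j>0$ on $S_y$, taking $\delta<\min\{\min_{i\in S_x}\tx_i,\min_{j\in S_y}\ty_j\}$ forces $\px_i>0$ for all $i\in S_x$ and $\py_j>0$ for all $j\in S_y$. In fact $Z\in\relint\cF_{\tZ}$. The normalizer $\sum_i \px_i\exp(\etax[A((1+m)\py-m\ppy)]_i)$ is then a sum over $S_x$ of strictly positive terms, so it is positive. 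The same holds for the $y$-normalizer, so $Z\in\cD_{\Phi_m}$.

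Second, I will check invariance. For such $Z$, write $\Phi_m(Z)=[\pxp,\pyp,\px,\py]$. The last two blocks are $\px,\py$, which lie in $\tx+\{h\in\RR^{d_x}:\ones^\top h=0,\ h_i=0\ \forall i\notin S_x\}$ and the analogous $y$-space by assumption. For the first two blocks:
\begin{itemize}
\item $P_{d_x}$ produces a vector summing to one.
\item Proposition~\ref{prop:invariance} (or the direct observation that the exponential factor multiplies coordinatewise) shows $\pxp_i=0$ whenever $\px_i=0$, in particular for $i\notin S_x$.
\end{itemize}
Hence $\pxp-\tx$ has zero sum and vanishes off $S_x$; likewise $\pyp-\ty$ has zero sum and vanishes off $S_y$. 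This gives $\Phi_m(Z)-\tZ\in\Lpar$.

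Finally, since $\cD_{\Phi_m}$ is open and $\Phi_m$ is $\cC^1$ (indeed smooth) there, Lemma~\ref{lem:jac_stable_subspace} applies and yields both $J_{\Phi_m}(\tZ)\Lpar\subseteq\Lpar$ and $D(\restr{\Phi_m}{\tZ+\Lpar})(\tZ)=\restr{J_{\Phi_m}(\tZ)}{\Lpar}$.

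The only mildly delicate point is the first step. The perturbation directions in $\Lpar$ may be negative in some coordinates, so positivity of the normalizer is not automatic, and it is exactly the choice of $\delta$ below the minimal on-support mass that secures it. Everything else is bookkeeping.
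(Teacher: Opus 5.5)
Your proposal is correct and follows essentially the same route as the paper. You choose $\delta$ small so that the normalizers are sums of strictly positive on-support terms, use Proposition~\ref{prop:invariance} and the normalization by $P_{d_x},P_{d_y}$ to show that the support and block-sum constraints of $\tZ+\Lpar$ are preserved, and then invoke Lemma~\ref{lem:jac_stable_subspace} with $V=\Lpar$. Your explicit choice $\delta<\min\{\min_{i\in S_x}\tx_i,\min_{j\in S_y}\ty_j\}$ just makes concrete the paper's ``$\delta$ small enough''.
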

\begin{proof}
Choose $\delta>0$ small enough that, for every $h\in \Lpar\cap B_\delta(0)$,
all on-support coordinates of $\tZ+h$ are positive. Since $h\in\Lpar$, the
off-support coordinates of $\tZ+h$ are zero, and the four block sums are still
equal to $1$. Thus $\tZ+h\in \cD_{\Phi_m}$ for $\delta$ small enough.

Let $U:=\tZ+h$. By Proposition~\ref{prop:invariance}, the first two blocks of
$\Phi_m(U)$ have no mass outside $S_x$ and $S_y$, respectively. The last
two blocks of $\Phi_m(U)$ are the first two blocks of $U$, and hence are also
supported on $S_x$ and $S_y$. Moreover, the first two blocks are normalized
by $P_{d_x}$ and $P_{d_y}$, while the last two blocks already have coordinate
sum $1$. Hence, all four blocks of $\Phi_m(U)$ have the same support
constraints and affine-sum constraints as $\tZ+\Lpar$. Therefore
\[
  \Phi_m\bigl(\tZ+(\Lpar\cap B_\delta(0))\bigr)\subseteq \tZ+\Lpar .
\]

The final assertion follows from Lemma~\ref{lem:jac_stable_subspace} applied
with $\gz=\tZ$, $V=\Lpar$, and $f=\Phi_m$.
\end{proof}
 
\begin{theorem}[Constrained instability of optEW]\label{thm:constraint_unstable}
Let $\tZ\in\prodSimplex$ be a fixed point of ${{\Phi}_m}$. If
$\SR{{J_{\Phi_m}}(\tZ)|_L}>1$, then $\tZ$ is not stable for
$\restr{{{\Phi}_m}}{\prodSimplex}$.
\end{theorem}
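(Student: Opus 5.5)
The eigenvalue of modulus greater than one comes either from the face-tangent subspace $\Lpar$ or from the off-support set $W$, and I would treat these two sources separately.

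By Theorem~\ref{thm:jac_spectrum} and Lemma~\ref{lem:detailed_spectrum_1}, every nonzero eigenvalue of $J_{\Phi_m}(\tZ)|_L$ lies in $W$ or is produced by Case~3. In Case~3 the eigenvector is supported on the supports, i.e.\ it lies in $\Lpar_{\CC}$. So if $\SR{J_{\Phi_m}(\tZ)|_L}>1$, then either
\[
\SR{\restr{J_{\Phi_m}(\tZ)}{\Lpar}}>1
\]
or some $\V_{1,k}$ (or $\V_{2,\ell}$) with $k\notin S_x$ (resp.\ $\ell\notin S_y$) exceeds $1$.

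\textbf{Case A: the eigenvalue lives on $\Lpar$.}
\begin{itemize}
\item By Lemma~\ref{cor:face_inv}, $\Phi_m$ maps $\tZ+(\Lpar\cap B_\delta(0))$ into $\tZ+\Lpar$, and the differential of the restriction is $\restr{J_{\Phi_m}(\tZ)}{\Lpar}$.
\item Theorem~\ref{thm:stable_linear}, applied with $V=\Lpar$ on an open neighbourhood $\cO\subseteq\cD_{\Phi_m}$ of $\tZ$, then shows that $\tZ$ is unstable relative to $\tZ+\Lpar$.
\item Near $\tZ$, the set $\tZ+\Lpar$ coincides with $\cF_{\tZ}\subseteq\prodSimplex$, because $\tZ\in\relint\cF_{\tZ}$ and small perturbations keep the on-support coordinates positive.
\item Instability relative to a subset of $\prodSimplex$ containing $\tZ$ implies instability for $\restr{\Phi_m}{\prodSimplex}$, as in the remark after Theorem~\ref{thm:stable_linear}.
\end{itemize}
One subtlety needs checking here. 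Instability relative to $\tZ+\Lpar$ produces orbits that leave an $\epsilon$-ball. I must make sure these orbits stay in $\prodSimplex$, or at least remain in the face while inside the ball, until they exit. This holds because $\Phi_m$ maps the face $\cF_{\tZ}$ into itself (Proposition~\ref{prop:invariance}), so the orbit is a genuine $\prodSimplex$-orbit.

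\textbf{Case B: the eigenvalue comes from $W$.} Say $\V_{1,\hat\imath}>1$ for some $\hat\imath\notin S_x$. Then $(A\ty)_{\hat\imath}>\vx=(A\ty)_j$ for any $j\in S_x$, which is exactly alternative (I) of Lemma~\ref{lem:delta_epsilon_tau}; the $y$-side case gives alternative (II). That lemma supplies $\delta,\epsilon,\tau>0$ with exponential growth $R_{t+1}\ge e^{\epsilon}R_t$ and the bound $R<\tau$ inside $B_\delta(\tZ)$. Suppose, for contradiction, that $\tZ$ is stable, and use the stability definition with the target radius $\delta$. This gives $\delta'$ such that every orbit started within $\delta'$ stays within $\delta$.
\begin{itemize}
\item Choose $Z_0\in\prodSimplex$ within $\delta'$ of $\tZ$ with $[x_0]_{\hat\imath}>0$ and $[x_0]_j>0$, for instance a small convex perturbation of $\tZ$ (which is allowed since $\tZ\in\prodSimplex$).
\item Then $R_0>0$ and the ratio stays well defined by Proposition~\ref{prop:invariance}.
\item Growth gives $R_t\ge e^{\epsilon t}R_0\to\infty$, which contradicts $R_t<\tau$.
\end{itemize}

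The main obstacle is bookkeeping rather than a hard estimate. I need to verify carefully that the spectrum of $\restr{J_{\Phi_m}(\tZ)}{L}$ really splits as the spectrum on $\Lpar$ together with $W$, including the case where an eigenvector has both on- and off-support components. Cases~1 and~2 of Lemma~\ref{lem:detailed_spectrum_1} force $\lambda\in W$ whenever an off-support component is nonzero, and Case~3 eigenvectors lie in $\Lpar_{\CC}$. Together these settle the dichotomy.
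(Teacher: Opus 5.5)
Your proposal is correct and follows essentially the same route as the paper. The paper likewise splits $\SR{J_{\Phi_m}(\tZ)|_L}>1$ into two cases: an off-support eigenvalue in $W$, handled by the ratio-growth argument of Lemma~\ref{lem:delta_epsilon_tau} after a small convex perturbation that puts mass on the profitable off-support action; and an in-face eigenvalue with eigenvector in $(\Lpar)_{\CC}$, handled by Lemma~\ref{cor:face_inv}, Theorem~\ref{thm:stable_linear} and $\tZ\in\relint\cF_{\tZ}$. Two details of yours are slightly more careful than the paper's write-up rather than a different approach: you obtain the dichotomy directly from the eigenvector-support cases of Lemma~\ref{lem:detailed_spectrum_1}, avoiding the $Q_m$ characterization and the $n_x\ge n_y$ reduction, and you check explicitly that $\cF_{\tZ}$ is forward invariant, so the escaping orbits are genuine $\prodSimplex$-orbits.
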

\begin{proof}
 
By Theorem~\ref{thm:jac_spectrum}, the nonzero spectrum of
${J_{\Phi_m}}(\tZ)|_L$ is $W\cup Q_m^{-1}\big(\eig{\etax\etay M_x(\tZ)|_{L_x}}\big)$ if $n_x \geq n_y$ and by Remark~\ref{rem:yIdentity}, we have the analogous identity with $M_y(\tZ)|_{L_y}$ if $n_y \geq n_x$. Since the proof in the second case follows the same argument, we assume without loss of generality that $n_x \geq n_y$.
 In this case, $\SR{{J_{\Phi_m}}(\tZ)|_L}>1$ forces (a) some $ \V\in W$ with $ \V>1$, or
(b) some $\mu\in\eig{\etax\etay M_x(\tZ)|_{L_x}}$ with a preimage
$\lambda\in Q_m^{-1}(\mu)$, $|\lambda|>1$. We treat the two cases separately.

\smallskip  
\noindent\textbf{(a) Off-support instability.}
Suppose first that $\V_{1,i}>1$ for some $i\notin S_x$.
Then
\[
  (A\ty)_i>{\vx}=(A\ty)_j
  \qquad
  \text{for every } j\in S_x .
\]
Fix such a $j\in S_x$. We apply
Lemma~\ref{lem:delta_epsilon_tau}, case \textnormal{(I)}, to the fixed point
\[
  \tZ=\Zfp.
\]
Now perturb $\tZ$ by giving the off-support coordinate $i$ a small positive mass.
For $\varepsilon>0$, let
$  x^\varepsilon:=(1-\varepsilon)\tx+\varepsilon e_i
$
and define
$
  Z^\varepsilon_0:=(x^\varepsilon,\ty,x^\varepsilon,\ty).
$ 
Note that this perturbation ensures that $[x_t]_i > 0$. This allows us to apply Lemma~\ref{lem:delta_epsilon_tau}.
Hence there exist constants $\delta>0$, $\epsilon_0>0$, and $\tau>0$ such that any orbit
satisfying $\|Z_t-\tZ\|<\delta$ also satisfies the exponential growth, while every point $Z'\in\prodSimplex$ with $\|Z'-\tZ\|<\delta$ satisfies
 the ratio bound.

Then $Z^\varepsilon_0\in\prodSimplex$,
$
  \|Z^\varepsilon_0-\tZ\|=O(\varepsilon),
$
and
$
  R_0=\frac{x^\varepsilon_i}{x^\varepsilon_j}>0 .
$
By Proposition~\ref{prop:invariance}, coordinates $i$ and $j$ remain positive along the
orbit, so $R_t$ is well defined for all $t$.
Assume for contradiction that the orbit initialized at $Z^\varepsilon_0$ remains in $B_\delta(\tZ)$ for all $t\ge 0$. Then Lemma~\ref{lem:delta_epsilon_tau} gives
\[
  R_t\ge e^{\epsilon_0 t}R_0 \, .
\]
Since $R_0>0$, this implies $R_t\to+\infty$. On the other hand, since the orbit
remains in $B_\delta(\tZ)$, the bounded-ratio part of the same lemma gives
\[
  R_t<\tau
  \qquad
  \text{for all } t\ge 0,
\]
a contradiction. Therefore, the orbit initialized at $Z^\varepsilon_0$ must leave
$B_\delta(\tZ)$.

Since $Z^\varepsilon_0\to \tZ$ as $\varepsilon\downarrow0$, we have constructed
arbitrarily small perturbations of $\tZ$ whose orbits leave the fixed  neighbourhood
$B_\delta(\tZ)$. Hence $\tZ$ is not stable.

The case $\V_{2,k}>1$ for some $k\notin S_y$ is identical, using
Lemma~\ref{lem:delta_epsilon_tau}, case \textnormal{(II)}.   
  
\smallskip
\noindent\textbf{(b) Support (in-face) instability.}
Here the unstable eigenvector lies in $(\Lpar)_{\CC}$ (it is supported on $S_x,S_y$;
Lemma~\ref{lem:detailed_spectrum_1}, Case~3), and
$\SR{{J_{\Phi_m}}(\tZ)|_{\Lpar}}>1$. By Lemma~\ref{cor:face_inv}, ${{\Phi}_m}$
maps a (ambient) neighbourhood of $\tZ$ in $\tZ+\Lpar$ into
$\tZ+\Lpar$ and $\restr{{J_{\Phi_m}}(\tZ)}{\Lpar}$ is the corresponding
differential, so Theorem~\ref{thm:stable_linear} applies with
$V=\Lpar$. Hence, $\tZ$ is unstable relative to $\tZ+\Lpar$. Because
$\tZ\in\relint{\cF_{\tZ}}$, a relative neighbourhood of $\tZ$ in
$\tZ+\Lpar$ is contained in ${\cF_{\tZ}}\subseteq\prodSimplex$; instability
relative to $\tZ+\Lpar$ therefore gives instability relative to
$\prodSimplex$. \end{proof}

\begin{lemma}[Invariance by ${\Phi_m}$ of the local space]
  \label{lem:local_invariance}
 For any fixed point $ \tZ $, there exists $\delta > 0$ such that
 $ {\Phi_m} \bigl( \tZ + (L\cap B_{\delta}(0)) \bigr) \subseteq  \tZ + L $.
\end{lemma}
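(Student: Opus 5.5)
The plan is to mimic the proof of Lemma~\ref{cor:face_inv}, replacing the face-tangent space $\Lpar$ by the full tangent space $L$. There are two things to check for $U:=\tZ+h$ with $h\in L\cap B_\delta(0)$. First, $U$ must lie in the domain $\cD_{\Phi_m}$. Second, every block of $\Phi_m(U)$ must have coordinate sum equal to $1$. Together these give $\Phi_m(U)-\tZ\in L$. Positivity plays no role here, because $L$ only records the affine constraints $h_k^\top\ones=0$.

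\textbf{Step 1 (domain).} Write $U=[\px,\py,\ppx,\ppy]$. The two denominators in $\cD_{\Phi_m}$ are
\[
N_x(U)=\sum_{i}\px_i\exp\bigl(\etax[A((1+m)\py-m\ppy)]_i\bigr),\qquad N_y(U)=\sum_j\py_j\exp\bigl(\etay[B((1+m)\px-m\ppx)]_j\bigr).
\]
Both are continuous functions of $U$ on all of $\RR^{2(d_x+d_y)}$. At $\tZ$ they are strictly positive: $\tx\in\Delta_{d_x}$ has a nonzero nonnegative entry and the exponential factors are positive, and the same holds for $\ty$. By continuity there is therefore $\delta>0$ such that $N_x(U),N_y(U)\neq 0$ whenever $\norm{U-\tZ}<\delta$. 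In particular $\tZ+(L\cap B_\delta(0))\subseteq\cD_{\Phi_m}$.

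\textbf{Step 2 (affine constraints).} For such $U$, the first block of $\Phi_m(U)$ is $P_{d_x}(\cdot)$ applied to a vector with nonzero sum, so its entries sum to $1$; the same holds for the second block. The third and fourth blocks of $\Phi_m(U)$ are $\px=\tx+h_1$ and $\py=\ty+h_2$. Since $\ones^\top h_1=\ones^\top h_2=0$ and $\tx,\ty$ are in the simplices, these blocks also sum to $1$. Subtracting $\tZ$, whose four blocks each sum to $1$, shows that $\Phi_m(U)-\tZ$ has all four block sums equal to zero, that is, $\Phi_m(U)-\tZ\in L$.

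The only potentially delicate point is Step~1. Near boundary fixed points, $U$ may have negative coordinates, so the denominators are no longer sums of positive terms, and $P_d$ could in principle be undefined. The continuity argument at $\tZ$ handles this, because we only need nonvanishing in a small ball. As a byproduct, Lemma~\ref{lem:jac_stable_subspace} then gives $J_{\Phi_m}(\tZ)L\subseteq L$, which is the first claim of Theorem~\ref{thm:stab_from_jac}.
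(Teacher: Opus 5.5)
Your proof is correct and takes essentially the same approach as the paper. Both arguments use continuity of the two normalizers, which are strictly positive at $\tZ$, to obtain a ball on which $\Phi_m$ is defined. Then the first two blocks sum to one because of the normalization $P_d$, and the last two blocks are $\tx+h_1$ and $\ty+h_2$ with $\ones^\top h_1=\ones^\top h_2=0$, so $\Phi_m(\tZ+h)\in\tZ+L$.
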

\begin{proof}
  Let $\tZ=\Zfp$ and $H=[h_1,h_2,h_3,h_4]\in L$. We use the block structure of the operator $\Phi_m = [(\Phi_m)_1,(\Phi_m)_2,(\Phi_m)_3,(\Phi_m)_4]$, where $(\Phi_m)_k:\RR^{2(d_x + d_y)} \rightarrow \RR^{d_x}$ for $k = 1,3$ and $(\Phi_m)_k:\RR^{2(d_x + d_y)} \rightarrow \RR^{d_y}$ for $k= 2,4$.
  Since
  \[
    \bigl(\tx\odot \Exp(\etax A\ty)\bigr)^\top \mathbf 1>0
    \qquad\text{and}\qquad
    \bigl(\ty\odot \Exp(\etay B\tx)\bigr)^\top \mathbf 1>0,
  \]
  continuity implies that there exists $\delta>0$ such that for every
  $H\in L\cap B_\delta(0)$ the first two coordinates of ${\Phi_m}(\tZ+H)$ are well-defined.
  By definition of the projections $P_{d_x}$ and $P_{d_y}$, we then have
  \[
    ({\Phi_m})_1(\tZ+H)^\top \mathbf 1 = 1
    \qquad\text{and}\qquad
    ({\Phi_m})_2(\tZ+H)^\top \mathbf 1 = 1.
  \]
  Moreover, the last two coordinates of ${\Phi_m}$ are just copies of the first two
  coordinates of the input, so
  \[
    ({\Phi_m})_3(\tZ+H) = \tx+h_1
    \qquad\text{and}\qquad
    ({\Phi_m})_4(\tZ+H) = \ty+h_2.
  \]
  Since $H\in L$, we have $h_1^\top \mathbf 1 = h_2^\top \mathbf 1=0$, hence
  \[
    ({\Phi_m})_3(\tZ+H)^\top \mathbf 1 = (\tx+h_1)^\top \mathbf 1 = 1
    \qquad\text{and}\qquad
    ({\Phi_m})_4(\tZ+H)^\top \mathbf 1 = (\ty+h_2)^\top \mathbf 1 = 1.
  \]
  Therefore all four blocks of ${\Phi_m}(\tZ+H)$ satisfy the affine constraints defining
  $\tZ+L$, so ${\Phi_m}(\tZ+H)\in \tZ+L$ for every $H\in L\cap B_\delta$.
\end{proof}

We are now ready to prove Theorem~\ref{thm:stab_from_jac}.

\begin{proof}[Proof of Theorem~\ref{thm:stab_from_jac}]
By Lemma~\ref{lem:local_invariance}, there exists $\delta>0$ such that
\[
  {\Phi_m}\bigl(\tZ + (L\cap B_\delta(0))\bigr)\subset \tZ+L.
\]
Applying Lemma~\ref{lem:jac_stable_subspace} with $\gz=\tZ$, $V=L$, and
$f={\Phi_m}$, we obtain
\[
  D{\Phi_m}(\tZ)L\subset L
  \qquad\text{and}\qquad
  D\bigl({\Phi_m}|_{(\tZ+L)\cap \cD_{\Phi_m}}\bigr)(\tZ)=D{\Phi_m}(\tZ)|_L.
\]
This proves the first claim.

Assume first that $\SR{J_{\Phi_m}(\tZ)|_L}<1$.
Applying Theorem~\ref{thm:stable_linear} with $V=L$, $\gz=\tZ$, and $F={\Phi_m}$
shows that $\tZ$ is asymptotically stable relative to $\tZ+L$. Since
$\prodSimplex \subset \tZ+L$, this implies that $\tZ$ is asymptotically stable for ${\Phi_m}|_{\prodSimplex}$.

Assume now that $\SR{J_{\Phi_m}(\tZ)|_L}>1$.
Applying Theorem~\ref{thm:constraint_unstable} gives the result.
\end{proof}

\section{Proofs of Section \ref{sec:consequences}}\label{apx:consequences}

\subsection{Proof of Corollary \ref{cor:instability}}\label{apx:corMsmall}
\begin{proof}
For a mixed equilibrium, $\dim (L_x) \geq 1$, hence $\eig{\restr{M_x(\Zs)}{L_x}} \neq \varnothing$.  Furthermore, since $\restr{M_x(\Zs)}{L_x}$ is non-singular, $0\notin \eig{\restr{M_x(\Zs)}{L_x}}$. Hence, there exists a $\mu \in \eig{\restr{M_x(\Zs)}{L_x}}$ with $\mu \neq 0$.  
Solving $Q_m(\lambda) = \mu$ reduces to finding the roots of the polynomials
\begin{align*}
\begin{cases}
p_+(\lambda) &= \lambda^2 - (1+(m+1)  s) \lambda + ms\, ,\\
p_-(\lambda) &= \lambda^2 - (1-(m+1)  s) \lambda - m s\, ,
\end{cases}
\end{align*}
where $s^2 = \mu$.
We will show that not all roots can simultaneously lie in the closed unit disk $\DD$. To simplify notation, let $a_+ := (1+(m+1) s)$ and $ b_+ := ms$, similarly $a_- := (1-(m+1) s) $ and $b_- := -m s$. Assume, for contradiction, that all roots lie in $\DD$. By the  Schur--Cohn criterion for quadratic polynomials (cf. Lemma~\ref{lem:scNecCond}), $\absv{a_+ - \bar a_+ b_+} \leq 1 - \absv{b_+}^2$ and $\absv{a_- - \bar a_- b_-} \leq 1 - \absv{b_-}^2$. Hence
\begin{align*}
\begin{cases}
&\absv{1 +  s - m(m+1) \absv{ s}^2} \leq 1 - m^2\absv{ s}^2\, ,\\[1ex]
&\absv{1 - s - m(m+1) \absv{ s}^2} \leq 1 - m^2\absv{ s}^2\, .
\end{cases}
\end{align*}
Squaring and adding gives
\begin{align}\label{ineq:roots3727} 
\absv{1 +  s - m(m+1) \absv{ s}^2}^2 + \absv{1 - s - m(m+1) \absv{ s}^2}^2 \leq 2(1 - m^2\absv{ s}^2)^2\, .
 \end{align}
Note that $1- m(m+1) \absv{ s}^2 \in \RR$. Hence, we use that for $r \in \RR$ and $c \in \CC$, $\absv{r+c}^2 + \absv{r-c}^2 = 2r^2 + r(c+\bar c) - r(c+\bar c) + 2\absv{c}^2$.    Thus, \eqref{ineq:roots3727} is equivalent to
\begin{align*} 
&2(1 - m(m+1) \absv{ s}^2)^2 + 2\absv{ s }^2 \leq 2(1 - m^2\absv{ s}^2)^2\\
  \Leftrightarrow \qquad&\absv{ s}^2\left((1-2m) + m^2(2m+1) \absv{s}^2 \right) \leq 0\, .
 \end{align*}
 However, since $\absv{s}^2 > 0$ and $m \in (0,1/2]$, we have $(1-2m) \geq 0$ and $m^2(2m+1) \absv{s}^2 > 0$. Thus
 \begin{align*}
 \absv{ s}^2\left((1-2m) + m^2(2m+1) \absv{s}^2 \right) > 0\, ,
 \end{align*}
 which gives a contradiction. Thus, there exists a $\lambda \in Q_m^{-1}(\mu)$, such that $\lambda \not \in \DD$ and therefore $\absv{\lambda} > 1$.  Hence $\SR{J_{\Phi_m}(\Zs)|_L}>1$, and Theorem~\ref{thm:stab_from_jac} gives the claim. %
\end{proof}
We provide the (specialized version) of the Schur--Cohn criterion for the convenience of the reader. 
\begin{lemma}[Schur--Cohn criterion]\label{lem:scNecCond}
Consider a quadratic polynomial
$p(\lambda)=\lambda^2-a\lambda+b$
with $a,b\in\CC$.  
Then both roots lie in the open unit disk if and only if
\[
\absv{b}<1
\qquad\text{and}\qquad
\absv{a-\bar a b}<1-\absv{b}^2\, .
\]
 If both roots lie in the closed unit disk, then
\[
\absv{b}\leq 1
\qquad\text{and}\qquad
\absv{a-\bar a b}\leq 1-\absv{b}^2\, .
\]

  \end{lemma}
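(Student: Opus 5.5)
The plan is to reduce to the real-coefficient case by the standard trick of factoring $p$ and using Schur's reduction, so that everything rests on Rouché-type or explicit root estimates. Write $p(\lambda)=(\lambda-\lambda_1)(\lambda-\lambda_2)$, so that $a=\lambda_1+\lambda_2$ and $b=\lambda_1\lambda_2$. Also introduce the reversed polynomial $p^*(\lambda)=\lambda^2\overline{p(1/\bar\lambda)}=1-\bar a\lambda+\bar b\lambda^2$.

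The classical Schur transform is
\[
p_1(\lambda)=\frac{\overline{p^*(0)}\,p(\lambda)-\overline{p(0)}\,p^*(\lambda)}{\lambda}.
\]
It is proportional to the linear polynomial $(1-|b|^2)\lambda-(a-\bar a b)$.

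\textbf{Open-disk equivalence via Schur's theorem.} Schur's theorem says that both roots of $p$ lie in the open disk if and only if $|p(0)|<|p^*(0)|$, i.e.\ $|b|<1$, and the root $(a-\bar ab)/(1-|b|^2)$ of $p_1$ lies in the open disk. The latter is exactly $|a-\bar a b|<1-|b|^2$.

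For completeness I would prove Schur's step directly with Rouché. On $|\lambda|=1$ one has $|p^*(\lambda)|=|p(\lambda)|$. If $|b|<1$, then $p$ and $\bar b\,p^*-\ldots$ differ in a controlled way. Concretely, $\overline{p^*(0)}p-\overline{p(0)}p^*=p-\bar b p^*$, and on the circle $|\bar b p^*|=|b||p|<|p|$ unless $p$ vanishes there. So $p$ and $\lambda p_1$ have the same number of zeros in the open disk.

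\textbf{Boundary roots.} If $p$ has a root on the circle, both sides of the equivalence fail. On the left this is clear. On the right, a root $\lambda_0$ on the circle is also a root of $p^*$, hence of $\lambda p_1$, so $p_1$ has a root on the circle. That gives equality $|a-\bar ab|=1-|b|^2$ and contradicts the strict inequality.

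\textbf{Closed-disk necessity.} For the closed-disk statement I would use a limiting argument. If both roots lie in the closed disk, then for $r>1$ the polynomial $p_r(\lambda)=p(r\lambda)/r^2=\lambda^2-(a/r)\lambda+b/r^2$ has both roots in the open disk. The open criterion then gives $|b|/r^2<1$ and $|a/r-\bar a b/r^3|<1-|b|^2/r^4$. Letting $r\downarrow1$ yields the non-strict inequalities by continuity.

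\textbf{Main obstacle.} The delicate point is the Rouché step when $|b|<1$ but $p$ has zeros on the unit circle, since strict inequality on the circle fails there. I would handle it exactly as in the boundary-root step: treat roots on the circle separately, and otherwise apply Rouché on circles of radius slightly larger than $1$.
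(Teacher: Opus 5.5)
Your argument is correct in substance once one conjugation slip is fixed, and it takes a different route from the paper. As written, $p_1=(\overline{p^*(0)}\,p-\overline{p(0)}\,p^*)/\lambda=(p-\bar b\,p^*)/\lambda$ is not a polynomial. Indeed $p-\bar b\,p^*=(1-\bar b^{2})\lambda^2-(a-\bar a\bar b)\lambda+(b-\bar b)$, whose constant term vanishes only for real $b$. The transform you want is $p^*(0)\,p-p(0)\,p^*$ without the bars, i.e.\ $p-b\,p^*=(1-|b|^2)\lambda^2-(a-\bar a b)\lambda$. This gives exactly the linear factor you claim.

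Since $|b\,p^*|=|b|\,|p|$ on $|\lambda|=1$, your Rouch\'e comparison still works with this correction. So does your boundary-root step: a unimodular root of $p$ is a root of $p^*$, hence of $p_1$, which forces $|a-\bar a b|=1-|b|^2$ because $1-|b|^2>0$. The closing remark about Rouch\'e on circles of radius slightly larger than $1$ is not needed. Once unimodular roots are excluded (by hypothesis in one direction, by your boundary step in the other), you apply Rouch\'e on the unit circle itself. The remark would also fail as stated, since $|p^*|\neq|p|$ off the unit circle. The scaling argument with $p(r\lambda)/r^2$ and $r\downarrow 1$ for the closed-disk statement is fine.

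The paper avoids complex analysis entirely. With $r=|\lambda_+|$ and $s=|\lambda_-|$, it uses the identity $a-\bar a b=\lambda_+(1-s^2)+\lambda_-(1-r^2)$.
\begin{itemize}
\item Necessity follows from the triangle inequality together with $1-r^2s^2-r(1-s^2)-s(1-r^2)=(1-r)(1-s)(1+r+s+rs)\ge 0$. This is strict when $r,s<1$, which also covers the open-disk necessity.
\item Sufficiency is by contradiction. If $r\ge 1>s$, the reverse triangle inequality gives $|a-\bar a b|\ge (r+s)(1-rs)\ge 1-r^2s^2$, because $(r-1)(1-s)\ge 0$.
\end{itemize}

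That proof is short, purely algebraic and self-contained, but it is tied to degree two. Yours needs the argument principle. In return, it explains where $1-|b|^2$ and $a-\bar a b$ come from: they are the coefficients of the Schur transform. It is also the first step of the Schur--Cohn recursion, which extends to any degree. The two proofs run in opposite directions: the paper gets the closed-disk inequalities first and reads off the open case, while you prove the open case and pass to the closed one by a limit.
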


\begin{proof}
Denote the two roots by $\lambda_+,\lambda_-.$ Then
\[
a=\lambda_++\lambda_-,
\qquad
b=\lambda_+\lambda_- .
\]
We first prove the necessary direction. Assume that both roots lie in the closed unit
disk. Then
\[
\absv{b}
=
\absv{\lambda_+\lambda_-}
\leq 1 .
\]
Moreover,
\begin{align*}
a-\bar a b
&=
\lambda_+ + \lambda_-
-
(\bar\lambda_+ + \bar\lambda_-)\lambda_+\lambda_-        \\
&=
\lambda_+(1-\absv{\lambda_-}^2)
+
\lambda_-(1-\absv{\lambda_+}^2) .
\end{align*}
Thus, using the triangle inequality,
\begin{align*}
\absv{a-\bar a b}
&\leq
\absv{\lambda_+}(1-\absv{\lambda_-}^2)
+
\absv{\lambda_-}(1-\absv{\lambda_+}^2).
\end{align*}
Writing $r=\absv{\lambda_+}$ and $s=\absv{\lambda_-}$, we have
\begin{align*}
1-r^2s^2
-
r(1-s^2)
-
s(1-r^2)
=
(1-r)(1-s)(1+r+s+rs)
\geq 0 .
\end{align*}
Hence
\[
\absv{a-\bar a b}
\leq
1-r^2s^2
=
1-\absv{b}^2 .
\]
For the sufficient direction, assume
\[
\absv{b}<1
\qquad\text{and}\qquad
\absv{a-\bar a b}<1-\absv{b}^2 .
\]
Suppose, for contradiction, that at least one root does not lie in the open unit disk.
Without loss of generality, let
\[
r=\absv{\lambda_+}\geq 1,
\qquad
s=\absv{\lambda_-}.
\]
Since $\absv{b}=rs<1$, we have $s<1$. Using the same identity as above and the
reverse triangle inequality,
\begin{align*}
\absv{a-\bar a b}
&=
\absv{
\lambda_+(1-s^2)
+
\lambda_-(1-r^2)
}        \\
&\geq
r(1-s^2)-s(r^2-1)        \\
&=
(r+s)(1-rs).
\end{align*}
Since $r\geq 1$ and $s<1$,
\[
r+s-(1+rs)
=
(r-1)(1-s)
\geq 0 .
\]
Therefore
\[
\absv{a-\bar a b}
\geq
(r+s)(1-rs)
\geq
(1+rs)(1-rs)
=
1-r^2s^2
=
1-\absv{b}^2 ,
\]
which contradicts the assumed strict inequality. Hence, both roots must lie in the open
unit disk.
 \end{proof}
  
\subsection{Technical Results for the Proofs of Theorem~\ref{thm:2x2games} and Theorem~\ref{thm:stabilityZSG}}
To show Theorem~\ref{thm:2x2games} and Theorem~\ref{thm:stabilityZSG}, we need the following technical results. For convenience, define $\sigma := \etax \etay \SR{M_x(\tZ)|_{L_x}}$, and denote $\sigma^\star(m) := \frac{2m -1}{m^2(2 m+ 1)} $. 
   Define the polynomial in $\lambda \in \CC\setminus\{m/(m+1) \}$ with coefficients $\sigma, m \in \RR$: 
\begin{align}
  P_{\sigma,m}(\lambda) := \lambda^2(1-\lambda)^2 + \sigma((m+1)\lambda -m)^2\, .
\end{align}
Note that $\lambda$ is a root of $P_{\sigma,m}$ if and only if $Q_m(\lambda) = -\sigma$.  Using the continuity of the roots of a polynomial with respect to its coefficients, we will show that
\begin{enumerate}
\item the roots cross the boundaries of the unit disk at $\sigma = 0$ and $\sigma = \sigma^\star(m)$ (see Lemma~\ref{lem:uniteCycle}); and
\item the roots lie in the unit disk for $\sigma \in (0,\sigma^\star(m) )$ and outside or on the boundary otherwise (see Lemma~\ref{lem:roots}). 
\end{enumerate}
See Figure~\ref{fig:rootsD3D4} for illustration. 
 \begin{figure}[h]
\centering

\begin{minipage}{.45\linewidth}
\centering
\includegraphics[width=\linewidth]{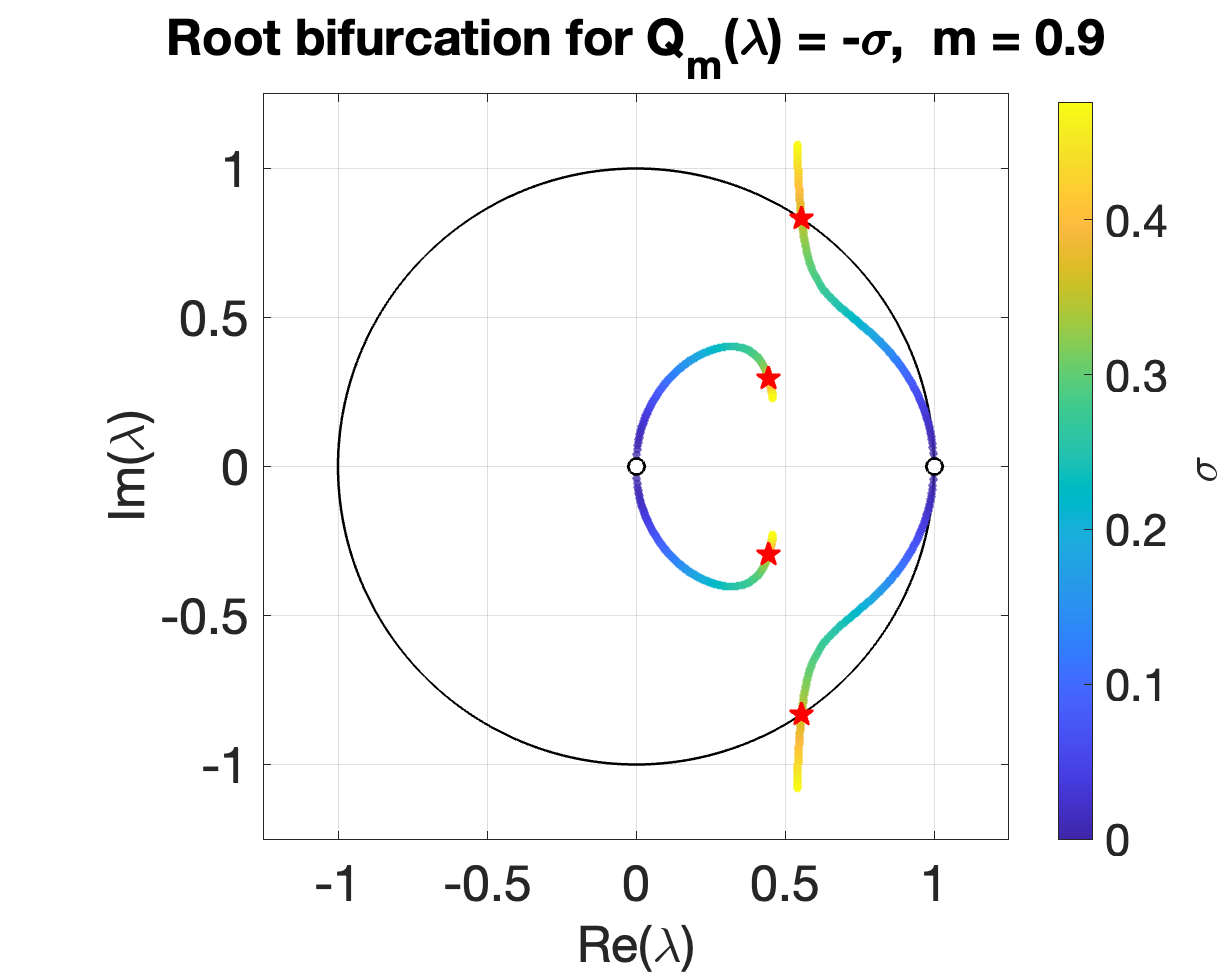}

\end{minipage}
\hfill
\begin{minipage}{.45\linewidth}
\centering
\includegraphics[width=\linewidth]{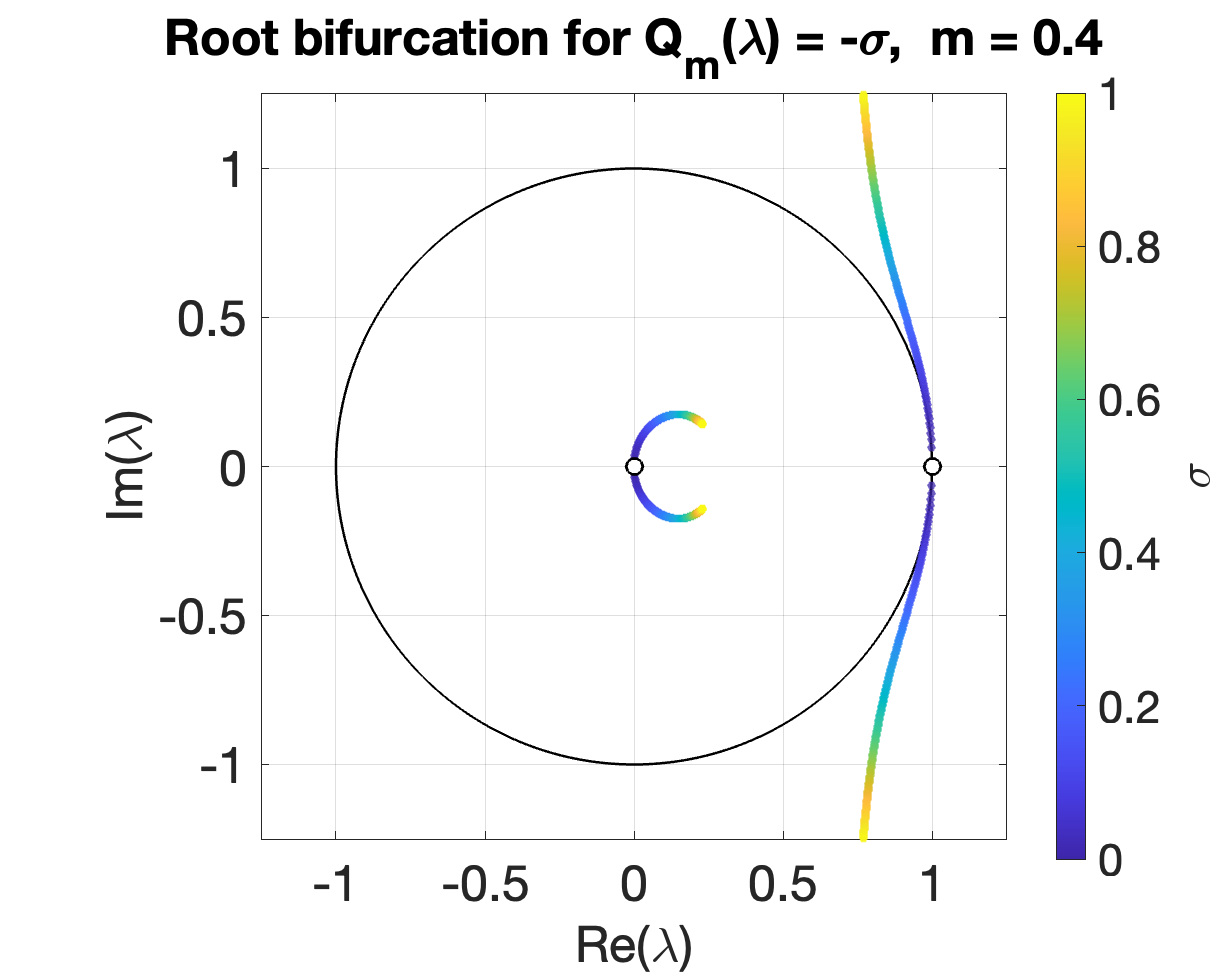}

\end{minipage}
\caption{Illustration of the proof of Lemma~\ref{lem:uniteCycle} and Lemma~\ref{lem:roots}. The double roots for $\sigma = 0$ are marked with a white circle, and the roots for $\sigma = \sigma^\star(m)$ with a red star.}\label{fig:rootsD3D4}
\end{figure}

\paragraph{Technical Details: }
\begin{lemma}\label{lem:cont}  
The multi-set of roots $\lambda \in \CC \setminus \{ m/(m+1) \}$ of $P_{\sigma,m}(\lambda)$ depends continuously on $\sigma$.
\end{lemma}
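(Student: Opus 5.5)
The plan is to rest the claim on the classical fact that the roots of a monic polynomial depend continuously on its coefficients, a consequence of Rouché's theorem or Hurwitz's theorem. We then check that nothing degenerates: no leading coefficient vanishes and the excluded point $m/(m+1)$ never becomes a root.

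First, expand $P_{\sigma,m}(\lambda)=\lambda^2(1-\lambda)^2+\sigma((m+1)\lambda-m)^2$. It is a polynomial of degree exactly four with leading coefficient $1$ for every $\sigma\in\RR$, since the $\sigma$-term has degree only two. All coefficients are affine, hence continuous, functions of $\sigma$.

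Second, check that $\lambda_0=m/(m+1)$ is never a root for $\sigma$ in the relevant range. At this point the second summand vanishes, so
\[
P_{\sigma,m}(\lambda_0)=\lambda_0^2(1-\lambda_0)^2=\frac{m^2}{(m+1)^4}\neq 0
\]
for $m\in(0,1]$. Hence the roots of $P_{\sigma,m}$ in $\CC\setminus\{m/(m+1)\}$ are the full multiset of its four roots, counted with multiplicity. These are also exactly the solutions of $Q_m(\lambda)=-\sigma$.

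Third, prove continuity directly. Fix $\sigma_0$ and let $\lambda_1,\dots,\lambda_r$ be the distinct roots of $P_{\sigma_0,m}$, with multiplicities $k_1,\dots,k_r$. Choose $\varepsilon>0$ so that the discs $B_\varepsilon(\lambda_i)$ are disjoint. On each circle $|\lambda-\lambda_i|=\varepsilon$, the quantity $|P_{\sigma_0,m}|$ is bounded below by some $c>0$. Moreover,
\[
|P_{\sigma,m}-P_{\sigma_0,m}|=|\sigma-\sigma_0|\,|(m+1)\lambda-m|^2,
\]
which is uniformly small on these compact circles once $|\sigma-\sigma_0|$ is small. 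Rouché's theorem then gives exactly $k_i$ roots of $P_{\sigma,m}$ in each disc. Since these counts sum to four, all roots are accounted for, so the multiset of roots varies continuously, for instance in the matching or Hausdorff sense.

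No step is a real obstacle. The only point needing care is to state the notion of continuity of a multiset precisely, for example via the optimal-matching distance, so that the lemma can later be used to track roots crossing the unit circle as $\sigma$ varies.
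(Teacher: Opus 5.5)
Your proposal is correct and takes the same route as the paper, which proves the lemma in one line by invoking continuity of polynomial roots in the coefficients. You go further by checking that this fact applies without degeneration: $P_{\sigma,m}$ stays monic of degree four for every $\sigma$, and $m/(m+1)$ is never a root since $P_{\sigma,m}(m/(m+1)) = m^2/(m+1)^4 \neq 0$. You also prove the classical fact itself through the standard Rouch\'e argument.
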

\begin{proof}
 This follows from the continuity of a polynomial's roots with respect to its coefficients.
\end{proof}
 To understand the existence of a root with modulus strictly less than or greater than $1$, i.e., when there exists a $\lambda$ inside or outside the unit circle, we need to understand when the roots pass through the boundary of the unit circle.

\begin{lemma}\label{lem:uniteCycle}
 Let $Q_m(\lambda) = -\sigma$ where $\sigma \in \RR_+$.
Suppose $m \neq \frac{1}{2}$ and $\absv{\lambda} = 1$.    Then  $\sigma \in \left\{ \sigma^\star(m) , 0\right\}$.
Further, if $m > \frac{1}{2}$ and $\sigma = \sigma^\star(m) $, then there exists a root $\lambda$ with $\absv{\lambda} = 1$.
 \end{lemma}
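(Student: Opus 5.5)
The plan is to reduce the condition $|\lambda|=1$ to a real equation in $\cos\theta$ by writing $\lambda=e^{i\theta}$. The starting observation is that $Q_m(\lambda)=-\sigma$ with $\sigma\in\RR_+$ holds exactly when the quantity
\[
r(\lambda):=\frac{\lambda(\lambda-1)}{(m+1)\lambda-m}
\]
satisfies $r(\lambda)^2=-\sigma\le 0$. Equivalently, $r(\lambda)$ is purely imaginary (possibly zero), and then $\sigma=|r(\lambda)|^2$. On the unit circle, $\lambda=m/(m+1)$ is impossible because $m/(m+1)<1$, so $r$ is well defined there.

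First, I will use $\bar\lambda=\lambda^{-1}$ to rewrite the purely-imaginary condition. Multiplying the numerator of $r$ by the conjugate of its denominator gives $\operatorname{Re}\big(\lambda(\lambda-1)((m+1)\bar\lambda-m)\big)=0$, and with $|\lambda|=1$ this becomes
\[
\operatorname{Re}\big((\lambda-1)(m+1-m\lambda)\big)=\operatorname{Re}\big((2m+1)\lambda-m\lambda^2-(m+1)\big)=0 .
\]
Setting $c=\cos\theta$ and using $\cos 2\theta=2c^2-1$, this reads $-2mc^2+(2m+1)c-1=0$, which factors as $-(2mc-1)(c-1)=0$. Hence either $c=1$ or $c=1/(2m)$.

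Next, I will treat the two roots separately.
\begin{itemize}
\item If $c=1$, then $\lambda=1$, so $r=0$ and $\sigma=0$.
\item If $c=1/(2m)$, which requires $m\ge 1/2$, I compute $\sigma=|r|^2=|\lambda-1|^2/|(m+1)\lambda-m|^2$. Here $|\lambda-1|^2=2-2c=(2m-1)/m$ and $|(m+1)\lambda-m|^2=(m+1)^2+m^2-2m(m+1)c=m(2m+1)$. This gives $\sigma=\frac{2m-1}{m^2(2m+1)}=\sigma^\star(m)$.
\end{itemize}
For $m<1/2$ the second root $c=1/(2m)>1$ is infeasible, so only $\sigma=0$ occurs; this is still contained in $\{\sigma^\star(m),0\}$. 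The hypothesis $m\ne 1/2$ is used only to keep the two cases distinct.

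For the existence claim with $m>1/2$, I reverse the computation. Take $\theta=\arccos(1/(2m))\in(0,\pi/2)$ and $\lambda=e^{i\theta}$. The identity above shows $r(\lambda)$ is purely imaginary, and it is nonzero since $\lambda\ne 1$. Therefore $Q_m(\lambda)=r(\lambda)^2=-|r(\lambda)|^2=-\sigma^\star(m)$, and $\lambda$ is a root with $|\lambda|=1$.

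There is no real obstacle in this argument. The only point needing care is the first reduction, namely that $r^2\in(-\infty,0]$ is equivalent to $\operatorname{Re}\, r=0$, together with the conjugation step that removes the denominator.
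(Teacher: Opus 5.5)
Your proof is correct and follows essentially the same route as the paper: parametrize $\lambda=e^{i\theta}$, reduce $Q_m(\lambda)\in(-\infty,0]$ to a trigonometric condition with solutions $\theta=0$ and $\cos\theta=1/(2m)$, and evaluate $Q_m$ at those angles. The paper's $\tan^2(\theta/2)=(2m-1)/(2m+1)$ is the same angle, and its $\xi+i\rho$ is just $r(e^{i\theta})/(i\sin\theta)$; your use of $r(\lambda)$ and $\overline{\lambda}=\lambda^{-1}$ is a tidier way to do that computation, yielding a factorable quadratic in $\cos\theta$ and avoiding the separate treatment of $\theta=\pi$ that the paper's $\tan(\theta/2)$ substitution requires.
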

\begin{proof}
First note that  $Q_m(\lambda)$ must be in $\RR$ by assumption. 
  Since $\absv{\lambda} = 1$ we write $ \lambda = e^{i \theta}$ and for easier notation define $c_m := 2m +1$. By Euler's identity, $e^{i \theta} -1 = e^{i \frac{\theta}{2}}(2 i \sin(\theta/2))$ and $(m+1)e^{i \theta/2} - m e^{- i  \theta/2} = \cos(\theta/2) +  i c_m \sin(\theta/2) $. Hence, using trigonometric identities, we find that $Q_m(e^{i\theta})$ is equal to
 \begin{align*} 
   &- \frac{4 \sin^2\left( \frac{\theta}{2}  \right) e^{2 i \theta}}{ \left(\cos\left( \frac{\theta}{2}\right)+ i c_m\sin\left( \frac{\theta}{2}\right)\right)^2}\nonumber \\ 
 &\overset{(1)}{=}   - 4 \sin^2\left( \frac{\theta}{2}  \right) \cos^2 \left( \frac{ \theta}{2}\right)\left( \frac{\left( 1+ i  \tan \left( \frac{\theta}{2}\right)\right)^2}{  1+ i c_m \tan \left( \frac{\theta}{2}\right)}\right)^2\nonumber\\
 &= - 4 \sin^2\left( \frac{\theta}{2}  \right) \cos^2 \left( \frac{ \theta}{2}\right)\left(  \underbrace{ \frac{1 -  \tan^2 \left(\frac{\theta}{2} \right) + 2 c_m  \tan^2 \left(\frac{\theta}{2} \right)}{1+c_m^2 \tan^2 \left(\frac{\theta}{2} \right)} }_{:= \xi}+ i \underbrace{\frac{ \tan \left(\frac{\theta}{2} \right) \left(c_m  \tan^2 \left(\frac{\theta}{2} \right) + 2 - c_m \right)}{1+c_m^2 \tan^2 \left(\frac{\theta}{2} \right)}}_{=: \rho} \right)^2\, .
 \end{align*}
 Since $\tan(\pi/2)$ is undefined, the equality in $(1)$ is invalid  if $\theta = \pi$, that is $\lambda = -1$. However, we note that $Q_m(-1) = \frac{4}{(2m +1)^2} > 0$, hence, this case is excluded by the assumption that $Q_m(\lambda)$ is negative. 
Since $Q_m(e^{i \theta}) = Q_m(\lambda)$ must be real, $\rho \xi $ must be zero. 
Since \[1 -  \tan^2 \left(\frac{\theta}{2} \right) + 2 c_m  \tan^2 \left(\frac{\theta}{2} \right) = 1 + (4m +1 )\tan^2 \left(\frac{\theta}{2} \right) > 0\, , \]
$\rho$ must be zero. This implies that  one of the following conditions must hold:
\begin{align*}
(1) \qquad \theta = 0 \qquad \text{ or, } \qquad (2) \qquad \tan^2 \left(\frac{\theta}{2} \right) = \frac{c_m -2}{ c_m}\,.
\end{align*}
 In Case $(1)$,   $  Q_m(e^{i 0})  = 0$, hence $\sigma = 0$, and in Case (2), we simplify $Q_m(e^{i \theta}) = -\frac{2m-1}{m^2(2m+1)}$ which implies that $\sigma = \sigma^\star(m)$. 
 
   It remains to show that for $m > \frac{1}{2}$,
$\sigma=\sigma^\star(m)$ is attained by a root on the
unit circle.  
 Choose
\[
  \theta := 2\arctan\left(\sqrt{\frac{c_m-2}{c_m}}\right),
  \qquad
  \lambda := e^{i\theta}.
\]
Due to the assumption $m > \frac{1}{2}$, $\frac{c_m-2}{c_m}=\frac{2m-1}{2m+1}>0$.
 Using $\sigma \in \RR$, hence $\rho = 0$, by the same computations as before,
 \[
  Q_m(\lambda)
  =
  -\frac{4(c_m-2)}{c_m(c_m-1)^2}
  =
  -\frac{2m-1}{m^2(2m+1)}
  =
  -\sigma^\star(m)\, 
\]
 and $\absv{\lambda} = 1$. 

\end{proof}

\begin{lemma}\label{lem:roots}
 If $m  > 1/2 $ and
\begin{enumerate}
\item $\sigma \in \left(0,\sigma^\star(m) \right)$, then every root of $P_{\sigma, m}(\lambda)$ satisfies that the modulus is bounded as $\absv{\lambda} < 1$;
\item $\sigma = \sigma^\star(m) $,  then at least one root of $P_{\sigma, m}(\lambda)$ lies  exactly on the boundary of the unit circle, i.e., $\absv{\lambda} = 1$;
\item $\sigma  > \sigma^\star(m) $, then at least one root of $P_{\sigma, m}(\lambda)$ lies outside the unit circle, i.e.,  $\absv{\lambda} > 1$;

\end{enumerate}
If $m \leq 1/2$ and $\sigma \geq 0$, at least one root of $P_{\sigma, m}(\lambda)$ has modulus greater than or equal to $1$. 
\end{lemma}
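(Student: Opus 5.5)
The natural route is a root-continuity argument for the quartic $P_{\sigma,m}$, organized around the crossing information from Lemma~\ref{lem:uniteCycle}. $P_{\sigma,m}$ has degree four with leading coefficient $1$ for every $\sigma$, so by Lemma~\ref{lem:cont} its four roots move continuously in $\sigma\in[0,\infty)$. Fix $m\neq 1/2$. By Lemma~\ref{lem:uniteCycle}, a root can lie on the unit circle only when $\sigma\in\{0,\sigma^\star(m)\}$. Hence on each of the intervals $(0,\sigma^\star(m))$ and $(\sigma^\star(m),\infty)$, the number of roots inside the open disk is constant (for $m<1/2$ we have $\sigma^\star(m)<0$, so $(0,\infty)$ is a single interval). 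It therefore suffices to determine that count at one value of $\sigma$ in each interval, which we can do perturbatively near $\sigma=0$ and asymptotically as $\sigma\to\infty$.

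\textbf{Small $\sigma$.} At $\sigma=0$ the roots are the double roots $0$ and $1$. Near $0$ we have $\lambda^2\approx-\sigma m^2$, so those two roots have modulus $\approx m\sqrt\sigma<1$. Near $1$, write $\lambda=1+\varepsilon$. Then $\varepsilon^2\approx-\sigma$, i.e. $\varepsilon\approx\pm i\sqrt\sigma$, and to next order
\[
\varepsilon^2(1+2\varepsilon)\approx-\sigma\bigl(1+2(m+1)\varepsilon\bigr),
\]
which gives $\varepsilon\approx\pm i\sqrt\sigma+c\sigma$ with $c=(1-2m)/2\cdot(\text{sign check})$. Computing $|1+\varepsilon|^2=1+2\,\mathrm{Re}\,\varepsilon+|\varepsilon|^2\approx1+\sigma(2c+1)$, I expect this to reduce to $1-(2m-1)\sigma+o(\sigma)$. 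That quantity is $<1$ for $m>1/2$ and $>1$ for $m<1/2$. This second-order expansion is the main obstacle: it requires a careful Puiseux/implicit-function computation, but it is exactly what separates the two regimes of $m$.

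\textbf{Conclusions for $m\ne1/2$.} For $m>1/2$, all four roots lie in the open disk for small $\sigma>0$, hence on all of $(0,\sigma^\star(m))$; this is item 1. For large $\sigma$, the product of the roots equals $\sigma m^2\to\infty$, so some root has modulus $>1$; hence on all of $(\sigma^\star(m),\infty)$ at least one root lies outside the closed disk, which is item 3. Item 2 is the second part of Lemma~\ref{lem:uniteCycle}. For $m<1/2$, the small-$\sigma$ analysis gives a root outside the disk, and since no crossing occurs for $\sigma>0$ this persists for all $\sigma>0$. At $\sigma=0$ there is the root $1$, of modulus $1$.

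\textbf{The case $m=1/2$.} Lemma~\ref{lem:uniteCycle} does not apply, so I would argue directly. Take $\lambda=e^{i\theta}$ with $c_m=2$: then $\rho=0$ forces $\tan(\theta/2)\cdot 2\tan^2(\theta/2)=0$, so $\theta=0$, giving unit-circle roots only at $\sigma=0$. Thus the count is again constant on $(0,\infty)$. Because the first-order term $2m-1$ vanishes here, the small-$\sigma$ test is indecisive, so I would use the large-$\sigma$ argument instead: the product of the roots is $\sigma/4$, so for $\sigma>4$ some root has modulus $>1$. By constancy of the count, that root remains outside the closed disk for every $\sigma>0$, and $\sigma=0$ has the root $1$. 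This settles $m\le1/2$.
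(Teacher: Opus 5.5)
Your proof is correct, but it takes a different route from the paper. The paper uses Lemma~\ref{lem:uniteCycle} only for item~2; it does not use root continuity at all. Instead it factors $P_{\sigma,m}=p_{1}p_{-1}$ with $p_\delta(\lambda)=\lambda^2-(1+\delta i(m+1)s)\lambda+\delta i m s$, $s=\sqrt\sigma$, and applies the Schur--Cohn criterion (Lemma~\ref{lem:scNecCond}) to each quadratic factor. Items~1, 3 and the case $m\le 1/2$ are then read off the single identity $(1-|b_\delta|^2)^2-|a_\delta-\bar a_\delta b_\delta|^2=\sigma((2m-1)-m^2(2m+1)\sigma)$. That route is purely algebraic and needs no asymptotics. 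It also covers $m=1/2$ without a separate case, since the right-hand side equals $-\sigma^2/2$ there.

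Your argument puts the weight on the crossing lemma, so one test value of $\sigma$ per crossing-free interval suffices. The cost is a second-order expansion at the double root $\lambda=1$ and an ad hoc treatment of $m=1/2$. The gain is a transparent explanation of the threshold $m=1/2$: it is the direction in which the double root at $1$ splits off the unit circle. Your large-$\sigma$ product argument ($\prod_i\lambda_i=m^2\sigma$), combined with the absence of crossings on $(0,\infty)$, already settles every $m<1/2$ exactly as it does $m=1/2$. So the perturbative step is really only needed for item~1.

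The coefficient you flagged is wrong, and not just in sign. With $c=(1-2m)/2$ you would get $|1+\varepsilon|^2\approx 1+2(1-m)\sigma$, which never places those roots inside the disk. Here is the fix:
\begin{itemize}
\item Set $\varepsilon=sw$. The roots near $1$ then solve $F(w,s):=w^2(1+sw)^2+(1+(m+1)sw)^2=0$.
\item Since $F(w,0)=w^2+1$ has simple roots $\pm i$, the implicit function theorem gives analytic branches with $w'(0)=-\partial_sF(\pm i,0)/\partial_wF(\pm i,0)=-m$.
\item Hence $\varepsilon=\pm is-m\sigma+O(\sigma^{3/2})$ and $|1+\varepsilon|^2=1-(2m-1)\sigma+O(\sigma^{3/2})$, which is exactly the expansion you anticipated.
\item The same device, $\lambda=sv$ with $v(0)=\pm im$, makes the two roots near $0$ rigorous.
\end{itemize}
With these branches accounting for all four roots at small $\sigma$, your argument goes through.
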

Note that this technical lemma does not require any assumption on the game.

\begin{proof}  
Let $s=\sqrt{\sigma}$. If $\sigma=0$, then
\[
  P_{0,m}(\lambda)=\lambda^2(1-\lambda)^2,
\]
so the roots are $0,0,1,1$.

Assume now that $\sigma>0$. We factor
\[
P_{\sigma,m}(\lambda)
=
\prod_{\delta\in\{-1,1\}}
\left(
\lambda(1-\lambda)+\delta i s((m+1)\lambda-m)
\right).
\]
Equivalently,
\[
  P_{\sigma,m}(\lambda)=\prod_{\delta\in\{-1,1\}}p_\delta(\lambda),
\]
where
\[
  p_\delta(\lambda)
  =
  \lambda^2-a_\delta\lambda+b_\delta,
  \qquad
  a_\delta=1+\delta i(m+1)s,
  \qquad
  b_\delta=\delta i m s .
\]
For each $\delta$,
\[
  \absv{b_\delta}^2=m^2\sigma,
\]
and
\[
  a_\delta-\bar a_\delta b_\delta
  =
  1-m(m+1)\sigma+\delta i\sqrt{\sigma}.
\]
Hence
\[
  \absv{a_\delta-\bar a_\delta b_\delta}^2
  =
  (1-m(m+1)\sigma)^2+\sigma,
\]
and therefore
\begin{align}\label{eq:SC-difference}
 (1-\absv{b_\delta}^2)^2
 -\absv{a_\delta-\bar a_\delta b_\delta}^2 
=
 \sigma\Big((2m-1)-m^2(2m+1)\sigma\Big).
\end{align}

Suppose first that $m>1/2$ and
\[
  0<\sigma<\sigma^\star(m)
  =
  \frac{2m-1}{m^2(2m+1)}.
\]
Then the right-hand side of \eqref{eq:SC-difference} is positive. Moreover,
\[
  m^2\sigma<m^2\sigma^\star(m)
  =
  \frac{2m-1}{2m+1}<1.
\]
Thus, for both $\delta=\pm1$,
\[
  \absv{b_\delta}<1
  \qquad\text{and}\qquad
  \absv{a_\delta-\bar a_\delta b_\delta}<1-\absv{b_\delta}^2.
\]
By Lemma~\ref{lem:scNecCond}, both roots of each quadratic factor
$p_\delta$ lie in the open unit disk. Hence every root of
$P_{\sigma,m}$ satisfies $\absv{\lambda}<1$.

Next let $m>1/2$ and $\sigma=\sigma^\star(m)$. By the unit-circle calculation
in Lemma~\ref{lem:uniteCycle}, choosing
\[
  \theta
  =
  2\arctan\sqrt{\frac{2m-1}{2m+1}},
  \qquad
  \lambda=e^{i\theta},
\]
gives
\[
  Q_m(\lambda)
  =
  -\frac{2m-1}{m^2(2m+1)}
  =
  -\sigma^\star(m).
\]
Equivalently, $P_{\sigma^\star(m),m}(\lambda)=0$, and by construction
$\absv{\lambda}=1$. Thus at least one root lies on the unit circle.

Now let $m>1/2$ and $\sigma>\sigma^\star(m)$. Suppose, for contradiction, that
all roots of $P_{\sigma,m}$ lie in the closed unit disk. Then the roots of each
quadratic factor $p_\delta$ lie in the closed unit disk. By the necessary part
of Lemma~\ref{lem:scNecCond},
\[
  \absv{a_\delta-\bar a_\delta b_\delta}
  \le
  1-\absv{b_\delta}^2.
\]
Squaring gives
\[
  (1-\absv{b_\delta}^2)^2
  -\absv{a_\delta-\bar a_\delta b_\delta}^2
  \ge0.
\]
This contradicts \eqref{eq:SC-difference}, whose right-hand side is negative
when $\sigma>\sigma^\star(m)$. Hence, at least one root satisfies
$\absv{\lambda}>1$.

Finally, suppose that $m\le1/2$. If $\sigma=0$, then $\lambda=1$ is a double root,
so the desired conclusion holds. If $\sigma>0$ and all roots lay in the open
unit disk, then Lemma~\ref{lem:scNecCond} applied to each $p_\delta$ would imply
\[
  (1-\absv{b_\delta}^2)^2
  -\absv{a_\delta-\bar a_\delta b_\delta}^2
  >0.
\]
But by \eqref{eq:SC-difference},
\[
  (1-\absv{b_\delta}^2)^2
  -\absv{a_\delta-\bar a_\delta b_\delta}^2
  =
  \sigma\Big((2m-1)-m^2(2m+1)\sigma\Big)<0
\]
for every $m\le1/2$ and every $\sigma>0$. This contradiction shows that not all
roots lie in the open unit disk. Therefore, at least one root satisfies
$\absv{\lambda}\ge1$.

 \end{proof}
 
 \begin{lemma}\label{lem:nonSingularM}
 Consider a zero-sum game $\Gamma(A, -A^\top)$ and assume the Nash equilibrium is unique and fully mixed. Then $\restr{M_x(Z^\star)}{L_x}$ is non-singular. 
 \end{lemma}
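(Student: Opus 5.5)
The plan is to show directly that the kernel of $\restr{M_x(Z^\star)}{L_x}$ is trivial, using only positive semidefiniteness of the covariance matrices $H(\cdot)$ and the uniqueness of the equilibrium. Since $Z^\star$ is fully mixed, $S_x=\{1,\dots,d_x\}$ and $\Pi_{S_x}=I$. With $B=-A^\top$ this gives $M_x(Z^\star)=-H(x^\star)AH(y^\star)A^\top$. I would first record the elementary facts about $H(v)=\diag(v)-vv^\top$ for $v$ in the relative interior of a simplex:
\begin{itemize}
\item $H(v)$ is symmetric positive semidefinite, since $u^\top H(v)u$ is the variance of $u$ under $v$;
\item its kernel is exactly $\Span(\ones)$, because the variance vanishes iff $u$ is constant on $\supp(v)$, which is everything;
\item $\ones^\top H(v)=0$, so $H(v)$ maps into $L_x$ and $L_x$ is invariant, consistent with the statement.
\end{itemize}

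Now take $h\in L_{x,\CC}$ with $H(x^\star)AH(y^\star)A^\top h=0$. All matrices are real, so real and imaginary parts can be treated separately and we may assume $h$ is real.
\begin{enumerate}
\item Since $\ker H(x^\star)=\Span(\ones)$, we get $AH(y^\star)A^\top h=c\ones$ for some $c\in\RR$.
\item Multiplying on the left by $h^\top$ and using $h^\top\ones=0$ gives $g^\top H(y^\star)g=0$, where $g:=A^\top h$.
\item Positive semidefiniteness then gives $H(y^\star)g=0$, hence $A^\top h=c'\ones$ for some $c'\in\RR$.
\end{enumerate}
It remains to show this forces $h=0$.

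For this last step I would use uniqueness of the equilibrium. Suppose $h\neq 0$. Because $x^\star$ is fully mixed and $\ones^\top h=0$, the point $x':=x^\star+\varepsilon h$ lies in $\Delta_{d_x}$ for all sufficiently small $|\varepsilon|$; choose the sign of $\varepsilon$ so that $\varepsilon c'\ge 0$. By Theorem~\ref{thm:FPPhi} and full support, $Ay^\star$ is a constant vector. Hence every $x$, in particular $x'$, is a best response to $y^\star$. Moreover $A^\top x'=A^\top x^\star+\varepsilon c'\ones$ is again constant, because $A^\top x^\star$ is constant by the same fixed-point characterization. So the $y$-player's payoff $-y^\top A^\top x'$ does not depend on $y$, and $y^\star$ is a best response to $x'$. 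Thus $[x',y^\star]$ is a Nash equilibrium distinct from $[x^\star,y^\star]$, contradicting uniqueness. Therefore $h=0$ and $\restr{M_x(Z^\star)}{L_x}$ is non-singular. (The sign choice for $\varepsilon$ is in fact not even needed, since the best-response argument works for either sign.)

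The main obstacle I expect is the passage from the quadratic form to $A^\top h\in\Span(\ones)$, where the kernel of $H$ must be identified correctly. This relies on full support, which makes $H(v)$ have kernel exactly $\Span(\ones)$ rather than a larger space. The complexification issue is only bookkeeping. The final step then requires nothing beyond checking the best-response conditions.
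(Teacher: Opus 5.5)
Your proof is correct and follows essentially the same route as the paper. Both use $\ker H(x^\star)=\Span(\ones)$, then the quadratic form $(A^\top h)^\top H(y^\star)(A^\top h)=0$ to get $A^\top h\in\Span(\ones)$, and then perturb $x^\star$ along $h$ to obtain a second equilibrium that contradicts uniqueness. Your explicit reduction from a complex kernel vector to a real one is a small step the paper leaves implicit.
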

  
 \begin{proof}
 Write $Z^\star = [x^\star,y^\star,x^\star,y^\star]$.
Since the equilibrium is fully mixed,  \[
  \restr{M_x(Z^\star)}{L_x}
  =
  - H(x^\star) A H(y^\star) A^\top .
\]
Write $H_x:=H(x^\star)$ and $H_y:=H(y^\star)$.
  Suppose, for contradiction, that
$\restr{M_x(Z^\star)}{L_x}$ is singular. Then there exists
$0\neq h\in L_x$ such that
\[
  M_x(Z^\star)h=0 \quad \Leftrightarrow \quad
   H_x A H_y A^\top h=0 \,.
\]
Thus, $A H_y A^\top h \in \ker H_x$. But since $\ker H_x=\Span(\ones_{d_x})$, there exists
$c\in\RR$ such that
\[
  A H_y A^\top h = c\ones_{d_x}.
\]
Taking the inner product with $h$, and using $h\in L_x$, gives
\[
  0
  =
  c\, h^\top \mathbf 1_{d_x}
  =
  h^\top A H_y A^\top h
  =
  (A^\top h)^\top H_y (A^\top h).
\]
Since $H_y$ is positive semidefinite with kernel
$\Span (\mathbf 1_{d_y}) $, it follows that
$  A^\top h \in \Span(\mathbf 1_{d_y}).$
Thus, for some $c'\in\RR$,
\[
  A^\top h = c' \mathbf 1_{d_y}.
\]
Because $h\in L_x$, we have $\mathbf 1_{d_x}^\top h=0$. Since
$x^\star\in\relint \Delta_{d_x}$, for all sufficiently small
$\epsilon\neq 0$,
\[
  x(\epsilon) := x^\star+\epsilon h
\]
still belongs to  $\relint\Delta_{d_x}$, and
$x(\epsilon)\neq x^\star$.

We show that $(x(\epsilon),y^\star)$ is again a Nash equilibrium:
Since $y^\star$ is fully mixed at the zero-sum equilibrium, $  A y^\star = v\ones_{d_x}$
for some $v\in\RR$.  
Similarly, since $x^\star$ is fully mixed at equilibrium,
$  A^\top x^\star = v\ones_{d_y}.$
Using $A^\top h=c' \ones_{d_y}$, we obtain
\[
  A^\top x(\epsilon)
  =
  A^\top x^\star+\epsilon A^\top h
  =
  (v+\epsilon c')\ones_{d_y}\,.
\]
 Hence, $x(\epsilon)$ is a best response to $y^\star$ and vice versa. Consequently $(x(\epsilon),y^\star)$ is a fully mixed Nash equilibrium distinct from
$[x^\star,y^\star]$, contradicting uniqueness. Therefore, no such nonzero
$h\in L_x$ exists, and $\restr{M_x(Z^\star)}{L_x}$ is non-singular.
\end{proof}

 \subsection{Proof of Theorem \ref{thm:stabilityZSG}} \label{apx:consequences_ZSG}
\begin{proof}[Proof of Theorem \ref{thm:stabilityZSG}] We show that the modulus of the eigenvalues of the Jacobian is strictly less than or strictly greater than one. This implies stability under the assumed step size conditions. 
 Let $Z^\star$ denote the unique Nash equilibrium and $M_x(Z^\star)$ the matrix corresponding to it.
 We first note that $M_x(Z^\star)$ has a real, non-positive spectrum. This is due to $H({x^\star})$ and $A H({y^\star})A^\top$ being symmetric positive semidefinite, and the non-zero spectrum of $H({x^\star})A H({y^\star})A^\top$ coincides with the non-zero spectrum of the symmetric positive semidefinite matrix $H({x^\star})^{\frac12}A H({y^\star})A^\top H({x^\star})^{\frac12}$. Thus, the spectrum of $M_x(Z^\star) = - H({x^\star})A H({y^\star})A^\top$ is real and non-positive. Furthermore, since $Z^\star$ is by assumption full support, $\restr{M_x(Z^\star)}{L_x}$ is non-singular (cf. Lemma~\ref{lem:nonSingularM}). Hence, $\eig{\restr{M_x(Z^\star)}{L_x}} \subset (-\infty, 0)$. Since the entries in $M_x(Z^\star)$ are finite, $\delta := \max(\absv{\lambda} : \lambda \in  \eig{\restr{M_x(Z^\star)}{L_x}})$, is finite and  $\eig{\etax\etay \restr{M_x(Z^\star)}{L_x}} \subset [-\etax\etay \delta, 0)$. Hence, by Theorem \ref{thm:jac_spectrum}, the spectrum of $J_{\Phi_m}(Z^\star)|_{L}$ are either the values in $W$ or the solutions to
 \[ Q_m(\lambda) = -\sigma \qquad \sigma \in (0,\etax\etay \delta]\,.\]
  Due to the assumption that the Nash equilibrium has full support, all values in $W$  are strictly less than $1$. Hence, we can ignore them for this proof. 
  Under the assumption that $m \in ( 1/2,1]$, the claim follows by noting that $Q_m(\lambda) = - \sigma $ if and only if $P_{\sigma, m}(\lambda) = 0$. Combining this with Lemma \ref{lem:roots} implies the claim.
   The case $m \in (0, 1/2]$, follows from Lemma~\ref{lem:nonSingularM} and Corollary~\ref{cor:instability}.
   \end{proof}
 
 \subsection{Proof of Theorem \ref{thm:2x2games}}\label{apx:theorem2x2}
 \begin{proof}[Proof of Theorem \ref{thm:2x2games}]
 Since the Nash equilibrium is, by assumption, fully supported, $W = \varnothing$.
In the $2$-dimensional case, $L_x$ is the one-dimensional space of vectors proportional 
to $(-1, 1)$.
 Moreover, 
\[
  H(x^\star)
  =
  p(1-p)
  \begin{bmatrix}
    \phantom{-}1 & -1\\
    -1 & \phantom{-}1
  \end{bmatrix},
  \qquad
  H(y^\star)
  =
  q(1-q)
  \begin{bmatrix}
    \phantom{-}1 & -1\\
    -1 & \phantom{-}1
  \end{bmatrix}
\]
Let $r=(1,-1)^\top$. Then
$H(x^\star)=p(1-p)rr^\top$ and $H(y^\star)=q(1-q)rr^\top$.  Hence

\[
  M_x(Z^\star)
  =
  H(x^\star) A H(y^\star) B
  =
  p(1-p)q(1-q)\, r r^\top A r r^\top B .
\]
Since
\[
  r^\top A r=\Delta_A,
  \qquad
  r^\top B r=\Delta_B,
\]
we obtain
\[
  M_x(Z^\star)
  =
  p(1-p)q(1-q)\Delta_A
  \begin{bmatrix}
    e-g &\phantom{-} f-h\\
    g-e &\phantom{-} h-f
  \end{bmatrix}.
\]
Restricting to $L_x= \Span \{(-1,1)^\top\}$,
 \[
  \eig{M_x(Z^\star)|_{L_x}}
  =
  \left\{
    p(1-p)q(1-q)\Delta_A\Delta_B
  \right\}.
\]
We write this eigenvalue as
\[
  \mu =
  p(1-p)q(1-q)\Delta_A\Delta_B.
\]
Thus, the local criterion becomes 
\[
  Q_m(\lambda)=\etax\etay\,\mu.
\]
If $\Delta_A\Delta_B>0$, then $\mu>0$, and by Corollary~\ref{cor:unstable_positive_mu}, the mixed equilibrium is locally unstable for 
every $\etax\etay>0$, and every $m$. 

If $\Delta_A\Delta_B<0$ and $m \leq \frac12$, instability follows from Corollary~\ref{cor:instability}. Note that in this case $\etax \etay > E$ for all strictly positive step sizes. 
 
Thus, assume $\Delta_A\Delta_B<0$ and $m > \frac12$, then we have local stability if all solutions to
\[
  Q_m(\lambda)=-\etax\etay|\mu| \, ,
\]
are of modulus strictly less than $1$. Applying Lemma~\ref{lem:roots} and Theorem~\ref{thm:jac_spectrum} gives that the full-support mixed equilibrium is locally asymptotically stable when
\begin{align*}
\etax\etay\absv{\mu} \in \left(0,\sigma^\star(m) \right)\,.
\end{align*}
 When $\etax\etay\absv{\mu} > \sigma^\star(m)$, by Lemma~\ref{lem:roots}~(3) it is unstable.

 The degenerate case $\Delta_A\Delta_B=0$ gives $\mu=0$ and therefore a unit eigenvalue;
in this case, the linear criterion is inconclusive.
\end{proof}

\subsection{Proof of Theorem \ref{thm:lowDim}}\label{apx:lowDim}
\begin{proof}[Proof of Theorem~\ref{thm:lowDim}]
We denote the $i^{\mathrm{th}}$ row of the matrices $A,B$ by $A_{i,:}$ and $B_{i,:}$.  Define
\begin{align*}
  M_A=
  \begin{bmatrix}
     A_{1,:} - A_{2,:}\\
     \vdots\\
      A_{1,:} - A_{d_x,:}\\
       \ones^\top    \\
  \end{bmatrix},
  \qquad
  M_B=
     \begin{bmatrix}
     B_{1,:} - B_{2,:}\\
     \vdots\\
      B_{1,:} - B_{d_y,:}\\
       \ones^\top    \\
  \end{bmatrix}\, .
 \end{align*}
  Furthermore, we denote by $M_A^{(i)}$ (respectively $M_B^{(i)}$) the matrix $M_A$ where the $i^{\mathrm{th}}$ column is substituted by $e_{d_x}$ (respectively $e_{d_y}$).
We first show the following:
 \begin{enumerate}
 \item  $\det(M_A) \neq 0$ and $\det(M_B)\neq 0$;
 \item Under the assumption that the Nash equilibrium is fully mixed and unique the Nash equilibrium is the solutions to the linear equations: $M_A y = e_{d_x}$ and $M_B x = e_{d_y}$. Hence \[y^\star_i = \frac{\det(M_A^{(i)})}{\det(M_A)} \qquad \text{ and } \qquad x^\star_i = \frac{\det(M_B^{(i)})}{\det(M_B)}\, .\] 
 \end{enumerate}
 Due to Theorem~\ref{thm:FPPhi}, 
 $ [Ay^\star]_1= [Ay^\star]_i$ for $i \in \{2, \dots, d_x\}$ and $y^\star \in \Delta_{d_y}$
Equivalently,
\[ 
  [Ay^\star]_1-[Ay^\star]_i=0  \text{ for } i \in \{2, \dots, d_x\} \text{, and }
  \sum_{i=1}^{d_y} y_i^\star=1.
\]
By the definition of $M_A$, these equations are exactly
\[
  M_Ay^\star=e_{d_x}.
\]
Now, assume for contradiction that $M_A$ is singular. Then $M_A z = 0$ has a non-zero solution and, due to the last row in $M_A$, $ \dprod{z,\ones} = 0$.   We define $\hat y(\alpha) = y^\star + \alpha z  $. Then $M_A\hat y(\alpha) = M_A y^\star + \alpha M_A z = e_{d_x}$ and for a sufficiently small $\alpha>0$, $\hat y(\alpha) \in \Delta_{d_y}$ (recall $y^\star \in \relint \Delta_{d_y}$ by assumption). Thus, $[A\hat y(\alpha)]_1 = [A\hat y(\alpha)]_i$ for all $i \in \{2, \dots, d_x\}$, that is, the payoff is constant. Conversely, $B x^\star$ is constant by definition of $x^\star$.
Thus, $[x^\star, \hat y(\alpha)]$ is a Nash equilibrium since $x^\star \in \Delta_{d_x}$ is a best response to  $\hat y(\alpha)$ and vice versa.  This contradicts the uniqueness of the Nash equilibrium.

 Applying Cramer's rule gives $x^\star_i$ and $y^\star_i$ as closed-form solutions in the entries of the game matrices, and consequently, the linear operator $\restr{M_x(Z^\star)}{L_x}$ can be expressed in closed-form. So far, we do not require any assumptions on the dimensions besides $d_x = d_y$. 
 
 The claim follows from Theorem~\ref{thm:jac_spectrum} combined with the observation that the characteristic polynomial of $  \restr{M_x(Z^\star)}{L_x}$ has degree at most $4$ and thus roots are expressible by radicals.

   \end{proof}
   
     \paragraph{Closed Form Spectrum Computation:}  For illustration of Theorem~\ref{thm:lowDim}, we provide the closed-form formulas for the spectrum.
   Fix any real linear basis for $L_x$ and write $d:=d_x=d_y$. Let
\[
  N:=\etax\etay\,M_x(Z^\star)|_{L_x}\ \in\ \RR^{(d-1)\times(d-1)}\, .
\]
 By Theorem~\ref{thm:jac_spectrum} 
the nonzero spectrum of $J_{\Phi_m}(Z^\star)|_{L}$ equals $Q_m^{-1}\big(\eig{N}\big)$ since  $W=\varnothing$ at a fully mixed equilibrium.
It is convenient to read off $\eig{N}$ from the full $d\times d$ matrix
\[
  M:=\etax\etay\,H(x^\star)\,A\,H(y^\star)\,B\ \in\ \RR^{d\times d}.
\]
Since $\ones^\top H(x^\star)=0$ we have $\ones^\top M=0$ and therefore $0\in\eig{M}$. The
remaining $d-1$ eigenvalues of $M$ are precisely $\eig{N}=\{\fS_1,\dots,\fS_{d-1}\}$.
In particular, the power sums are
\[
  p_j:=\Tr\!\big(M^{j}\big)=\sum_{i=1}^{d-1}\fS_i^{\,j}\qquad(j\ge 1),
\]
and the characteristic polynomial of $N$ is
\[
  \chi_N(\fS)=\prod_{i=1}^{d-1}(\fS-\fS_i)
  =\sum_{j=0}^{d-1}(-1)^j E_j\,\fS^{\,d-1-j},\qquad E_0=1,
\]
where, by Newton's identities,
\begin{align*}
  E_1=&p_1,\quad
  E_2=\tfrac12\big(p_1^2-p_2\big),\quad
  E_3=\tfrac16\big(p_1^3-3p_1p_2+2p_3\big),\\
  &E_4=\tfrac1{24}\big(p_1^4-6p_1^2p_2+3p_2^2+8p_1p_3-6p_4\big),
\end{align*}
and $E_{d-1}=\det N$. Each $p_j$ is a trace of a product of $H(x^\star),A,H(y^\star),B$, hence a
polynomial in their entries; with the closed forms for $x^\star,y^\star$, every $E_j$ is an
explicit rational function of the entries of $A$ and $B$.

\textbf{Eigenvalues of $N$} We use that $\deg\chi_N=d-1\le 4$, hence the roots can be computed in closed form. 
\begin{itemize} 
\item $d=2$: Here $N \in \RR^{1 \times 1}$. Thus, $\fS_1=E_1= \det N=\Tr M$.
\item  $d=3$: Set $E_2=\det N=\tfrac12\big((\Tr M)^2-\Tr M^2\big)$. Then 
\[ \fS_{1,2}=\frac{E_1\pm\sqrt{E_1^2-4E_2}}{2}\]
\item $d=4$:  We use Cardano's method. The substitution $\fS=t+\tfrac{E_1}{3}$ yields the depressed cubic
  $t^3+pt+q$ with $p=E_2-\tfrac{E_1^2}{3}$, $q=-\tfrac{2E_1^3}{27}+\tfrac{E_1E_2}{3}-E_3$. Set
  \[ U^3 = -\tfrac q2+\sqrt{\tfrac{q^2}{4}+\tfrac{p^3}{27}}\qquad V^3 = -\tfrac q2-\sqrt{\tfrac{q^2}{4}+\tfrac{p^3}{27}} \,\]
  with $UV = -\frac{p}{3}$.
  Then for $k \in \{0,1,2\}$ and $\omega=e^{2\pi i/3}$ 
  \[
    \fS_k=\frac{E_1}{3}
      +\omega^{k} U
      +\omega^{2k} V
       \, .
  \]
\item $d=5$: We use Ferrari's method. Again, the substitution $\fS=t+\tfrac{E_1}{4}$ yields the depressed
  quartic $t^4+pt^2+qt+r$ with
  \[
    p=E_2-\tfrac{3E_1^2}{8},\quad
    q=-E_3+\tfrac{E_1E_2}{2}-\tfrac{E_1^3}{8},\quad
    r=E_4-\tfrac{E_1E_3}{4}+\tfrac{E_1^2E_2}{16}-\tfrac{3E_1^4}{256}.
  \]
  If $q=0$, the depressed quartic is
    biquadratic and its roots are
    $t=\pm\sqrt{(-p\pm\sqrt{p^2-4r})/2}$. Otherwise, the resolvent has a
    nonzero root since its value at $0$ is $-q^2<0$.
  Thus, let $u$ be any non-zero root of the resolvent cubic $8u^3+8pu^2+(2p^2-8r)u-q^2=0$ (solved by the
  case $d=4$). Then $t^4+pt^2+qt+r$ factors as
  \[
    \Big(t^2-\sqrt{2u}\,t+\big(\tfrac p2+u+\tfrac{q}{2\sqrt{2u}}\big)\Big)
    \Big(t^2+\sqrt{2u}\,t+\big(\tfrac p2+u-\tfrac{q}{2\sqrt{2u}}\big)\Big),
  \]
  and the four roots $\fS_i=t_i+\tfrac{E_1}{4}$ follow from the quadratic formula.
\end{itemize}

\medskip\noindent\textbf{Assembling the Jacobian spectrum.}
As in the proof of Corollary~\ref{cor:instability}, each $\fS_i$ contributes the (at most
four) eigenvalues $\lambda$ of $J_{\Phi_m}(Z^\star)|_{L}$ solving $Q_m(\lambda)=\fS_i$, i.e.\
the roots of
\[
  p_\pm(\lambda)=\lambda^2-\big(1\pm(m+1)s_i\big)\lambda\pm m\,s_i,
  \qquad s_i^2=\fS_i,
\]
namely
\[
  \lambda=\frac{\big(1\pm(m+1)s_i\big)\pm'\sqrt{\big(1\pm(m+1)s_i\big)^2\mp 4m\,s_i}}{2}.
\]
With $W=\varnothing$, this gives the full spectrum of $J_{\Phi_m}(Z^\star)$ in closed form in the
entries of $A$ and $B$.

  \subsection{Details on Related Work}
  \label{apx:deMontbrun}
  
Throughout this subsection, we keep the conventions of the main text:
$\Gamma(A,B)$ with $A \in \RR^{d_x \times d_y}$,
$B \in \RR^{d_y \times d_x}$, played over the unconstrained strategy
spaces $\RR^{d_x}$ and $\RR^{d_y}$. The general-sum bilinear games
$(x^\top \tilde A y, x^\top \tilde B y)$,
$\tilde A, \tilde B \in \RR^{d_x \times d_y}$, of
\citet{de2025optimistic} correspond to $\Gamma(A, B)$ via
$\tilde A = A$ and $\tilde B = B^\top$; their matrices
$\tilde B^\top \tilde A$ and $\tilde A \tilde B^\top$ are $BA$ and
$AB$ in our notation, and their OGDA (Definition 3.1 in
\citet{de2025optimistic}) is the OGM of
Section~\ref{sec:CompareResultsJacobian} with
$\etax = \etay = \eta$.

The OGM iteration is linear: writing
$Z_t = (x_t, y_t, x_{t-1}, y_{t-1})$, we have
$Z_{t+1} = \Lambda_{A,B} Z_t$ with
\[
  \Lambda_{A,B} =
  \begin{bmatrix}
    I_{d_x} & 2\etax A & 0 & -\etax A\\
    2\etay B & I_{d_y} & -\etay B & 0 \\
    I_{d_x} & 0 & 0 & 0\\
    0 & I_{d_y} & 0 & 0
  \end{bmatrix}
  \in \RR^{2(d_x + d_y) \times 2(d_x + d_y)}
\]
(cf. Section 3.2 in \cite{de2025optimistic}, where
$\etax = \etay$). The fixed points of the dynamics are exactly the
lifted Nash equilibria $[x,y,x,y]$ with
$[x,y] \in \Ker(B) \times \Ker(A)$.

Let $\lambda \in \CC$ and $H = [h_1, h_2, h_3, h_4] \neq 0$ with
$\Lambda_{A,B} H = \lambda H$. The last two block rows give
$h_1 = \lambda h_3$ and $h_2 = \lambda h_4$; substituting into the first two
rows yields
\begin{align}\label{eq:OGMreduced}
  \lambda(\lambda - 1)\, h_3 = \etax (2\lambda - 1) A h_4\,,
  \qquad
  \lambda(\lambda - 1)\, h_4 = \etay (2\lambda - 1) B h_3\,.
\end{align}
For $\lambda = \tfrac12$, \eqref{eq:OGMreduced} forces
$h_3 = h_4 = 0$, hence $H = 0$. Thus, under the assumption that $H \neq 0$, $\tfrac12$ is never an eigenvalue.
 Combining these equations gives
 \[ \begin{cases}
\lambda^2 (1-\lambda)^2 h_4 = (2\lambda -1)^2 \etax\etay\, B A\, h_4\\
\lambda^2 (1-\lambda)^2 h_3= (2\lambda -1)^2 \etax \etay\, A B\, h_3\,,
\end{cases}\]
with $[h_3, h_4] \neq [0,0]$. Hence, for $\lambda \neq \frac{1}{2}$,
$\frac{\lambda^2 (1-\lambda)^2}{\etax \etay(2\lambda -1)^2}$ is an
eigenvalue of $BA$ if $h_4 \neq 0$ or of $AB$ if $h_3 \neq 0$.
Conversely, following the proof of Proposition 3.7 in
\citet{de2025optimistic} verbatim, with the only modification that
the products $\etax\etay$ replace $\eta^2$ throughout, every such
$\lambda$ is an eigenvalue of $\Lambda_{A,B}$. This gives
\[ \eig{\Lambda_{A,B}} = \bigcup_{\mu \in \teig{A B} \cup \teig{B A}}
\hat\cS^\star(\mu)\,,\qquad
\hat\cS^\star(\mu) := \{\lambda \in \CC : \lambda^2 (1-\lambda)^2 =
\mu\, \etax \etay\, (1 - 2\lambda)^2  \}\, . \]

We now prove Corollary~\ref{cor:deMontbrun}. Since $d_x = d_y$ and
$\teig{BA} \subset (-\infty, 0)$, the matrix $BA$ is non-singular,
hence $\Ker(A) = \Ker(B) = \{0\}$ and the origin is the unique Nash
equilibrium. Moreover, $AB$ and $BA$
share their spectrum, so $\teig{AB} \cup \teig{BA} = \teig{BA}$. For
$\mu \in \teig{BA}$ write $\mu = -\absv{\mu}$. Then
$\lambda \in \hat\cS^\star(\mu)$ if and only if
$P_{\sigma, 1}(\lambda) = 0$ with
$\sigma = \etax\etay \absv{\mu} > 0$. If
$\etax\etay \SR{-BA} < \frac13 = \sigma^\star(1)$, Lemma~\ref{lem:roots}
(with $m = 1$) shows that every eigenvalue of $\Lambda_{A,B}$ has
modulus strictly less than $1$; since the dynamics are linear, the
fixed point is (globally) asymptotically stable. If
$\etax\etay \SR{-BA} > \frac13$, applying Lemma~\ref{lem:roots}~(3) to
an eigenvalue $\mu$ with $\absv{\mu} = \SR{BA}$ produces
$\lambda \in \eig{\Lambda_{A,B}}$ with $\absv{\lambda} > 1$, and the
fixed point is unstable.

Finally, we note why $d_x = d_y$ cannot be dropped: if, say,
$d_x > d_y$ and $\teig{BA} \subset (-\infty,0)$, then
$AB \in \RR^{d_x \times d_x}$ has rank at most $d_y < d_x$, so
$0 \in \teig{AB}$ and $\hat\cS^\star(0) = \{0, 1\}$ contributes the
eigenvalue $1$ to $\eig{\Lambda_{A,B}}$; correspondingly
$\Ker(B) \neq \{0\}$, the Nash equilibria form a non-trivial subspace,
and no single equilibrium is asymptotically stable. In this situation,
Theorem 3.8 in \citet{de2025optimistic} still yields global exponential
convergence to \emph{some} Nash equilibrium under the assumptions
$\teig{BA} \cup \teig{AB} \subset (-\infty, 0]$,
$\etax\etay\SR{-BA} < \frac14$, and $\Lambda_{A,B}$ diagonalizable or
$A, B$ are square matrices and invertible. The modification for
$\etax \neq \etay$ is again the substitution $\eta^2 \to \etax\etay$
in the spectral computation.

   \section{Special Games}\label{apx:games}
 
 \paragraph{Matching Pennies}\label{apx:pm}
  Define the zero-sum game $\Gamma(A,-A^\top)$ with
 \[ A = \begin{bmatrix}\phantom{-} 1 & -1\\ -1&\phantom{-}1 \end{bmatrix}\, .\]
 The unique Nash equilibrium is at the uniform distribution for both players. Applying Theorem~\ref{thm:2x2games} gives that $\Delta_A = -\Delta_B = 4$. Hence, the Nash equilibrium is asymptotically stable if $\etax \etay < \frac{2m -1}{m^2(2m+1)}$. 
 
 \paragraph{Non-Zero-Sum Matching Pennies}\label{apx:nzspm}
 Define a non-zero-sum game $\Gamma(A, B)$ with
 \[ A = \begin{bmatrix} -1 & \phantom{-}1\\ \phantom{-}3&-1 \end{bmatrix} \quad \text{and}\quad B = \begin{bmatrix} \phantom{-}2 &- 1\\ -1& \phantom{-}1 \end{bmatrix}\, . \]
The unique Nash equilibrium is at $x^\star = [2/5 , 3/5]^\top$, $y^\star = [1/3, 2/3]^\top$. When applying Theorem~\ref{thm:2x2games} we observe that $\Delta_A = -6$ and $\Delta_B = 5$, hence the Nash equilibrium is asymptotically stable if $\etax \etay <   \frac{5}{8} \frac{2m -1}{m^2(2m+1)}$.
 
 \paragraph{Rock-Paper-Scissors and Variants}\label{apx:rpsVar}
 This is a zero-sum game $\Gamma(A,-A^\top)$ with 
 \begin{align*}
A = \begin{bmatrix}  
    \phantom{-}0     & \phantom{-}  a   & -1\\
     -1    &\phantom{-} 0    &\phantom{-}\frac{1}{a}\\
     \phantom{-}1   &-1    &\phantom{-}0    \end{bmatrix} \, ,
    \end{align*} 
    and $a >0$. The equilibrium is at    $x^\star = \frac{1}{2a+1} [1, a,a]^\top$ and $y^\star = \frac{1}{a+2}[1,1,a]^\top$.

  \paragraph{A $4\times 4$ Game }\label{apx:4x4game}
 The following non-zero-sum game has a unique non-fully supported Nash equilibrium.
 \begin{align*}
A = \begin{bmatrix}  
    -6     &\phantom{-}5    &-6     &\phantom{-}7\\
     \phantom{-}1    &-1    &-9    &10\\
    -7    &-7    &-1     &\phantom{-}6\\
    10    &-1     &\phantom{-}4    &-4 \end{bmatrix}
    \qquad
B = \begin{bmatrix}
     \phantom{-}7    &-2     &\phantom{-}6     &\phantom{-}2\\
     \phantom{-}1     &\phantom{-}7     &3   &-10\\
    -6     &\phantom{-}4     &\phantom{-}7    &-7\\
     \phantom{-}9    &-7    &-9     &\phantom{-}6\end{bmatrix}\, .
     \end{align*}
 The Nash equilibrium is at $x^\star = [0,4/9, 0, 5/9]^\top$ and $y^\star = [14/23,0,0, 9/23]^\top$.

 \end{appendices}

 \bibliographystyle{plainnat}
 \bibliography{literatureTTS}
 
\end{document}